\def\FullBox{\hbox{\vrule width 8pt height 8pt depth 0pt}}
\newcommand{\QED}{\;\;\;\FullBox}
\renewenvironment{proof}{\noindent{{\textbf{Proof:}~}}} {\hfill\QED}
\providecommand{\email}[1]{\href{mailto:#1}{\nolinkurl{#1}\xspace}}
\def\FullBox{\hbox{\vrule width 8pt height 8pt depth 0pt}}
\title{Streaming and Massively Parallel Algorithms for Euclidean Max-Cut}
    \author { 
      Nicolas Menand\thanks{University of Pennsylvania.
      \texttt{email:~nmenand@cis.upenn.edu}. Supported by an AWS AI Gift for Research on Trustworthy AI and by the National Science Foundation (NSF) under Grant No. CCF-2337993.}
      \and
      Erik Waingarten\thanks{University of Pennsylvania.
      \texttt{email:~ewaingar@cis.upenn.edu}. Supported by the National Science Foundation (NSF) under Grant No. CCF-2337993.}
    }
\begin{document}
\maketitle

\begin{abstract}
Given a set of vectors $X = \{ x_1,\dots, x_n \} \subset \R^d$, the Euclidean max-cut problem asks to partition the vectors into two parts so as to maximize the sum of Euclidean distances which cross the partition. We design new algorithms for Euclidean max-cut in models for massive datasets:
\begin{itemize}
    \item We give a fully-scalable constant-round MPC algorithm using $O(nd) + n \cdot \poly( \log(n) / \eps)$ total space which gives a $(1+\eps)$-approximate Euclidean max-cut. 
    \item We give a dynamic streaming algorithm using $\poly(d \log \Delta / \eps)$ space when $X \subseteq [\Delta]^d$, which provides oracle access to a $(1+\eps)$-approximate Euclidean max-cut. 
\end{itemize}
Recently, Chen, Jiang, and Krauthgamer [{STOC}~'23] gave a dynamic streaming algorithm with space $\poly(d\log\Delta/\eps)$ to approximate the value of the Euclidean max-cut, but could not provide oracle access to an approximately optimal cut. This was left open in that work, and we resolve it here. 
Both algorithms follow from the same framework, which analyzes a ``parallel'' and ``subsampled'' (Euclidean) version of a greedy algorithm of Mathieu and Schudy [{SODA}~'08] for dense max-cut. 
\end{abstract}

\thispagestyle{empty}
\newpage
\begin{spacing}{0.75}
\tableofcontents
\end{spacing}
\thispagestyle{empty}

\newpage

\pagenumbering{arabic}
\setcounter{page}{1}


\section{Introduction}\label{sec:introduction}

\newcommand{\cutval}{\text{cut-val}}

This paper is on algorithms for Euclidean max-cut in sublinear models of computation, namely, massively parallel computation and streaming. Given a set of vectors, we consider the geometric graph consisting of $n$ vertices (one for each vector) and all edges connecting two vectors with weights given by Euclidean distances. Euclidean max-cut asks to partition the vectors into two parts so as to maximize the sum of weights of edges cut by the partition. It is the natural geometric version of the classical max-cut problem on graphs, which has been well-studied from the approximation algorithms perspective. In general graphs, there is polynomial-time $0.878$-approximation~\cite{GW95} which is optimal assuming the Unique Games Conjecture~\cite{KKMO07}. For dense or metric instances of max-cut, there are known PTAS~\cite{V96, FK01, MS08}, which in particular, will play a prominent role in this work. 


From the sublinear algorithms perspective, there has been recent progress on Euclidean max-cut in dynamic geometric streams~\cite{I04}, where a set of input vectors from $[\Delta]^d$ are presented as a stream of insertions and deletions, and the algorithm must approximate (up to a multiplicative $(1\pm\eps)$-factor) the Euclidean max-cut of the underlying set of vectors. In~\cite{CJK23}, the authors present a dynamic streaming algorithm which uses $\poly(d\log(\Delta) / \eps)$ space, and can approximate the \emph{value} of the max-cut of the underlying set of vectors up to a $(1\pm \eps)$-factor. This algorithm improved on the prior work of~\cite{FS05}, who showed this was possible for constant dimension $d$ (since space complexity scaled exponentially in $d$). As we  detail in~Section~\ref{sec:tech-overview}, the algorithm of~\cite{CJK23} builds on a line-of-work on random sampling for max-CSPs~\cite{AVKK03, RV07}. Importantly, the algorithm approximates the value and not the cut itself; while a low-space algorithm cannot explicitly output a cut,\footnote{Since it is inherently an $\Theta(n)$-sized object, consisting of a partition of the vectors.} a low-space streaming algorithm could provide, after processing a set of vectors, oracle access of the form ``which side of the cut is $x$ in?'' which produces an approximate max-cut. This was discussed in Appendix~B of~\cite{CJK23} and left open, where they noted the algorithm of ~\cite{FS05} along with~\cite{LSS09, L11} provides such oracle access, albeit with quantitatively much weaker $\exp\left( \poly(1/\eps) \right) \cdot \poly(\log \Delta)$ space.
\begin{quote}
    Do there exists low-space streaming algorithms for high-dimensions (more generally, algorithms in sublinear models) which can, not only approximate the value of the Euclidean max-cut, but also provide oracle access to an approximately optimal cut?
\end{quote}
The above question motivated this work. Here, we provide such algorithms which give oracle access to an approximate max-cut in the massively parallel model and streaming model of computation.

\paragraph{Approximating the value versus the object in high-dimensions.} Beyond Euclidean max-cut, multiple geometric optimization problems currently exhibit gaps in the complexity of approximating values and producing approximate solutions. Understanding the circumstances under which such gaps exists (or do not) is an interesting open problem, and one which has received recent attention. For example, such a gap (provably) exists for computing the Chamfer distance~\cite{BIJSW23}, and may also exist for the Earth Mover's Distance~\cite{R19} and Euclidean minimum spanning tree~\cite{CS09,HIM12}. A potential gap may also exist in sublinear models of computation, such as in massively parallel computation and streaming. It has been recently studied for clustering and the facility location problem~\cite{CJKVY22, CGJKV24}, as well as the Euclidean minimum spanning tree~\cite{CJLW22, CCJLW23,JMNZ24}.

\subsection{Our Contribution}

In this work, we study algorithms in massively parallel and streaming models of computation which can provide oracle access to an approximately optimal Euclidean max-cut without paying exponentially in the accuracy parameter $\eps$. In the model of massively parallel computation (MPC), a dataset of $n$ vectors in $\R^d$ is partitioned across $m$ different machines of space $s$ (where $m \cdot s$ is the total space used). The computation proceeds in rounds where machines communicate with each other. Upon termination of the algorithm, each point in the dataset should report which side of an approximately optimal cut it lies in. The following theorem appears formally in Section~\ref{sec:mpc} as Theorem~\ref{thm:mpc-full}.

\begin{theorem}[MPC (Informal)]\label{thm:mpc-intro}
There is a $O(1)$-round fully-scalable MPC algorithm which outputs a $(1+\eps)$-approximate Euclidean max-cut using $O(nd) + n \cdot \poly(\log n / \eps)$ total space.
\end{theorem}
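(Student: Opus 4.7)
The plan is to follow the framework hinted at in the abstract: implement a parallel, subsampled Euclidean version of the Mathieu--Schudy PTAS for dense max-cut. The structural fact driving the algorithm is that, given any near-optimal partition $(L^{\star}, R^{\star})$ of $X$, the optimal side for each point $x$ is determined by comparing the two distance-sums $\sum_{j \in L^{\star}}\|x-x_j\|$ and $\sum_{j\in R^{\star}}\|x-x_j\|$; $x$ is placed on the side whose \emph{opposite} sum is larger. Thus it suffices to construct, in small space and a few MPC rounds, a data structure that outputs, for every $x$, a sufficiently good approximation to both sums under \emph{some} $(1+\eps)$-optimal cut.

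\textbf{Subsampled estimators and the core analysis.} A uniform random sample $T \subseteq X$ of size $s = \poly(\log n /\eps)$, partitioned as $T_L \sqcup T_R$ according to a near-optimal cut, induces the natural scaled estimators $\widehat{L}(x) := (n/s)\sum_{u \in T_L}\|x-u\|$ and $\widehat{R}(x) := (n/s)\sum_{u \in T_R}\|x-u\|$. The key analytic claim is that Euclidean max-cut is ``dense'' in the sense that $\mathrm{OPT} \geq \tfrac{1}{2}\Phi$ for $\Phi := \sum_{i<j}\|x_i-x_j\|$, since a uniformly random cut captures half of $\Phi$ in expectation. Combining Hoeffding/Bernstein concentration with a scale-bucketing argument that controls the variance of $\|x - x_j\|$ across distance scales, I would show that with high probability over $T$ the estimators are accurate up to additive $O(\eps\Phi/n)$ for all but an $\eps$-fraction of points $x$. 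The greedy rule then assigns the accurate points to the correct side, and the $\eps n$ ``bad'' points contribute at most $\eps\Phi \leq O(\eps)\cdot\mathrm{OPT}$ to the deficit, giving a $(1+\eps)$-approximate cut after rescaling $\eps$.

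\textbf{MPC implementation.} In the fully-scalable regime, $X$ is distributed across machines with $n^{\delta}$ points each. In round one, sample $T$ via independent Bernoulli$(s/n)$ draws at each machine and collect on a coordinator (total space $O(sd) = d\cdot\poly(\log n/\eps)$). In a constant number of subsequent rounds, broadcast the current guess of the partition $T_L \sqcup T_R$ to every machine via standard MPC routing; each machine iterates over its local points, computes $\widehat{L}(x), \widehat{R}(x)$, records the greedy side assignment, and contributes to the global cut value via $O(1)$-round aggregation. A final round outputs the per-point assignments. Space per machine is dominated by the input ($n^{\delta}d$) plus $O(sd)$ for the sample and one bit per local point for the assignment, so total space is $O(nd) + n\cdot\poly(\log n/\eps)$ as required.

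\textbf{Main obstacle.} The principal difficulty is obtaining the partition $T_L \sqcup T_R$ of the sample that corresponds to some near-optimal cut, \emph{within} the stated space budget: naive enumeration over all $2^{|T|}$ candidate partitions of $T$ would require $2^{\poly(\log n/\eps)}$ space and violate the target. One must therefore either seed the process from a much smaller $\poly(1/\eps)$-size sub-sample whose $2^{\poly(1/\eps)}$ cuts can be enumerated on a single machine and then iteratively refine the partition of the larger $T$ via the greedy rule across $O(1)$ rounds, or directly analyze the stability of the greedy rule under \emph{approximately} correct sample partitions. Either route requires carefully tracking how estimator errors compose across rounds of refinement while preserving the $(1+\eps)$ guarantee. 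A secondary subtlety is handling instances whose optimum is unbalanced; a standard density argument shows that a nearly-balanced $(1+\eps)$-approximate cut exists, but verifying this fits the pipeline is a non-trivial check I expect to need.
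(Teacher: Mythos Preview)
The central gap is in the one-shot greedy extension itself, not in the concentration step. Take $n$ points forming a regular simplex (all pairwise distances $1$, embeddable in $\R^{n-1}$, or use $n$ random unit vectors in $\R^{O(\log n)}$ for a low-dimensional near-simplex). Here your estimators are in fact accurate for \emph{every} point---with $s\gg 1/\eps^2$ uniform samples one has $|\widehat L(x)-L^*(x)|\le\eps\Phi/n$ for all $x$---yet one-shot greedy fails catastrophically: for any partition $(T_L,T_R)$ of any sample, every $x\notin T$ sees $\widehat L(x)=(n/s)|T_L|$ and $\widehat R(x)=(n/s)|T_R|$, so all non-sample points go to the same side and the cut has value $O(sn)$ instead of $n^2/4$. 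The flawed sentence is ``the greedy rule then assigns the accurate points to the correct side'': when a point's true margin $|L^*(x)-R^*(x)|$ is below the estimator error, greedy may flip it, and in the simplex \emph{every} point has margin $1\ll\eps\Phi/n$. When many such points flip simultaneously, the quadratic interaction term $\tfrac12\sum_{i,j}d(x_i,x_j)(z_i-z^*_i)(z_j-z^*_j)$ in $f(z)-f(z^*)$ is $\Theta(\Phi)$ and swamps the linear analysis. The paper's fix is to parallelize the \emph{sequential} Mathieu--Schudy greedy rather than do a one-shot extension: each $x_i$ draws a random activation time, and its greedy choice is computed against a subsample (the ``mask'') of points activated strictly earlier (Theorem~\ref{thm:main-structural}). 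Because $O(1)$ points activate per time step in expectation, the quadratic contribution at step $t$ is $O(\Phi/t^2)$ (Lemma~\ref{lem:non-linear}); summing over $t>t_0$ gives $O(\eps\Phi)$. Your ``seed from a tiny sub-sample and iteratively refine'' workaround is the right instinct, but the refinement must be sequential in exactly this sense, and the fictitious-cut argument of Section~\ref{sec:greedy-analysis} is the technical core that carries it through.

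Two smaller corrections. First, on non-uniform instances uniform sampling does not give the concentration you claim: the variance of $\|x-x_j\|$ for uniform $j$ scales with the diameter, and scale-bucketing does not repair this without an aspect-ratio assumption. The paper samples with probability $w_j\propto\sum_k\|x_j-x_k\|$ (Definition~\ref{def:metric-compat}), and the metric inequality $d(x,x_j)/w_j\le O(1)\sum_k d(x,x_k)$ of Lemma~\ref{lem:weight-to-compatible} (going back to~\cite{FK01}) is precisely what bounds the estimator uniformly in $x$; these weights are computed in $O(1)$ rounds via $\ell_1(\ell_p)$ sketches. Second, the obstacle you flag is not one: enumerating $2^m$ seeds costs time, not space---a single machine loops over seeds, evaluating each on a secondary importance sample (Lemma~\ref{lem:seed-check}), and the MPC model does not bound per-round local computation. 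The paper does exactly this with $m=\poly(\log n/\eps)$.
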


In contrast with~\cite{CJK23} (which was designed for dynamic streaming, so applies in MPC), it can report an approximately optimal max-cut instead of solely the value. Alternatively,~\cite{FS05} does approximate the Euclidean max-cut, but requires the total space at least $n \cdot \poly(\log n) \cdot \exp(\poly(1/\eps))$. As a point of comparison, the recent work of~\cite{CGJKV24} provides a constant-round $O_{\eps}(1)$-approximation to Euclidean facility location in fully-scalable MPC with total space $O(n^{1+\eps})\cdot \poly(d)$, and \cite{JMNZ24} provides a $\tilde{O}(\log\log n)$-round, constant-factor approximation to the Euclidean minimum spanning tree in fully scalable MPC with total space $O(nd + n^{1+\eps})$. 

As we detail in Section~\ref{sec:tech-overview}, Theorem~\ref{thm:mpc-intro} will come from designing a ``parallel'' and ``subsampled'' version of an elegant greedy algorithm for max-cut in dense instances~\cite{MS08}. This approach is a unifying theme throughout this paper, and Theorem~\ref{thm:mpc-intro} is the first (and simplest) application. Then, we show that this same framework gives rise to a low-space insertion-only streaming algorithm which gives oracle access to an approximate max-cut. More specifically, in insertion-only streaming, a dataset of vectors $X = \{ x_1, \dots, x_n \} \subset [\Delta]^d$ is presented as a sequence of vector insertions; then, the algorithm provides oracle access to a function which determines, for each point $x_i$, which side of the cut it lies in. The following theorem appears formally in Section~\ref{sec:insertion-only} as Theorem~\ref{thm:insertion-only}. 

\begin{theorem}[Insertion-Only Streaming (Informal)]\label{thm:insertion-intro}
There is an insertion-only streaming algorithm using $\poly(d \log \Delta / \eps)$ space which provides oracle access to a $(1+\eps)$-approximate Euclidean max-cut. 
\end{theorem}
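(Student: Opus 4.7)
The plan is to adapt the ``parallel, subsampled'' variant of the Mathieu--Schudy greedy algorithm used to prove Theorem~\ref{thm:mpc-intro} to the insertion-only streaming model, exploiting the fact that reservoir sampling maintains a uniform random sample of size $s$ in $O(sd\log\Delta)$ space. The key observation is that the per-point decision rule of that framework depends only on a small random sample $S$ of the dataset and on the chosen partition of $S$, so it can be implemented as a query-time oracle once $S$ is stored.

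First, during the stream I would use reservoir sampling to maintain a uniform random sample $S \subseteq X$ of size $s = \poly(d\log\Delta/\eps)$, together with the running count $n$. This is the only information kept, and uses $\poly(d\log\Delta/\eps)$ space in total. Then, lazily at query time (after the stream ends), I would select an approximately optimal partition of $S$ by enumerating candidate partitions $(S_1,S_2)$ of $S$ and scoring each by an unbiased estimator of the value of the Euclidean cut obtained by extending $(S_1,S_2)$ to $X$ via the greedy rule ``assign $x$ to side $b$ maximizing $\sum_{y \in S_{3-b}}\|x-y\|$''. Since $S$ is a uniform sample of $X$, the score---a sum over $x \in X$---is estimated by a scaled average over $x \in S$. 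Denote the maximizer by $(S_1^\star, S_2^\star)$. The oracle, on query $x$, returns the side determined by the same greedy rule applied to $(S_1^\star, S_2^\star)$; consistency across queries is automatic because $(S_1^\star, S_2^\star)$ is fixed after the stream terminates.

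The correctness argument mirrors the three-step template used in Theorem~\ref{thm:mpc-intro}: (i) the trace on $S$ of the optimal cut of $X$ is among the enumerated candidates, and the Mathieu--Schudy greedy analysis, adapted from Hamming weights to Euclidean weights, shows its greedy extension gives a $(1+\eps)$-approximate max-cut; (ii) the sample-based scores concentrate around their true values by Hoeffding/Bernstein inequalities, so the maximizing candidate $(S_1^\star,S_2^\star)$ is nearly optimal among partitions of $S$; and (iii) the oracle's decision rule literally matches the rule used to define the scored extension, so the cut the oracle returns is the one whose value was estimated. The structural fact used in (i) is that most points' contributions to the optimal cut are ``robust'', meaning the sample-based score on the correct side exceeds that on the wrong side by a margin large enough that the greedy decision recovers the optimal side for all but an $\eps$-fraction of the total cut value.

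The main technical obstacle will be making a sample of size $\poly(d\log\Delta/\eps)$ simultaneously suffice for the scoring step and for the per-point greedy decisions. Individual Euclidean distances in $[\Delta]^d$ can be as large as $\sqrt{d}\cdot\Delta$, while the max-cut value involves $\Theta(n^2)$ pairs, so naive Hoeffding bounds give error scaling poorly in $n$. Controlling variance likely requires normalizing by an estimate of the average pairwise-distance scale (so each sampled summand contributes a bounded fraction of the whole), in the spirit of the sampling machinery of~\cite{AVKK03, RV07, CJK23}. A secondary complication is that the space bound is independent of $n$, so normalizations by $|X_1|$ or $|X_2|$ (the side sizes in the extended cut) must be replaced by quantities derivable from $S$ and $n$ alone, which is what forces the greedy rule to cancel these factors as above. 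Finally, enumerating all $2^{|S|}$ partitions of $S$ is exponential in the space, which is acceptable for a streaming space bound but motivates a more structured enumeration (e.g.\ over coarse profiles) in practice.
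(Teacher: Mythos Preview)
Your proposal has a genuine gap: uniform reservoir sampling cannot support a $(1+\eps)$-multiplicative guarantee for Euclidean max-cut with sample size independent of $n$. Consider $X$ consisting of $n-1$ points clustered within a ball of radius $1$ together with a single point at distance $\Delta$ from the cluster. The total pairwise distance is $T = \Theta(n\Delta)$ and the max-cut is $\Theta(n\Delta)$, dominated by edges touching the single far point. A uniform sample of size $s = \poly(d\log\Delta/\eps)$ misses that point with probability $(1-1/n)^s \to 1$ as $n$ grows; when it does, every partition of $S$ has (rescaled) value $O(n^2) \ll \eps T$, and your greedy rule ``assign $x$ to the side maximizing $\sum_{y\in S_{3-b}}\|x-y\|$'' makes essentially arbitrary decisions, so the extended cut has no approximation guarantee. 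Your acknowledged fix---normalizing by an average-distance estimate---controls the scale of each summand but does not address the fact that a uniform sample simply fails to see the points that carry the cut value. In the language of Definition~\ref{def:metric-compat}, uniform weights $w_j = 1/n$ are only $\lambda$-compatible for $\lambda = \Omega(n)$ on such instances, which destroys the variance bounds you need.

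The paper's approach replaces uniform sampling by importance sampling with geometric weights $w_j \propto \sum_\ell d(x_j,x_\ell)$, which are $O(1)$-compatible for any metric (Lemma~\ref{lem:weight-to-compatible}). Concretely, it maintains these weights on the fly via an $\ell_1(\ell_p)$ cascaded sketch (Lemma~\ref{lem:computing-weights}), and rather than storing a single sample $S$, it runs the parallel/subsampled greedy process of Theorem~\ref{thm:main-structural}: each point $x_j$ draws its own activation timeline with bias $w_j$, and only the few points that are simultaneously ``activated and kept'' are stored. The greedy rule then weights each stored point by the inverse of its sampling probability, so high-weight outliers are properly accounted for even though they are rare. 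A second, more technical, obstacle your proposal does not address is that $w_j$ changes as points arrive; the paper handles this by tracking the running minimum weight and showing (Claim~\ref{cl:final-min}) that it stays within a constant factor of the final weight, so a point ``forgotten'' mid-stream remains forgettable at the end. Finally, the random oracle (derandomized via Nisan's PRG) lets the algorithm reconstruct a queried point's activation time at query time without having stored it, which is what makes oracle access possible in $\poly(d\log\Delta/\eps)$ space.
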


We emphasize that, conceptually, the proof of Theorem~\ref{thm:insertion-intro} is another application of the same framework used for Theorem~\ref{thm:mpc-intro}. We then turn our attention to the dynamic streaming model, where vectors in $[\Delta]^d$ may be both inserted and deleted, and the streaming algorithm must provide oracle access to an approximate max-cut of the set of vectors which remain. Even though~\cite{CJK23}, could not provide access to the cut itself, we show that their main algorithmic primitive, ``geometric sampling,'' can be combined with our framework to provide access to the underlying cut.

\begin{theorem}[Dynamic Streaming (Informal)]\label{thm:dynamic-intro}
There is a dynamic streaming algorithm using $\poly(d\log \Delta /\eps)$ space which provides oracle access to a $(1+\eps)$-approximate Euclidean max-cut.
\end{theorem}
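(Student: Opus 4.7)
The plan is to instantiate the same parallel subsampled greedy framework used to prove Theorem~\ref{thm:insertion-intro}, and replace the one place where the insertion-only algorithm touches the stream, namely the maintenance of uniform samples from $X$, with the geometric sampling primitive of~\cite{CJK23}, which operates in dynamic streams using $\poly(d\log\Delta/\eps)$ space per sampler. At a high level, the oracle proceeds in three stages: during the stream, maintain $k = \poly(\log n / \eps)$ (approximately) independent (approximately) uniform samples from the current dataset together with a $(1\pm\eps)$ estimate of $n = |X|$; after the stream, run the sampled Mathieu--Schudy style greedy procedure on the stored sample to obtain a partition of the sample into two committees $A$ and $B$; on query $x$, return the side $\sigma(x) \in \{A, B\}$ whose scaled total distance to $x$ is smaller, thereby maximizing $x$'s contribution to the cut.

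First I would make the reduction explicit: isolate the interface between the streaming algorithm of Theorem~\ref{thm:insertion-intro} and the dataset, and show it consists of uniform samples from $X$ together with standard counting and norm sketches. In insertion-only streams these are delivered by reservoir sampling; in dynamic streams they are delivered by~\cite{CJK23}'s geometric sampler and off-the-shelf distinct-elements/$F_0$ sketches, each of size $\poly(d \log \Delta / \eps)$. The post-processing and query stages are syntactically identical to those of Theorem~\ref{thm:insertion-intro}, so correctness only needs to be re-examined against the slightly imperfect sampling. The total space is then $k$ copies of the geometric sampler plus $O(1)$ auxiliary sketches, giving the claimed $\poly(d\log\Delta/\eps)$ bound.

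The main obstacle will be that, unlike reservoir sampling, the geometric samplers of~\cite{CJK23} return samples with small distributional bias and nonzero rejection probability, and independent instances are not strictly independent uniform draws. To handle this, I would couple each geometric sample to a truly uniform draw from $X$, arguing that the total variation distance between the two distributions can be pushed below $\eps/\poly(\log n)$ by standard padding of accuracy parameters inside~\cite{CJK23}. Since the Mathieu--Schudy analysis ultimately reduces to Chernoff/Hoeffding-type concentration of sums of bounded Euclidean distances (each distance is $O(\sqrt{d}\,\Delta)$ and one normalizes by the estimate of $n$), a coupling error of $\eps/\poly(\log n)$ on each of the $k$ samples translates, via a union bound, into at most an additive $\eps \cdot \mathrm{MaxCut}(X)$ error in every empirical statistic used by the greedy procedure and by the per-query side-selection rule. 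This preserves the $(1+\eps)$-approximation of the oracle's assignment with high probability and yields the theorem.
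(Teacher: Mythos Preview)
Your proposal rests on a misreading of both the insertion-only algorithm and the geometric sampling primitive of~\cite{CJK23}. The insertion-only algorithm of Theorem~\ref{thm:insertion-intro} does not maintain uniform samples from $X$; it computes, for each inserted point, a weight proportional to that point's sum of distances to all other points (via $\ell_1(\ell_p)$ sketches), and uses these weights to generate per-point activation timelines. Correspondingly, the geometric sampler of~\cite{CJK23} does not return approximately uniform draws from $X$: it returns a point $x$ with probability proportional to $\sum_{y\in X} d(x,y)$, together with an estimate of that probability. So your plan to ``couple each geometric sample to a truly uniform draw'' targets the wrong distribution entirely; the total variation to uniform is $\Omega(1)$ in general, and no padding of accuracy parameters changes this.

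Even with the right sampling distribution in hand, the central difficulty you do not address is the loss of independence. In the insertion-only and MPC algorithms, each point $x_i$ can generate its own timeline $\bA_i$ independently because it knows its own weight. In a dynamic stream the dataset is revealed only at the end, so one cannot simulate independent per-point timelines. The paper instead generates a \emph{single} mask $\bK$, and whenever $\bK_t=1$ uses one geometric sample to decide which point (if any) is activated-and-kept at time $t$ (Definition~\ref{def:corr-timeline-mask}). This makes the columns of the activation matrix correlated: at any kept time, at most one point activates. The structural theorem (Theorem~\ref{thm:main-structural}) used for insertion-only relies on independence across points in the martingale/fictitious-cut argument, so it does not carry over; the paper must prove a new structural theorem (Theorem~\ref{thm:main-structural-dynamic}) handling both the correlation and the fact that the sampler only returns an \emph{approximate} probability $\bp^*$ rather than the true $p$. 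These are the actual technical contributions needed for the dynamic case, and your proposal does not engage with either.
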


Theorem~\ref{thm:dynamic-intro} is strictly stronger than Theorem~\ref{thm:insertion-intro} and Theorem~\ref{thm:mpc-intro}, since the algorithm from Theorem~\ref{thm:dynamic-intro} may be used (in a black-box fashion) in insertion-only streaming and massively parallel computation. The benefit to differentiating them is that the proof of Theorem~\ref{thm:dynamic-intro} adds several layers of complexity in the analysis. As we describe in more detail, in massively parallel computation and insertion-only streams, we are able to simulate an algorithm where each point from the input dataset behaves independently---this independence is crucial to both Theorem~\ref{thm:mpc-intro} and Theorem~\ref{thm:insertion-intro}. The algorithm for Theorem~\ref{thm:dynamic-intro} will forego the independence of the dataset points, and this necessitates re-analyzing our framework.

\subsection{Technical Overview}\label{sec:tech-overview}

As mentioned above, the streaming algorithm of~\cite{CJK23} builds on a line-of-work on random sampling for max-CSPs~\cite{AVKK03, RV07}. These works (tailored to max-cut) show that in any graph on $n$ nodes, the max-cut value of 
 the subgraph induced by a random sample of $s = O(1/\eps^4)$ nodes (re-scaled by $n^2/s^2$) gives an additive $\pm\eps n^2$ approximation to the max-cut of the graph with high (constant) probability.~\cite{CJK23} show the analogous result for geometric graphs; the $\pm \eps n^2$ approximation becomes $\pm \eps T$, where $T$ is the total sum of edge weights and is a $(1\pm\eps)$-multiplicative approximation since the max-cut is always at least $T/2$. Instead of (uniform) random sampling of nodes as in~\cite{AVKK03,RV07}, one samples a node with probability proportional to the weighted degree (re-scaling is as in ``importance sampling''). Their streaming algorithm takes ``geometric samples''~(developed in~\cite{CJK23} to sample points with probability proportional to the weighted degree, or \emph{weight}), and finds the max-cut on the sample. The problem is the following: a max-cut on a sample does not necessarily extend beyond the sample. In this paper, we provide a method to do just that in streaming and massively parallel models: given ``geometric samples'' and a cut of those samples, provides query access to an approximately optimal cut of the entire dataset. Given a dataset point, we can determine whether it lies inside or outside a cut---the guarantee is that the resulting cut is a $(1+\eps)$-approximate max-cut.  

 \paragraph{A Parallel and Subsampled Greedy Algorithm.} Our starting point is an elegant analysis of the following greedy algorithm of~\cite{MS08} for non-geometric graphs: given a graph $G$ on $n$ nodes, process nodes in a uniformly random order; for the first $t_0 = O(1/\eps^2)$ nodes, try all possible cut assignments, and for a fixed assignment of the first $t_0$ nodes, the remaining nodes are iteratively assigned greedily; a node picks a side maximizing the contribution to already-assigned points. The fact that such extensions achieve an additive $\pm \eps n^2$-approximation in (non-geometric) graphs is already highly non-trivial in classical models of computation; for sublinear models (like massively parallel and streaming algorithms) there are two immediate challenges to overcome:
 \begin{itemize}
     \item \textbf{Challenge \#1: Random Order}. Processing the nodes in a uniformly random order is central to the analysis in~\cite{MS08} and directly at odds with the streaming model, whose inputs are received in potentially adversarial order. 
     \item \textbf{Challenge \#2: Space Complexity.} The assignment of a node depends on the assignment of ``previously-assigned'' nodes. This would seemingly require large space, or access to the input the moment a node is being assigned. In streaming, providing query access after fully processing the input means that neither large space nor access to the input is available. 
 \end{itemize}
 As we describe next, we show how a ``parallel'' and ``subsampled'' algorithm can overcome both challenges in Euclidean instances. The parallel nature of the assignment rule will mean that, irrespective of the order in which points are processed, we are able to simulate a greedy algorithm. Furthermore, a greedy rule for assigning a point will pick the side of the cut which maximizes the sum of edge weights crossing the cut; since we cannot implement this greedy rule exactly, we use a sub-sample to estimate the greedy assignment. However, sub-samples cannot be independently drawn for each point assignment---if each point chose an independent sample, the overall space would still be too large. We proceed as follows: consider a fixed dataset $X = \{ x_1,\dots, x_n\}$ of $n$ points in Euclidean space.
 \begin{itemize}
     \item A (discrete) time axis begins at $t = 0$ toward an end time $t_e$. Each point $x_i$ generates a timeline $\bA_i \in \{0,1\}^{t_e}$: at each time $t = 1,\dots, t_e$, it  ``activates'' by sampling $\bA_{i,t} \sim \Ber(\bw_i^t)$ with probability $\bw_i^t$, which is proportional to its weight.\footnote{Throughout the paper we use bolded variables to denote random variables. As we discuss later, we will have to decrease the weight as time proceeds, and hence the dependence on $t$ on the weight $\bw_i^t$ in Definition~\ref{def:activate}.} A point assigns itself to inside/outside the cut the moment it is first activated (see Definition~\ref{def:activate}). As in~\cite{MS08}, points activated by time $t_0$ try all cut assignments.
     \item Points activated after $t_0$ are assigned greedily, but not by maximizing the contribution to all already-activated points. Alongside a timeline, each $x_i$ samples a mask $\bK_i \in \{0,1\}^{t_e}$: at each time $t$, $x_i$ is ``kept'' by sampling $\bK_{i,t} \sim \Ber(\gamma_t)$. Then, a point $x_i$ activated after $t_0$ is assigned greedily by maximizing a weighted contribution to already-assigned points which were simultaneously ``activated'' and ``kept'' before time $t$, i.e., by considering points $x_j$ where $\bA_{j,\ell} \cdot \bK_{j,\ell} = 1$ for some $\ell < t$ (see Definition~\ref{def:mask}).
 \end{itemize}
 The above is meant to address both challenges simultaneously (see Figure~\ref{fig:timelines-im}). Because each point ``activates'' itself in a parallel manner, the order points are processed is irrelevant---each point behaves independently and checks its timeline and mask to determine whether or not (and when) it was activated and kept. The sum of all weights $\sum_{i=1}^n \bw_i^t$ will always be at most $1$, so we expect one point to activate at each time $t$, and $\gamma_t$ points (which is less than $1$) to simultaneously be ``activated and kept'' at time $t$. Importantly, we only store points which are simultaneously activated and kept (which suffices, since our assignment only considers these points). Roughly speaking, we let $\gamma_t$ be $1$ at $t \leq t_0$ and decay as $1/t$ for $t > t_0$. As we will see, it will suffice for us to set an end time of $t_e = O(n/\eps)$, so only $O(t_0 + \log(n/\eps))$ points are stored in expectation. It remains to show that, even though we modified the assignment, a $(1+\eps)$-approximation is recovered. Theorem~\ref{thm:main-structural} is the main structural theorem concluding this fact; assuming Theorem~\ref{thm:main-structural} (whose proof we overview shortly), the application to MPC and insertion-only streaming setup the data structures to simulate the assignment of Theorem~\ref{thm:main-structural}.

\begin{figure}[h]
    \centering
    \begin{picture}(400, 200)
        \put(0,0){\includegraphics[width=0.8\linewidth]{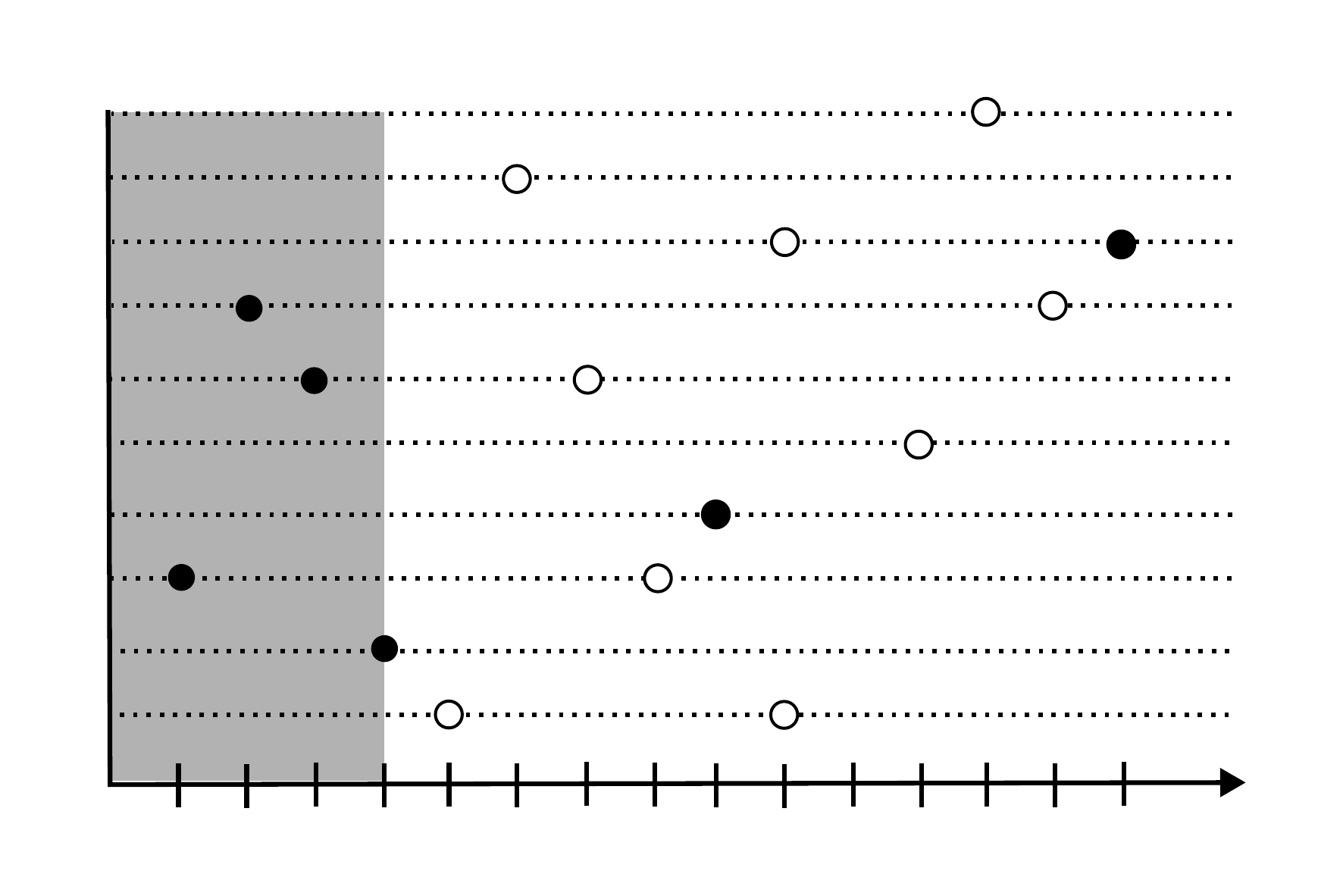}}
        \put(180, 5){Time Axis}
        \put(105, 14){$t_0$}
        \put(17, 48){$x_1$}
        \put(17, 67){$x_2$}
        \put(17, 86){$x_3$}
        \put(17, 105){$x_4$}
        \put(17, 124){$x_5$}
        \put(17, 143){$x_6$}
        \put(17, 162){$x_7$}
        \put(17, 181){$x_8$}
        \put(17, 200){$x_9$}
        \put(13, 219){$x_{10}$}
    \end{picture}
    \caption{Representation of the timelines for each point. The time axis is on the bottom, each each horizontal dotted line corresponds to the timeline of each point $x_1, \dots, x_{10}$. Time $t_0$ appears in the timeline with times up to $t_0$ in gray. For a point $x_i$, a dot on the timeline corresponds to an activation, and a solid dot to an activation which is also kept. Up to time $t_0$, every point which is activated is also kept, but after that, points which are activated at time $t$ are kept with probability $\gamma_t$. A point is assigned greedily whenever it is first activated; the assignment depends on the edge weights to the previously ``activated'' and ``kept'' points (i.e., those with solid black dots before the time of first activation). Hence, it suffices for algorithm to store points which are simultaneously activated and kept (i.e., those with solid dots).}\label{fig:timelines-im}
\end{figure}

 \paragraph{Application to MPC.} The application to the MPC model is in Section~\ref{sec:mpc}, where we show that one can simulate the parallel and subsampled greedy algorithm for Euclidean max-cut. In MPC, points are distributed across machines. First, each point computes its own weight (as well as the sum-of-weights) in $O(1)$ rounds using $\ell_1(\ell_p)$-sketches (Lemma~\ref{lem:mpc-weights}). Since each point computes its own weight, it can generate a timeline and mask according to Theorem~\ref{thm:main-structural}. In particular, each point $x_i$ can generate the horizontal dotted line for $x_i$ in Figure~\ref{fig:timelines-im} and know which times have activated/kept (i.e., the solid/non-solid dots). If there exists a time $t \leq t_e$ where a point is simultaneously activated and kept, it is communicated across all machines along with its weight (recall, there will be few of these). Letting $\gamma_t=1$ for $t \leq t_0$, points activated before $t_0$ are always kept. By taking a few additional samples, we give a procedure for the algorithm to try all possible assignments of the points activated by $t_0$, and to agree on which assignment to use (Lemma~\ref{lem:seed-check}). This procedure for agreeing on an assignment to points activated by $t_0$ is shared in all of our algorithms. Once an assignment of the points activated by $t_0$ is agreed upon, the remaining points assign themselves by maximizing a weighted contribution (as in importance sampling) to the activated and kept points (which were communicated to all machines).

 \paragraph{Application to Insertion-Only Streaming.} The application to insertion-only streaming, Theorem~\ref{thm:insertion-only}, follows a conceptually similar framework. We maintain a sketch which allows a point to compute its weight (and sum of weights) with respect to points inserted so far (Lemma~\ref{lem:computing-weights}). With this weight, a point generates its own timeline and mask, and checks whether there exists a time $t$ such that it is simultaneously activated and kept; if so, it remembers the point and the weight, and if not, then it ``forgets'' the point (Lemma~\ref{lem:timeline-mask}). Once more, one should keep Figure~\ref{fig:timelines-im} in mind; the goal is for each point $x_i$ to know its own dotted horizontal line, so it knows when it was assigned, and whether other point's assignments depend on it (in case it is both activated and kept). An important caveat is that weights change: upon inserting a new point $x$, the weight of a prior point $y$ changes. To address this caveat, we maintain timelines with respect to minimum weights so far, and thus, once we ``forget'' a point, it will not be simultaneously activated and kept when the weight changes in the future (because it always decreases). It turns out, the minimum weights under insertions cannot be smaller than a constant factor of the final weight (Claim~\ref{cl:final-min}). At the end of the stream, we update the timeline of point remembered to a constant-fraction of the final weight (which still gives the desired $(1+\eps)$-approximation guarantee). In order to provide query access to the cut assignment of a point, we determine its (final) weight, and check its activation time in a timeline with respect to the constant-fraction of the final weight. We use Nisan's pseudorandom generator, so that the algorithm can reconstruct the timeline and perfectly simulate the parallel and subsampled greedy algorithm.

 \paragraph{Proof of the Theorem~\ref{thm:main-structural}: Parallel and Subsampled Greedy Algorithm.} For the application to MPC and insertion-only streaming, it remains to analyze the result of the greedy assignment produced by the above process. Our analysis will use a technique of~\cite{MS08} in defining a ``fictitious cut'' and adapting it to our parallel and subsampled process. Recall, the major differences are that (i) there is no underlying random permutation, but rather the $n$ timelines generated by independently sampling activations from $\Ber(\bw_i^t)$ with different weights, and (ii) one cannot execute the greedy rule exactly, and one cannot even use independent sampling to estimate the greedy process; rather, there is one subsampling done (for the entire process) which is dictated by the mask $\bK$. 
 
 The analysis begins in Section~\ref{sec:greedy-process}, where we describe a (random) process, \textsc{GreedyProcess}, which iterates through $t = 1, \dots, t_e$ and generates the timelines and masks for each point, as well as produces the parallel and subsampled greedy assignment from Theorem~\ref{thm:main-structural}. We encode (partial) cuts of $n$ points as $[n] \times \{0,1\}$ matrices $z$ with entries in $[0,1]$; the $i$-th row $z_i$ is $(1,0)$ in case it is assigned to the $0$-side, $(0, 1)$ in case it is assigned to the $1$-side, or $(0, 0)$ if unassigned; we let $f(z)$ measure the sum of uncut edges and approximately minimize $f$.
As in~\cite{MS08}, we track a ``fictitious cut'', is a  $[n]\times \{0,1\}$ matrix, $\hat{\bz}^t$ for $t = 0, \dots t_e$, which evolves over time (Definition~\ref{def:fictitious}). The fictitious cut will be defined so as the satisfy the following guarantees:
\begin{enumerate}
    \item The fictitious cut $\hat{\bz}^0$ begins by encoding the (optimal) max-cut $z^*$, and as \textsc{GreedyProcess} executes, becomes the output cut $\hat{\bz}^{t_e}$. Each $x_i$ and time $t$ has $\hat{\bz}^{t}_i$ set (and forever fixed) the first time $\bA_{i,t} \sim \Ber(\bw_i^t)$ is set to $1$ (i.e., when it first activates). Hence, it suffices to upper bound $\Ex[f(\hat{\bz}^t) - f(\hat{\bz}^{t-1})]$ for each $t$, since the sum telescopes to $\Ex[f(\hat{\bz}^{t_e})] - f(z^*)$.
    \item When a point $x_i$ is first activated at time $t > t_0$, we set $\hat{\bz}_i^t$ (and it is forever fixed) as follows. First, we evaluate
    \begin{align} 
    \sum_{j=1}^n d(x_i, x_j) \cdot \left(\frac{1}{t-1} \sum_{\ell = 1}^{t-1} \dfrac{\bA_{j,\ell} \cdot \bK_{j,\ell}}{\bw_j^{\ell} \cdot \gamma_{\ell}} \right) \cdot \hat{\bz}^{t-1}_j, \label{eq:contrib-funct}
    \end{align}
    which gives the two numbers (recall $\hat{\bz}^{t-1}_j$ is the $n$-th row consisting of two numbers) and are interpreted as estimated contribution to either side of the cut. We set $\hat{\bz}_i^t$ greedily to $(0,1)$ if the first entry is larger, and to $(1,0)$ if the second is larger. 
    \item If $x_i$ is not activated at time $t > t_0$, we also update $\hat{\bz}_i^t$. This update is solely for the analysis for a reason we explain shortly. It also appears in~\cite{MS08} in a simpler form (because points here are weighted), and we will see the weights add a few technical complications.
\end{enumerate}

Since our goal is to upper bound $\Ex[f(\hat{\bz}^t) - f(\hat{\bz}^{t-1})]$ for each $t = 1, \dots, t_e$, a natural approach is to fix an execution up to time $t-1$, and then to randomize (and take the expectation) over the randomness in $\bA_{i, t}$ and $\bK_{i, t}$ for all $i$. By a simple manipulation, the change $f(\hat{\bz}^t) - f(\hat{\bz}^{t-1})$ can be expressed as a sum of two types of terms: type-A terms sum over $i$ of $(\hat{\bz}_i^t - \hat{\bz}_i^{t-1}) \sum_{j=1}^{n} d(x_i, x_j) \hat{\bz}_j^{t-1}$ which measures, roughly speaking, the total weight change when a reassignment of $x_i$ incurs, and type-B terms sum over all $i, j$ of $d(x_i, x_j) \cdot (\hat{\bz}_i^{t} - \hat{\bz}_i^{t-1})(\hat{\bz}_j^t - \hat{\bz}_j^{t+1})$ which accounts for points $x_i$ and $x_j$ which change simultaneously (expression starting in (\ref{eq:linear-term})). Type-B terms are easy to handle; roughly speaking, a single term $i, j$ of $(\hat{\bz}_i^t - \hat{\bz}_i^{t-1})(\hat{\bz}_i^t - \hat{\bz}_i^{t-1})$ is large when $x_i$ and $x_j$ activate simultaneously, which occurs with probability $\bw_i^t \bw_j^t$ and is small enough to upper bound the expectation (Lemma~\ref{lem:non-linear}). 

The challenging terms are of type-A, and we informally refer to $(\hat{\bz}_i^t - \hat{\bz}_i^{t-1})$ as the $i$-th change, and $\sum_{j=1}^n d(x_i, x_j) \hat{\bz}_j^{t-1}$ as the $i$-th contribution. When $x_i$ is activated, the new value of \smash{$\hat{\bz}_i^t$} is set greedily according to (\ref{eq:contrib-funct}). Importantly, in order to evaluate (\ref{eq:contrib-funct}), it suffices to consider those $j \in [n]$ where $\bA_{j,\ell} \cdot \bK_{j,\ell} = 1$ for $\ell < t$ (i.e., those which have been simultaneously activated and kept before time $t$), and this is the reason we only keep activated and kept points. On first glance, (\ref{eq:contrib-funct}) appears to be an unbiased estimator for the $i$-th contribution (recall $\bA_{j,\ell} \sim \Ber(\bw_j^{\ell})$ and $\bK_{j,\ell} \sim \Ber(\gamma_{\ell})$). If so, a greedy setting of $\hat{\bz}_i^t$ with respect to the $i$-th contribution guarantees type-A terms are all negative (Lemma~\ref{lem:greedy-choice-opt}, whose analogous statement appears in~\cite{MS08}), and gives the required upper bound. 

The important caveat is that (\ref{eq:contrib-funct}) may not be an unbiased estimator for the $i$-th contribution because the fictitious cut $\hat{\bz}^{t-1}_j$ and the timelines and masks are not independent. 
The fix involves updating \smash{$\hat{\bz}_j^{\ell}$} when $j$ is not activated at time $\ell$---this way, one can ensure (\ref{eq:contrib-funct}) becomes an unbiased estimator (the second case of (\ref{eq:shadow-def}) in Definition~\ref{def:fictitious}). This also appears in~\cite{MS08}, albeit a different update which depends on the weight.\footnote{In the weighted setting, a technical issue arises for times when $1/\ell$ is larger than the weight. In these cases, $\hat{\bz}_i^{t-1}$ may have entries which grow beyond $[0,1]$, which is a problem for Lemma~\ref{lem:greedy-choice-opt}. Thus, we enforce $\bw_i^{\ell} = \min\{ w_i, 1/\ell\}$ when $x_i$ has not yet activated.} In fact, a stronger statement holds showing that for any fixed execution of the first $t-1$ steps, the expected error of (\ref{eq:contrib-funct}) at time $t$, denoted $\berr_i^{t+1}$, is smaller than the error of (\ref{eq:contrib-funct}) at time $t-1$, denoted $\berr_i^t$ (Lemma~\ref{lem:martingale}), which also occurs in~\cite{MS08} as the ``martingale argument''. Thus, we account for type-A terms by upper bounding the expected deviation $|\berr_i^{t+1}|$ of (\ref{eq:contrib-funct}) in Lemma~\ref{lem:error-bound}, which is where the fact $\gamma_t$ cannot decay too rapidly comes in. 

\paragraph{Incorporating the Mask.} If $\gamma_t$ is always one and there is no sub-sampling (as in~\cite{MS08}), the additional error $|\berr_i^{t+1}|$ at each time $t$ becomes $T / t^{3/2}$ (in (\ref{eq:deviation-exp}), set $\gamma^{\ell} = 1$ and the expression involving distances simplifies to $O(T)$ by our setting of weights,  Definition~\ref{def:metric-compat}). Once we sum over the $i$-th change, $\hat{\bz}^t_i - \hat{\bz}^{t-1}_i$ (Lemma~\ref{lem:non-linear}) and sum over all times larger than $t_0$, the total error becomes the sum over all $t > t_0$ of $T / t^{3/2}$ which is on the order of $T / \sqrt{t_0}$ (before $t_0$, we try all possible assignments and incur no error). Notice that this is exactly analogous to~\cite{MS08}---the parameter $T$, which is the sum of edge weights corresponds to $O(n^2)$ in dense graphs, and $t_0 = O(1/\eps^2)$ would work. From (\ref{eq:deviation-exp}), we can see that a smaller setting of $\gamma_{t}$ would increase the variance (and hence the error incurred), but one can still control the error as long as $\gamma_t$ does not decay too rapidly. For $\gamma_{\ell} = \gamma / \ell$ in (\ref{eq:deviation-exp}), we obtain an error per time step of $T (\sqrt{\gamma} / t^2 + 1 / (t\sqrt{\gamma}))$ (see Lemma~\ref{lem:error-bound}). This final expression is also $\eps T$ when $\gamma = \log^2(t_e) / \eps^2$ and $t_0 \geq \sqrt{\gamma} / \eps = \log(t_e) / \eps^2$. 

\paragraph{Dynamic Streaming Algorithm.} So far, our analysis proceeded by generating the timelines per timestep; in particular, Figure~\ref{fig:timelines-im} considered the horizontal dotted lines: each point $x_i$ independently samples the horizontal row $x_i$ (according to the weight), and the space complexity is dominated by the number of simultaneously activated and kept points (the solid dots in Figure~\ref{fig:timelines-im}). In a dynamic streaming algorithm, the underlying set of vectors (and hence, which  horizontal dotted lines will appear) is only determined at the end of the stream. Instead, we focus on generating the columns of Figure~\ref{fig:timelines-im} which will hold solid black dots (see Figure~\ref{fig:timelines-dyn}). Specifically, we will use the ``geometric sampling'' technique of~\cite{CJK23}. A geometric sampling sketch is a small-space data structure which processes a dynamic dataset of vectors $X \subset [\Delta]^d$ and can produce a sample $\bx \sim X$ with probability proportional to the weighted degree of $\bx$. In addition, it returns an accurate estimate of the probability $\bp^*$ that $\bx$ was sampled (for ``importance sampling''). We use the following approach (Definition~\ref{def:corr-timeline-mask}):
\begin{itemize}
    \item We sample one mask $\bK \in \{0,1\}^{t_e}$ to determine which times will have points simultaneously activated and kept (instead of independent masks for each point). As in Definition~\ref{def:mask}, $\bK_t \sim \Ber(\gamma_t)$, where $\gamma_t = 1$ for all $t \leq t_0$, and $\gamma_t = \gamma / t$ for $t > t_0$. 
    \item If $\bK_t = 0$, no point is activated and kept at that time and we set $\bA_{i,t} \cdot \bK_t = 0$ for all $i$. If $\bK_t = 1$, we use a geometric sampling sketch to sample $\bx_i \sim X$, and set $\bA_{i,t} \cdot \bK_t = 1$ and for $j \neq i$, $\bA_{j,t} \cdot \bK_t = 0$.
\end{itemize}
Each (individual) entry $\bA_{i,t} \cdot \bK_t$ is distributed as desired (although no longer independent), $\bA_{i,t} \sim \Ber(\bw_i^{t})$ and $\bK_t \sim \Ber(\gamma_t)$. The points which are simultaneously ``activated'' and ``kept'' are recovered by the algorithm (since these are generated by the geometric samples). Because the assignment of a point $x_j$ also depends on the first time it is activated, i.e., the first $t_j$ where $\bA_{j,t_j} = 1$, we maintain a sketch to provide query access to the weight of a point---this allows a point to generate the remaining part of the timeline (the horizontal dotted lines in Figure~\ref{fig:timelines-dyn}). With that, all ingredients are set (Subsection~\ref{sec:dynamic-stream-alg}): while processing the stream, we prepare $\poly(d\log \Delta/\eps)$ geometric sampling sketches, and a sketch to later query weights; once the stream is processed, we generate a single mask $\bK \sim \calK(t_0, \gamma)$ and whenever $\bK_t = 1$, we use a geometric sample to determine which point is simultaneously ``activated'' and ``kept'' at time $t$. On a query $x_j$, we use the sketch to estimate the weight $\bw_j$ and generate the remainder of the timeline $\bA_{j}$; this specifies the activation time $\bt_j$, which is used to simulate the ``parallel'' and ``subsampled'' greedy assignment. The final piece is ensuring the cut output is still a $(1+\eps)$-approximation (Theorem~\ref{thm:main-structural-dynamic}), even though timelines are no longer independent, and the probabilities that geometric samples output are estimates, as opposed to the exact probabilities used. The proof of this final theorem appears in Section~\ref{sec:dynamic-greedy-process}.

\begin{figure}[h]
    \centering
    \begin{picture}(400, 200)
        \put(0,0){\includegraphics[width=0.8\linewidth]{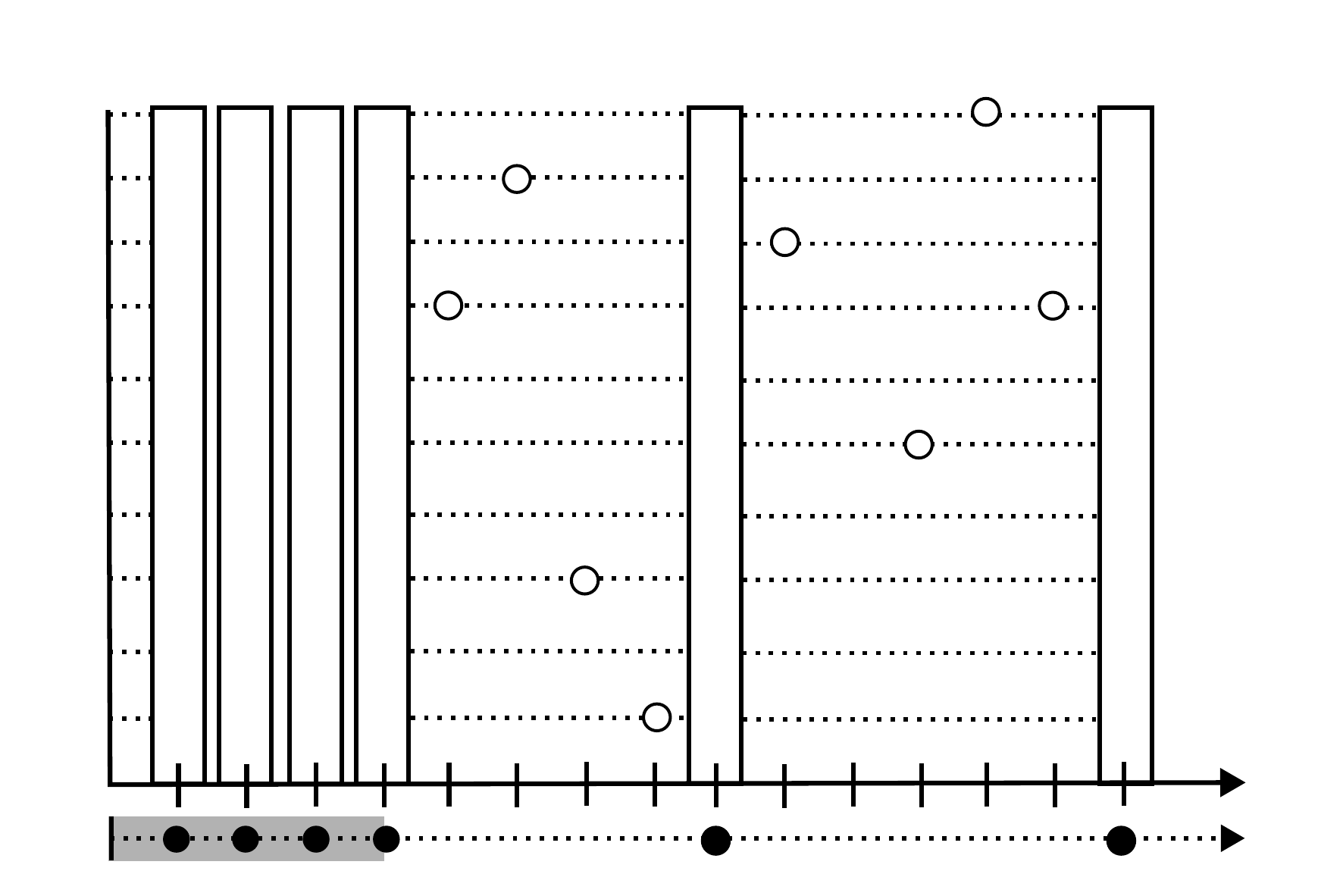}}
        \put(180, 0){Time Axis}
        \put(15, 12){$\bK$}
        \put(105, 0){$t_0$}
        \put(17, 48){$x_1$}
        \put(17, 67){$x_2$}
        \put(17, 86){$x_3$}
        \put(17, 105){$x_4$}
        \put(17, 124){$x_5$}
        \put(17, 143){$x_6$}
        \put(17, 162){$x_7$}
        \put(17, 181){$x_8$}
        \put(17, 200){$x_9$}
        \put(13, 219){$x_{10}$}
    \end{picture}
    \caption{Representation of the timelines for the dynamic streaming algorithm. Right above the time axis on the bottom, we sample one mask $\bK$ which determines which times will contain points which are simultaneously activated and kept---these are the solid black dots on $\bK$. Times before $t_0$ activate and keep a single point, and after $t_0$, points are kept with probability $\gamma_t$. Whenever $\bK_t = 1$ (with a solid black dot), the rectangle above it represents a geometric sampling sketch used to determine which point is activated and kept. After geometric sampling sketches are generated, the horizontal dotted lines represents the (remaining) timelines to be generated for each point.}\label{fig:timelines-dyn}
\end{figure}

\subsection{Related Work}

In relation to our streaming algorithm, this paper pursues a line of work initiated by~\cite{I04} in designing dynamic streaming algorithms for high-dimensional geometric problems. These have been extensively studied especially in the last few years. These include, in addition to Euclidean max-cut~\cite{CJK23}, the Earth Mover's distance~\cite{AIK08, ABIW09, CJLW21}, the Euclidean minimum spanning tree~\cite{CCJLW23}, facility location~\cite{CJKVY22}, and clustering~\cite{BFLSY17}. More broadly, the development of smaller coresets have also given rise significantly more efficient streaming algorithms for geometric problems (see, for example, \cite{WY22, CW23, BCPSS24} and references therein). For massively parallel computation, a line of work initiated by~\cite{ANOY14} on geometric problems has resulted in algorithms for Euclidean MST~\cite{JMNZ24}, Metric MST~\cite{ABJLMZ25}, Facility Location and Clustering~\cite{CGJKV24}, as well as probabilistic tree embeddings~\cite{AAHKZ23}.

The greedy algorithm of~\cite{MS08} was initially developed for dense max-cut, and found other applications in Gale-Berlekamp games \cite{KS09}, and graph-based semi-supervised learning~\cite{WJC13}. In the context of sublinear algorithms, \cite{Y14} has also used random sampling to speed-up the execution of~\cite{MS08} for dense CSPs; the algorithm and analysis differs from this work (the algorithm is still sequential, and sampling done independently for each step). 


\section{Parallel and Subsampled Greedy Assignment}\label{sec:structural-1}

We will first analyze the following ``parallel'' and ``subsampled'' greedy assignment and phrase the main structural theorem. The analysis is based on the greedy algorithm of~\cite{MS08} with two important differences; we can both ``parallelize'' and ``subsample'' the assignment rule used in~\cite{MS08} while still controlling the quality of the cut. This will allow us to maintain all the necessary information needed to simulate the greedy algorithm in various models of sublinear algorithms. An important aspect of the subsequent theorem is being able to sample points from the metric with certain probabilities which are compatible with that metric (essentially, they will need to result in certain low-variance estimates, as in~\cite{FK01}). Toward that, we first define the following two definitions.
\begin{definition}[Metric-Compatible Weights]\label{def:metric-compat}
For a metric space $(X = \{ x_1,\dots, x_n\}, d)$, we say that a set of weights $w = (w_1,\dots, w_n) \in (0, 1/2]^n$ is $\lambda$-compatible if, for every $i, j \in [n]$, 
\begin{align*}
\frac{d(x_i, x_j)}{w_j} \leq \lambda \sum_{k=1}^n d(x_i, x_k). 
\end{align*}
\end{definition}

\begin{definition}[Activation Timeline]\label{def:activate}
Given an end time $t_e \in \N$ and a weight $w \in (0, 1/2]$, an activation timeline $\calT(w)$ is a distribution supported on strings $\{0,1\}^{t_e}$ sampled according to the following procedure:
\begin{itemize}
\item In order to generate $\ba \sim \calT(w)$, we iterate through all $t = 1, \dots, t_e$.
\item For each $t$, we generate $\ba_t \in \{0,1\}$ by first setting 
\[ \bw^t \eqdef \left\{ \begin{array}{cc} \min\{ w, 1/t\} & \forall \ell < t, \ba_{\ell} = 0 \\
								w & \text{otherwise} \end{array} \right. ,\]
and then sampling $\ba_t \sim \Ber(\bw^t)$.
\end{itemize}
For a draw $\ba \sim \calT(w)$, the smallest $\bt$ with $\ba_{\bt} = 1$ is denoted the activation time.
\end{definition}

We will use Definitions~\ref{def:metric-compat} and~\ref{def:activate} as a means of ``parallelizing'' the cut assignment rule. In particular, we will consider a metric $(X, d)$ consisting of dataset points, and every point $x_i$ in the metric $(X, d)$ will be able to compute a metric-compatible weight $w_i$. Since $t_e$ will be a fixed parameter, $x_i$ can sample an activation timeline $\bA_i \sim \calT(w_i)$, and the activation timeline will determine the ``time'' it is assigned to either side of the cut---namely, this will be the activation time of $\bA_i$. If $\bA_{i, t}$ is never set to $1$, then the assignment will be a default value, which we will show occurs with small enough probability so as to not affect the final quality of the cut. At the activation time $\bt_i$ for $x_i$, it will determine the assignment by considering other points in the metric which had already been activated by time $\bt_i$. The last ingredient for our algorithm will be the way to ``subsample'' the assignment rule so that, even though many points in the metric may be activated, the assignment rule depends only on a few of them.

\begin{definition}[Masking]\label{def:mask}
For an end time $t_e \in \N$, as well as a keeping time $t_0\in \N$ and a parameter $\gamma \geq 1$, the mask $\calK(t_0, \gamma)$ is a distribution supported on strings $\{0,1\}^{t_e}$ which is sampled according to the following procedure:
\begin{itemize}
\item In order to generate $\bk \sim \calK(t_0, \gamma)$, we iterate through all $t = 1, \dots, t_e$.
\item For each $t$, we generate $\bk_t \in \{0,1\}$ by first setting
\[ \gamma^t \eqdef \left\{ \begin{array}{cc} 1 & t \leq t_0 \\
							\min\{ \gamma / t, 1 \} & t > t_0 \end{array} \right. ,\]
and we sample $\bk_t \sim \Ber(\gamma^t)$.  
\end{itemize}
\end{definition}

As sketched above, each point $x_i$ will compute a $\lambda$-metric-compatible weight $w_i$ and generate an activation timeline $\bA_i \sim \calT(w_i)$. In addition, each point will also sample a mask $\bK_i \sim \calK(t_0, \gamma)$. We will incorporate the mask in the following way: a point $x_i$ is assigned to a side of the cut at its activation time $\bt_i$ by considering other points in the metric which have been activated before time $\bt_i$, and in addition, were also ``kept'' according to the mask. Namely, the assignment of $x_i$ will depend on point $x_j$ if there exists $\ell < \bt_i$ where $\bA_{j, \ell} \cdot \bK_{j,\ell} = 1$. As a result, even though many points $x_j$ will be activated by time $\bt_i$, only a few will be activated and kept for the same time $\ell < \bt_i$. There is one final component needed in order to describe the assignment rule.

\begin{definition}[Seed]\label{def:seed}
For the $n$ timelines $\bA_1, \dots, \bA_n$ sampled from $\calT(w_1),\dots,\calT(w_n)$, we let:
\begin{itemize}
\item The seed length $\bm \in \N$ be given by $\bm = |\bS^{t_0}|$, where $\bS^{t_0} \eqdef \left\{ j \in [n] : \exists \ell \leq t_0 \text{ where } \bA_{j, \ell} = 1 \right\}.$
\item The seed $\sigma \in \{0,1\}^{\bm}$ assigns the points $x_j$ for $j \in \bS^{t_0}$ to the $0$-side or the $1$-side of the cut. It does this by:
\begin{itemize}
\item Considering a lexicographic order $\pi$ on the set of points $\{ x_j : j \in \bS^{t_0}\}$,
\item Letting $x_j$ be assigned to the $\sigma_{\pi(x_j)}$-side of the cut.
\end{itemize}
\end{itemize}
\end{definition}

\subsection{The $\textsc{Assign}_{\sigma,\bP}(\cdot,\cdot)$ Sub-routine}\label{sec:online-assignment}

As in~\cite{MS08}, we encode (partial) cuts by $n \times \{0,1\}$ Boolean matrices $z \in \{0,1\}^{n \times \{0,1\}}$, where the $i$-th row of the matrix will be $(0,0)$, in the case that point $i$ is currently not assigned, $(1, 0)$, in the case that it is assigned to the $0$-side of the cut, or $(0, 1)$, in the case that it is assigned to the $1$-side of the cut. The $n$ rows of a matrix $z$ encoding a partial cut are naturally associated with the $n$ points of the metric. Toward that end, we describe below an assignment procedure which, under our setup (specified below) and a small amount of additional information, can consider a point $x_i$ and its activation time $t_i$, and output $z_i \in \{(0,1), (1, 0)\}$. The necessary global information is captured in the following, ``timeline-mask'' summary, which will maintain the necessary information from a collection of activation timelines and masks $(\bA_j, \bK_j)$ for $j \in [n]$ needed to perform an assignment.

\begin{definition}[Timeline-Mask Summary]\label{def:timeline-mask-sum}
Consider a metric $(X =\{x_1,\dots, x_n \}, d)$, a set of weights $w_1,\dots, w_n \in [0, 1/2)$, as well as parameters $t_e, t_0\in \N$ and $\gamma \geq 1$.
\begin{itemize}
\item Sample activation timelines and masks $(\bA_j, \bK_j)$ where $\bA_j \sim \calT(w_j)$ and $\bK_j \sim \calK(t_0, \gamma)$ for every $j \in [n]$.
\item The summary of $(\bA_1, \bK_1), \dots, (\bA_n, \bK_n)$ is given by the set of tuples
\begin{align*}
\bP &= \textsc{Summ}(\bA_j, \bK_j ; j \in [n]) \\
	&\eqdef \left\{ (x_j, \ell; w_j, t_j) : \bA_{j,\ell} \cdot \bK_{j, \ell} = 1 \text{ and } t_j \text{ is the activation time of $\bA_j$}\right\}.
 \end{align*}
\end{itemize}
\end{definition}

We now turn to describing the main online assignment sub-routine; this is the algorithm which decides given a point (as well as access to some additional information), which side of the cut the point lies in. We assume that there is an underlying metric $(X = \{ x_1,\dots, x_n \}, d)$, and that the timelines and masks $(\bA_i, \bK_i)$ for the metric have already been sampled, where $\bA_i \sim \calT(w_i)$ and $\bK_i \sim \calK(t_0, \gamma)$. The online assignment rule relies on the following ``global information:''
\begin{itemize}
\item A timeline-mask summary $\bP = \textsc{Summ}(\bA_j, \bK_j : j \in [n])$, as specified in Definition~\ref{def:timeline-mask-sum}, which summarizes information of the activation timelines and masks $(\bA_i, \bK_i)$ succinctly. In particular, we will show that $|\bP|$ is small, and hence the summary $\bP$ may be easily stored and communicated.
\item A seed $\sigma \in \{0,1\}^{\bm}$ for the draw $\bA_1,\dots, \bA_n$, as specified in Definition~\ref{def:seed}. We note that the length of the seed, $\bm$, is of size $|\bS^{t_0}|$, corresponding to points $x_i$ where $\bA_{i, \ell} = 1$ for some $\ell \leq t_0$. Since $\bK_{i,\ell} = 1$ for all $\ell \leq t_0$, the points participating in $\bS^{t_0}$ are stored in $\bP$; so $\bm = |\bS^{t_0}| \leq |\bP|$, and $\bS^{t_0}$ may be re-constructed from $\bP$. 
\end{itemize}

\begin{figure}
\begin{framed}
\textbf{Assignment Rule} \textsc{Assign}$_{\sigma,\bP}(x_i, t_i)$. The activation timelines and masks $(\bA_j, \bK_j)$ for all $j \in [n]$ have been sampled, and $\bP$ is a timeline-mask summary for the activation timelines and masks, and that $\sigma$ is a seed for $\bA_1,\dots, \bA_n$. Since the timeline-mask summary $\bP$ is completely determined by $(\bA_j, \bK_j)$ for $j \in [n]$, we refer to the rule as $\textsc{Assign}_{\sigma}(x_i, t_i)$ for convenience. \\

\textbf{Input}: The point $x_i$ to be assigned and the activation time $t_i$ for the timeline $\bA_i$.  

\textbf{Inductive Hypothesis}: We assume the algorithm has access to a partial cut $\bz = z(\sigma, \bP, t_i) \in \{0,1\}^{[n] \times \{0,1\}}$ which assigns every point $x_j$ which participates in a tuple of $\bP$ and whose activation time $t_j$ is smaller than $t_i$.

\textbf{Output}: The assignment, either $(1, 0)$ or  $(0,1)$ corresponding to the 0-side or 1-side of the cut, for the $i$-th point $x_i$. 

\begin{itemize}
\item \textbf{Base Case}: If $t_i \leq t_0$, the point $x_i$ has index $i \in \bS^{t_0}$.\footnotemark~Note that the set of points $\{ x_j : j\in \bS^{t_0}\}$ can be re-constructed from $\bP$, and hence, the index $\pi(x_i) \in [\bm]$ can be determined. Output $(1, 0)$ if $\sigma_{\pi(x_i)} = 0$ and $(0,1)$ if $\sigma_{\pi(x_i)}= 1$. 
\item \textbf{Incorporating Estimates}: Otherwise, $t_i > t_0$ and we proceed as follows:
\begin{enumerate}
\item As per Definition~\ref{def:activate} for every $(x_j, \ell; w_j, t_j) \in \bP$, we let $\bw^{\ell}_j = \min\{ w_j, 1/\ell \}$ if $\ell = t_j$ and $w_j$ if $\ell \geq t_j$. Furthermore, as per Definition~\ref{def:mask}, we let $\gamma^{\ell} = \min\{ \gamma / \ell, 1\}$. We consider the subset of pairs 
\[ \bP_{t_i} = \left\{ (x_j, \ell) : (x_j, \ell; w_j, t_j) \in \bP \text{ and } \ell < t_{i}\right\} \]
which have been assigned by the inductive hypothesis (by recursively executing $\textsc{Assign}_{\sigma, \bP}(x_j, t_j)$). 
\item We compute the estimated contributions, from the $0$-side and $1$-side of the cut, at the time $t_i$:
\[ \bC_{0} = \sum_{(x_j, \ell) \in \bP_{t_i}} \dfrac{d(x_i, x_j)}{\bw_{j}^{\ell} \cdot \gamma^{\ell}} \cdot \bz_{j,0} \qquad \text{and}\qquad \bC_{1} = \sum_{(x_j,\ell) \in \bP_{t_i}} \dfrac{d(x_i, x_j)}{\bw_j^{\ell} \cdot \gamma^{\ell}} \cdot \bz_{j,1}. \]
Output $(0, 1)$ if $\bC_{0} > \bC_{1}$ (to assign $x_i$ to the $1$-side of the cut), and $(1, 0)$ otherwise (to assign $x_i$ to the $0$-side of the cut). 
\end{enumerate}
\end{itemize}
\end{framed}
\caption{The assignment procedure $\textsc{Assign}_{\sigma}(\cdot,\cdot)$} \label{fig:assign}
\end{figure}
\footnotetext{There is a minor ambiguity here as to whether the algorithm which receives the point $x_i$ also knows the index $i$. On the first reading, it is helpful to consider an algorithm which knows the index. This makes it easier notationally to index assignments of a point $x_i$ by the $i$-th entry of $\bz$. However, we note that this is only for notational convenience, as we could store the partial cut $\bz$ as a function from a subset of the points $X$ to $\{(1,0), (0,1)\}$. We keep the indices $i$, as this helps connect the assignment procedure to the analysis, however, in the algorithm any uses of the indices may be replaced by indexing according to the point.}

The assignment rule $\textsc{Assign}_{\sigma,\bP}(x_i, t_i)$ appears in Figure~\ref{fig:assign}. It takes in a point $x_i$ and its activation time $t_i$, as well as access to the seed $\sigma$ and the timeline-mask summary $\bP$. We note that the algorithm assumes, by inductive hypothesis, that all points $x_j$ participating in tuples of $\bP$ with activation times before $t_i$ have been assigned. This is done by a recursive application of $\textsc{Assign}_{\sigma, \bP}(x_j, t_j)$. Given a draw $(\bA_1, \bK_1)$, \dots, $(\bA_n, \bK_n)$, the timeline-mask summary $\bP$ is completely determined (and hence $\bm = |\bS^{t_0}|$ is determined as well), but there are $2^{\bm}$ choices of $\sigma \in \{0,1\}^{\bm}$. Hence, we will oftentimes drop the notation $\bP$ from the assignment rule, and simply refer to $\textsc{Assign}_{\sigma}(\cdot, \cdot)$. 

\subsection{Main Structural Theorem}\label{sec:main-structural}

We may now state our main structural theorem, which shows that the cut obtained from executing the assignment rule $\textsc{Assign}_{\sigma}(x_i,t_i)$ in parallel (for the best choice of $\sigma$) gives an approximately optimal max-cut for certain setting of parameters. Similarly to~\cite{MS08}, we will show that there exists a setting of $\sigma$ for which $\textsc{Assign}_{\sigma}(\cdot, \cdot)$ produces a cut $\bz \in \{0, 1\}^{n \times \{0,1\}}$ which approximately minimizes the sum of \emph{internal distances}
\begin{align}
f(z) \eqdef \frac{1}{2} \sum_{i=1}^n \sum_{j=1}^n d(x_i, x_j) \left( z_{i,0} z_{j,0} + z_{i,1} z_{j,1} \right), \label{eq:internal-f-def}
\end{align}
up to an additive factor of $O(\eps) \sum_{i=1}^n \sum_{j=1}^n d(x_i, x_j)$. Since every pair $(i, j) \in [n]\times[n]$ is either on the same side of the cut, or across the cut, minimizing $f$ corresponds to maximizing the sum of cut edges, up to the additive error of $O(\eps) \sum_{i=1}^n \sum_{j=1}^n d(x_i, x_j)$. 

\begin{theorem}
\label{thm:main-structural}
Fix an $n$-point metric $(X = \{x_1,\dots, x_n\}, d)$, a collection of weights $w_1,\dots, w_n \in (0, 1/2]$ which is $\lambda$-compatible, and the setting of parameters
\begin{align*}
t_e \geq \frac{n \cdot \lambda}{\eps} \qquad \gamma \geq \frac{(\ln(t_e) + 1)^2 \cdot \lambda}{\eps^2} \qquad \text{and}\qquad t_0 \geq \max\left\{ \frac{\sqrt{\gamma \cdot \lambda}}{\eps}, \frac{1}{\eps} \right\}.
\end{align*}
Consider the following setup:
\begin{itemize}
\item Draw $n \times t_e$ matrices $\bA$ and $\bK$, where the $i$-th rows $\bA_i \sim \calT(w_i)$ and $\bK_i \sim \calK(t_0, \gamma)$.
\item For any seed $\sigma$ of $\bA_1,\dots, \bA_n$, let $\bz(\sigma)$ denote the following cut:
\[ \bz(\sigma)_i = \left\{\begin{array}{ll}  \textsc{Assign}_{\sigma}(x_i, \bt_i) & \bt_i \text{ is the activation time for $\bA_i$} \\
								(1, 0) & \text{$\bA_{i,t} = 0$ for all $t \in [t_e]$} \end{array} \right. .\]
\end{itemize}
Then,
\begin{align*}
\Ex_{\bA, \bK}\left[ \min_{\sigma} f(\bz(\sigma)) \right] \leq \min_{\substack{z\in \{0,1\}^{n \times \{0,1\}}\\ z_i \neq (0, 0)}} f(z) + O(\eps) \sum_{i=1}^n \sum_{j=1}^n d(x_i, x_j). 
\end{align*}
\end{theorem}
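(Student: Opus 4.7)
The plan is to follow a "fictitious cut" strategy, adapted to the parallel, weighted, and subsampled setting. I would introduce a random process that generates the timelines $\bA$ and masks $\bK$ over time steps $t = 1, \ldots, t_e$, together with a sequence of $[n] \times \{0,1\}$ matrices $\hat{\bz}^0, \hat{\bz}^1, \ldots, \hat{\bz}^{t_e}$. Initially, $\hat{\bz}^0$ encodes an optimal integral cut $z^\ast$. When a point $x_i$ first activates at time $t$ (the event $\bA_{i,t}=1$ and $\bA_{i,\ell}=0$ for $\ell<t$), the row $\hat{\bz}_i^t$ is overwritten according to the greedy rule of \textsc{Assign}$_\sigma$, where the seed $\sigma$ is chosen to agree with $z^\ast$ on the base-case points $\bS^{t_0}$. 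When $x_i$ does not activate at time $t$, one still performs a carefully chosen auxiliary update to $\hat{\bz}_i^t$, depending on $\bw_i^t$ and the previous row, designed precisely so that the inverse-probability-weighted estimator used inside \textsc{Assign} becomes a (conditional) martingale. By construction, $\hat{\bz}^{t_e}$ agrees with $\bz(\sigma)$ on every point that activates before $t_e$, and the contribution from unactivated points is negligible because $t_e \geq n\lambda/\eps$ forces $\sum_i (1-w_i)^{t_e}$ to be small against the $\lambda$-compatibility tail.

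Since the telescoping identity gives $\Ex[f(\hat{\bz}^{t_e})] - f(z^\ast) = \sum_t \Ex[f(\hat{\bz}^t) - f(\hat{\bz}^{t-1})]$, it suffices to bound the per-step change. Expanding the quadratic form in (\ref{eq:internal-f-def}), this change decomposes into linear "type-A" terms $(\hat{\bz}_i^t - \hat{\bz}_i^{t-1}) \cdot \sum_j d(x_i,x_j)\hat{\bz}_j^{t-1}$ and bilinear "type-B" terms $d(x_i,x_j)(\hat{\bz}_i^t - \hat{\bz}_i^{t-1})(\hat{\bz}_j^t - \hat{\bz}_j^{t-1})$. The type-B terms are easier: two rows change simultaneously only when both $x_i$ and $x_j$ activate at time $t$, which has probability $\bw_i^t \bw_j^t$; summing against $d(x_i,x_j)$ and using $\lambda$-compatibility and $\sum_t (\bw_i^t)^2 \leq O(w_i)$ yields a total type-B contribution of $O(\eps)\sum_{i,j} d(x_i,x_j)$.

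For type-A terms, the key observation is that $\bC_1 - \bC_0$ in \textsc{Assign}$_\sigma$ is, up to a scale of $t-1$, designed to be an unbiased estimator of $\sum_j d(x_i,x_j)(\hat{\bz}_{j,1}^{t-1} - \hat{\bz}_{j,0}^{t-1})$ conditional on the history up to time $t-1$. The greedy choice then ensures that in expectation $(\hat{\bz}_i^t - \hat{\bz}_i^{t-1}) \cdot \sum_j d(x_i,x_j)\hat{\bz}_j^{t-1}$ is dominated by the expected absolute \emph{error} of this estimator, times $\|\hat{\bz}_i^t - \hat{\bz}_i^{t-1}\|_1 = O(1)$. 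The auxiliary update for non-activating rows is chosen so that the estimator is a near-martingale with variance at step $t$ controlled by the sum of per-step contribution variances up to $t$; this is the central structural lemma that the update was engineered to achieve, and it reduces the whole type-A analysis to a single variance bound.

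The main obstacle is this variance bound. After expanding second moments, the error at time $t$ scales like $\sum_{\ell<t} \sum_j d(x_i,x_j)^2/(\bw_j^\ell \gamma^\ell)$, which by $\lambda$-compatibility is at most $\lambda T \cdot \sum_{\ell<t} 1/\gamma^\ell$, where $T = \sum_{i,j} d(x_i,x_j)$. With $\gamma^\ell = \min\{\gamma/\ell,1\}$, the resulting per-timestep deviation is of order $\sqrt{\lambda}\,T \cdot \bigl(\sqrt{\gamma}/t^2 + 1/(t\sqrt{\gamma})\bigr)$; summing over $t_0 < t \leq t_e$ gives $O\bigl(\sqrt{\lambda}(\sqrt{\gamma}/t_0 + \ln(t_e)/\sqrt{\gamma})\bigr) T$. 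The stated settings $\gamma \geq (\ln t_e + 1)^2\lambda/\eps^2$ and $t_0 \geq \sqrt{\gamma\lambda}/\eps$ are precisely calibrated to make both terms $O(\eps) T$. For $t \leq t_0$ no error is incurred, because $\textsc{Assign}_\sigma$ tries all $2^{|\bS^{t_0}|}$ seeds and at least one matches $z^\ast$ on $\bS^{t_0}$, so the minimum over $\sigma$ only helps. Combining the telescoping decomposition, the type-B bound, the greedy bound for type-A, and the variance calculation yields the claimed additive error $O(\eps)\sum_{i,j} d(x_i,x_j)$.
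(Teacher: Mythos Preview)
Your proposal is essentially the paper's proof: the \textsc{GreedyProcess}, the fictitious cut $\hat{\bz}^t$ initialized at $z^\ast$ with the auxiliary update for non-activating rows engineered so that the estimator is a martingale (Lemma~\ref{lem:martingale}), the telescoping decomposition into type-A/type-B terms (\ref{eq:linear-term})--(\ref{eq:non-linear-2}), the greedy negativity lemma (Lemma~\ref{lem:greedy-choice-opt}), and the variance bound (Lemmas~\ref{lem:reduce-to-error}--\ref{lem:bound-on-error}) leading to exactly the $\sqrt{\lambda}(\sqrt{\gamma}/t_0+\ln t_e/\sqrt{\gamma})$ calibration you describe.

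One imprecision worth flagging: your type-B sketch says rows change ``only when both $x_i$ and $x_j$ activate at time $t$,'' but the auxiliary update changes \emph{every} non-activated row at every step, so that claim is false as written. The paper instead uses the expectation formula $\Ex[\hat{\bz}_{i,b}^t-\hat{z}_{i,b}^{t-1}]=\tfrac{1}{t}(g_{i,b}^t-\hat{z}_{i,b}^{t-1})$ (Lemma~\ref{lem:non-linear}) together with independence across $i$ to get a $1/t^2$ bound per step, summing to $O(1/t_0)$. Your activation-probability heuristic does not directly yield this; be sure to route the type-B bound through the expectation identity rather than through an ``only on activation'' event.
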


We defer the proof of Theorem~\ref{thm:main-structural} to Section~\ref{sec:greedy-process}, and show how to use the theorem in massively parallel computation and insertion-only streaming.

\section{O(1)-Round Fully Scalable MPC Algorithm}\label{sec:mpc}
In the massively parallel computation model (MPC), a cluster of machines is given as input a set of points $X \subset \R^d$ partitioned arbitrarily across all machines. Each machine $\calM_j$ needs to compute an approximate max-cut assignment $z_i \in \{(0,1), (1,0)\}$ for each point $x_i$ in it's partition $X_j$. The key constraint of the MPC model is the memory of each machine. We consider a fully scalable (also known as strongly sublinear) regime, where there are $m=\Theta(\frac{nd}{s})$ machines which partitions the points, each with local memory $s = \Theta(n^{\alpha}d)$ for some $\alpha \in (0,1]$. Note that the amount of communication per round between machines is also bounded by $s$; no machine can send or receive more than $s$ words of communication.\footnote{A word contains $O(\log(n d \Delta/\eps))$ bits, where $\Delta$ is the aspect ratio of the points.} We consider duplicate points to be separate and distinct points, which if desired can easily be removed in $O(1)$-rounds \cite{G19}. 
We show how to apply the parallel and sparsified assignment rule to get an $O(1)$-round, fully-scalable algorithm for approximating the max-cut of $X \subset \R^d$ in any $\ell_p$ norm, for $p \in [1,2]$. 

\begin{theorem}\label{thm:mpc-full} For any $\eps \in (0,1)$ and $p \in [1,2]$, there exists a randomized fully scalable MPC algorithm with the following guarantees:
\begin{itemize}
\item The algorithm receives as input a multi-set $X = \{ x_1, \dots, x_n \} \subset \mathbb{R}^d$ distributed across $m$ machines each with local memory $s$ of at least $dn^\alpha\cdot\poly(1/\eps)$ words for some $\alpha \in (0,1]$. 
\item After $O_{\alpha}(1)$-rounds of communication, each machine $\calM_j$ for $j \in [m]$ can compute for all points $x_i \in X$ in its memory, a (randomized) assignment $\bz_i \in \{ (1,0), (0,1)\}$ such that 
\begin{align}
\Ex\left[ f(\bz) \right] &\leq  \min_{\substack{z \in\{0,1\}^{n \times \{0,1\}} \\ z_i \neq (0,0)}} f(z) + \eps \sum_{i=1}^n \sum_{j=1}^n d_{\ell_p}(x_i, x_j), \label{eq:thm-mpc-exp}
\end{align} 
where the expectation is taken over the internal randomness of the algorithm, and $f$ is in (\ref{eq:internal-f-def}) with $\ell_p$-metric.
\end{itemize}
The total space used by the algorithm is $O(nd) \cdot \poly(\log(nd)/\eps)$.
\end{theorem}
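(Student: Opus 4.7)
The plan is to use the MPC cluster to simulate the parallel and subsampled greedy assignment of Theorem~\ref{thm:main-structural}. We set parameters $t_e = \Theta(n/\eps)$, $\gamma = \Theta(\log^2(t_e)/\eps^2)$, and $t_0 = \Theta(\sqrt{\gamma}/\eps)$ as in the theorem, and use weights $w_i$ proportional to the weighted degree $\sum_j d_{\ell_p}(x_i, x_j)$, normalized so that $\sum_i w_i \leq 1/2$; a standard triangle-inequality argument shows such weights are $\lambda$-compatible for a constant $\lambda$, so Theorem~\ref{thm:main-structural} yields the additive error $\eps \sum_{i,j} d_{\ell_p}(x_i, x_j)$ required in~(\ref{eq:thm-mpc-exp}).

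First (weight computation), each machine $\calM_j$ learns a good approximation of $w_i$ for every $x_i$ it holds. By Lemma~\ref{lem:mpc-weights}, this is done in $O_\alpha(1)$ rounds using $\ell_1(\ell_p)$-sketches of polylogarithmic size, aggregated across the cluster via a shallow tree of rounds. Second (timeline generation), each point $x_i$ then independently samples $\bA_i \sim \calT(w_i)$ and $\bK_i \sim \calK(t_0, \gamma)$ using local randomness; if some $t \in [t_e]$ satisfies $\bA_{i,t}\cdot \bK_{i,t} = 1$, the owning machine prepares the tuple $(x_i, t; w_i, \bt_i)$. By linearity, the expected number of tuples across all points is $\sum_t \gamma_t \sum_i w_i = O(t_0 + \gamma \ln(t_e/t_0)) = \poly(\log n/\eps)$; standard concentration ensures this bound holds with high probability, so all tuples can be broadcast to every machine in a single round, reconstructing the timeline-mask summary $\bP$ everywhere.

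Third (seed selection), the seed $\sigma \in \{0,1\}^\bm$ has length $\bm = |\bS^{t_0}| \leq |\bP|$. We invoke Lemma~\ref{lem:seed-check}, which uses a small additional batch of samples (broadcast to every machine) together with shared randomness to allow each machine to locally and deterministically agree on the same $\sigma^*$ whose induced cut $\bz(\sigma^*)$ meets the error guarantee of Theorem~\ref{thm:main-structural}. Fourth (final assignment), each machine $\calM_j$ locally runs $\textsc{Assign}_{\sigma^*, \bP}(x_i, \bt_i)$ for every $x_i \in X_j$ that activates, and outputs $(1, 0)$ for any point whose timeline is all-zeros. Correctness then follows directly from Theorem~\ref{thm:main-structural}, and the total space is $O(nd)$ to hold the input plus $n \cdot \poly(\log n/\eps)$ for sketches, tuples, and seed evaluation.

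The main obstacle is the weight-computation step, which must be performed in a constant number of rounds under the strict per-machine space constraint of the fully-scalable regime; here the $\ell_1(\ell_p)$-sketches (aggregated via a shallow tree) delivered by Lemma~\ref{lem:mpc-weights} are essential, and the machinery for seed agreement in Lemma~\ref{lem:seed-check} is where the bulk of the MPC-specific technical work resides. A secondary concern is that the weights used are only $(1\pm\eps)$-approximations to the true weighted degrees, so the timelines are generated with perturbed weights; rescaling $\eps$ by a constant factor absorbs this perturbation into the additive $O(\eps) \sum_{i,j} d_{\ell_p}(x_i, x_j)$ error already guaranteed by Theorem~\ref{thm:main-structural}.
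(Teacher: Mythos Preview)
Your proposal is correct and follows essentially the same approach as the paper: compute approximate weights via $\ell_1(\ell_p)$-sketches (Lemma~\ref{lem:mpc-weights}), locally sample timelines and masks, converge-cast the summary $\bP$ together with the extra samples needed for Lemma~\ref{lem:seed-check}, have a root machine pick $\sigma^*$ and broadcast it, then assign locally. The only cosmetic differences are that the paper explicitly bookkeeps the ``catastrophic failure'' events (weight-sketch failure or $|\bP|$ too large, each with probability at most $\eps$) and bounds their contribution to $\Ex[f(\bz)]$ by $\eps\sum_{i,j}d_{\ell_p}(x_i,x_j)$, and that the approximate weights are handled not by ``rescaling $\eps$'' but by observing directly (via Lemma~\ref{lem:weight-to-compatible}) that $(1+\eps)$-approximate degree weights, halved, are already $8$-compatible, so Theorem~\ref{thm:main-structural} applies to them verbatim.
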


\begin{remark}
    When instantiating Theorem~\ref{thm:mpc-full} to Euclidean instances (i.e., $p = 2$) one may apply an MPC implementation of the Johnson-Lindenstrauss lemma (Theorem~3 of~\cite{AAHKZ23}) which runs in constant rounds and is fully-scalable with total space $O(nd + n\log^3 n/\eps^2)$. This means one may consider the above theorem with $d = O(\log n /\eps^2)$ and obtain Theorem~\ref{thm:mpc-full} with total space $O(nd) + n \cdot \poly(\log(n)/\eps)$.
\end{remark}

At a high level, the algorithm consists of generating the global information necessary to execute $\textsc{Assign}_{\sigma}(\cdot,\cdot)$ for the best seed $\sigma$ (see Figure~\ref{fig:assign}) and apply Theorem~\ref{thm:main-structural}. This first involves each machines computing metric-compatible weights of its points (Definition~\ref{def:metric-compat}). This allows each machine to independently sample the activation timelines (Definition~\ref{def:activate}) and masks (Definition~\ref{def:mask}) for each of the points held in each machine, and to compute the set of tuples $\bP$ of the global information needed for \textsc{Assign}$_{\sigma,\bP}(\cdot,\cdot)$. Elements of $\bP$ are then shared across all machines, alongside the setting of the seed $\sigma$ which the root of a converge-cast primitive can compute. Now each machine has access to the global information necessary to compute $\textsc{Assign}_{\sigma,\bP}(\cdot,\cdot)$ for all of it's points. Before we prove Theorem~\ref{thm:mpc-full}, we first outline the important primitives used as subroutines. 

\subsection{MPC Preliminaries}
Our algorithm makes use of the following standard MPC primitives:  
\begin{lemma}[Lemma 2.2 in \cite{GSZ11}]\label{lem:mpc-sort} There exists an $O(\log_s n)$-round deterministic sorting algorithm for $m=\Theta(\frac{n}{s})$ machines each with $s=n^\alpha$ words of memory for any $\alpha \in(0,1]$, such that on input $\{p_1,\dots,p_n \} = P$ where $p_1 \leq \dots \leq p_n$ when the algorithm terminates, if machine $\calM$ holds element $p_i$, it knows the rank $i$. 
\end{lemma}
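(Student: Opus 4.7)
The plan is to give a deterministic sample-sort (equivalently, $s$-way merge sort) style algorithm, following the blueprint of Goodrich's BSP sorting specialized to this MPC regime. Each machine first sorts its local $\leq s$ elements internally; this is free within a single round. The algorithm then proceeds in recursion levels, maintaining the invariant that the dataset is partitioned into a collection of consecutive \emph{blocks}, each block is stored across some contiguous group of machines, every machine's share of a block is sorted locally, and the blocks themselves are totally ordered (every element of block $i$ is $\leq$ every element of block $i+1$). Initially there is one block of size $n$; the goal is to shrink the maximum block size by a factor of $\Theta(s)$ per level, so after $O(\log_s n)$ levels every block fits on a single machine.

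One level proceeds as follows. (i) Each machine holding a piece of a block extracts $s-1$ equally spaced samples from its locally sorted content and ships them to a designated coordinator machine for that block; the coordinator merges the $O(s)$ samples it receives (this fits in $s$ words since its block spans $O(s)$ machines; if it spans more, we first do an aggregation tree of fan-in $s$, which costs $O(1)$ rounds) and deterministically picks $s-1$ pivots spaced evenly in the merged list. (ii) The pivots are broadcast back to every machine in the block via the same aggregation tree; each machine uses binary search to count, for each of the $s$ implied sub-blocks, how many of its own elements fall there. (iii) A prefix-sum of these counts (again $O(1)$ rounds via an $s$-ary tree) assigns each sub-block a contiguous range of destination machines, and elements are routed accordingly; each destination machine locally re-sorts. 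To propagate ranks, the coordinator maintains the starting global rank of its parent block and, from the sub-block sizes, computes the starting rank of each child sub-block by one prefix sum of length $s$; these starting ranks are broadcast together with the pivots. At the final level every block occupies one machine, which knows its starting rank and the local sort order, hence the rank $i$ of every $p_i$ it holds.

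The main obstacle is showing that the sub-block sizes really do shrink by a factor $\Theta(s)$, \emph{deterministically}, at each level. This is the crux of Goodrich–Sitchinava–Zhang: because the samples are taken at equally spaced positions from already-sorted local lists, one can bound, for each candidate pivot $p$, the gap between the rank of $p$ in the sampled stream and its rank in the full block in terms of the gap between consecutive samples on a single machine, which is at most $(\text{local size})/s$. Summing over the machines in the block gives an $O((\text{block size})/s)$ imbalance, so each sub-block has size $O((\text{block size})/s)$, as needed. Everything else — local sorting, fan-in-$s$ aggregation trees for samples, broadcasts, prefix sums, and the all-to-all routing step — is standard and costs $O(1)$ rounds per level, yielding the claimed $O(\log_s n)$ total rounds with fully deterministic behavior. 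Total memory per machine stays at $O(s)$ throughout because samples, pivots, and counts are each of size at most $s$, and routing in step (iii) is balanced by construction of the pivots.
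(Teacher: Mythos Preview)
The paper does not prove this lemma; it is stated as a black-box citation of Lemma~2.2 in \cite{GSZ11} and used without further argument. Your sketch is essentially the deterministic sample-sort of Goodrich--Sitchinava--Zhang, which is indeed the content of that reference, so in spirit you are reproducing the cited result rather than diverging from the paper.

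One small technical point worth tightening: as written, you have each machine in a block contribute $s-1$ samples and send them to a single coordinator. If the block spans $k$ machines, that is $k(s-1)$ samples, which already exceeds $s$ once $k \ge 2$. The actual GSZ11 argument (and the classical BSP sample-sort) calibrates the number of samples per machine so that the total sample set fits in one machine's memory while still guaranteeing an $O(1/s)$-balanced split; equivalently, one can recurse on the sample set itself. Your parenthetical about an ``aggregation tree of fan-in $s$'' gestures at this, but aggregation alone does not solve the problem---you need either fewer samples per machine or a recursive sort of the samples. This is a routine fix and does not affect the $O(\log_s n)$ round bound, but the parameter accounting as stated does not quite close.
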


\begin{lemma}[Broadcast \& Converge-cast \cite{G19}]\label{lem:mpc-broadcast} Given a cluster of machines each with $s$ words of  memory, one can define a communication tree with a branching factor of $\sqrt{s}$, where each node is a machine. The tree has total depth $O(\log_{s}(n))$. Using this tree, the root machine $\calM_0$ can broadcast $\sqrt{s}$ words to all machines in $O(\log_{s}(n))$ rounds. Likewise the machines can communicate $\sqrt{s}$ total words to the root machine $\calM_0$ in $O(\log_{s}(n))$ rounds, provided each machine never receives more than $s$ total words or sends more than $\sqrt{s}$ words. These procedures are referred to as broadcast and converge-cast, respectively. 
\end{lemma}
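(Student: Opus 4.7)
The plan is to construct an explicit communication tree over the $m = \Theta(n/s)$ machines and then verify the depth and per-round communication budgets for each of the two procedures separately. First, I would label the machines $\calM_0, \calM_1, \dots, \calM_{m-1}$, designate $\calM_0$ as the root, and assign $\calM_i$ (for $i \geq 1$) to be the child of $\calM_{\lfloor (i-1)/\sqrt{s} \rfloor}$. This guarantees that every internal machine has at most $\sqrt{s}$ children, so the tree has depth $\log_{\sqrt{s}}(m) = \Theta(\log_s m) = O(\log_s n)$ since $m \leq n$.

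For broadcast, I would prove by induction on depth $r$ that after $r$ rounds every machine at depth $\leq r$ holds the $\sqrt{s}$-word payload. In the base case, the root holds the payload by assumption. For the inductive step, each depth-$r$ machine sends the same $\sqrt{s}$ words to each of its $\leq \sqrt{s}$ children; this is a per-round send of at most $\sqrt{s}\cdot\sqrt{s} = s$ words from the parent (which fits in its memory) and a per-round receipt of only $\sqrt{s} \leq s$ words at each child. Since the depth of the tree is $O(\log_s n)$, after that many rounds every machine holds the payload.

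For converge-cast, I would argue by induction from the leaves upward that after $r$ rounds, every machine at depth $\geq \mathrm{depth}-r$ holds its $\sqrt{s}$-word ``partial aggregate'' destined for its parent. The inductive step at an internal machine is: it receives $\sqrt{s}$ words from each of its $\leq \sqrt{s}$ children, for a total of at most $s$ received words (within memory), combines them via the associative aggregation operator that defines the converge-cast (e.g., summation, maximum, set union, or concatenation of a $\sqrt{s}$-word summary), and forwards the resulting $\sqrt{s}$ words to its parent on the next round. Since the tree has depth $O(\log_s n)$, the root assembles its $\sqrt{s}$-word output in $O(\log_s n)$ rounds, and the per-round $(\text{receive} \leq s,\ \text{send}\leq \sqrt{s})$ budget is satisfied throughout.

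The main obstacle is the converge-cast direction: the budget allows each machine to absorb $s$ words from below but only emit $\sqrt{s}$ words upward, so the argument fundamentally depends on the payload being an aggregate that can be losslessly (or purposefully lossily) compressed from $s$ words back down to $\sqrt{s}$ words at every node. For the uses in this paper, the quantities sent upward (cumulative weights, counts, sketch-state unions, and the summary $\bP$, which are additive or set-valued) satisfy this compressibility requirement, so this is not an obstruction in the applications; writing the lemma in the generic ``associative aggregation'' form and then invoking it for each such reduction keeps the exposition modular. With the tree and per-round budgets in hand, the round and memory accounting above yields the stated bounds.
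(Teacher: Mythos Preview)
The paper does not prove this lemma at all; it is stated as a standard MPC preliminary with a citation to \cite{G19} and used as a black box. Your argument is the usual construction (branching-$\sqrt{s}$ tree over the $\Theta(n/s)$ machines, level-by-level propagation in each direction), and the depth and per-round send/receive accounting you give are correct, so there is nothing to compare against in the paper itself.
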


\subsection{Computing Geometric Weights}

The first main ingredient in our algorithm is a method to compute  metric-compatible weights (Definition~\ref{def:metric-compat}). We use the $\ell_1(\ell_p)$-matrix sketches from Theorem \ref{lem:cascaded} in order to determine the weight. The algorithm is a direct application of the sketches with some additional bookkeeping. 

\begin{lemma}[Geometric weights in MPC]\label{lem:mpc-weights} For any $\eps,\delta \in (0,1)$ and $p \in [1,2]$, there exists a randomized fully scalable MPC algorithm satisfying the following:
\begin{itemize}
\item Given as input a multi-set $X = \{ x_1, \dots, x_n \} \subset \mathbb{R}^d$ distributed across machines $m$ whose local memory $s$ is at least $O(dn^{\alpha}) \cdot \poly(\log(nd/\delta)/\eps)$ for $\alpha \in (0,1]$.
\item After $O_{\alpha}(1)$ rounds of communication, each machine $\calM_j$ for $j \in [m]$ computes and stores a weight $\bw_i \in [0, 1]$ for all points $x_i$ in its memory.
\end{itemize}
With probability at least $1-\delta$ over the internal randomness of the algorithm, every $i \in [n]$ satisfies
 $$\frac{\sum_{j=1}^n d_{\ell_p}(x_i, x_j)}{ \sum_{k=1}^n \sum_{j=1}^n d_{\ell_p}(x_k, x_j)} \leq \bw_i \leq (1+\eps) \cdot \frac{\sum_{j=1}^n d_{\ell_p}(x_i, x_j)}{ \sum_{k=1}^n \sum_{j=1}^n d_{\ell_p}(x_k, x_j)}.$$
\end{lemma}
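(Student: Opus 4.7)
The plan is to compute $(1+\eps/4)$-overestimates $\hat W_i$ of $W_i \eqdef \sum_{j=1}^n d_{\ell_p}(x_i,x_j)$ for every $i$ together with an overestimate $\hat T$ of $T \eqdef \sum_i W_i$, and then set $\bw_i \eqdef (1+\eps/4)\,\hat W_i/\hat T$. Both claimed inequalities then follow from elementary arithmetic of $(1+\eps/4)$-approximations, and $\bw_i$ lies in $[0,1]$ because every pairwise distance contributes twice to $T$ but only once to $W_i$, so $W_i \le T/2$ and hence $\bw_i \le (1+\eps/4)^2/2 \le 1$ whenever $\eps \le 1$.

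The crux of the algorithm is the observation that $W_i$ is exactly the $\ell_1(\ell_p)$ cascaded norm of the matrix $Y_i \eqdef X - C_i \in \R^{n\times d}$, where $X$ has $j$-th row $x_j$ and $C_i$ has every row equal to $x_i$. Since the sketch $S$ from Theorem~\ref{lem:cascaded} is linear in its input matrix, $S(Y_i) = S(X) - S(C_i)$, so a single global sketch of $X$ may be reused to answer all $n$ queries: the machine holding $x_i$ computes $S(C_i)$ locally (it depends only on $x_i$ and $n$), subtracts to obtain $S(Y_i)$, and evaluates the cascaded-norm estimator to extract $\hat W_i$. This reuse is what keeps the space budget under control; without it, each machine would have to build a fresh sketch for each of its points.

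The MPC implementation then has four phases. First, the root samples a seed of $S$ calibrated to accuracy $\eps/4$ and failure probability $\delta/n$ (chosen so that a union bound over the $n$ queries yields the required $1-\delta$ success guarantee) and broadcasts it to every machine via Lemma~\ref{lem:mpc-broadcast}; concurrently, the sort of Lemma~\ref{lem:mpc-sort} gives each machine consistent row indices for the points it holds. Second, each machine generates the partial sketch of its rows; the partial sketches are aggregated into $S(X)$ via a converge-cast and broadcast back to all machines. Third, each machine, for every $x_i$ it holds, computes $S(C_i)$ locally, subtracts to get $S(Y_i)$, and extracts $\hat W_i$. Fourth, a last converge-cast/broadcast computes $\hat T = \sum_i \hat W_i$ and distributes it to every machine, which then writes $\bw_i$.

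The remaining checks are routine. A sketch occupies $\poly(d\log(nd/\delta)/\eps)$ words by Theorem~\ref{lem:cascaded}, and under the hypothesis $s \ge dn^\alpha\,\poly(\log(nd/\delta)/\eps)$ a machine sending or receiving $\sqrt{s}$ sketches per round along the branching tree of Lemma~\ref{lem:mpc-broadcast} fits inside its $s$-word communication budget. Each broadcast or converge-cast completes in $O(\log_s n) = O_\alpha(1)$ rounds, so the overall algorithm uses $O_\alpha(1)$ rounds, and the union bound over the $n$ queries yields joint $(1+\eps/4)$-accuracy with probability at least $1-\delta$, which in particular gives $\hat T \in [T,(1+\eps/4)T]$ and completes the argument.
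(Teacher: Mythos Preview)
Your proposal is correct and follows essentially the same approach as the paper: build a single global $\ell_1(\ell_p)$ sketch of the stacked point matrix via converge-cast, exploit linearity to evaluate each $\hat W_i$ locally by subtracting the sketch of the all-$x_i$ matrix, aggregate to get $\hat T$, and output the rescaled ratio with a union bound over the $n$ queries. The only differences are cosmetic: the paper uses accuracy $\eps/3$ (versus your $\eps/4$) and caps $\bw_i$ at $1$ with an explicit $\min$, whereas you argue $\bw_i\le 1$ directly from $W_i\le T/2$.
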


\begin{figure}
\begin{framed}
\textbf{MPC Algorithm} \textsc{Compute-Weights}. The algorithm is used as a sub-routine within the MPC algorithm to compute metric-compatible weights. \\

\textbf{Input}: The parameters $\eps, \delta \in (0,1)$ and a multi-set $X = \{x_1,\dots, x_n \} \subset \mathbb{R}^d$ distributed across machines each with local memory $s$ which can fit, in addition to the points store, $dn^{\alpha} \cdot \poly(\log(nd/\delta)/\eps)$ words for $\alpha \in (0,1]$. The subset of $X$ distributed to machine $\calM_j$ for $j \in [m]$ is denoted $X_j$. \\

\textbf{Output}: Each machine $\calM_j$ for $j \in [m]$ which holds the set $X_j$ can determine, for each point $x_i \in X_j$, the index $i \in [n]$ and a weight $\bw_i$. 

\begin{enumerate}
\item The machines jointly compute $n$, the total number of points (using Lemma~\ref{lem:mpc-broadcast}), as well as a rank (i.e., labels) $i \in [n]$ for each point in $X$ (using Lemma~\ref{lem:mpc-sort}). We assume, henceforth, that each machine $\calM_j$ which holds $x_i$ in its memory ``knows'' the index $i \in [n]$. 
\item Each machine $\calM_j$ creates an $n \times d$ matrix $A^j \in \R^{n\times d}$ such that the $i$-th row $A_{i}^j$ is set to 
\begin{align*}
    A^j_{i} =
    \begin{cases}
    x_i &\ \text{if } x_i \in X_j \\
    \smash{\underbrace{(0, 0, \dots, 0)}_{\text{$d$ times}} } &\ \text{o.w.}
    \end{cases} ,
\end{align*}
and we let $A = \sum_{j=1}^m A^j$ be the $n \times d$ matrix which stacks points $x_1,\dots, x_n \in \R^d$ as the rows of the matrix.
\item Using shared (or communicated) randomness, the machines draw a single $\poly(\log(nd/\delta) / \eps) \times nd$ matrix $\bS\sim \calD$, from Theorem~\ref{lem:cascaded} with accuracy  $\eps / 3$ and failure probability $\delta/n$, $p$ as received and $k=1$. Each machine $\calM_j$ can independently compute $\smash{\sk(A^j) = \bS \cdot A^j}$, and by broadcasting and converge-cast, can compute $\sk(A) = \bS \cdot A$. 
\item Each machine $\calM_j$ iterates through each input $x_i \in X_j$ and creates the $n \times d$ matrix $x_i^{\circ} \in \R^{n\times d}$ which places $x_i$ on all $n$ rows. Each machine executes the evaluation algorithm from Theorem~\ref{lem:cascaded} on $\sk(A)-\sk(x_i^\circ)$ to compute $\bd_i$. The algorithm broadcasts and converge-casts $\bD = \sum_{i=1}^n \bd_i$.
\item A machine which holds the point $x_i \in X_j$ sets the weight $\bw_i$ to the minimum of $(1+\eps /3) \cdot \bd_i / \bD$ and $1$.
\end{enumerate}
\end{framed}
\caption{MPC Algorithm for Computing Geometric Weights}\label{fig:mpc-weights}
\end{figure}

\begin{proof}
We present the algorithm in Figure~\ref{fig:mpc-weights}, and we refer to elements of the algorithm in this proof. First, observe that that $\sum_{j=1}^{m} A^j = A$ where $A\in \R^{n\times d}$ is the matrix of each $x_i$ stacked together, such that $x_i$ is the $i$-th row of $A$. Also, note that for all $i \in [n]$, $\|A -x_i^{\circ}\|_{\ell_1(\ell_p)} = \sum_{j=1}^n d_{\ell_p}(x_i,x_j)$. Hence by the properties of Theorem  \ref{lem:cascaded} and a union bound, we have that for any $i \in [n]$, we compute a $\bd_i$ such that with probability at least $1-\delta$, 
\[ \sum_{j=1}^n d_{\ell_p}(x_j,x_i) \leq \bd_i \leq \left(1+\frac{\eps}{3} \right)\sum_{j=1}^n d_{\ell_p}(x_j,x_i).  \]
By summing over all $\bd_i$, we get an estimate  $\bD$ which is a $(1 + \eps/3)$-approximation of $\sum_{i=1}^n \sum_{j=1}^n d_{\ell_p}(x_i, x_j)$. Then, letting $\bw_i = \min\{ (1+\frac{\eps}{3}) \cdot \bd_i / \bD, 1\}$, which by a simple computation satisfies the following property: 
\begin{align*}
\frac{\sum_{j=1}^n d_{\ell_p}(x_i, x_j)}{ \sum_{k=1}^n \sum_{j=1}^n d_{\ell_p}(x_k, x_j)} \leq \bw_i \leq (1+\eps) \cdot \frac{\sum_{j=1}^n d_{\ell_p}(x_i, x_j)}{ \sum_{k=1}^n \sum_{j=1}^n d_{\ell_p}(x_k, x_j)}.
\end{align*}

All we now need is to bound the space and communication used. Each machine $\calM_j$ store $n$, the ranks $i$ for each $x_i \in X_j$ in local memory. The matrices $A^j$ only contain non-zero entries for the points stored in machine $\calM_i$ and hence fits into local memory.  The shared randomness used to draw $\bS\sim \calD$ from Theorem~\ref{lem:cascaded}, is bounded by $\poly(\log(nd/\delta)/\eps)$ and hence also fits into local memory. Each $\sk(A^j)$ takes no more than $\poly(\log(nd/\delta)/\eps)$ words to store. The partial sums also take at no more than $\poly(\log(nd/\delta)/\eps)$ words to store, by the properties of the sketch. The sums  $\bD$ and weights $\bw_i$ are stored in one word each. Each phase of communication (sorting, converge-cast, broadcast) each take $O(\log_s(n))$ rounds for a total $O(\log_s(n))$ rounds of communication.
\end{proof}

\subsection{Main MPC Algorithm and Proof of Theorem~\ref{thm:mpc-full}}

The description of the main MPC Algorithm for Euclidean Max-Cut appears in Figure~\ref{fig:mpc-alg}. We will use the algorithm to prove Theorem~\ref{thm:mpc-full}, whose proof is divided within the two subsequent subsections.

\begin{figure}
\begin{framed}
\textbf{MPC Algorithm} \textsc{E-Max-Cut}. \\

\textbf{Input}: An accuracy parameter $\eps \in (0,1)$, a multi-set $X = \{x_1, \dots, x_n \} \subset \mathbb{R}^d$ distributed across machines with at least one non-equal point (otherwise, all cuts have value $0$). Each machine has local memory $s \geq \poly(d\log(n)/\eps)$ and there are $m = \Theta(n/s)$ machines. For each $j \in [m]$, we denote the subset of $X$ in machine $\calM_j$ as $X_j$. \\

\textbf{Output}: For each $j \in [m]$, the machine $\calM_j$ can determine a (random) assignment $\bz_i \in \{(1, 0), (0,1) \}$ for each $x_i \in X_j$. 

\begin{enumerate}
\item\label{en:mpc-1}We utilize the sub-routine \textsc{Compute-Weights} so that each machine $\calM_j$ may compute a weight $\bw_i$ for each $x_i \in X_j$ accuracy $\eps$, $p$ and failure probability $\delta\leftarrow\eps$ (see Lemma~\ref{lem:mpc-weights}). 
\item\label{en:mpc-2} The machines determine the parameter $n = |X|$, and set parameters
\begin{align*}
\lambda &= 8 \qquad t_e = \frac{n \cdot \lambda}{\eps} \qquad \gamma = \frac{(\ln(t_e) + 1)^2 \cdot \lambda}{\eps^2} \\
t_0 &= \max\left\{ \frac{\sqrt{\gamma \cdot \lambda}}{\eps}, \frac{1}{\eps} \right\}, \qquad \xi = \poly \left (\frac{\lambda \log(n)}{\eps} \right )
\end{align*}
so as to align with the parameter settings of Theorem~\ref{thm:main-structural}, Lemma~\ref{lem:weight-to-compatible} and Lemma~\ref{lem:seed-check}.\footnotemark~ Using these parameter settings, each $\calM_j$ samples for each  $x_i \in X_j$, an activation timeline $\bA_i \sim \calT(\bw_i')$ where $\bw_i' = \bw_i / 2$, a mask $\bK_i\sim \calK(t_0, \gamma)$, and a bit $\bc_i \sim \Ber(\min\{\xi \cdot \bw_i', 1\})$.
\item\label{en:mpc-3} For each $j \in [m]$, machine $\calM_j$ computes 
\begin{align*}
\bP_j &= \left\{ (x_i, \ell; \bw_i, t_i) : x_i \in X_j, \ell \in [t_e], \bA_{i,\ell} \cdot \bK_{i,\ell} = 1, t_i \text{ is activation time of $\bA_i$} \right\},\\
\bC_j &= \left\{ (x_i, \xi \cdot \bw_i, t_i) :  x_i \in X_j, \bc_i = 1, t_i \text{ is activation time of $\bA_i$} \right\}.
\end{align*} 
Machines converge-cast the union of $\bP_j$ to compute $\bP$, and the union of $\bC_j$ to compute $\bC$. The root machine of converge-cast uses Lemma~\ref{lem:seed-check} to decide on a seed $\bsigma^*$. The root then broadcasts $\bP$ and $\bsigma^*$. 
\item\label{en:mpc-4} Each machine $\calM_j$, using $\bP$ and $\bsigma^*$ can then compute $\bz_i \leftarrow \textsc{Assign}_{\bsigma^*,\bP}(x_i, t_i)$ for all points $x_i \in X_j$.
\end{enumerate}

\end{framed}
\caption{MPC Algorithm for Euclidean Max-Cut.}\label{fig:mpc-alg}
\end{figure}
\footnotetext{As we will see, the value of $\log |\calF|$ when applying Lemma~\ref{lem:seed-check} will be $\poly(\log n/\eps)$.}

\subsubsection{Correctness Guarantee of Theorem~\ref{thm:mpc-full}} 

We let $\bz \in \{0,1\}^{[n] \times \{0,1\}}$ denote the cut assignment produced by the MPC algorithm, \textsc{E-Max-Cut}, and we will use Theorem~\ref{thm:main-structural} to show (\ref{eq:thm-mpc-exp}) is satisfied. First, we apply Lemma~\ref{lem:mpc-weights} to conclude that, with probability $1-\eps$, all weights $\bw_1, \dots, \bw_n \in (0, 1]$ computed in Line~\ref{en:mpc-1} satisfy the conditions of Lemma~\ref{lem:weight-to-compatible} with $\sfD = 1 + \eps$, where weights $\bw_i > 0$ since there are at least two non-equal points. Assume, from now on, that this is the case; with probability at most $\eps$, the above fails and we will later bound the expected contribution from such ``catastrophic'' failures. By Lemma~\ref{lem:weight-to-compatible}, the weights $\bw_i' = \bw_i / 2$ are $8$-compatible and satisfy $\|\bw'\|_1 \leq (1+\eps)/2$. We now apply Theorem~\ref{thm:main-structural} with the weights $\bw_1', \dots, \bw_n' \in (0, 1/2]$ with parameter settings according to Line~\ref{en:mpc-2}; each point $x_i$ computes an activation timeline $\bA_i \sim \calT(\bw_i')$ and masks $\bK_i \sim \calK(t_0, \gamma)$. In Subsection~\ref{subsec:mpc-space}, we will show the expectation of $|\bP|$ is at most $\poly(\log n / \eps)$; with probability $1 - \eps$, $|\bP| \leq \poly(\log n / \eps)$. Assume this is the case (otherwise, with probability $\eps$, we obtain another ``catastropic failure), and note this implies a seed for $\bA_1, \dots, \bA_n$ has length at most $\poly(\log n / \eps)$.

After the converge-cast of $\bP$ and $\bC$ in Line~\ref{en:mpc-3}, the root machine contains the necessary components for Lemma~\ref{lem:seed-check} (because $\log |\calF|$ is at most the seed length $\poly(\log n /\eps)$). In particular, the root machine has access to $\bP$, so it can compute, for any seed $\sigma$, the value $\textsc{Assign}_{\sigma,\bP}(x_i, t_i)$ for every $(x_i, \xi \cdot w_i, t_i) \in \bC$. This allows the algorithm to determine a setting of $\sigma^*$, which can be broadcasted, along with $\bP$ to every machine. In Line~\ref{en:mpc-4}, each machine $\calM_j$ now has all the global information needed to compute $\textsc{Assign}_{\sigma^*, \bP}(x_i, t_i)$ for all $x_i \in X^j$. In summary, the output vector $\bz \in \{(1,0), (0,1)\}^{[n] \times \{0,1\}}$ satisfies, using $f$ from (\ref{eq:internal-f-def}) with $\ell_p$-metric,
\begin{align*}
\Ex_{(\bA, \bK)}\left[ f(\bz) \right] &\mathop{\leq}^{\text{Lem.~\ref{lem:seed-check}}} \Ex_{(\bA, \bK)}\left[ \min_{\sigma} f(\bz(\sigma)) \right] + \eps \sum_{i=1}^n \sum_{j=1}^n d_{\ell_p}(x_i, x_j) \\
			&\mathop{\leq}^{\text{Thm.~\ref{thm:main-structural}}} \min_{\substack{z \in \{0,1\}^{[n] \times \{0,1\}} \\ z_i \neq (0,0)}} f(z) + O(\eps) \sum_{i=1}^n \sum_{j=1}^n d_{\ell_p}(x_i, x_j). 
\end{align*}
We now upper bound the contribution of ``catastrophic'' failures, when the weights $\bw_1,\dots, \bw_n$ fail to satisfy the conditions to Lemma~\ref{lem:mpc-weights}, or when the seed length (which is bounded by $|\bP|$) is more than $\poly(\log n / \eps)$. These cases occurs with probability at most $2\eps$, and the maximum internal distance may occur when all inputs are assigned to the same side of the cut and contribute $\sum_{i=1}^n \sum_{j=1}^n d_{\ell_p}(x_i, x_j)$. Hence, this incurs another factor of $\eps \sum_{i=1}^n \sum_{j=1}^n d_{\ell_p}(x_i, x_j)$.

\subsubsection{Round and Space Guarantees for Theorem~\ref{thm:mpc-full}}\label{subsec:mpc-space}

The total round complexity of this algorithm is $O(\log_s(n))$. It consists of the \textsc{Compute-Weights} subroutines of Line~\ref{en:mpc-1} (Lemma~\ref{lem:mpc-weights}), and the converge-casts and broadcasts of $\bP$ and $\bC$ in Line~\ref{en:mpc-3}. It only remains to show that $|\bP|$ and $|\bC|$ is small; to see this, note that for a fixed setting of weights $\bw_1,\dots, \bw_n$, the definition of $\bP_j$ (and hence $\bP$) in Line~\ref{en:mpc-3} is
\begin{align*}
\Ex_{(\bA, \bK)}\left[ |\bP| \right] &\leq \sum_{i=1}^n \sum_{\ell=1}^{t_e} \Prx\left[ \bA_{i,\ell} \cdot \bK_{i,\ell} = 1 \right] \leq (t_0 + \gamma) \sum_{i=1}^n \frac{\bw_i}{2} + \sum_{\ell=t_{0}+1}^{t_e} \sum_{i=1}^n \frac{\bw_i}{2} \cdot \frac{\gamma}{t}  \\
		&\leq O(t_0 + \gamma) + O(\gamma \ln(t_e+1)) \leq \poly(\log n / \eps). \\
\Ex_{\bc_1,\dots, \bc_n}\left[ |\bC| \right] &\leq \sum_{i=1}^n \Prx\left[ \bc_i = 1 \right] \leq O(\xi).
\end{align*}
Storing each entry in $\bP$ takes $O(d)$ words for the point, the weight, time and activation time. Furthermore, each entry of $\bC$ similarly takes $O(d)$ words. 
The size of the seed $\sigma$ is bounded by $\bP$. 
Since all of these values are smaller than $\sqrt{s}$ words, the broadcast and converge-cast operations can be run in $O(\log_s(n))$ rounds. There are a constant number of such subroutines, which each take the total round complexity is $O(\log_s(n))$, which is $O(1)$ in the fully scalable regime.

\ignore{\subsubsection{Seed Check Length}
{\color{red} Not sure where in the paper to put this as we reference it a few times and if the bound is good enough. I assume poly($t_0$) is fine...}

The length $\bm$ of the seed $\bsigma$ is the final component needed to prove the correctness of Theorem~\ref{thm:mpc-full}. We bound the expected length $\bm = |\bS^{t_0}|$ as follows: \[\Ex_{(\bA, \bK)}\left[ |\bS^{t_0}| \right] \leq  \sum_{i=1}^n \sum_{\ell=1}^{t_0} \Prx\left[ \bA_{i,\ell} \cdot \bK_{i,\ell} = 1 \right] = t_0 \sum_{i=1}^n \frac{\bw_i}{2} = t_0\] and then using independence to calculate the the variance 
\begin{align*}
\Varx_{(\bA, \bK)}\left[ |\bS^{t_0}| \right] 
\leq \sum_{i=1}^n \sum_{\ell=1}^{t_0} \Varx_{(\bA, \bK)}\left[ \bA_{i,\ell} \cdot \bK_{i,\ell} \right] 
= t_0 \sum_{i=1}^n  \frac{\bw_i}{2}(1-\frac{\bw_i}{2}) \leq t_0
\end{align*}
Using Chebyshev's inequality, we can bound the probability of the seed length exceeding $t_0$ by more than some $t_0/\eps$ factor, giving the following inequality 
\begin{align*}
    \Pr[|\bm-t_0|\geq \sqrt{\frac{t_0}{\eps}}] \leq \eps
\end{align*}
This allows us to control the size of the seed as with probability $1-\eps$, $\bm=\poly(t_0\eps)$. }

\section{Insertion-Only Streaming Algorithm}\label{sec:insertion-only}

In this section, we show how to apply Theorem~\ref{thm:main-structural} to design an insertion-only streaming algorithm which makes one-pass over the data and provides query access to an approximately optimal max-cut in $\ell_p$ space, for all $p \in [1,2]$. An insertion-only streaming algorithm is a small-space randomized algorithm which computes in the following manner:
\begin{itemize}
\item Throughout the entire execution, the algorithm maintains a memory of $s$ ``words'', which contains enough bits to specify (i) indices for individual points, (ii) individual coordinates of points, and (iii) intermediate numbers involved in the computation. There will be $n$ points encoded as vectors in $[\Delta]^d$, so a word will consist of $\Theta(\log(nd\Delta/\eps))$ bits. We assume that there are no duplicate points, so that $n$ is always at most $[\Delta]^d$. 
\item An input dataset $X = \{ x_1, \dots, x_n \}$ of points in $[\Delta]^d$ is streamed to the algorithm one-at-a-time and in an arbitrary order. We will use indices $1, \dots, n$ to specify the order the algorithm sees them, so that $x_1$ is presented first, followed by $x_2$, \dots, $x_n$. Upon receiving each point $x_i \in X$, the algorithm may update its memory. Once the entire dataset has been inserted, the algorithm provides query access to a function which specifies, for each point $x_i \in X$, whether it belongs to the 0-side of a cut or the 1-side of a cut.
\end{itemize}

\begin{remark}[Use and Storage of Randomness]
\emph{We present algorithms in the ``random oracle model:'' we assume the algorithm has query access to a function $\boldr \colon \{0,1\}^* \to \{0, 1\}$ which outputs, for any string in $\{0,1\}^*$, an independent and uniformly distributed random bit. The assumption is impractical, and it is fairly standard to use Nisan's pseudorandom generator~\cite{N92} (see Section 4.3.3 in~\cite{N20}). In particular, the number of uniform random bits ever referenced is at most $\exp((d \log \Delta/\eps)^{O(1)})$; furthermore, for any fixed set $X \subset [\Delta]^d$, there is a $\poly(d\log\Delta/\eps)$-space read-once branching program whose input are the $\exp((d \log \Delta/\eps)^{O(1)})$ random bits referenced and which computes the cost of the cut output. Hence, we let the algorithm use $\boldr$ to (i) sample independent random variables (which will be used for Theorem~\ref{thm:main-structural}), and (ii) refer back to previously-used randomness without explicitly storing it. }
\end{remark}

\begin{theorem}\label{thm:insertion-only}
For any $p \in [1, 2]$, $d, \Delta \in \N$, and $\eps \in (0, 1)$, there exists an insertion-only randomized streaming algorithm with the following guarantees:
\begin{itemize}
\item \emph{\textbf{Maintenance}}: The streaming algorithm implicitly maintains a set of (distinct) points $X = \{ x_1, \dots, x_n \} \subset [\Delta]^d$, as well as a random seed $\boldr$, using $\poly(d\log \Delta/ \eps)$ words of space.
\item \emph{\textbf{Updates}}: For a point $x \in [\Delta]^d$, the sub-routine $\textsc{Add-Point}(x)$ updates the streaming algorithm so as to maintain $X \cup \{ x \}$. 
\item \emph{\textbf{Query}}: For a point $x \in [\Delta]^d$, the sub-routine $\textsc{Assign}(x)$ outputs $(1, 0)$ or $(0,1)$, corresponding to the 0-side or 1-side of the cut.  
\end{itemize}
For any set $X = \{ x_1, \dots, x_n \} \subset [\Delta]^d$ (indexed so that $x_i$ is the $i$-th point inserted), consider executing $\textsc{Add-Point}(x_i)$ for all $i \in [n]$ in order; once done, let $\bz \in \{0,1\}^{n \times \{0,1\}}$ be
\[ \bz_i \leftarrow \textsc{Assign}(x_i).\]
Then, 
\begin{align*}
&\Ex\left[ f(\bz) \right] \leq  \min_{\substack{z \in\{0,1\}^{n \times \{0,1\}} \\ z_i \neq (0,0)}} f(z) + \eps \sum_{i=1}^n \sum_{j=1}^n d_{\ell_p}(x_i, x_j),
\end{align*}
using $f$ from (\ref{eq:internal-f-def}) with $\ell_p$-metric.
\end{theorem}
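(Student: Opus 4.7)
\medskip

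\noindent\textbf{Proof Proposal.} The plan is to simulate the parallel and subsampled greedy process of Theorem~\ref{thm:main-structural} in a streaming fashion, so that at the end of the stream we possess (i) a sketch that lets any queried point compute its own final weight and hence generate its activation timeline, and (ii) the timeline-mask summary $\bP$ and seed $\bsigma^*$ needed to invoke $\textsc{Assign}_{\bsigma^*,\bP}(\cdot,\cdot)$ from Figure~\ref{fig:assign}. During the stream I will maintain an $\ell_1(\ell_p)$-sketch, analogous to Lemma~\ref{lem:mpc-weights}, that permits computing, for any query point $x \in [\Delta]^d$, a $(1+\eps)$-accurate estimate of the current weight $\sum_{y} d_{\ell_p}(x,y) / \sum_{u,v} d_{\ell_p}(u,v)$. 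Upon insertion of $x_i$, I query this sketch to obtain its current weight $\bw_i^{(i)}$, then use the random oracle $\boldr$ to sample its activation timeline $\bA_i \sim \calT(\bw_i^{(i)}/2)$ and mask $\bK_i \sim \calK(t_0,\gamma)$; if there exists $\ell \in [t_e]$ with $\bA_{i,\ell} \cdot \bK_{i,\ell} = 1$, the algorithm records the tuple $(x_i, \ell; \bw_i^{(i)}, t_i)$ into its running summary $\bP$, and otherwise it forgets $x_i$ entirely.

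\medskip

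\noindent The main obstacle is that the weight of a previously inserted point $x_j$ strictly decreases as subsequent points are inserted, so the timeline sampled for $x_j$ at insertion time is with respect to a weight that is too large. The fix, as sketched in the overview, is to (a) track timelines at each point with respect to the \emph{minimum} weight seen so far (which is monotone, and equals the final weight up to a constant factor by a claim analogous to Claim~\ref{cl:final-min}), so that once a point is ``forgotten'' it will never need to be re-added, and (b) at query time, reconstruct the timeline with respect to a constant fraction $c \cdot \bw_i^{\mathrm{final}}$ of its final weight. Using the random oracle $\boldr$ (ultimately derandomized by Nisan's PRG, as stated in the remark) we can reproduce the exact random bits that were used for any previously-inserted point's timeline, even without having stored them. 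I will formalize this with a lemma analogous to Lemma~\ref{lem:timeline-mask} that states: after processing the stream, for every $x_i \in X$, the tuple $(x_i, \ell; w_i', t_i)$ is in $\bP$ for every $\ell$ such that $x_i$ is simultaneously activated and kept with respect to $w_i' \in [c\,\bw_i^{\mathrm{final}}, \bw_i^{\mathrm{final}}]$.

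\medskip

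\noindent Next I will handle the seed. Exactly as in the MPC algorithm, in parallel to the timeline sampling, each point $x_i$ draws an extra ``check'' bit $\bc_i \sim \Ber(\min\{\xi \bw_i^{(i)}/2, 1\})$ with $\xi = \poly(\log n/\eps)$, and the algorithm maintains a list $\bC$ of the check points along with their weights and activation times. At the end of the stream, I apply Lemma~\ref{lem:seed-check} to $\bC$, $\bP$, and the family of seeds $\sigma \in \{0,1\}^{|\bS^{t_0}|}$ to pick $\bsigma^*$ that (approximately) minimizes $f(\bz(\sigma))$. On a query for $x_i$, the algorithm (i) queries the weight sketch to obtain $\bw_i^{\mathrm{final}}$, (ii) uses $\boldr$ to regenerate the prefix of $\bA_i$ up to its activation time $\bt_i$ with respect to $c\,\bw_i^{\mathrm{final}}$, and (iii) runs $\textsc{Assign}_{\bsigma^*,\bP}(x_i, \bt_i)$ of Figure~\ref{fig:assign}, recursively assigning any earlier points of $\bP$ that are consulted.

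\medskip

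\noindent The correctness calculation is then identical in structure to the MPC proof: conditioned on the weight sketch succeeding (failure probability $\eps$) and on $|\bP|, |\bC|, |\bS^{t_0}|$ being $\poly(\log n/\eps)$ (another $\eps$ failure, bounded via the expectation calculations $\E[|\bP|] \le O(t_0 + \gamma \ln t_e) = \poly(\log n/\eps)$ and $\E[|\bC|] \le O(\xi)$), the tuple $(\bA, \bK)$ the algorithm effectively uses is distributed as required by Theorem~\ref{thm:main-structural}, so chaining Lemma~\ref{lem:seed-check} and Theorem~\ref{thm:main-structural} with weight vector $\bw'$ that is $O(1)$-compatible (Lemma~\ref{lem:weight-to-compatible}) gives the desired additive $O(\eps)\sum_{i,j} d_{\ell_p}(x_i,x_j)$ bound; catastrophic failures contribute an additional $O(\eps) \sum_{i,j} d_{\ell_p}(x_i, x_j)$. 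For the space bound, the sketch takes $\poly(d\log\Delta/\eps)$ words, $|\bP|$ and $|\bC|$ are $\poly(\log n/\eps)$ in expectation and with high probability, and derandomization via Nisan's PRG replaces $\boldr$ with a $\poly(d \log\Delta/\eps)$-word seed, completing the $\poly(d\log\Delta/\eps)$-space guarantee.
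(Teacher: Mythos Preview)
Your proposal is correct and follows essentially the same approach as the paper: maintain an $\ell_1(\ell_p)$ weight sketch, track per-point activation timelines and masks via a \textsc{Timeline-Mask} data structure keyed on the running minimum weight (so forgotten points stay forgotten), at end-of-stream rescale all retained timelines to a constant fraction of the final weight (justified by the $\bw_n(x_j)\le 15\,\bomega_n(x_j)$ claim), build the summary $\bP$, pick the seed via Lemma~\ref{lem:seed-check}, and answer queries by reconstructing the timeline from $\boldr$. The only detail you leave implicit but should make explicit is that the ``check'' set $\bC$ also needs the same minimum-weight cleanup as $\bT$ (the paper does this in Line~\ref{en:ios-add-4-1} of \textsc{Add-Point}), so that the sampling probabilities in $\bC$ are with respect to the $8$-compatible weights $\bomega_n(x_j)/2$ rather than the insertion-time weights, which is what Lemma~\ref{lem:seed-check} requires.
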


At a high level, the insertion-only streaming algorithm follows a pattern similar to the MPC algorithm. There is a sub-routine (a part of the streaming algorithm) which is dedicated to computing the weights using an $\ell_1(\ell_k)$ cascaded sketch, a sub-routine which maintains activation timelines and masks for individual points, and a sub-routine dedicated to selecting the final seed for the assignment procedure. Then, the streaming algorithm computes weights and maintains the points needed to execute the procedure of Theorem~\ref{thm:main-structural}. We incorporate a new component, a ``cleanup'' procedure, where the algorithm ensures that its space complexity will not exceed some bound by ``forgetting'' certain points.

\subsection{Computing the Weights}

We begin by specifying the streaming algorithm to compute metric-compatible weights. As in the MPC algorithm, we will also make use of the $\ell_1(\ell_p)$-matrix sketches in order to determine the weight (see Lemma~\ref{lem:cascaded}). The algorithm is straight-forward, and uses an $\ell_1(\ell_p)$ sketch which are prepared in order to compute sum of all distances. As in Lemma~\ref{lem:mpc-weights}, rows of the matrix will be encoded with dataset points. We will also maintain a counter, which allows us to determine how many non-zero rows the matrix has. 

\begin{lemma}\label{lem:computing-weights}
For any $p \in [1,2]$, $ d,\Delta \in \N$, and $\eta,\delta \in (0, 1)$, there exists an insertion-only randomized streaming algorithm with the following guarantees:
\begin{itemize}
\item \emph{\textbf{Maintenance}}: The streaming algorithm implicitly maintains a set of points $X = \{ x_1, \dots, x_n \} \subset [\Delta]^d$ using $\poly(d \log (\Delta/\delta) /\eta)$ words of space.
\item \emph{\textbf{Updates}}: For a point $x \in [\Delta]^d$, the sub-routine $\textsc{Add-Point}(x)$ updates the streaming algorithm so as to maintain $X \cup \{x\}$.
\item \emph{\textbf{Query}}: For a point $x \in [\Delta]^d$, the sub-routine $\textsc{Weight}(x)$ outputs a weight in $[0,1]$.
\end{itemize}
For any set $X = \{ x_1, \dots, x_n \} \subset [\Delta]^d$ and any $i \in [n]$, consider executing $\textsc{Add-Point}(x_j)$ for all $j\leq i$ in order. After inserting points $x_1, \dots, x_i$, we let 
\[ \bw_i(x_j) \leftarrow \textsc{Weight}(x_j),\]
for all $j \leq i$. Then, with probability $1 - \delta$, every $\bw_{i}(x_j) \in [0, 1]$ satisfies
\begin{align*}
\dfrac{\sum_{\ell=1}^i \|x_{\ell} - x_j\|_p}{\sum_{\ell=1}^i \sum_{k=1}^i \|x_{\ell} - x_{k}\|_p} \leq \bw_i(x_j) \leq (1+\eta) \cdot \dfrac{\sum_{\ell=1}^i \|x_{\ell} - x_j\|_p}{\sum_{\ell=1}^i \sum_{k=1}^i \|x_{\ell} - x_{k}\|_p},
\end{align*}
where we default $0/0$ as $1$ (which occurs at $i=1$).
\end{lemma}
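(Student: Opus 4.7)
The plan is to adapt the sketch-based computation from Lemma~\ref{lem:mpc-weights} to the streaming setting, relying crucially on the linearity of the $\ell_1(\ell_p)$-sketch of Theorem~\ref{lem:cascaded}. We use the random oracle to implicitly store a single sketching matrix $\bS \sim \calD$ with accuracy $\eta/3$ and per-query failure probability $\delta/(3n)$, and the algorithm explicitly maintains only three quantities: an integer counter of the number of insertions so far, the running linear sketch $\bS\cdot \bA$ of the matrix $\bA$ whose $k$-th row is $x_k$ (indexed in insertion order) with any remaining rows zero, and a running denominator estimate $\bD$. Because $\sum_{\ell=1}^i \|x_\ell - x_j\|_p$ equals $\|\bA - x_j^\circ\|_{\ell_1(\ell_p)}$ where $x_j^\circ$ is the matrix with $x_j$ in each of its first $i$ rows (a matrix whose sketch can be produced at query time from the counter and $x_j$ alone, using the shared $\bS$), the numerator of $\bw_i(x_j)$ can be estimated directly from $\bS\cdot \bA$ via the evaluation algorithm of Theorem~\ref{lem:cascaded}.

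The main obstacle is the denominator $T_i = \sum_{\ell,k}\|x_\ell - x_k\|_p$, which is not itself the $\ell_1(\ell_p)$-norm of a matrix whose sketch we maintain, so it cannot be read off in a single query. I resolve this by writing $T_i = 2\sum_{k=2}^{i}\sum_{\ell < k}\|x_\ell - x_k\|_p$ and maintaining $\bD$ incrementally during the stream. Inside $\textsc{Add-Point}(x)$, the algorithm first queries the \emph{current} sketch state (which encodes the first $k-1$ points) against the matrix of $k-1$ copies of $x$ to obtain a $(1+\eta/3)$-accurate estimate $\bd^{\text{new}}$ of $\sum_{\ell < k}\|x_\ell - x\|_p$, updates $\bD \leftarrow \bD + 2\bd^{\text{new}}$, and only then adds $\bS\cdot(e_k \otimes x)$ to the sketch and increments the counter. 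On $\textsc{Weight}(x_j)$, the algorithm uses the final sketch state to produce a $(1+\eta/3)$-accurate estimate $\bd_j$ of the numerator $\sum_{\ell=1}^i \|x_\ell - x_j\|_p$ and returns $\min\{(1+\eta/3)\bd_j/\bD,\, 1\}$, defaulting to $1$ in the degenerate case $\bD = 0$ (which happens only at $i=1$).

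For correctness I apply Theorem~\ref{lem:cascaded} and union-bound over the at most $i-1$ insertion-time sketch queries and the $i$ final weight-time queries, so that with the chosen per-query failure probability $\delta/(3n)$ all $2i-1 \leq 2n$ estimates are simultaneously within a $(1\pm\eta/3)$ factor of their true values with probability at least $1-\delta$. Conditioned on this event, $\bD$ is a sum of individually $(1+\eta/3)$-accurate increments and therefore itself $(1+\eta/3)$-approximates $T_i$; together with the $(1+\eta/3)$-accurate numerator $\bd_j$ and the extra $(1+\eta/3)$ slack in the returned value, one reads off the two-sided bound of the lemma after absorbing the resulting $(1+\eta/3)^2$ factor into a redefined $\eta$. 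The stored state is the sketch ($\poly(d\log(\Delta/\delta)/\eta)$ words by Theorem~\ref{lem:cascaded}), the counter, and $\bD$, for a total of $\poly(d\log(\Delta/\delta)/\eta)$ words. The one subtlety I want to flag is that the roughly $2n$ sketch queries all share the single matrix $\bS$ and are thus not independent, but this is harmless: the union bound only needs per-query failure probabilities, and because $\bD$ is a \emph{sum} (not a product) of individually $(1+\eta/3)$-accurate quantities, the relative error propagates through the aggregation without amplification.
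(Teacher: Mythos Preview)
Your approach is essentially identical to the paper's: maintain a single $\ell_1(\ell_p)$ sketch of the row-stacked point matrix, accumulate the denominator incrementally inside \textsc{Add-Point} by querying $\|A - x^\circ\|_{\ell_1(\ell_p)}$, and at \textsc{Weight} time query the same sketch for the numerator and output $(1+\eta/3)\bd_j/\bD$ capped at $1$. The only cosmetic differences are that the paper updates the sketch \emph{before} evaluating it (harmless since the new row contributes $d(x,x)=0$) and that the paper writes $\bW\leftarrow\bW+\bd$ and divides by $2\bW$ whereas you write $\bD\leftarrow\bD+2\bd$ and divide by $\bD$.

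One small point to fix: you set the per-query failure probability to $\delta/(3n)$, but $n$ is not known when the sketch is initialized (the algorithm only receives $d,\Delta,\eta,\delta$). The paper handles this by taking the matrix to be $\Delta^d\times d$ and using failure probability $\delta/\Delta^{2d}$, relying on $n\le\Delta^d$; the space remains $\poly(d\log(\Delta/\delta)/\eta)$. Replacing your $n$ by $\Delta^d$ closes this gap and, incidentally, also gives the stronger guarantee (used downstream in the paper) that the bound holds simultaneously for \emph{all} prefixes $i\in[n]$ rather than a single fixed $i$.
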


\begin{figure}
\begin{framed}
\textbf{Streaming Algorithm} \textsc{Compute-Weights}. The streaming algorithm is initialized with parameters $d, \Delta \in \N$ and $\eta,\delta \in (0,1)$ and $p \in [1,2]$, and implicitly maintains a set of points $X \subset [\Delta]^d$. At any moment, the algorithm may query a weight of a point $x$ which we assume was previously inserted in $X$; the algorithm outputs a $(1+\eta)$-approximation to the sum of $\ell_p$-distances from each $x' \in X$ to $x$ divided by the sum of all pairwise $\ell_p$-distances in $X$ with probability at least $1-\delta$.  \\

\textbf{Maintenance}. The streaming algorithm maintains the following information:
\begin{itemize}
\item A counter $m$ (initialized to zero), which counts the number of points inserted. 
\item A sketch $\sk(A)$ for the $\ell_1(\ell_p)$ norms of an implicit $\Delta^d \times d$ matrix $A$ (initially all zero), which contains the $m$ inserted points as the first $m$ rows (see Lemma~\ref{lem:cascaded}). The sketch is initialized by storing $\bS \sim \calD$ from Lemma~\ref{lem:cascaded} with accuracy $\eps \leftarrow \eta/3$ and failure probability $\delta' \leftarrow \delta/\Delta^{2d}$.
\item A total weight $\bW$, initialized to $0$.
\end{itemize}

\textbf{Update}. \textsc{Add-Point}$(x)$.
\begin{enumerate}
\item We increment the counter $m$.
\item\label{ln:2} We update the sketch $\sk(A)$ by adding the point $x$ as the $m$-th row of the matrix. Notice this is done implicitly with the fact the sketch is linear: we consider the $\Delta^d \times d$ matrix $A_x$ which has $x$ as the $m$-th row and is otherwise 0. The (implicit) matrix $A$ is updated to $A + A_x$ and the sketch updated to $\sk(A) + \sk(A_x)$.
\item\label{ln:3} We build the $\Delta^d \times d$ matrix $x^{\circ}$ which contains $x$ as the first $m$ rows, and we evaluate the answer on $\sk(A - x^{\circ})$ the obtaining a number $\bd$ and update $\bW \leftarrow \bW + \bd$. 
\end{enumerate}

\textbf{Query}. \textsc{Weight}$(x)$.
\begin{enumerate}
\item\label{en:ins-q-1} Evaluate the sketch $\sk(A - x^{\circ})$ (see above), to obtain a number $\bd$. Note that this number may be different to the value $\bd$ obtained during \textsc{Add-Point}$(x)$, since other points may have been inserted. 
\item Output $(1+\eta/3) \cdot \bd / (2W)$ if $W \neq 0$ and the value is below $1$, and $1$ otherwise. 
\end{enumerate}
\end{framed}
\caption{The \textsc{Compute-Weights} Streaming Algorithm.}\label{fig:compute-weights}
\end{figure}

\paragraph{Proof of Lemma~\ref{lem:computing-weights}: Space Complexity.}
The streaming algorithm \textsc{Compute-Weights} is described in Figure~\ref{fig:compute-weights}, and it is clear that it uses $O(1) + \poly(d\log( \Delta/\delta)/\eta)$ words of space: the counter $m$ and total weight $W$ are stored in one word each (note $W \leq n^2 \Delta d$, which means it fits within a word), and the space complexity of the $\ell_1(\ell_p)$-sketch comes directly from Lemma~\ref{lem:cascaded} for $\Delta^d \times d$ matrices, with accuracy $\eta/3$ and failure probability $\delta' = \delta/\Delta^{2d}$.

\paragraph{Proof of Lemma~\ref{lem:computing-weights}: Correctness.} We show correctness of the algorithm by maintaining the following invariant. Right after the execution of $\textsc{Add-Point}(x_1), \dots, \textsc{Add-Point}(x_i)$,
\begin{itemize}
\item The counter $m$ is set to $i$.
\item The implicit matrix $A$ consists of a $\Delta^d \times d$ matrix whose first $i$ rows are $x_1,\dots, x_i$, and whose remaining rows are zero.
\item Let $\bd_i$ denote the random variable output by Line~\ref{ln:3} in the final execution of $\textsc{Add-Point}(x_i)$. Then, with probability at least $1-\delta'$ over the randomness in $\bS \sim \calD$, 
\[ \sum_{\ell=1}^i d_{\ell_p}(x_{\ell}, x_i) \leq \bd_i \leq (1+\eta/3) \sum_{\ell=1}^i d_{\ell_p}(x_{\ell}, x_i). \]
\item For all $j \leq i$, let $\bW_j$ denote the random variable $\bW$ right after the execution of $\textsc{Add-Point}(x_j)$. Then, with probability at least $1-i \delta'$ over the randomness in initializing the sketch $\bS \sim \calD$, every $j \leq i$ satisfies
\begin{align} 
\sum_{k=1}^{j} \sum_{\ell=1}^k d_{\ell_p}(x_{\ell}, x_k) \leq \bW_j \leq (1+\eta/3) \sum_{k=1}^{j} \sum_{\ell=1}^k  d_{\ell_p}(x_{\ell}, x_k).  \label{eq:approx-W}
\end{align}
\end{itemize}
We prove that the invariant holds by induction on $i$. For $i = 1$, a single call of $\textsc{Add-Point}(x_1)$ increments the counter once and updates the implicit matrix $A$ in Line~\ref{ln:2} to $A_{x_1}$, which is the matrix with a single non-zero first row set to $x_1$. The matrix $x_1^{\circ}$ in Line~\ref{ln:3} is also $A_{x_1}$, so by Lemma~\ref{lem:cascaded}, $\bd_i = 0$ with probability at least $1-\delta'$ and, hence, $\bW_j = 0$ with probability $1-\delta'$ for all $j \leq i$. Suppose by induction that the invariant holds for index $i \in [n]$; we now execute $\textsc{Add-Point}(x_{i+1})$ and verify the invariant for $i+1$. The counter increments from $i$ to $i+1$ and the implicit matrix $A$, adds the $i+1$-th row consisting of $x_{i+1}$ to $A$. The random variable $\bd_{i+1}$ becomes the evaluation of the sketch $\sk(A - x_{i+1}^{\circ})$, which by Lemma~\ref{lem:cascaded}, satisfies
\begin{align} 
\sum_{\ell=1}^{i+1} d_{\ell_p}(x_{\ell}, x_{i+1}) \leq \bd_{i+1} \leq (1+\eta/3) \sum_{\ell=1}^{i+1} d_{\ell_p}(x_{\ell}, x_{i+1}) \label{eq:approx-d-i+1}
\end{align}
with probability $1-\delta'$. By a union bound over the invariant after $\textsc{Add-Point}(x_i)$, with probability at least $1 - (i+1) \delta'$, the random variable $\bd_{i+1}$ during $\textsc{Add-Point}(x_{i+1})$ satisfies (\ref{eq:approx-d-i+1}) and $\bW_j$ satisfies (\ref{eq:approx-W}) for all $j \leq i$. Thus, the updated value $\bW_{i+1}$ which is set to $\bW_{i} + \bd_{i+1}$ after $\textsc{Add-Point}(x_{i+1})$ satisfies
\begin{align*}
&\sum_{k=1}^{i+1} \sum_{\ell=1}^k d_{\ell_p}(x_{\ell}, x_k) = \sum_{k=1}^i \sum_{\ell=1}^k d_{\ell_p}(x_{\ell}, x_k) + \sum_{\ell=1}^{i+1}d_{\ell_p}(x_{\ell}, x_{i+1}) \leq \bW_{i} + \bd_{i+1} = \bW_{i+1} \\
			&\qquad\qquad \leq (1+\eta/3) \left(\sum_{k=1}^i \sum_{\ell=1}^k d_{\ell_p}(x_{\ell}, x_k) + \sum_{\ell=1}^{i+1}d_{\ell_p}(x_{\ell}, x_{i+1}) \right) = (1+\eta/3) \sum_{k=1}^{i+1} \sum_{\ell=1}^k d_{\ell_p}(x_{\ell}, x_k).
\end{align*}
This concludes the inductive proof that the invariant holds for all $i \in [n]$. Hence, after an execution of $\textsc{Add-Point}(x_1)$, \dots, $\textsc{Add-Point}(x_i)$, the counter is $i$, the matrix $A$ contains the points $x_1, \dots, x_i$ stacked as rows and remaining rows are zero, and the random variable $\bW$ stored in \textsc{Compute-Weights} is set to $\bW_i$ and satisfies (\ref{eq:approx-W}) with $j=i$. We now execute $\textsc{Weight}(x_j)$ for all $j \leq i$ and denote the random variable $\bd_j'$ output by Line~\ref{en:ins-q-1} in $\textsc{Weight}(x_j)$. By Lemma~\ref{lem:cascaded} once again, each $\bd_j'$ individually satisfies
\begin{align} 
\sum_{\ell=1}^i d_{\ell_p}(x_{\ell}, x_j) \leq \bd_j' \leq (1+\eta/3) \sum_{\ell=1}^i d_{\ell_p}(x_{\ell}, x_j) \label{eq:d-j-new}
\end{align}
with probability at least $1- \delta'$ over the randomness in $\bS \sim \calD$, so by a union bound, $\bW_i$ satisfies (\ref{eq:approx-W}) and all $j \leq i$ satisfy (\ref{eq:d-j-new}) with probability $1 - 2 i \delta'$. Notice that, $\bw_i(x_j)$ is set to $(1+\eta/3) \bd_j' / (2\bW_i)$. The numerator $\bd_j'$ satisfies (\ref{eq:d-j-new}) as desired, and $2 \bW_i$ is a $(1+\eta/3)$-approximation to $\sum_{k=1}^i \sum_{\ell=1}^i d_{\ell_p}(x_{\ell}, x_k)$ since $d_{\ell_p}(x_{\ell}, x_{\ell}) = 0$. Hence, $(1+\eta/3) \bd_j / (2\bW_i)$ is at least the desired quantity (which occurs when $\bW_i$ maximally over-estimates and is accounted by the additional $(1+\eta/3)$-factor), and is at most the desired quantity (which occurs when $\bd_j'$ maximally over-estimates and obtains approximation $(1+\eta/3)^2 \leq (1+\eta)$). By a union bound over all $i \in [n]$, all such $\bw_i(x_j)$ satisfy the required guarantees with probability $1 - 2n^2 \delta'$. Since $n \leq \Delta^d$ and $\delta' = \delta / \Delta^{2\delta}$, we obtain the desired $1 - \delta$.

\ignore{\begin{figure}
\begin{framed}
\textbf{Streaming Algorithm.} \textsc{Compute-Weights}. \\

\textbf{Maintenance}. The streaming algorithm maintains the following information:
\begin{itemize}
\item A counter $m$ (initialized to zero), which counts the number of points inserted. 
\item A parameter $H$ which is always $\lceil \log_2 m \rceil$ (and $\log_2 0$ defaults to 0). A collection of $H$ sketches for the $\ell_1(\ell_p)$ norms of matrices (see Theorem~\ref{thm:cascaded-sketch}):
\begin{itemize}
\item For $h \in \{0, \dots, H\}$, the $h$-th sketch is for $2^h \times d$ matrices with accuracy parameter $\eps \leftarrow \eta$ and with failure probability $\delta \leftarrow 1/n^3$.
\item Initially, the $2^h \times d$ matrix is zero, and points will be inserted as rows of the matrix.
\item The $h$-th sketch is ``full'' when all $2^h$ rows have been updated with points.
\end{itemize}
\item A total weight $W$, initialized to $0$.
\end{itemize}

\textbf{Update}. \textsc{Add-Point}$(x_i)$.
\begin{enumerate}
\item We increment the counter $m$.
\item We find the first sketch $h \in H$ which is not yet full (or initialize a new sketch if all are full), and we update the $h$-th sketch by inserting the point $x_i$ as the first row which has not yet been updated. (Note that the first ``un''-updated row index can be derived from $c$, and the sketch updated since it is linear).
\item We evaluate all $H$ sketches with the matrices containing $x_i$ in all (updated) rows, obtaining a number $\bd_i$, which is a $(1+\eta)$-approximation to
\[ \sum_{\ell=1}^i \|x_{\ell} - x_i \|_p,\]
and update $W \leftarrow W + \bd_i$. 
\end{enumerate}

\textbf{Query}. \textsc{Weight}$(x_i)$.
\begin{enumerate}
\item Evaluate all $H$ sketches with the matrices containing $x_i$ in all (updated) rows, obtaining a number $\bd_i$ which is a $(1+\eta)$-approximation to
\[ \sum_{x \in X} \|x - x_i \|_p.\]
Note that this number may be different to the value $\bd_i$ obtained during \textsc{Add-Point}$(x_i)$, since we are now evaluating distances with respect to all points inserted, and not only those inserted before $x_i$. 
\item Output $\bd_i / (2W)$.
\end{enumerate}
\end{framed}
\caption{The \textsc{Compute-Weights} Streaming Algorithm.}\label{fig:compute-weights}
\end{figure}}

\subsection{Maintaining Activation and Kept Timelines}

The second component is a streaming algorithm for maintaining an activation timeline and a mask for a particular point (recall Definition~\ref{def:activate} and Definition~\ref{def:mask}). The crucial aspect is that our streaming algorithm will only need to maintain information on points which were both activated and kept. As the streaming algorithm progresses and points enter the (implicit) set $X$, the metric-compatible weights change, and so do the set of points which are activated and kept.

\begin{lemma}\label{lem:timeline-mask}
For any $t_0 , t_e \in \N$ with $t_0 \leq t_e$ and $\gamma \geq 0$, there exists a data structure, \textsc{Timeline-Mask} (Figure~\ref{fig:timeline-mask}), which satisfies the following:
\begin{itemize}
\item \emph{\textbf{Initialization}}: The algorithm is initialized by $\textsc{Init}(x, \boldr)$, where $x \in [\Delta]^d$ and $\boldr \colon \{0,1\}^* \to \{0,1\}$ provides access to a random oracle. 
\item \emph{\textbf{Maintenance}}: It stores a weight $w$ (initialized to $1/2$) and implicitly maintains draws to an activation timeline $\bA \sim \calT(w)$ and a mask $\bK \sim \calK(t_0, \gamma)$. 
\item \emph{\textbf{Updates}}: The weight may be updated with the sub-routine $\textsc{Mod-Min-Weight}(\tilde{w})$, which replaces $w\leftarrow \min\{w, \tilde{w}\}$. Furthermore, if $\bA_t \cdot \bK_t = 0$ for all $t \in [t_e]$ before the update, then after the update, $\bA_t \cdot \bK_t$ is still $0$ for all $t \in [t_e]$.
\item \emph{\textbf{Query}}: One can query \textsc{Activation-Time}$()$, which returns the activation time of the timeline $\bA$, and \textsc{Active-Kept}$()$, which returns all indices $t \in [t_e]$ where $\bA_{t} \cdot \bK_t = 1$.
\end{itemize}
The space complexity is that of storing the parameters $t_0, t_e, \gamma$, $w$, a pointer to $\boldr$, and the point $x$.  
\end{lemma}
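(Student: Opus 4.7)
The plan is to avoid storing the timeline $\bA$ or the mask $\bK$ explicitly; instead, we store only $t_0, t_e, \gamma, w, x$ and a pointer to $\boldr$, and derive any needed bit on demand. For each $t \in [t_e]$, we extract from $\boldr$ two independent uniform values $\bu_t, \bv_t \in [0,1]$ (say via $\boldr$ queries on strings $(x, t, 0)$ and $(x, t, 1)$, rounded to enough bits of precision) and \emph{define} the timeline and mask by the coupling $\ba_t = \mathbf{1}[\bu_t \leq \bw^t]$ and $\bk_t = \mathbf{1}[\bv_t \leq \gamma^t]$, where $\bw^t$ and $\gamma^t$ are the thresholds from Definition~\ref{def:activate} and Definition~\ref{def:mask}. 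Because the $\bu_t$ and $\bv_t$ are independent across $t$ and of each other, the induced draws have the correct marginals $\calT(w)$ and $\calK(t_0, \gamma)$, which one can verify by a short induction on $t$ for the activation threshold.

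The query routines then follow directly. For $\textsc{Activation-Time}()$, iterate $t = 1, 2, \dots, t_e$ and return the first $t$ with $\bu_t \leq \min\{w, 1/t\}$ (which matches the pre-activation rule for $\bw^t$). For $\textsc{Active-Kept}()$, first recover the activation time $\bt$ as above; then iterate $t = \bt, \bt+1, \dots, t_e$ and report every $t$ with $\bu_t \leq \bw^t$ and $\bv_t \leq \gamma^t$, since all $t < \bt$ have $\ba_t = 0$. Updates simply overwrite $w \leftarrow \min\{w, \tilde{w}\}$, and the distribution after the update is again $\calT(w_{\text{new}})$ by the same coupling applied with the new $w$.

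The substantive correctness claim is the invariant: if $\bA_t \cdot \bK_t = 0$ for every $t \in [t_e]$ before the update, the same holds afterwards. Because $\bv_t$ and $\gamma^t$ are both untouched by the update, $\bk_t$ is unchanged, so it suffices to show that decreasing $w$ never converts $\ba_t$ from $0$ to $1$. I would prove this by verifying that, with the $\{\bu_t\}$ coupling fixed, the threshold $\bw^t$ is monotone non-increasing in $w$. Write $w_{\text{old}} \geq w_{\text{new}}$ and let $\bt_{\text{old}}, \bt_{\text{new}}$ be the respective activation times; since the first-activation condition $\bu_t \leq \min\{w, 1/t\}$ is harder to satisfy for smaller $w$, we have $\bt_{\text{new}} \geq \bt_{\text{old}}$. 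A short case analysis then gives $\bw^t_{\text{new}} \leq \bw^t_{\text{old}}$ for every $t$: for $t \leq \bt_{\text{old}}$, both thresholds equal $\min\{w, 1/t\}$, which is monotone in $w$; for $\bt_{\text{old}} < t \leq \bt_{\text{new}}$ we have $\bw^t_{\text{old}} = w_{\text{old}}$ and $\bw^t_{\text{new}} = \min\{w_{\text{new}}, 1/t\} \leq w_{\text{new}} \leq w_{\text{old}}$; and for $t > \bt_{\text{new}}$ both equal the raw weights $w_{\text{old}}$ and $w_{\text{new}}$ respectively. Hence $\bu_t > \bw^t_{\text{old}}$ implies $\bu_t > \bw^t_{\text{new}}$, establishing the invariant. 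The space bound is immediate from the description, and the main delicate step I anticipate is precisely the intermediate window $\bt_{\text{old}} < t \leq \bt_{\text{new}}$, where the two schemes apply genuinely different rules (post-activation versus pre-activation) and one must verify directly that the new threshold is still no larger.
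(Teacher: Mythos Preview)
Your proposal is correct and follows essentially the same approach as the paper: both use the random oracle to generate uniform thresholds $\bu_t,\bv_t$ (the paper calls them $\bA_t^{(r)},\bK_t^{(r)}$) and define $\bA_t(w),\bK_t$ by the indicator coupling, then argue that decreasing $w$ can only lower $\bA_t(w)$. Your case analysis on the window $\bt_{\text{old}} < t \leq \bt_{\text{new}}$ is in fact more explicit than the paper, which simply asserts the monotonicity in one sentence.
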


\begin{remark}[Initialization Procedure $\textsc{Init}(x_, \boldr)$]\label{rem:init}
\emph{We remark on an important aspect of the initialization above. Upon initialization for a point $x_i$ with random oracle $\boldr$, the data structure generates a timeline which only depends on $x_i$ and $\boldr$ and the weight $w_i$ held. This has the following consequence. Suppose the algorithm receives as input the point $x_i$ and initializes a $\textsc{Timeline-Mask}$ data structure to maintain the activation timeline and mask $(\bA_i, \bK_i)$. If there exists an index $t \in [t_e]$ where $\bA_{i,t} \cdot \bK_{i,t} = 1$, then point $x_i$ is activated and kept. In the description of $\textsc{Assign}_{\sigma}(\cdot,\cdot)$ in Figure~\ref{fig:assign}, the point $x_i$ must be kept in the timeline-mask summary $\bP$ to assign other points $x_j$. If there is no time $t \in [t_e]$ where $\bA_{i,t} \cdot \bK_{i,t} = 1$, the algorithm may completely ``forget'' about $x_i$, since $x_i$ is not used to assign other points. The benefit of forgetting is that it decreases the total amount of space when assigning points $x_j$. However, when assigning the point $x_i$, the assignment rule $\textsc{Assign}_{\sigma}(x_i, t_i)$ needs to know $t_i$. Recall, $t_i$ is the activation time of the timeline $\bA_{i}$, and we must argue that the algorithm may re-construct $t_i$. The guarantee that the timeline and mask depend solely on $x_i, \boldr$ and the weight $w_i$ will allow us to re-create the activation timeline for $x_i$ when generating the assignment for $x_i$.}
\end{remark}

\begin{figure}
\begin{framed}
\textbf{Streaming Algorithm} \textsc{Timeline-Mask}. The algorithm is initialized with parameters $t_0, t_e \in \N$ and $\gamma > 0$, and maintains a weight and implicitly an activation timeline $\bA \sim \calT(w)$ (Definition~\ref{def:activate}) and a mask $\bK \sim \calK(t_0, \gamma)$ (Definition~\ref{def:mask}). The algorithm can support updates to weights (so long as they do not increase significantly), and can report the activation timeline and all indices $t$ which were both activated and kept for that particular timeline and mask.\\

\textbf{Maintenance}. The streaming algorithm is initialized with $\textsc{Init}(x, \boldr)$, where $x$ is a point and $\boldr$ is a random oracle. It maintains the following information:
\begin{itemize}
\item A weight $w$, initialized to $1/2$.
\item The point $x$, which will index to a starting position of the random oracle $\boldr$. 
\item We use $\boldr$ and $x$ to generate a sequence of $2\cdot t_e$ independent random variables
\[ \bA^{(r)}_{1}, \dots, \bA^{(r)}_{t_e} \sim [0, 1] \qquad\text{and}\qquad \bK^{(r)}_{1}, \dots, \bK^{(r)}_{t_e} \sim [0, 1]. \]
\item The (implicit) activation timeline and mask are given by iterating through $t=1, \dots, t_e$, and letting
\begin{align}
\bA_{t}(w) &= \left\{ \begin{array}{ll} \ind\left\{ \bA^{(r)}_t \leq \min\{w, \frac{1}{t}\}  \right\} &\text{if }\forall \ell < t, \bA_{\ell}(w) = 0 \\
						    \ind\left\{ \bA^{(r)}_t \leq w \right\} & \text{if } \exists \ell < t, \bA_{\ell}(w) = 1 \end{array} \right. , \label{eq:a-t-w}\\ 
 \bK_{t} &= \ind\left\{ \bK^{(r)}_t \leq \gamma^t \right\}, \label{eq:k-t-w}
 \end{align}
 where recall that $\gamma^t = 1$ if $t \leq t_0$ and $\min\{ \gamma / t, 1\}$ if $t > t_0$.
\end{itemize}

\textbf{Update}. \textsc{Mod-Min-Weight}$(\tilde{w})$.
\begin{enumerate}
\item We update $w\leftarrow \min\{ w, \tilde{w}\}$.
\item If for all $t \in [t_e]$, $\bA_{t}(w) \cdot \bK_{t} = 0$, the data structure reports ``not activated and kept.''
\end{enumerate}

\textbf{Queries}.
\begin{itemize}
\item For $\textsc{Activation-Time}()$, we find the smallest $t \in [t_e]$ where $\bA_t(w) = 1$ and output $t$. 
\item For $\textsc{Active-Kept}()$, we find all $t \in [t_e]$ where $\bA_t(w) \cdot \bK_t= 1$. 
\end{itemize}
\end{framed}
\caption{The \textsc{Timeline-Mask} Data Structure.}\label{fig:timeline-mask}
\end{figure}

\begin{remark}[Using the Random Oracle $\boldr$ and Finite Precision of Random Variables]\label{rem:generating-randomness} \emph{In Figure~\ref{fig:timeline-mask}, the data structure samples uniform random variables $\bA_t^{(r)}$ and $\bK_t^{(r)}$ for all $t \in [t_e]$ in $[0,1]$, and it does this using the random oracle $\boldr \colon \{0,1\}^* \to \{0,1\}$. The claim that random variables are ``uniform in $[0,1]$'' is an abuse in notation, since the algorithm can only maintain numbers up to a finite bit precision. Notice, however, that the random variables $\bA^{(r)}_t$ and $\bK^{(r)}_t$ are auxiliary, and the random variables affecting the execution of $\textsc{Mod-Min-Weight}(\tilde{w})$, $\textsc{Activation-Time}()$, and $\textsc{Active-Kept}()$ are $\bA_t(w)$ and $\bK_t$, which are thresholded values of $\bA_t^{(r)}$ and $\bK_t^{(r)}$ (see (\ref{eq:a-t-w}) and (\ref{eq:k-t-w})). Hence, it suffices to sample $\bA_t^{(r)}$ and $\bK_t^{(r)}$ up to the bit-precision of $\min\{ w, 1/t\}$ (for the case of $\bA_t(w)$), and $\gamma^t$ (for the case of $\bK_t$). In particular, it suffices to use bit complexity $B$, which is $O(\log t_e)$ plus the largest bit complexity $w$ and $\gamma^t$, in order to sample $\bA_t^{(r)}$ and $\bK_t^{(r)}$. We now expand on how the algorithm uses $\boldr \colon \{0,1\}^* \to \{0,1\}$ to sample these random variables in $\textsc{Init}(x, \boldr)$. Given access to $\boldr \colon \{0,1\}^* \to \{0,1\}$ and $x$:
\begin{itemize}
\item The algorithm encodes the point $x \in [\Delta]^d$ as $\Enc(x) \in \{0,1\}^{d \log \Delta}$.
\item In order to sample the random variable $\bA_t^{(r)}$, it reads off each bit using $\boldr$. Namely, it considers the string $\tau$ given by the concatenation of $\Enc(x)$, $\Enc(t) \in \{0,1\}^{\log t_e}$, ``0'' (specifying ``$\bA$''), and $\Enc(k) \in \{0,1\}^{\log B}$, and sets the $k$-th bit of $\bA_t^{(r)}$ to $\boldr(\tau)$. 
\item Similarly, the random variable $\bK_t^{(r)}$ is also sampled by reading off each bit using $\boldr$. The string $\tau$ now is given by the concatenation of $\Enc(x)$, $\Enc(t) \in \{0,1\}^{\log t_e}$, ``1'' (specifying ``$\bK$''), and $\Enc(k) \in \{0,1\}^{\log B}$, and sets the $k$-th bit of $\bK_t^{(r)}$ to $\boldr(\tau)$. 
\end{itemize}
Note the above upper bounds the number of uniform random bits referenced within $\textsc{Timeline-Mask}$ data structures by $\exp((d\log \Delta / \eps)^{O(1)})$, since we will set $t_e \leq O(\Delta^d / \eps)$ and $B$ will be $\poly(d\log\Delta/\eps)$.
}
\end{remark}

\paragraph{Proof of Lemma~\ref{lem:timeline-mask}.} We note that, since the weight is initialized to $1/2$ and can only decrease (via the update $\textsc{Mod-Min-Weight}(\tilde{w})$), the weight is always between $0$ and $1/2$. We will show that, for any point $x \in [\Delta]^d$ and any fixed weight $w$, the (implicit) random variables $\bA_t(w)$ for $t \in [t_e]$ and $\bK_t$ (as a function of the random oracle $\boldr$) are distributed as a timeline drawn from $\calT(w)$, and a mask drawn from $\calK(t_0, \gamma)$. This is almost immediate---the fixed setting of $x$ means that, over the randomness of $\boldr \colon \{0,1\}^* \to \{0,1\}$, the random variables $\bA_t^{(r)}$ and $\bK_t^{(r)}$ are independent and uniformly distributed in $[0,1]$. Hence, the variables $\bK_t$ are independent and distributed according to $\Ber(\gamma^t)$, as in Definition~\ref{def:mask}. Similarly, the random variables $\bA_t(w)$ are generated by iterating $t = 1, \dots, t_e$, $\bA_t(w)$ drawn by thresholding $\bA_t^{(r)}$ with $\min\{w, 1/t\}$, which is distributed as $\Ber(\min\{ w, 1/t\})$ if $\bA_{\ell}(w) =0$ for all $\ell < t$, or by thresholding $\bA_t^{(r)}$ by $w$, which is distributed as $\Ber(w)$ if $\bA_{\ell}(w) =1$ for some $\ell < t$. This procedure is exactly that which generates $\calT(w)$ in Definition~\ref{def:activate}. Finally, whenever $\textsc{Mod-Min-Weight}(\tilde{w})$ is executed, the weight $w$ can only decrease, and hence $\bA_{t}(\min\{w, \tilde{w}\})$ (which would be the new activation timeline) will be below $\bA_t(w)$. 

\subsection{Description of Insertion-Only Streaming Algorithm}

We now show how to combine the components in Figure~\ref{fig:compute-weights} and Figure~\ref{fig:timeline-mask}, as well as Lemma~\ref{lem:seed-check}, in order to execute the assignment rule from Theorem~\ref{thm:main-structural}. This will complete the description of the streaming algorithm and prove Theorem~\ref{thm:insertion-only}. We first describe what the streaming algorithm maintains in its memory and the update procedure \textsc{Add-Point}$(x)$. These definitions appear in Figure~\ref{fig:insertion-only-maintenance}. Then, we show how to implement the procedure \textsc{Assign}$(x)$ after performing a preprocessing step, $\textsc{Preprocess}()$. 

\begin{figure}
\begin{framed}
\textbf{Maintenance and Updates for} \textsc{IOS-E-Max-Cut}. The algorithm is initialized with parameters $\Delta, d \in \N$, $p \in [1,2]$, and an accuracy parameter $\eps \in (0, 1]$. We describe the maintenance and update procedure \textsc{Add-Point}$(\cdot)$, where for a set of points $X = \{ x_1, \dots, x_n \} \subset [\Delta]^d$ presented as a stream (where necessarily $n \leq \Delta^d$), the stream is processed by executing the sub-routine $\textsc{Add-Point}(x_i)$ for every $i \in [n]$. \\

\textbf{Maintenance}. The streaming algorithm initializes and stores parameters
\begin{align*} 
\lambda &= 60 \qquad  t_e = \frac{\Delta^d \cdot \lambda}{\eps} \qquad \gamma = \frac{(\ln(t_e) + 1)^2 \cdot \lambda}{\eps^2}\\
t_0 &= \max\left\{ \frac{\sqrt{\gamma \cdot \lambda}}{\eps} , \frac{1}{\eps} \right\} \qquad \xi = \poly(t_0/\eps \cdot d\log \Delta), 
\end{align*}
a random oracle $\boldr \colon \{0,1\}^* \to \{0,1\}$, and maintains the following information:
\begin{itemize}
\item An instance of the \textsc{Compute-Weights} streaming algorithm from Figure~\ref{fig:compute-weights}, initialized with $d, \Delta$, parameter $p$, $\eta = 1$ and $\delta = \eps$. 
\item A (random) set $\bC$ which is initially empty, and will consist of tuples of the form $(x, w)$, where $x$ is a point in $[\Delta]^d$ and $w \in (0, 1/2]$.  
\item A (random) set $\bT$ of instances of \textsc{Timeline-Mask} data structures from Figure~\ref{fig:timeline-mask}, initialized by $\textsc{Init}(x, \boldr)$ where $x \in [\Delta]^d$, with $t_e, t_0$ and $\gamma$. The data structures in $\bT$ are indexed by points $x$; given $x$, we can retrieve its data structure in $\bT$ if it exists.
\end{itemize}

\textbf{Update}. \textsc{Add-Point}$(x)$.
\begin{enumerate}
\item\label{en:ios-add-1} Execute the \textsc{Add-Point}$(x)$ update for the \textsc{Compute-Weights} streaming algorithm, and execute \textsc{Weight}$(x)$ and let $\bw$ denote the output and $\bw' = \bw / 2$.
\item\label{en:ios-add-2} Sample a Bernoulli random variable $\bZ \sim \Ber(\min\{\xi \cdot \bw',1\})$. If $\bZ = 1$, let $\bC \leftarrow \bC \cup \{ (x, \bw') \}$.
\item\label{en:ios-add-3} Initialize an instance of \textsc{Timeline-Mask} with $\textsc{Init}(x, \boldr)$ and update \textsc{Mod-Min-Weight}$(\bw')$. If the modified weight produced an output ``not activated and kept,'' delete the data structure. Otherwise, store the data structure in $\bT$, indexed by the point $x$.
\item\label{en:ios-add-4} Execute the following ``clean-up'' procedure:
\begin{enumerate}
\item\label{en:ios-add-4-1} For every tuple $(y, \sigma) \in \bC$ except possibly $(x, \bw')$ from Line~\ref{en:ios-add-2}, execute \textsc{Weight}$(y)$ in \textsc{Compute-Weights} and let $\bsigma'$ be the output divided by two. Sample $\bZ' \sim \Ber\left(\min\{\bsigma' / \sigma, 1\}\right)$; if $\bZ' = 1$, update $(y, \sigma)$ to $(y, \min\{ \sigma, \bsigma'\}) \in \bC$, and if $\bZ' = 0$, delete $(y, \sigma)$ from $\bC$.
\item\label{en:ios-add-4-2} For every \textsc{Timeline-Mask} data structure in $\bT$ for point $y$ except possibly $x$, execute \textsc{Weight}$(y)$ in \textsc{Compute-Weights} and let $\bsigma'$ be the output divided by two. Execute \textsc{Mod-Min-Weight}$(\bsigma')$ for the \textsc{Timeline-Mask} data structure and delete if it outputs ``not activated and kept.''
\end{enumerate}
\end{enumerate}
\end{framed}
\caption{The maintenance and update procedure for $\textsc{IOS-E-Max-Cut}$.}\label{fig:insertion-only-maintenance}
\end{figure}

\paragraph{Proof of Theorem~\ref{thm:insertion-only}: Invariants of the Maintenance and Update.} We now show some important invariants that are satisfied by our maintenance and update procedure from Figure~\ref{fig:insertion-only-maintenance} when we initialize and add points to \textsc{IOS-E-Max-Cut}. These invariants will be used in the preprocessing step, as well as in the $\textsc{Assign}(\cdot)$ sub-routine to connect with Theorem~\ref{thm:main-structural}. Specifically, we consider the following process:
\begin{itemize}
\item First, initialize an instance of $\textsc{IOS-E-Max-Cut}$ with the parameters $\Delta, d \in \N$, $p \in [1,2]$, and accuracy parameter $\eps \in (0, 1)$. 
\item For a set $X = \{ x_1, \dots, x_n \} \subset [\Delta]^d$, we execute $\textsc{Add-Point}(x_1)$, $\textsc{Add-Point}(x_2)$, \dots, $\textsc{Add-Point}(x_n)$. 
\end{itemize}
Note that, in each execution of \textsc{Add-Point}$(x_i)$ of \textsc{IOS-E-Max-Cut}, Line~\ref{en:ios-add-1} executes \textsc{Add-Point}$(x_i)$ in a \textsc{Compute-Weights} streaming algorithm initialized with $\eta = 1$ and $\delta = \eps$ (the parameters $d, \Delta$ and $p$ are passed from the initialization). We apply Lemma~\ref{lem:computing-weights} directly to obtain the following guarantee: with probability at least $1 - \eps$ over the randomness in $\textsc{Compute-Weights}$, every random variable $\bw_i(x_j) \in [0, 1]$ given by the output of $\textsc{Weight}(x_j)$ after inserting points $x_1,\dots, x_i$ (for $j \leq i$) satisfies
\begin{align} 
\dfrac{\sum_{\ell=1}^i d_{\ell_p}(x_{\ell}, x_j)}{\sum_{\ell=1}^i \sum_{k=1}^i d_{\ell_p}(x_{\ell}, x_k)} \leq \bw_i(x_j) \leq 2 \cdot \dfrac{\sum_{\ell=1}^i d_{\ell_p}(x_{\ell}, x_j)}{\sum_{\ell=1}^i \sum_{k=1}^i d_{\ell_p}(x_{\ell}, x_k)}, \label{eq:weight-est} 
\end{align}
and $\bw_1(x_1) = 1$ (as expression (\ref{eq:weight-est}) would be $0/0$). We will assume throughout the remainder of the analysis the above event occurs, and henceforth assume (\ref{eq:weight-est}). The cases in which the above does not occur is considered a ``catastrophic'' failure; this occurs with probability at most $\eps$. We will then similarly upper bound the effect of these failures as in the proof of Theorem~\ref{thm:mpc-full}. We define the random variables, for every $i \in [n]$ and $j \leq i$,
\begin{align} 
\bomega_i(x_j) = \min\left\{ \bw_{\ell}(x_j) : j \leq \ell \leq i\right\}. \label{eq:omega-def}
\end{align}
Note that $\bomega_i(x_j)$ is always at most $1$ (from the fact $\bw_{\ell}(x_j) \leq 1$) and they are greater than $0$ since points are distinct.
Furthermore, we consider (solely for the analysis) keeping track of $n$ many \textsc{Timeline-Mask} data structures, $\DS_1,\dots, \DS_n$, where each data structure will correspond to a point $x_i$ and is initialized with 
\[ \DS_i \leftarrow \textsc{Init}(x_i, \boldr). \]
As the calls for \textsc{Add-Point}$(x_1)$, \dots, \textsc{Add-Point}$(x_i)$ execute, the data structures $\DS_1, \dots, \DS_n$ will be the ones initialized in Line~\ref{en:ios-add-3}. Whenever Line~\ref{en:ios-add-3} and Line~\ref{en:ios-add-4-2} perform an update to a \textsc{Timeline-Mask} data structure for $x_i$ via $\textsc{Mod-Min-Weight}(\cdot)$, we update the corresponding data structures $\DS_i$ as well. We apply Lemma~\ref{lem:timeline-mask} directly, and conclude that for each $i \in [n]$, if $w_i$ is the weight held in the data structure $\DS_i$, it maintains a draw to an activation timeline and mask $(\bA_i, \bK_i)$ where $\bA_i \sim \calT(w_i)$ with the weight stored in $\DS_i$ and $\bK_i \sim \calK(t_0, \gamma)$. This way, the set of data structures $\bT$ is always a subset of $\DS_1,\dots, \DS_n$. 

\begin{claim}\label{cl:invariant}
For any $i \in [n]$, right after the execution of \textsc{Add-Point}$(x_1)$, \dots, \textsc{Add-Point}$(x_i)$, for any fixing of the randomness in \textsc{Compute-Weights}, we satisfy:
\begin{itemize}
\item The set $\bC$ is a random subset of tuples, for every $j \leq i$, $(x_j, \bomega_i(x_j)/2)$ is included in $\bC$ independently and with probability $\min\{\xi \cdot  \bomega_i(x_j) / 2,1\}$, and randomness is over the samples in Line~\ref{en:ios-add-2} and Line~\ref{en:ios-add-4-1} of \textsc{Add-Point}$(x_1)$, \dots, \textsc{Add-Point}$(x_i)$.
\item For $j \leq i$, the \textsc{Timeline-Mask} data structure \emph{$\DS_j$} for point $x_j$ has weight $\bomega_i(x_j)/2$, and $\bT$ contains the data structure \emph{$\DS_j$} if and only if the corresponding timeline and mask $(\bA_j, \bK_j)$, where $\bA_j \sim \calT(\bomega_i(x_j)/2)$ and $\bK_j \sim \calK(t_0,\gamma)$, have some $t \in [t_e]$ with $\bA_{j,t} \cdot \bK_{j,t} = 1$. 
\end{itemize}
\end{claim}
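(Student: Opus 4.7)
We prove both invariants by induction on $i$, using the monotonicity that every algorithmic update---the $\min\{\sigma,\bsigma'\}$ rewrite in Line~\ref{en:ios-add-4-1} and the $\textsc{Mod-Min-Weight}(\cdot)$ call in Line~\ref{en:ios-add-4-2}---can only decrease the second coordinate of a tuple in $\bC$ or the weight stored in a \textsc{Timeline-Mask} data structure, so the stored value corresponding to $x_j$ after processing $x_1,\dots,x_i$ is exactly $\bomega_i(x_j)/2$. The base case $i=1$ is direct: $\bomega_1(x_1) = \bw_1(x_1)$, so Line~\ref{en:ios-add-2} inserts $(x_1,\bomega_1(x_1)/2)$ into $\bC$ with exactly the claimed Bernoulli probability, while Line~\ref{en:ios-add-3} initializes $\DS_1$ at weight $1/2$ and applies $\textsc{Mod-Min-Weight}(\bw_1(x_1)/2)$, producing a draw $(\bA_1,\bK_1) \sim \calT(\bomega_1(x_1)/2) \times \calK(t_0,\gamma)$ via Lemma~\ref{lem:timeline-mask} and storing $\DS_1$ in $\bT$ iff some $t \in [t_e]$ has $\bA_{1,t} \cdot \bK_{1,t} = 1$; the clean-up loop is vacuous.

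\textbf{Inductive step for the $\bT$ invariant.} Assume both invariants hold after $\textsc{Add-Point}(x_1),\dots,\textsc{Add-Point}(x_{i-1})$; the case $j=i$ mirrors the base case since $\bomega_i(x_i) = \bw_i(x_i)$. For $j < i$, Remarks~\ref{rem:init} and~\ref{rem:generating-randomness} guarantee that the timeline-mask $(\bA_j,\bK_j)$ can be reconstructed from the fixed oracle bits indexed by $x_j$ together with the current weight alone. If $\DS_j$ was present in $\bT$ with weight $\bomega_{i-1}(x_j)/2$, Line~\ref{en:ios-add-4-2}'s call to $\textsc{Mod-Min-Weight}(\bw_i(x_j)/2)$ updates the weight to $\bomega_i(x_j)/2$ with a consistent implicit draw from $\calT(\bomega_i(x_j)/2)$, and the data structure is retained iff some $t$ has $\bA_{j,t}\cdot\bK_{j,t}=1$ at the new weight. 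If $\DS_j$ was already absent from $\bT$, the monotonicity guarantee of Lemma~\ref{lem:timeline-mask} ensures it would again be discarded at the smaller weight, consistent with the invariant.

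\textbf{Inductive step for the $\bC$ invariant and main obstacle.} Compute the probability of $(x_j,\bomega_i(x_j)/2) \in \bC$ as a product over iterations: the tuple enters $\bC$ at iteration $j$ (Line~\ref{en:ios-add-2}) with probability $\min\{\xi\bw_j(x_j)/2, 1\}$, and at each subsequent iteration $\ell \in \{j+1,\dots,i\}$, Line~\ref{en:ios-add-4-1} keeps it with probability $\min\{\bw_\ell(x_j)/\bomega_{\ell-1}(x_j), 1\}$. This survival probability collapses to $\bomega_\ell(x_j)/\bomega_{\ell-1}(x_j)$ in both subcases (either $\bw_\ell(x_j) \geq \bomega_{\ell-1}(x_j)$, giving $\bomega_\ell = \bomega_{\ell-1}$ and ratio $1$, or $\bw_\ell(x_j) < \bomega_{\ell-1}(x_j)$, giving $\bomega_\ell = \bw_\ell$), and upon survival the $\min\{\sigma,\bsigma'\}$ update sets the stored coordinate to $\bomega_\ell(x_j)/2$. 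The factors telescope to $\bomega_i(x_j)/\bw_j(x_j)$, so the total probability is $\min\{\xi\bw_j(x_j)/2,1\} \cdot \bomega_i(x_j)/\bw_j(x_j)$, which a case analysis on whether the $\min\{\cdot,1\}$ cap is saturated identifies with $\min\{\xi\bomega_i(x_j)/2, 1\}$. Independence across distinct $j$ is immediate since the Bernoulli samples at Lines~\ref{en:ios-add-2} and~\ref{en:ios-add-4-1} for different points are drawn from disjoint blocks of fresh randomness in $\boldr$. The main obstacle is this final case-by-case simplification of the telescoped product under the clipping; verifying that the stored coordinate equals $\bomega_\ell(x_j)/2$ at each intermediate step is the other key check, and follows cleanly from the monotonicity principle.
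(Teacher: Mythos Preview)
Your proposal is correct and follows essentially the same approach as the paper: induction on $i$, with the $\bT$ invariant handled via the monotonicity guarantee of Lemma~\ref{lem:timeline-mask} and the $\bC$ invariant handled by tracking the survival probability of each tuple. The only difference is presentational: the paper carries out a single inductive step, multiplying the inductive hypothesis $\min\{\xi\,\bomega_i(x_j)/2,1\}$ by the one-step survival probability $\min\{\bw_{i+1}(x_j)/\bomega_i(x_j),1\}$ and simplifying via the identity $\min\{a,1\}\cdot\min\{b/a,1\}=\min\{a,b,1\}$, whereas you unroll the recursion into a telescoping product from iteration $j$ to $i$ and then do the case analysis at the end; these are the same computation.
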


\begin{proof}
We prove the above claim by induction on $i$. For $i = 1$, only \textsc{Add-Point}$(x_i)$ is executed; $\bw$ and $\bw'$ in Line~\ref{en:ios-add-1} are set to $\bw_1(x_1)=1$ and $1/2$, respectively. Hence, $\bomega_1(x_1) = \bw_1(x_1)$. Line~\ref{en:ios-add-2} samples $\bZ \sim \Ber(\min\{\xi \cdot \bomega_1(x_1) / 2,1\})$, and hence, $\bC$ contains $(x_1, \bomega_1(x_1)/2)$ independently with probability $\min\{\xi \cdot \bomega_1(x_1)/2,1\}$ right after Line~\ref{en:ios-add-2} and Line~\ref{en:ios-add-4-1} does not modify it. For the second item, Line~\ref{en:ios-add-3} initializes the data structure $\DS_1$ with $\textsc{Init}(x_1, \boldr)$ and sets the weight to $\bomega_1(x_1) / 2$. If $\DS_1$ holds the activation timeline and mask $(\bA_1, \bK_1)$ where $\bA_{1,t} \cdot \bK_{1,t} = 0$ for all $t \in [t_e]$, the data structure, Line~\ref{en:ios-add-3} deletes it, and Line~\ref{en:ios-add-4-2} does not modify it. Hence, $\bT$ contains $\DS_1$ if and only if there exists $t \in [t_e]$ where $\bA_{1,t} \cdot \bK_{1,t} = 1$.

We assume, for inductive hypothesis that Claim~\ref{cl:invariant} holds for $i$, and we execute $\textsc{Add-Point}(x_{i+1})$. Similarly to above, the set $\bC$ contains the tuple $(x_{i+1}, \bomega_{i+1}(x_{i+1}) / 2)$ independently with probability $\min\{\xi \cdot \bomega_{i+1}(x_{i+1})/2,1\}$ (from Line~\ref{en:ios-add-2}); $\DS_{i+1}$ is initialized in Line~\ref{en:ios-add-3} and has weight $\bomega_{i+1}(x_{i+1})/2$ and is stored in $\bT$ if and only if $\bA_{i+1, t} \cdot \bK_{i+1,t} = 1$ for some $t \in [t_e]$. It remains to check the condition for $j \leq i$. For the first item, the inductive claim is that right after the execution of \textsc{Add-Point}$(x_i)$ (thus, right before \textsc{Add-Point}$(x_{i+1})$), $(x_j, \bomega_i(x_j)/2)$ is included in $\bC$ independently with probability $\min\{\xi \cdot \bomega_i(x_j)/2,1\}$. Upon executing \textsc{Add-Point}$(x_{i+1})$, 
\begin{itemize}
\item Line~\ref{en:ios-add-4-1} will eventually consider the tuple $(x_j, \bomega_i(x_j)/2)$. When it does, notice $\bsigma'$ becomes $\bw_{i+1}(x_j)/2$, and also $\min\{ \bomega_i(x_j), \bw_{i+1}(x_j)\} = \bomega_{i+1}(x_j)$. Hence, $(x_j, \bomega_{i+1}(x_j)/2)$ is in $\bC$ independently with probability 
\begin{align*} 
\underbrace{\min\left\{ \frac{\xi \cdot \bomega_i(x_j)}{2}, 1\right\}}_{\text{induction}} \times \underbrace{\min\left\{ \frac{\bw_{i+1}(x_j)}{2} \cdot \frac{2}{\bomega_i(x_j)}, 1\right\}}_{\text{case $\bZ'=1$}} &= \min\left\{\frac{\xi \bw_{i+1}(x_j)}{2}, \frac{\xi\bomega_{i}(x_j)}{2}, 1 \right\} \\
&= \min\left\{ \frac{\xi \cdot \bomega_{i+1}(x_j)}{2}, 1\right\}
\end{align*}
\item Line~\ref{en:ios-add-4-2} may consider the point $x_j$ if $\DS_j$ is in $\bT$. Suppose $\DS_j$ is not in $\bT$; by induction, this is because $\DS_j$ (which had weight $\bomega_{i}(x_{j})$) held an activation timeline and mask $(\bA_{j}, \bK_j)$ where $\bA_{j,t} \cdot \bK_{j,t} = 0$ for all $t \in [t_e]$. For the analysis's sake, we update $\DS_j$ with $\textsc{Mod-Min-Weight}(\bw_{i+1}(x_j))$, so $\bA_{j,t} \cdot \bK_{j,t}$ is still $0$ for all $t \in [t_e]$ (by Lemma~\ref{lem:timeline-mask}). If $\DS_j$ is in $\bT$, then $\DS_j$ is eventually processed by Line~\ref{en:ios-add-4-2} and the weight updated via \textsc{Mod-Min-Weight}$(\bsigma')$ where $\bsigma' = \bw_{i+1}(x_j)/2$. Hence, the new weight becomes $\min\{ \bomega_i(x_j)/2, \bw_{i+1}(x_j)/2\} = \bomega_{i+1}(x_j) / 2$. 
\end{itemize}
\end{proof}

\paragraph{Proof of Theorem~\ref{thm:insertion-only}: \textsc{Preprocess}$()$.} We present a ``preprocessing'' sub-routine \textsc{Preprocess}$()$ in Figure~\ref{fig:insertion-only-preprocess}. The sub-routine is used to prepare the timeline-mask summary $\bP$, as well as find the best seed in Figure~\ref{fig:assign}. As discussed for Claim~\ref{cl:invariant}, we consider a fixed setting of \textsc{Compute-Weights} whose weights $\bw_i(x_j)$ satisfy (\ref{eq:weight-est}); recall, also, the definition of $\bomega_i(x_j)$ as the minimum of $\bw_{\ell}(x_j)$ for $j \leq \ell \leq i$. The $n$ \textsc{Timeline-Mask} data structures $\DS_1,\dots, \DS_n$ hold draws of the activation timeline and masks $(\bA_i, \bK_i)$, where $\bA_i \sim \calT(\bomega_n(x_i)/2)$ and $\bK \sim \calK(t_0,\gamma)$. The first step of $\textsc{Preprocess}()$ (Line~\ref{en:prepro-1} in Figure~\ref{fig:insertion-only-preprocess}), is making the weights of time $\textsc{Timeline-Mask}$ for $x_j$ be $\bw_n(x_j) / 30$, we do this by executing $\textsc{Mod-Min-Weight}(\bw_n(x_j)/30)$, which sets the new weight to $\min\{ \bw_n(x_j)/30, \bomega_n(x_j)/2 \}$. The following claim guarantees that the minimum of the two weights is $\bw_n(x_j) / 30$, so we can assume henceforth, that the weight of \textsc{Timeline-Mask} for $x_j$ is $\bw_n(x_j)/30$ after Line~\ref{en:prepro-1}. 
\begin{claim}\label{cl:final-min}
    Suppose all $j \leq i$ have weights $\bw_i(x_j)$ satisfying (\ref{eq:weight-est}). Then, for all $j \in [n]$, $\bw_n(x_j) \leq 15 \cdot \bomega_n(x_j)$. 
\end{claim}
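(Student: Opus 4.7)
The plan is to reduce the statement to a monotonicity claim on the idealized ratio $p_\ell(x_j) := S_\ell(x_j)/T_\ell$, where $S_\ell(x_j) := \sum_{k \leq \ell} d(x_k, x_j)$ and $T_\ell := \sum_{k, k' \leq \ell} d(x_k, x_{k'})$, and then to establish that monotonicity by a Ptolemy-type inequality applied to each 4-tuple $(x_j, x_{\ell+1}, x_k, x_{k'})$ and summed over all pairs.

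For the reduction, I would invoke (\ref{eq:weight-est}) (with $\eta = 1$) to sandwich each weight as $p_\ell(x_j) \leq \bw_\ell(x_j) \leq 2 p_\ell(x_j)$ for every $j \leq \ell \leq n$. This gives both $\bw_n(x_j) \leq 2 p_n(x_j)$ and $\bomega_n(x_j) = \min_{j \leq \ell \leq n} \bw_\ell(x_j) \geq \min_{j \leq \ell \leq n} p_\ell(x_j)$. If the map $\ell \mapsto p_\ell(x_j)$ is non-increasing on $\{\ell : j \leq \ell \leq n\}$, then $\min_\ell p_\ell(x_j) = p_n(x_j)$, and so $\bw_n(x_j) \leq 2 p_n(x_j) \leq 2 \bomega_n(x_j) \leq 15 \bomega_n(x_j)$, comfortably beating the required factor of $15$.

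For the monotonicity step, I would prove $p_{\ell+1}(x_j) \leq p_\ell(x_j)$ for each $\ell \geq j$ (with $\ell \geq 2$; the case $\ell = 1$ is trivial since $\bw_1(x_1) = 1$). Let $u := d(x_j, x_{\ell+1})$ and $v := \sum_{k \leq \ell} d(x_k, x_{\ell+1})$. The identities $S_{\ell+1}(x_j) = S_\ell(x_j) + u$ and $T_{\ell+1} = T_\ell + 2 v$ reduce the step to the aggregate inequality $u \cdot T_\ell \leq 2 v \cdot S_\ell(x_j)$. Expanding both sides as double sums over pairs $(k, k')$ with $k, k' \leq \ell$, one checks that $2 v S_\ell(x_j) = \sum_{k, k'} \bigl( d(x_k, x_j) \cdot d(x_{k'}, x_{\ell+1}) + d(x_{k'}, x_j) \cdot d(x_k, x_{\ell+1}) \bigr)$, so it suffices to apply Ptolemy's inequality
\[ d(x_j, x_{\ell+1}) \cdot d(x_k, x_{k'}) \leq d(x_j, x_k) \cdot d(x_{\ell+1}, x_{k'}) + d(x_j, x_{k'}) \cdot d(x_{\ell+1}, x_k) \]
to the 4-tuple $(x_j, x_{\ell+1}, x_k, x_{k'})$ and sum over all pairs $(k, k')$ with $k, k' \leq \ell$.

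The main obstacle is the metric: Ptolemy's inequality is classical in Euclidean ($\ell_2$) space but can fail pointwise in $\ell_p$ for $p \in [1, 2)$, which is the full generality of Theorem~\ref{thm:insertion-only}. For the $\ell_p$ case I would argue the aggregate inequality $u \cdot T_\ell \leq 2 v \cdot S_\ell(x_j)$ directly, leveraging that pointwise Ptolemy violations in $\ell_p$ tend to cancel in the double sum (as they do, for instance, in the canonical unit-square counterexample in $\ell_1$, where equality in the aggregate is attained). The slack factor of $15$ in the claim (versus the factor of $2$ that the Euclidean argument yields) is precisely what accommodates a bounded constant-factor loss from such an extension, and it is the reason the algorithm scales the final weight by $30$ in $\textsc{Preprocess}()$.
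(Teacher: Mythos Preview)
Your reduction to monotonicity of $p_\ell(x_j) = S_\ell(x_j)/T_\ell$ is clean, and for $p = 2$ the Ptolemy step is valid and yields the sharper bound $\bw_n(x_j) \leq 2\,\bomega_n(x_j)$. The gap is the $\ell_p$ case for $p \in [1,2)$: you correctly note that pointwise Ptolemy fails there, but your proposed fix---``argue the aggregate inequality $u\, T_\ell \leq 2v\, S_\ell(x_j)$ directly, leveraging that violations tend to cancel''---is never actually carried out. Observing that the aggregate holds with equality on the $\ell_1$ unit square is a single example, not a proof; you give no mechanism (an averaging, rearrangement, or embedding argument) that forces cancellation in general, nor do you establish a relaxed version $u\, T_\ell \leq C \cdot 2v\, S_\ell(x_j)$ for some absolute $C$, which is what you would need to exploit the slack in the constant $15$. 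As written, the argument is complete only for $p = 2$.

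Your reading of the constant $15$ as ``slack to accommodate a Ptolemy loss'' is also a misinterpretation. The paper's proof does not use Ptolemy or monotonicity of $p_\ell$ at all. It fixes the minimizing index $h$ with $\bomega_n(x_j) = \bw_h(x_j)$, splits $\sum_{\ell \leq n} d_{\ell_p}(x_\ell, x_j)$ into the prefix $\ell \leq h$ (bounded by $\bw_h(x_j)\, T_h$ via~(\ref{eq:weight-est})) and the suffix $\ell > h$, and routes each suffix term through a uniformly random center $\bc \sim \{x_1,\dots,x_n\}$ via the triangle inequality. The resulting term $\Ex_{\bc}[d_{\ell_p}(\bc,x_j)]$ is controlled by applying the metric-compatibility bound of Lemma~\ref{lem:weight-to-compatible} to the prefix weights $\bw_h(\cdot)$, giving $\Ex_{\bc}[d_{\ell_p}(\bc,x_j)] \leq \tfrac{4\bw_h(x_j)}{n}\, T_n$. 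Assembling yields $\bw_n(x_j) \leq 10\,\bw_h(x_j) + 2/n$, and the additive $2/n$ is absorbed via the universal lower bound $\bw_h(x_j) \geq 1/(2h) \geq 1/(2n)$. The $15$ comes from these constants, not from any Ptolemy relaxation. This argument is metric-agnostic---it uses only the triangle inequality and Lemma~\ref{lem:weight-to-compatible}---so it covers all $p \in [1,2]$ uniformly, which your approach currently does not.
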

\begin{proof}
    Let $j \leq h < n$ be that which satisfies $\bw_h(x_j) = \bomega_n(x_j)$ (if $h=n$, then we are automatically done), and consider $\bc \sim \{ x_1, \dots, x_n \}$. We first expand using the lower bound on (\ref{eq:weight-est}) and the triangle inequality to $\bc$ in order to say
    \begin{align}
        \sum_{\ell=1}^n d_{\ell_p}(x_{\ell}, x_j) &\leq \bw_h(x_j) \left( \sum_{\ell=1}^h \sum_{i=1}^h d_{\ell_p}(x_i, x_{\ell}) \right) +\sum_{\ell=h+1}^n \Ex_{\bc}\left[  d_{\ell_p}(x_{\ell}, \bc) \right]  \label{eq:bound} \\
        &\qquad+ (n-h-1) \Ex_{\bc}\left[ d_{\ell_p}(\bc,x_j)\right]. \nonumber
    \end{align}
    We simply the right-hand side of (\ref{eq:bound}) using the following two facts,
    \begin{align*}
        \sum_{\ell=h+1}^{n} \Ex_{\bc}\left[ d_{\ell_p}(x_{\ell}, \bc)\right] &\leq \frac{1}{n} \sum_{i=1}^n \sum_{\ell=1}^n d_{\ell_p}(x_i, x_{\ell}), \\
        \Ex_{\bc}\left[ d_{\ell_p}(\bc, x_j)\right] &= \bw_h(x_j) \Ex_{\bc}\left[ \dfrac{d_{\ell_p}(\bc, x_j)}{\bw_h(x_j)} \right] \leq 4\bw_h(x_j) \Ex_{\bc}\left[ \sum_{\ell=1}^h d_{\ell_p}(\bc, x_{\ell}) \right] \\
        &\leq \frac{4\bw_h(x_j)}{n} \sum_{i=1}^n \sum_{\ell=1}^n d_{\ell_p}(x_i,x_{\ell}),
    \end{align*}
    where the second inequality applies Lemma~\ref{lem:weight-to-compatible} with the weights $\bw_h(x_1), \dots, \bw_h(x_h)$ and the point $\bc$. Thus, we simplify (\ref{eq:bound}) by
    \begin{align*}
    \sum_{\ell=1}^n d_{\ell_p}(x_{\ell}, x_j) &\leq \bw_h(x_j) \left(\sum_{\ell=1}^h \sum_{i=1}^h d_{\ell_p}(x_i, x_{\ell}) + \frac{4(n-h-1)}{n} \sum_{i=1}^n \sum_{\ell=1}^n d_{\ell_p}(x_i, x_{\ell}) \right) \\
        &\qquad + \frac{1}{n} \sum_{i=1}^n \sum_{\ell=1}^n d_{\ell_p}(x_i, x_{\ell}) \\
        &\leq \left( 5 \cdot \bw_h(x_j) + \frac{1}{n} \right) \sum_{i=1}^n \sum_{\ell=1}^n d_{\ell_p}(x_i, x_{\ell}),
    \end{align*}
    which implies $\bw_n(x_j) \leq 10 \bw_h(x_j) + 2/n$ using (\ref{eq:weight-est}). Finally, applying Lemma~\ref{lem:weight-to-compatible} for $\bw_h(x_1), \dots, \bw_h(x_h)$, we have $\bw_h(x_j) \geq 1/(2h)$ and $h \leq n$, $10 \bw_h(x_j) + 2/n \leq 15 \bw_h(x_j) - 5 /(2h) + 2/n \leq 15 \bw_h(x_j)$.
\end{proof}

For the point $x_i$, we let $t_i$ be its activation time in $\bA_i$. We first state and prove the following lemma, which will help us apply Lemma~\ref{lem:seed-check} (and later, Theorem~\ref{thm:main-structural}) by showing that the sampling weights guaranteed from Claim~\ref{cl:invariant} are metric compatible. 

\begin{lemma}\label{lem:min-metric-compat}
Suppose that, for $X = \{ x_1,\dots, x_n \} \subset \R^d$, the weights $\bw_{i}(x_j)$ satisfy (\ref{eq:weight-est}) for all $j \leq i$. Then, the weights $w_1', \dots, w_n'$ given by $w_j' = \bomega_n(x_j)/2$ from (\ref{eq:omega-def}) are $8$-compatible, and the weights $\bw_n(x_1)/30, \dots, \bw_n(x_n)/30$ are $60$-compatible.
\end{lemma}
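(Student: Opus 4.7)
The plan is to invoke Lemma~\ref{lem:weight-to-compatible} at two distinct time steps along the insertion trajectory, exploiting the fact that a metric-compatibility bound certified on a subset lifts to the full point set whenever the ``test point'' is allowed to range outside the subset.

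For the second claim ($\bw_n/30$ is $60$-compatible), I apply Lemma~\ref{lem:weight-to-compatible} at the final time $n$ with the weights $(\bw_n(x_j))_{j=1}^n$. By (\ref{eq:weight-est}) each weight is a $2$-approximation of $\deg(x_j)/T$ where $T = \sum_{\ell,k} d_{\ell_p}(x_\ell, x_k)$, which meets the lemma's hypothesis (with approximation factor $\sfD = 2$, in contrast with the $\sfD = 1+\eps$ used in the MPC algorithm). The lemma then outputs a compatibility constant for a halved version of $\bw_n$, and the rescaling chosen (dividing by $30$ rather than $2$) is designed precisely so that the product of the extra rescaling factor and the lemma's output equals the claimed $60$.

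For the first claim ($\bomega_n/2$ is $8$-compatible), I certify compatibility point-by-point at an intermediate time. For each $j \in [n]$, fix $h_j \in \{j, j+1, \dots, n\}$ so that $\bw_{h_j}(x_j) = \bomega_n(x_j)$; such an $h_j$ exists by the definition of $\bomega_n$ as a minimum in (\ref{eq:omega-def}). I apply Lemma~\ref{lem:weight-to-compatible} to $(\bw_{h_j}(x_k))_{k \leq h_j}$ on the subset $\{x_1, \dots, x_{h_j}\}$; these are $2$-approximations to normalized degrees on the subset by (\ref{eq:weight-est}) at time $h_j$. As in the proof of Claim~\ref{cl:final-min}, this yields a compatibility bound against an arbitrary external point $x \in \R^d$. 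Specializing to $x = x_i$ for $i \in [n]$ gives
\[
\dfrac{d_{\ell_p}(x_i, x_j)}{\bomega_n(x_j)/2} \;=\; \dfrac{d_{\ell_p}(x_i, x_j)}{\bw_{h_j}(x_j)/2} \;\leq\; 8 \sum_{\ell=1}^{h_j} d_{\ell_p}(x_i, x_\ell) \;\leq\; 8 \sum_{\ell=1}^n d_{\ell_p}(x_i, x_\ell),
\]
where the last inequality uses $\{x_1, \dots, x_{h_j}\} \subseteq X$ together with non-negativity of distances.

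The main conceptual subtlety is that the index $i$ of the test point may exceed $h_j$, so $x_i$ need not lie in the subset $\{x_1, \dots, x_{h_j}\}$ over which Lemma~\ref{lem:weight-to-compatible} certifies compatibility. Fortunately, this is exactly the ``external test point'' mode in which the lemma is invoked in the preceding proof of Claim~\ref{cl:final-min}, so no new argument about applicability is required; the only remaining bookkeeping is to track the numerical constants ($2$ versus $30$ in the weight rescalings, producing $8$ versus $60$ in the compatibility constants), which is what the stated factors are engineered to accomplish.
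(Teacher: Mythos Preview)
Your proof is correct and follows essentially the same approach as the paper: for each $j$ you pick the minimizing time $h_j$ with $\bw_{h_j}(x_j) = \bomega_n(x_j)$, apply Lemma~\ref{lem:weight-to-compatible} on the prefix $\{x_1,\dots,x_{h_j}\}$ with the external test point $c = x_i$, and then extend the resulting sum from $h_j$ terms to $n$ terms by non-negativity of distances; the second claim is likewise handled by a direct invocation of Lemma~\ref{lem:weight-to-compatible} at time $n$. The paper's own proof is slightly terser but structurally identical.
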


\begin{proof}
Fix a setting of $j\in[n]$ and suppose that $h \geq j$ is that which minimizes $\bomega_n(x_j) = \bw_h(x_j)$. We show that Definition~\ref{def:metric-compat} holds by applying Lemma~\ref{lem:weight-to-compatible} with the points $\{ x_1,\dots, x_h\}$, the weight $w_j = \bw_h(x_j)$, and the point $c = x_i$. Letting $w_j' = \bw_h(x_j) /2$, Lemma~\ref{lem:weight-to-compatible} implies
\begin{align*}
\dfrac{d_{\ell_p}(x_{i}, x_j)}{w_j'} \leq 8 \sum_{k=1}^h d_{\ell_p}(x_{i}, x_k),
\end{align*}
and we use the fact distances are non-negative to conclude $\sum_{k=1}^h d_{\ell_p}(x_i, x_k) \leq \sum_{k=1}^n d_{\ell_p}(x_i, x_j)$. The claim that $\bw_n(x_1)/30, \dots, \bw_n(x_n)/30$ are $60$-compatible by Lemma~\ref{lem:weight-to-compatible} as well.
\end{proof}

\begin{claim}\label{cl:preprocess-sum}
After executing \textsc{Preprocess}$()$ from Figure~\ref{fig:insertion-only-preprocess}, the memory contents of \textsc{IOS-E-Max-Cut} satisfy the following guarantees:
\begin{itemize}
\item The set $\bP$ is the timeline-mask summary $\textsc{Summ}(\bA_j, \bK_j : j \in [n])$ where $\bA_j \sim \calT(\bw_n(x_j) / 30)$ and $\bK_{j} \cdot \calK(t_0, \gamma)$ (see Definition~\ref{def:timeline-mask-sum}).
\item Let $\bm = |\bS^{t_0}|$, and consider the family of complete cuts $z(\sigma) \in \{0,1\}^{[n] \times \{0,1\}}$ which are parametrized by $\sigma \in \{0,1\}^{\bm}$ given by $z(\sigma)_i = \textsc{Assign}_{\sigma}(x_i, t_i)$.  Then, with probability $1-\eps$, 
\[ f(z(\bsigma^*)) \leq \min_{\sigma \in \{0,1\}^m} f(z(\sigma)) + \eps \sum_{i=1}^n \sum_{j=1}^n d_{\ell_p}(x_i,x_j). \]
\end{itemize}
\end{claim}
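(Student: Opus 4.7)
The plan is to verify the two claimed properties in turn. Throughout, I would condition on the event---which holds with probability at least $1-\eps$ by Lemma~\ref{lem:computing-weights}---that every weight $\bw_i(x_j)$ produced by \textsc{Compute-Weights} satisfies~(\ref{eq:weight-est}). It is this event under which Claim~\ref{cl:invariant}, Claim~\ref{cl:final-min}, and Lemma~\ref{lem:min-metric-compat} all apply, and the $1-\eps$ failure probability in the second bullet absorbs this alongside the subsampling failure.

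For the first bullet, I would combine three ingredients. By Claim~\ref{cl:invariant} applied at $i = n$, just before $\textsc{Preprocess}()$ each $\DS_j \in \bT$ holds the weight $\bomega_n(x_j)/2$ and a draw $(\bA_j, \bK_j)$ with $\bA_j \sim \calT(\bomega_n(x_j)/2)$ and $\bK_j \sim \calK(t_0, \gamma)$; moreover, the $\DS_j$ dropped from $\bT$ during the stream are exactly those whose held timeline-and-mask has no simultaneous active-and-kept index. Next, Claim~\ref{cl:final-min} gives $\bw_n(x_j)/30 \leq \bomega_n(x_j)/2$, so the call $\textsc{Mod-Min-Weight}(\bw_n(x_j)/30)$ in Line~\ref{en:prepro-1} updates the stored weight to $\bw_n(x_j)/30$; because the activation timeline inside $\textsc{Timeline-Mask}$ is a deterministic function of its current weight and the random oracle $\boldr$, the new stored draw is precisely $\bA_j \sim \calT(\bw_n(x_j)/30)$ with the unchanged $\bK_j$. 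Finally, the monotonicity property of Lemma~\ref{lem:timeline-mask} guarantees that any $\DS_j$ previously dropped from $\bT$ at the larger weight $\bomega_n(x_j)/2$ continues to have no active-and-kept index at the smaller weight $\bw_n(x_j)/30$. Consequently, after Line~\ref{en:prepro-1}, $\bT$ indexes precisely the points that participate in the new timeline-mask summary, and the remaining lines of $\textsc{Preprocess}()$ assemble $\bP = \textsc{Summ}(\bA_j, \bK_j : j \in [n])$ by iterating over $\bT$ and querying each $\DS_j$ for its activation time and its active-and-kept indices.

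For the second bullet, I would apply Lemma~\ref{lem:seed-check} to the subsample $\bC$ and the family of cuts $\calF = \{z(\sigma) : \sigma \in \{0,1\}^{\bm}\}$, using $\bP$ (built above) to evaluate each $\textsc{Assign}_{\sigma,\bP}(x_i, t_i)$. The hypotheses of Lemma~\ref{lem:seed-check} are verified as follows: Claim~\ref{cl:invariant} shows that $\bC$ is an independent subsample in which each $(x_j, \bomega_n(x_j)/2)$ appears with probability $\min\{\xi \cdot \bomega_n(x_j)/2, 1\}$; Lemma~\ref{lem:min-metric-compat} certifies that the weights $\bomega_n(x_j)/2$ are $8$-compatible with the $\ell_p$-metric; and $\log|\calF| = \bm \leq |\bP| = \poly(\log n/\eps)$ matches the setting of $\xi$. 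The lemma then yields $\bsigma^*$ satisfying $f(z(\bsigma^*)) \leq \min_{\sigma} f(z(\sigma)) + \eps \sum_{i,j} d_{\ell_p}(x_i, x_j)$ with probability at least $1-\eps$ over $\bC$.

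The main obstacle I anticipate is the mild mismatch between the two sets of weights in play: $\bC$ is sampled using $\bomega_n(x_j)/2$, while the timelines summarized in $\bP$ are generated with $\bw_n(x_j)/30$. By Claim~\ref{cl:final-min} these differ by at most a constant factor (at most $15$), so both are simultaneously $O(1)$-metric-compatible and the importance-sampling estimator underlying Lemma~\ref{lem:seed-check} is only affected up to constants, which the choice of $\xi = \poly(t_0/\eps \cdot d\log \Delta)$ comfortably absorbs.
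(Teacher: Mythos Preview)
Your proposal is correct and follows essentially the same route as the paper: use Claim~\ref{cl:invariant} at $i=n$ together with Claim~\ref{cl:final-min} and the monotonicity in Lemma~\ref{lem:timeline-mask} to establish the first bullet, then invoke Lemma~\ref{lem:seed-check} with the $8$-compatibility from Lemma~\ref{lem:min-metric-compat} for the second. Two small remarks: the paper bounds $\bm$ directly via Markov on $|\bS^{t_0}|$ (expectation $O(t_0)$) rather than through $|\bP|$, and your anticipated ``obstacle'' is a non-issue, since Lemma~\ref{lem:seed-check} only requires the sampling weights in $\bC$ to be $\lambda$-compatible and does not need them to coincide with the timeline weights.
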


\begin{proof}
We refer to Figure~\ref{fig:insertion-only-preprocess} and apply the guarantees of Claim~\ref{cl:invariant} with $i = n$, as well as Lemma~\ref{lem:seed-check}. Note that, from Claim~\ref{cl:invariant} and Claim~\ref{cl:final-min}, the \textsc{Timeline-Mask} data structures in $\bT$ provide access to $(\bA_j, \bK_j)$ with $\bA_j \sim \calT(\bw(x_j)/30)$ and $\bK_j \sim \calK(t_0, \gamma)$ with $\bA_{j,t} \cdot \bK_{j,t} = 1$ for some  $t \in [t_e]$, and Line~\ref{en:prepro-1} assembles these into $\bP = \textsc{Summ}(\bA_j, \bK_j : j \in [n])$. Recall that $\bS^{t_0}$ consists of all points $x_j$ whose activation time $t_j$ in $\bA_j$ occurs before $t_0$; since $\bK_{j,t_j} = 1$ if $t_j \leq t_0$, every $x_j \in \bS^{t_0}$ must is included in $\bP$, and the algorithm may recover $\bS^{t_0}$. Thus, for fixed $\bP$, the algorithm \textsc{Assign}$_{\sigma}(x_i, t_i)$ defines an implicit family $\calF$ parameterized by $\sigma \in \{0,1\}^{\bm}$. Furthermore, the expectation of $\bm = |\bS^{t_0}|$ is at most $O(t_0)$ since it consists of the points which are activated by time $t_0$, and we have $\|\bw_n\|_1 = O(1)$, and by Markov's inequality, it is at most $O(t_0 / \eps)$ with probability at least $1-\eps$. If that is the case, then $\log|\calF| = \bm \leq O(t_0 / \eps)$, which is why we set $\xi = \poly(t_0 / \eps \cdot d \log \Delta).$

We verify that we can apply Lemma~\ref{lem:seed-check}. From Claim~\ref{cl:invariant}, $\bC$ is a random set which contains tuple $(x_j, \bomega_n(x_j)/2)$ independently with probability $\min\{\xi \cdot \bomega_n(x_i)/2,1\}$. Since the algorithm stores $(x_j, \bomega_n(x_j)/2) \in \bC$ and $\xi$, it knows the sampling probability $\min\{\xi \bomega_n(x_j)/2,1\}$. Finally, we note the algorithm may execute $\textsc{Assign}_{\sigma}(x_i, t_i)$ for each $x_i$ which participates in a tuple $(x_i, \bomega_n(x_i)/2) \in \bC$. The reason is that it may recover the activation time $t_i$ (as discussed in Remark~\ref{rem:init}); it computes $\textsc{Weight}(x_i)$ to recover $\bw_n(x_i)$, initializes the \textsc{Timeline-Mask} data structure with $\textsc{Init}(x_i, \boldr)$, executes \textsc{Mod-Min-Weight}$(\bw_n(x_i)/2)$, and then sets $t_i \leftarrow \textsc{Activation-Time}()$. By setting the $\lambda$ in Lemma~\ref{lem:seed-check} to $8$ the weights $\bomega_n(x_1), \dots, \bomega_n(x_n)$ are $\lambda$-compatible (by Lemma~\ref{lem:min-metric-compat}), the family $\calF$ has size $\log|\calF| \leq O(t_0 / \eps)$, so $\xi$ in Figure~\ref{fig:insertion-only-maintenance} is set accordingly as $\log n \leq d \log \Delta$, Lemma~\ref{lem:seed-check} implies $\bsigma^*$ satisfies the approximately optimal condition.
\end{proof}

\begin{figure}
\begin{framed}
\textbf{Preprocess After Insertions}. We assume that for a set $X = \{ x_1,\dots, x_n \} \subset [\Delta]^d$, the sub-routine $\textsc{Add-Point}(x_i)$ has been executed for every $i \in [n]$ (see Figure~\ref{fig:insertion-only-maintenance}). The \textsc{Preprocess}$()$ sub-routine prepares the following memory contents:
\begin{itemize}
\item The set $\bP = \textsc{Summ}(\bA_j, \bK_j : j \in [n])$, as per Definition~\ref{def:timeline-mask-sum}.
\item A seed $\bsigma^* \in \{0,1\}^{\bm}$ for activation timelines $\bA_1,\dots ,\bA_n$, as per Definition~\ref{def:seed}.
\end{itemize}

\textsc{Preprocess}$()$.
\begin{enumerate}
\item\label{en:prepro-1} Initialize $\bP = \emptyset$. Each \textsc{Timeline-Mask} for $x_j$ in $\bT$, stores the point $x_j$ and executes $\textsc{Weight}(x_j)$ to obtain $\bw_n(x_j)$. We execute $\textsc{Mod-Min-Weight}(\bw_n(x_j)/30)$ for $x_j$'s timeline in order to make the final weight $\bw_n(x_j)/30$ (from Claim~\ref{cl:invariant} and Claim~\ref{cl:final-min}).
\item The $\textsc{Timeline-Mask}$ for $x_j$ in $\bT$ then executes $\textsc{Activation-Time}()$ to obtain $t_j$, and $\textsc{Active-Kept}()$, letting $\ell_1, \dots, \ell_k$ be the output, update for each $\kappa \in [k]$, $\bP \leftarrow \bP \cup \left\{ (x_j, \ell_{\kappa}; w_j, t_j)  \right\}$. 
\item\label{en:prepro-2} Let $\calF$ denote the family of complete cuts $z(\sigma) \in \{0,1\}^{[n] \times \{0,1\}}$ (referenced implicitly and parametrized by $\sigma \in \{0,1\}^{\bm}$) given by $z(\sigma)_i = \textsc{Assign}_{\sigma}(x_i, t_i)$ where $t_i$ is the activation time of $x_i$, and the set $\bP$ is used as the timeline-mask summary (see Figure~\ref{fig:assign}).
\item\label{en:prepro-3} Execute the algorithm from Lemma~\ref{lem:seed-check} with accuracy $\eps$ and failure probability $\eps$ to select $\bsigma^*$, where we use $\bC$ for the random set (note that the activation time $t_i$ for $x_i$ can be recovered since $(x_i, w_i) \in \bC$, see Remark~\ref{rem:init}). 
\end{enumerate}

\end{framed}
\caption{Preprocessing for the Insertion-only Algorithm.}\label{fig:insertion-only-preprocess}
\end{figure}

\paragraph{Proof of Theorem~\ref{thm:insertion-only}: Space Complexity.} Notice from the maintenance procedure in Figure~\ref{fig:insertion-only-maintenance} and preprocessing procedure~\ref{fig:insertion-only-preprocess}, the streaming algorithm maintains:
\begin{itemize}
\item A pointer to the random oracle $\boldr$, the parameters $\lambda, t_e, \gamma, t_0$ and $\xi$. Assuming that $\boldr$ can be stored in small space (as we will use Nisan's pseudorandom generator), the parameters use $O(d)$ words of space. 
\item A \textsc{Compute-Weights} streaming algorithm initialized with $d,\Delta$, and $p, \eta = 1$ and $\delta = \eps$, using $\poly(d \log(\Delta / \eps))$ words of space (Lemma~\ref{lem:computing-weights}). 
\item The set $\bC$ uses $O(d)$ words per element. After executing \textsc{Add-Point}$(x_1)$, \dots, \textsc{Add-Point}$(x_i)$, $|\bC|$ is a sum of $i$ independent random variables in $[0,1]$ whose expectation is $O(\xi)$. By Hoeffding's inequality, $|\bC| \leq O(d \xi \log \Delta)$ with probability $1 - \Delta^{-2d}$.
\item The set $\bT$ of \textsc{Timeline-Mask} data structures each use $O(d)$ words of memory. After executing \textsc{Add-Point}$(x_1)$, \dots, \textsc{Add-Point}$(x_i)$, $|\bT|$ is at most the number of pairs $(j, \ell) \in [i] \times [t_e]$ such that the activation timelines and masks $\bA_{j, \ell} \cdot \bK_{j,\ell} = 1$, and furthermore, by Claim~\ref{cl:invariant},
\begin{align*}
\Ex\left[ \sum_{j=1}^i \sum_{\ell=1}^{t_e} \bA_{j,\ell} \cdot \bK_{j,\ell} \right] \leq \sum_{\ell=1}^{t_e} \frac{\gamma}{\ell} \sum_{j=1}^i \frac{\bomega_i(x_j)}{2},
\end{align*}
and $\bomega_i(x_j) \leq \bw_i(x_j)$, so $\sum_{j=1}^i \bomega_i(x_j) \leq 1$, giving us the bound $O(\gamma \log t_e)$ and simplifies to $\poly(d \log \Delta / \eps)$. By Hoeffding's inequality once more, $|\bT|$ is at most $\poly(d \log \Delta / \eps)$ with probability $1 - \Delta^{-2d}$.  
\end{itemize}
Finally, the summary set $\bP$ has space complexity dominated by that of $\bT$, and the seed $\bsigma^*$ uses $m$ bits, where $|m| \leq |\bP|$. We may now take a union bound over at most $\Delta^d$ steps of the algorithm and conclude that the space complexity does not exceed $\poly(d\log \Delta / \eps)$ with high probability throughout the entire execution.

\paragraph{Proof of Theorem~\ref{thm:insertion-only}: Query and Output Guarantees.} The query \textsc{Assign}$(\cdot)$ for $\textsc{IOS-E-Max-Cut}$ is described in Figure~\ref{fig:insertion-only-query}. We will show that, after executing $\textsc{Add-Point}(x_1)$, \dots, $\textsc{Add-Point}(x_n)$, we may run $\textsc{Preprocess}()$ to assemble the timeline-mask summary $\bP$ and the approximately optimal seed $\bsigma^*$. For the final quality of the cut (and hence the proof of Theorem~\ref{thm:insertion-only}), we apply Theorem~\ref{thm:main-structural}. We let $X = \{x_1,\dots, x_n\}$ with $\ell_p$ metric, and we let $w_1,\dots, w_n$ be $\bw_n(x_1) / 30, \dots, \bw_n(x_n)/30\in (0, 1/2]$, which are $60$-compatible by Lemma~\ref{lem:min-metric-compat} whenever (\ref{eq:weight-est}) holds, which occurs with probability $1-\eps$ over the randomness in \textsc{Compute-Weights} (see also Lemma~\ref{lem:computing-weights}). Since $n$ is at most $\Delta^d$, the parameter settings $t_e, \gamma$ and $t_0$ are according to those in Figure~\ref{fig:insertion-only-maintenance}, the parameters satisfy the guarantees of Theorem~\ref{thm:main-structural}.

By Claim~\ref{cl:invariant}, Claim~\ref{cl:final-min}, and Claim~\ref{cl:preprocess-sum}, the set $\bP$ is a timeline-mask summary for the activation timelines and masks $(\bA_i, \bK_i)$, where $\bA_i \sim \calT(\bw_n(x_i)/30)$ and $\bK_i \sim \calK(t_0,\gamma)$. Thus, any seed $\sigma$ for $\bA_1,\dots, \bA_n$ gives rise to the complete cut $\bz(\sigma)$ specified from $\textsc{Assign}_{\sigma}(x_i, t_i)$, and the algorithm \textsc{Assign}$(x_i)$ provides query access to $\bz(\bsigma^*)$. We apply the second item in Claim~\ref{cl:preprocess-sum} to say that except with probability $1 - \eps$ over the randomness in $\bC$, we satisfy the guarantees $f(\bz(\bsigma^*))$ is approximately optimal; so by a union bound over the randomness in $\textsc{Compute-Weights}$ and the random set $\bC$, we have
\begin{align*}
f(\bz(\bsigma^*)) \leq \min_{\sigma \in \{0,1\}^m} f(\bz(\sigma)) + \eps \sum_{i=1}^n \sum_{j=1}^n d_{\ell_p}(x_i,x_j)
\end{align*}
with probability at least $1-2\eps$. We take expectations over $\bA$ and $\bK$ and applying Theorem~\ref{thm:main-structural}, so, except with probability $1 - 2\eps$ over randomness in $\textsc{Compute-Weights}$ and $\bC$,
\begin{align*}
\Ex_{\bA, \bK}\left[ f(\bz(\bsigma^*)) \right] \leq \min_{\substack{z \in \{0,1\}^{n\times\{0,1\}} \\ z_i \neq (0,0)}} f(z) + 2\eps \sum_{i=1}^n \sum_{j=1}^n d_{\ell_p}(x_i, x_j). 
\end{align*}
Finally, note that the maximum contribution that a cut may have is always at most $\sum_{i=1}^n \sum_{j=1}^{n} d_{\ell_p}(x_i, x_j)$, so when we take expectation of the cut quality (even after averaging over randomness in \textsc{Compute-Weights} and $\bC$), we satisfy
\begin{align*}
\Ex\left[ f(\bz(\bsigma^*)) \right] \leq \min_{\substack{z \in \{0,1\}^{n\times\{0,1\}} \\ z_i \neq (0,0)}} f(z) + 4\eps \sum_{i=1}^n \sum_{j=1}^n d_{\ell_p}(x_i, x_j),
\end{align*}
which concludes Theorem~\ref{thm:insertion-only}.

\begin{figure}
\begin{framed}
\textbf{Query Algorithm for }\textsc{IOS-E-Max-Cut}. We assume that the sub-routines \textsc{Add-Point}$(x_1)$, \dots, $\textsc{Add-Point}(x_n)$ have been executed, and \textsc{Preprocess}$()$ has been run to store the timeline-mask summary $\bP$ and seed $\bsigma^*$ in memory. We specify the \textsc{Assign}$(x)$ sub-routine, which receives as input a point $x$, promised to belong to $\{x_1,\dots, x_n\}$, and outputs $(1,0)$ or $(0,1)$, corresponding to whether the point is assigned to the 0-side or 1-side of the cut. \\

\textbf{Query}. \textsc{Assign}$(x)$.
\begin{enumerate}
\item First, we obtain the activation time for $x$ with the following procedure:
\begin{itemize}
\item If there is a \textsc{Timeline-Mask} data structure for point $x$ in $\bT$, then query the data structure with \textsc{Activation-Time}$()$ to obtain the activation time $t$. 
\item If there is not: query $w \leftarrow \textsc{Weight}(x)$ in \textsc{Compute-Weights} (which is set to $\bw_n(x)$), initialize a \textsc{Timeline-Mask} with $\textsc{Init}(x, \boldr)$ and run \textsc{Mod-Min-Weight}$(w/30)$; then, let $t\leftarrow\textsc{Activation-Time}()$ (see Remark~\ref{rem:init}).
\end{itemize}
\item Since $\bP$ and $\bsigma^*$ have been stored by $\textsc{Preproces}()$, we execute \textsc{Assign}$_{\bsigma^*}(x, t)$, and output whatever it outputs.
\end{enumerate}
\end{framed}
\caption{Query for the Insertion-only Algorithm.}\label{fig:insertion-only-query}
\end{figure}

\section{Proof of Theorem~\ref{thm:main-structural}}\label{sec:greedy-process}

To prove Theorem~\ref{thm:main-structural}, we describe a randomized algorithm, which receives as input access to the metric space, certain parameters which describe the random process, and (perhaps counter-intuitively,) a description of a minimizing cut $z^*$ of the $f$. The random process then produces a (random) partial cut $\bz$, as well as a string $\bsigma^*$ which (i) will be short, and (ii) will encode the necessary information needed to simulate the process without access to $z^*$. We will then show that the cut $\bz$ which the process outputs is exactly the cut which the assignment rule $\textsc{Assign}_{\bsigma^*}(\cdot, \cdot)$ (in Figure~\ref{fig:assign}) can recreate (without knowing $z^*$).

\paragraph{Process.} We call the random process $\textsc{GreedyProcess}$, which we describe as a randomized algorithm with the following input/output behavior:
\begin{itemize}
\item \textbf{Input}: access to the metric space $(X = \{ x_1,\dots, x_n \}, d)$, a vector $w \in (0, 1/2]^{n}$ encoding weights, times $t_0, t_e \in \N$, a parameter $\gamma \geq 1$, and a matrix $z^* \in \{0,1\}^{n \times \{0,1\}}$ which will be a minimizer of $f$ encoding cuts:
\[ z^* = \argmin\left\{ f(z) : z \in \{0,1\}^{n \times \{0,1\}} \text{ where every $i \in [n]$ has $z_i \in \{ (1, 0) , (0, 1) \}$} \right\}. \]
\item \textbf{Output}: the process outputs a random (partial) cut $\bz \in \{0,1\}^{n\times \{0,1\}}$ where $\bz_i \in \{ (1, 0), (0, 1), (0, 0)\}$ for all $i\in[n]$. In addition, a number $\bm \in \N$, and a string $\bsigma^* \in \{0,1\}^{\bm}$.
\end{itemize}
The algorithm is described below.

\paragraph{Description.} \textsc{GreedyProcess}$((X = \{ x_1,\dots, x_n\}, d), w, \gamma, t_0, t_e, z^*)$ proceeds in rounds $t=1, 2, \dots, t_e$, and will 
generate the following sequences of random variables:
\begin{align*}
    &\bz^t \text{: a matrix $\{0,1\}^{n\times \{0,1\}}$ which encodes a partial cut at time $t$.} \\
    &\bw^t \text{: a vector of $n$ weights at time $t$.}\\
    &\bS^t \text{: a set of assigned points at time $t$.}\\
    &\bg^t \text{: a binary $n\times \{0,1\}$ matrix which encodes the  greedy assignment at time $t$.}\\
    &\bA^t \text{: an $n \times t$ binary matrix.} \\
    &\bK^t \text{: an $n \times t$ binary matrix.}
\end{align*}

We let $\bz^0$ be the all zero's matrix, $\bA^0$ and $\bK^0$ be the empty $n \times 0$ matrices, and $\bS^0 = \emptyset$. For time $1 \leq t \leq t_e$, we instantiate $\bA^t$ and $\bK^t$ (which adds the $t$-th column to $\bA^{t-1}$ and $\bK^{t-1}$):
\begin{enumerate}
\item\label{en:greedy-ln1} For each $i \in [n]$, we set $\bw_i^t$ to $\min\{ w_i, 1/t\}$ if $i \notin \bS^{t-1}$, and $w_i$ otherwise. Furthermore, we let $\gamma^{t}$ be $1$ if $t \leq t_0$, and $\min(\gamma/t, 1)$ otherwise.

\item\label{en:greedy-ln2} For each $i \in [n]$, if $t \leq t_0$ let $ \bg^t_i = z^*_i$,
else for $t > t_0$, we let 
\begin{align*}
\tilde{\bc}_{i, 0}^{t} &= \sum_{j=1}^n d(x_i, x_j) \left( \frac{1}{t-1} \sum_{\ell=1}^{t-1} \dfrac{\bA_{j,\ell}^{t-1} \cdot \bK_{j,\ell}^{t-1}}{\bw_{j}^\ell \cdot \gamma^\ell} \right) \bz_{j, 0}^{t-1} \\
\tilde{\bc}_{i,1}^t &= \sum_{j=1}^n d(x_i, x_j) \left( \frac{1}{t-1} \sum_{\ell=1}^{t-1} \dfrac{\bA_{j,\ell}^{t-1} \cdot \bK_{j,\ell}^{t-1}}{\bw_{j}^\ell \cdot \gamma^\ell} \right) \bz_{j, 1}^{t-1}
\end{align*}
and let $\bg^t_i$ be set to $(1, 0)$ if $\tilde{\bc}_{i, 0}^{t} < \tilde{\bc}_{i, 1}^t$ or $(0, 1)$ otherwise. 
\item\label{en:greedy-ln3} Update the matrices $\bA^t$ and $\bK^t$ by setting
\begin{align*}
    &\bA^t_{i,\ell} = \begin{cases}
        \bA_{i,\ell}^{t-1} & \text{ if $\ell \leq t-1$} \\
        \Ber(\bw_{i}^t) & \ell = t \\
    \end{cases}, \qquad 
    \bK^t_{i,\ell} = \begin{cases}
        \bK_{i,\ell}^{t-1} & \text{ if $\ell \leq t-1$} \\
        \Ber(\gamma^{\ell}) & \ell = t
    \end{cases}, \\
    &\bS^t = \bS^{t-1} \cup \{i \text{ for which } \bA^t_{i,t}=1 \}
\end{align*}
\item\label{en:greedy-ln4} For each $i \in [n]$, define \begin{align*}
    \bz^t_i = \begin{cases}
        \bz^{t-1}_i &  i \in \bS^{t-1} \\
        \bg_i^t &  i \in \bS^{t} \land i \notin \bS^{t-1} \\
        (0,0) &  \text{else} \\
    \end{cases}
\end{align*}
\item Let $\bm = |\bS^{t_0}|$, and consider a natural ordering $\pi$ where $\pi_j$ is the $j$-th element of $\bS^{t_0}$. Let
\begin{align*}
    \bsigma_{i}^* = \bz^*_{\pi_i}
\end{align*}
\end{enumerate}
When $t > t_e$, the process completes and we output $(\bz, \bm, \bsigma^*)$, and we will refer to $\bA^{t_e}$ and $\bK^{t_e}$ as $\bA$ and $\bK$, respectively.

\ignore{\begin{definition}[Greedy Process Distribution] 
For a metric space $(X = \{ x_1,\dots, x_n \}, d)$, we consider a weight vector $w \in \R_{\geq 0}^n$, and parameters $\gamma \geq 1$ and $t_0,t_e \in \N$. We consider the following two distributions:
\begin{itemize} 
\item $\calD^*(w, \gamma, t_0, t_e)$ is the distribution from outputs to the process $\textsc{GreedyProcess}((X, d), w, \gamma, t_0, t_e, z^*)$.
\item $\calD(w, \gamma, t_0, t_e)$ is the distribution over pairs of matrices given by sampling $(\bz, \bA, \bK, \bm, \bsigma^*)$ from $\calD^*(w, \gamma, t_0,t_e)$ and outputting $(\bA, \bK)$.
\end{itemize}
\end{definition}

We state a few simple claims which emphasize the main aspects of the greedy process above which will lead to a local simulation procedure, as well as connect with the guarantees of Theorem~\ref{thm:distribution-for-matrix}. In the subsequent claims, we let $(\bz, \bA, \bK, \bm, \bsigma^*)$ be the output of \textsc{GreedyProcess}$((X, d), w, \gamma, t_0, t_e, z^*)$.  }

\subsection{From \textsc{GreedyProcess} to \textsc{Assign}}

We collect the following observations, which allows us to connect the analysis of \textsc{GreedyProcess} to the assignment rule from Theorem~\ref{thm:main-structural}. In particular, we will show that the distribution over matrices $(\bA, \bK)$ at the end of the above greedy process is exactly that which is defined for the statement of Theorem~\ref{thm:main-structural}. This allows us to prove Theorem~\ref{thm:main-structural} by showing that the partial cut produced by the greedy process is approximately optimal. 

\begin{observation}\label{obs:1}
Since $\gamma^{\ell} = 1$ for all $\ell \leq t_0$, $\bK_{i,\ell}^{t} = 1$ for all $\ell \leq t_0$ and $i \in [n]$. Therefore,
\[ \bS^{t_0} = \left\{ j\in [n] : \exists \ell \leq t_0 \text{ s.t }\bA_{j,\ell} = 1 \right\} = \left\{ j \in [n] : \exists \ell \leq t_0 \text{ s.t } \bA_{j,\ell} \cdot \bK_{j,\ell} = 1\right\}.\]
Therefore, by having access to pairs $(j, \ell)$ and point $x_j$ with $\bA_{j,\ell} \cdot \bK_{j,\ell} = 1$ and knowledge of $t_0$, one can construct the set $\bS^{t_0}$, identify the parameter $\bm$, and determine the ordering $\pi$ so that $\pi_j$ denotes the $j$-th element of $\bS^{t_0}$.  
\end{observation}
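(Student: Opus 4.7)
\medskip

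The plan is to verify the observation in two short steps: first establish the equality of the three displayed expressions for $\bS^{t_0}$, and then verify that the reconstruction is well-defined from the claimed information.

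For the equality, the starting point is Definition~\ref{def:mask}, which sets $\gamma^\ell = 1$ for all $\ell \leq t_0$. Step~\ref{en:greedy-ln3} of \textsc{GreedyProcess} samples $\bK^t_{i,t} \sim \Ber(\gamma^t)$, so whenever $\ell \leq t_0$ the variable $\bK_{i,\ell}$ is drawn from $\Ber(1)$ and is therefore deterministically $1$. I would then observe that this forces $\bA_{j,\ell} \cdot \bK_{j,\ell} = \bA_{j,\ell}$ for every $j \in [n]$ and every $\ell \leq t_0$, which immediately yields the second equality in the display. For the first equality, I would unroll the recursive definition of $\bS^t$ from Step~\ref{en:greedy-ln3}: starting from $\bS^0 = \emptyset$ and using $\bS^t = \bS^{t-1} \cup \{i : \bA^t_{i,t} = 1\}$, a short induction on $t$ shows $\bS^{t_0} = \{ j \in [n] : \exists\, \ell \leq t_0 \text{ s.t.\ } \bA_{j,\ell} = 1\}$.

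For the reconstruction claim, suppose we have access to the collection of pairs $(j,\ell)$ (together with the points $x_j$) for which $\bA_{j,\ell} \cdot \bK_{j,\ell} = 1$, along with the value of $t_0$. The procedure is to retain only those pairs with $\ell \leq t_0$ and take the set of distinct indices $j$ appearing among the retained pairs. By the equalities established above, this set equals $\bS^{t_0}$. The parameter $\bm = |\bS^{t_0}|$ is then simply its cardinality, and the ordering $\pi$ is obtained by listing the elements of $\bS^{t_0}$ in lexicographic (equivalently, increasing numerical) order, matching the ``natural ordering'' used in the definition of $\bsigma^*$.

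There is no substantial obstacle to this proof; the observation is a direct unpacking of the definitions combined with a one-line induction. The only thing worth emphasizing is that the mask acts trivially on the prefix $[t_0]$, which is precisely why the two filters (on $\bA$ and on $\bA \cdot \bK$) coincide there and why the algorithm can recover $\bS^{t_0}$ without needing to know the full $\bA$ matrix.
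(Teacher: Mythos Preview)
Your proposal is correct and follows exactly the reasoning the paper intends; the paper states this as an observation without a separate proof, and your unpacking of the definitions (that $\gamma^\ell = 1$ forces $\bK_{i,\ell} = 1$ deterministically for $\ell \leq t_0$, the one-line induction unrolling $\bS^t$, and the reconstruction by filtering and ordering) is precisely the content embedded in the observation statement itself.
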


\begin{observation}\label{obs:2}
Since the row $\bg_i^t = z_i^*$ for every $t \leq t_0$, letting $\pi$ denote the natural ordering where $\pi_j$ is the $j$-th element of $\bS^{t_0}$, for every $j \in [\bm]$,
\[ \bz_{\pi(j)} = z_{\pi(j)}^* = \left\{\begin{array}{cc} (1, 0) & \bsigma^*_{j} = 0 \\
						 			(0, 1) & \bsigma^*_j = 1 \end{array} \right. .\]
\end{observation}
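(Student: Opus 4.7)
The plan is to trace through the definition of \textsc{GreedyProcess} and verify the claim, which amounts to unpacking the steps that determine $\bz_i^{t}$ over time. First I would fix any $j \in [\bm]$ and set $i = \pi(j)$. By the definition of $\pi$ we have $i \in \bS^{t_0}$, and Observation~\ref{obs:1} (together with Step~\ref{en:greedy-ln3} of the process) gives a smallest time $\bt \leq t_0$ at which $\bA_{i,\bt} = 1$; this $\bt$ is precisely the first time $i$ joins the activated set, so $i \in \bS^{\bt} \setminus \bS^{\bt - 1}$.

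Next I would apply Step~\ref{en:greedy-ln4} at time $\bt$: its second case triggers and assigns $\bz_i^{\bt} = \bg_i^{\bt}$. Since $\bt \leq t_0$, the first branch of Step~\ref{en:greedy-ln2} sets $\bg_i^{\bt} = z_i^*$, and hence $\bz_i^{\bt} = z_i^*$. A routine induction on $t$ then gives $\bz_i^{t} = z_i^*$ for all $\bt \leq t \leq t_e$: for every $t > \bt$ one has $i \in \bS^{t-1}$, so the first case of Step~\ref{en:greedy-ln4} applies and $\bz_i^{t} = \bz_i^{t-1}$. Taking $t = t_e$ yields $\bz_{\pi(j)} = z_{\pi(j)}^*$, which is the first equality in the display.

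The second equality is then a direct unpacking of Step~5 of \textsc{GreedyProcess}, which defines $\bsigma^*_j = z_{\pi(j)}^*$ under the encoding convention from Definition~\ref{def:seed}: a row $(1,0)$ corresponds to the bit $0$ (the $0$-side of the cut) and a row $(0,1)$ corresponds to the bit $1$. Since $z^*$ is by hypothesis a complete cut (each row lies in $\{(1,0), (0,1)\}$), the case analysis on $\bsigma^*_j \in \{0,1\}$ reproduces exactly the two right-hand side cases in the display.

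Because the claim is essentially bookkeeping, there is no real obstacle. The only care required is to respect the indexing conventions, namely that $\bS^{t_0}$ is determined purely by the first $t_0$ columns of $\bA$ (Observation~\ref{obs:1}) and that $\pi$ is the natural lexicographic ordering on $\bS^{t_0}$ shared between the process and Definition~\ref{def:seed}.
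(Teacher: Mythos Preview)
Your proposal is correct and matches the paper's reasoning; the paper states this as an observation without proof, and your careful trace through Steps~\ref{en:greedy-ln2}, \ref{en:greedy-ln3}, \ref{en:greedy-ln4}, and 5 of \textsc{GreedyProcess} is exactly the intended unpacking. The only (harmless) imprecision is that Step~5 in the paper literally writes $\bsigma_i^* = \bz^*_{\pi_i}$ rather than $z^*_{\pi_i}$, but the intent is the encoding you describe.
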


\begin{observation}\label{obs:3}
For every $i \in [n]$, let $\bt_i$ denote the smallest number where $\bA_{i, \bt_i} = 1$. Then, $\bz_i = \bg_i^{\bt_i}$, and $\bg_i^{\bt_i}$ depends on the values $\tilde{\bc}_{i,0}^{\bt_i}$ and $\tilde{\bc}_{i, 1}^{\bt_i}$, which may be computed from:
\begin{itemize}
\item The point $x_i$ and $\bt_i$,
\item The pairs $(j, \ell)$ where $\bA_{j,\ell} \cdot \bK_{j,\ell} = 1$ and $\ell \leq \bt_i -1$.
\item For every pair $(j, \ell)$ as above, the point $x_j$, the weight $w_j$, and the smallest $\bt_j$ where $\bA_{j, \bt_j} = 1$, and $\bz_j$ (which may, itself, be recursively computed or from $\bsigma^*$).
\end{itemize}
In fact, the assignment rule is exactly that which is set by $\textsc{Assign}_{\bsigma^*}(x_i, \bt_i)$ in Figure~\ref{fig:assign}.
\end{observation}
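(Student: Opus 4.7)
Observation~\ref{obs:3} is essentially a bookkeeping statement: the three claims follow by unwinding the update rules of \textsc{GreedyProcess} and matching them line-by-line with the description of $\textsc{Assign}_{\bsigma^*}(\cdot,\cdot)$ in Figure~\ref{fig:assign}. The plan is to verify the three assertions in order, and each is derived directly from the definitions. For the first, $\bz_i = \bg_i^{\bt_i}$, the plan is to trace $\bz_i^t$ across $t = 1, \dots, t_e$. Since $\bt_i$ is the smallest index with $\bA_{i,\bt_i}=1$, we have $i \notin \bS^{\bt_i - 1}$ and $i \in \bS^{\bt_i}$ (by Step~\ref{en:greedy-ln3}), so the middle case of Step~\ref{en:greedy-ln4} applies and $\bz_i^{\bt_i} = \bg_i^{\bt_i}$. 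For every subsequent $t > \bt_i$, the set $\bS$ only grows, so $i \in \bS^{t-1}$ and the first case of Step~\ref{en:greedy-ln4} yields $\bz_i^t = \bz_i^{t-1}$. Unrolling to $t = t_e$ gives $\bz_i = \bg_i^{\bt_i}$.

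For the second claim, the plan is to show that every ingredient in the sums defining $\tilde{\bc}_{i,0}^{\bt_i}$ and $\tilde{\bc}_{i,1}^{\bt_i}$ (Step~\ref{en:greedy-ln2}) is recoverable from the three listed bullets. Only pairs $(j,\ell)$ with $\bA_{j,\ell} \cdot \bK_{j,\ell} = 1$ and $\ell \leq \bt_i - 1$ contribute, and these are precisely the pairs in the second bullet. For a fixed such pair, the distance $d(x_i, x_j)$ is determined by $x_i$ and $x_j$ (third bullet), the scalar $\gamma^\ell$ depends only on $\ell$ and the global parameters $t_0, \gamma$, and the weight $\bw_j^\ell$ is reconstructed from $w_j$ and $\bt_j$ via Step~\ref{en:greedy-ln1}: since $\bA_{j,\ell}=1$ forces $\ell \geq \bt_j$, we obtain $\bw_j^\ell = \min\{w_j, 1/\ell\}$ in the case $\ell = \bt_j$ (where $j \notin \bS^{\ell-1}$) and $\bw_j^\ell = w_j$ in the case $\ell > \bt_j$. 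It remains to recover $\bz_j^{\bt_i-1}$; but because $\bA_{j,\ell}=1$ and $\ell < \bt_i$ imply $\bt_j \leq \ell < \bt_i$, applying Step 1 to $j$ yields $\bz_j^{\bt_i - 1} = \bz_j = \bg_j^{\bt_j}$, which is part of the third bullet's data (or, when $\bt_j \leq t_0$, equal to $z^*_j$ read off $\bsigma^*$ via Observation~\ref{obs:2}; deeper cases resolve by recursion on $\bt_j$, which strictly decreases).

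For the final identity with $\textsc{Assign}_{\bsigma^*}(x_i, \bt_i)$, the plan is to pattern-match against Figure~\ref{fig:assign}. When $\bt_i \leq t_0$, Step~\ref{en:greedy-ln2} sets $\bg_i^{\bt_i} = z_i^*$, and Observation~\ref{obs:2} together with the base case of $\textsc{Assign}_{\bsigma^*}$ returns the same value via the ordering $\pi$ on $\bS^{t_0}$. When $\bt_i > t_0$, the sums $\bC_0, \bC_1$ in Figure~\ref{fig:assign} range over exactly $\bP_{\bt_i}$, which coincides with the set of pairs contributing to $\tilde{\bc}_{i,b}^{\bt_i}$, and uses the same weights $\bw_j^\ell$ and $\gamma^\ell$ identified in Step~2. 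The only discrepancy is the global factor $1/(\bt_i - 1)$ present in $\tilde{\bc}_{i,b}^{\bt_i}$ but absent in $\bC_b$; since it multiplies both $b=0$ and $b=1$ sides identically, it cancels from the comparison dictating the greedy choice, so the tie-breaks coincide and $\textsc{Assign}_{\bsigma^*}(x_i, \bt_i) = \bg_i^{\bt_i} = \bz_i$. No step poses a real technical obstacle; the mild care required is keeping ``activated'' ($\bA = 1$) distinct from ``activated and kept'' ($\bA \cdot \bK = 1$), and ensuring the recursion on $\bz_j^{\bt_i-1}$ terminates at indices with activation time $\leq t_0$, where Observation~\ref{obs:2} closes it out.
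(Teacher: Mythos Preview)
Your proposal is correct and matches the paper's intent: Observation~\ref{obs:3} is stated without proof in the paper precisely because it is a direct unwinding of the definitions in \textsc{GreedyProcess} and Figure~\ref{fig:assign}, which is exactly what you carry out. The only quibble is your claim that ``the tie-breaks coincide'': as written in the paper, \textsc{GreedyProcess} sends ties $\tilde{\bc}_{i,0}^{t}=\tilde{\bc}_{i,1}^{t}$ to $(0,1)$ while Figure~\ref{fig:assign} sends ties $\bC_0=\bC_1$ to $(1,0)$, so strictly speaking they disagree on this measure-zero event; this is a cosmetic inconsistency in the paper itself and has no bearing on the analysis.
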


\subsection{Proof of Theorem~\ref{thm:main-structural} assuming Lemma~\ref{lemma:main}}

Our main structural claim is showing that \textsc{GreedyProcess} produces an approximately optimal cut. We state the main lemma below and defer its proof. We show next that there is a local algorithm, such that, by receiving only limited information, can simulate the decisions of \textsc{GreedyProcess} and produce an approximately optimal cut.

\begin{lemma}[Main Lemma] \label{lemma:main}
Let $(\bz, \bm, \bsigma^*)$ be the output of \textsc{GreedyProcess}$((X, d), w, \gamma, t_0, t_e, z^*)$, where the weights $w$ are $\lambda$-compatible. For any $\eps > 0$, if 
\[  \gamma \geq \dfrac{(\ln(t_e) + 1)^2 \cdot \lambda}{\eps^2} \qquad \text{and}\qquad t_0 \geq \max\left\{ \frac{\sqrt{\gamma \cdot \lambda}}{\eps}, \frac{1}{\eps} \right\}. \] 
Then, 
\[ \Ex[f(\bz) - f(z^*)] \lsim \eps \sum_{i=1}^n \sum_{j=1}^n d(x_i, x_j).  \]
\end{lemma}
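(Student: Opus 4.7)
Following \cite{MS08}, the plan is to introduce a ``fictitious cut'' $\hat{\bz}^t \in [0,1]^{n \times \{0,1\}}$ for $t = 0, 1, \ldots, t_e$ that interpolates between $\hat{\bz}^0 = z^*$ and $\hat{\bz}^{t_e} = \bz$ and then telescope
\[
\Ex[f(\bz) - f(z^*)] = \sum_{t=1}^{t_e} \Ex\bigl[f(\hat{\bz}^t) - f(\hat{\bz}^{t-1})\bigr].
\]
The rule defining $\hat{\bz}^t$ from $\hat{\bz}^{t-1}$ is: the moment $i$ first activates at time $\bt_i$, set $\hat{\bz}_i^{\bt_i} \leftarrow \bg_i^{\bt_i}$ (the greedy assignment used by \textsc{GreedyProcess}) and then freeze $\hat{\bz}_i^t$ forever; for $i$ not yet activated at time $t$, apply a ``shadow'' update (solely for analysis) driven by the time-$t$ activation/mask variables. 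The shadow update is the weighted analogue of the one in \cite{MS08}, and is chosen so that the estimator $(\tilde{\bc}^t_{i,0}, \tilde{\bc}^t_{i,1})$ in (\ref{eq:contrib-funct}) is an unbiased estimate of the true contribution $\bigl(\sum_j d(x_i,x_j)\hat{\bz}_{j,0}^{t-1}, \sum_j d(x_i,x_j)\hat{\bz}_{j,1}^{t-1}\bigr)$ conditioned on the execution up to time $t-1$. Keeping $\hat{\bz}_i^t$ inside $[0,1]^{\{0,1\}}$ in this weighted setting is the reason for the clipping $\bw_i^{\ell} = \min\{w_i, 1/\ell\}$ before activation.

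Expanding the bilinear form of $f$, each increment $f(\hat{\bz}^t) - f(\hat{\bz}^{t-1})$ splits into a linear ``type-A'' part $\sum_i (\hat{\bz}_i^t - \hat{\bz}_i^{t-1}) \cdot \sum_j d(x_i, x_j)\hat{\bz}_j^{t-1}$ and a quadratic ``type-B'' cross part $\sum_{i,j} d(x_i,x_j)(\hat{\bz}_i^t - \hat{\bz}_i^{t-1})(\hat{\bz}_j^t - \hat{\bz}_j^{t-1})$. The type-B terms are non-zero only when $i$ and $j$ simultaneously activate at time $t$, which happens with probability at most $\bw_i^t \bw_j^t$. Using $\lambda$-compatibility and $\|w\|_1 = O(1)$ to bound the effect of each such coincidence, and summing over $t$, yields an aggregate type-B contribution of $O(\eps) \sum_{i,j} d(x_i,x_j)$ via the assumption $t_0 \geq 1/\eps$.

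For the type-A terms the key point is the greedy choice. If $t \leq t_0$, then $\bg_i^t = z^*_i$ and the type-A increment vanishes identically. For $t > t_0$, were the estimator (\ref{eq:contrib-funct}) \emph{exactly} equal to the true contribution, greedily minimizing $\tilde{\bc}^t_{i,0} z_{i,0} + \tilde{\bc}^t_{i,1} z_{i,1}$ would make the type-A increment non-positive (this is the weighted analogue of the greedy-choice lemma in \cite{MS08}); thus the expected type-A increment is controlled by the expected deviation $\berr_i^t$ between the estimator and the true contribution, weighted by the activation probability $\bw_i^t$. The shadow update is designed so that $\berr_i^{t}$ forms a martingale, i.e.\ $\Ex[\berr_i^{t+1} \mid \mathcal{F}_{t-1}] = \berr_i^t$, so $\Ex|\berr_i^t|$ is bounded by the accumulated conditional variance. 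A direct computation of the single-step variance using $\lambda$-compatibility and the $1/(\bw_j^\ell \gamma^\ell)$ factors in (\ref{eq:contrib-funct}) yields a per-step deviation of order $T \bigl(\sqrt{\gamma \lambda}/t^2 + \sqrt{\lambda}/(t\sqrt{\gamma})\bigr)$, where $T = \sum_{i,j} d(x_i,x_j)$. Summing over $t_0 < t \leq t_e$ gives $O\bigl(T\sqrt{\gamma\lambda}/t_0 + T\ln(t_e)\sqrt{\lambda}/\sqrt{\gamma}\bigr)$, and the hypotheses $\sqrt{\gamma} \geq (\ln t_e + 1)\sqrt{\lambda}/\eps$ and $t_0 \geq \sqrt{\gamma\lambda}/\eps$ make each summand $O(\eps T)$.

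The main obstacle I expect is pinning down the shadow update in the weighted setting so that (i) $\hat{\bz}_i^t$ stays in $[0,1]^{\{0,1\}}$, (ii) the estimator (\ref{eq:contrib-funct}) is conditionally unbiased for the true contribution, and (iii) the martingale identity for $\berr_i^t$ holds despite $\bw_i^\ell$ transitioning from $\min\{w_i, 1/\ell\}$ to $w_i$ at the moment of activation. This transition, absent in the uniform-weight case of \cite{MS08}, is the reason both the clipping and the $\lambda$-compatibility hypothesis are required, and the explicit variance accounting it forces is what drives the quantitative setting of $\gamma$ and $t_0$.
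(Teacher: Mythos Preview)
Your outline matches the paper's proof essentially step for step: fictitious cut, telescoping, type-A/type-B split, greedy-choice nonpositivity on the estimated contribution, and a (scaled) martingale argument for the estimation error, with the final variance computation giving the $\sqrt{\gamma\lambda}/t^2 + \sqrt{\lambda}/(t\sqrt{\gamma})$ per-step bound you wrote.

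Two small corrections to watch when you execute it. First, the type-B terms are \emph{not} ``non-zero only when $i$ and $j$ simultaneously activate'': the shadow update you yourself describe moves $\hat{\bz}_i^t$ even when $i$ does not activate. The paper's argument instead shows $\Ex[\hat{\bz}_{i,b}^t - \hat{z}_{i,b}^{t-1}] = \tfrac{1}{t}(g_{i,b}^t - \hat{z}_{i,b}^{t-1})$ for any fixed history, and then uses independence across $i \neq j$ to factor the product, giving the $1/t^2$ bound directly; no $\lambda$-compatibility or $\|w\|_1$ bound is needed here. Second, the error process is not a martingale but a \emph{scaled} one: $\Ex[\berr_{i,b}^{t+1}\mid \text{step } t-1] = \tfrac{t-1}{t}\,\berr_{i,b}^{t}$, so the natural object to analyze is $t\cdot\berr_{i,b}^{t+1}$ (this is what makes the telescoping of second moments clean). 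Also, $\hat{\bz}^{t_e}$ need not equal $\bz$ when some points never activate; one only has $\hat{\bz}^{t_e}_{i,b} \geq \bz_{i,b}$, which suffices since $f$ is coordinatewise monotone.
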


We defer the proof of Lemma~\ref{lemma:main} to Section~\ref{sec:greedy-analysis}, but we are now ready to conclude Theorem~\ref{thm:main-structural}, which combines the above observations, as well as the fact that, by setting $t_e \geq n \lambda / \eps$, we may place any point $x_i$ in the $0$-side of the cut (by letting $\bz_i = (1, 0)$) as a default whenever $\bA_{i, t}$ is always 0 without degrading the quality of the cut. 

\begin{proof}[Proof of Theorem~\ref{thm:main-structural} assuming Lemma~\ref{lemma:main}]
We first notice that, by construction of the greedy process, we have defined each row $\bA_i$ to behave independently and distributed from $\calT(w_i)$. In particular, note that $\bA_{i, t}$ is always sampled from $\Ber(\bw_i^{t})$, where the setting of $\bw_i^t$ depends solely on whether $i$ is or is not in $\bS^{t-1}$; if it is not in $\bS^{t-1}$, it is set to $\min \{ w_i, 1/t\}$ and otherwise $w_i$. Note that $i \in \bS^{t-1}$ if and only if there exists $\ell \leq t-1$ with $\bA_{i,\ell} = 1$, coinciding with $\calT(w_i)$. The case of $\bK_{i,t}$ is simpler, as it is sampled from $\Ber(\gamma^t)$ which is $1$ if $t \leq t_0$ and $\min\{ \gamma /t, 1\}$ otherwise; similarly, from $\calK(t_0, \gamma)$. 

If we let $\bz$ denote the cut which was output by greedy process, we partition the coordinates into
\[ \bG = \left\{ i \in [n] : \exists t \leq t_e \text{ s.t } \bA_{i,t} = 1\right\} \qquad\text{and}\qquad \bB = [n] \setminus \bG.  \]
Importantly, we have defined \textsc{GreedyProcess} and \textsc{Assign} so that, whenever $i \in \bG$, if we let $\bt_i$ denote the activation time of point $x_i$, we have $\bz(\bsigma^*)_i$, which is the output of $\textsc{Assign}_{\bsigma^*}(x_i, \bt_i)$ is the same as $\bz_i$. Therefore, we have
\begin{align}
\Ex_{\bA, \bK}\left[ \min_{\sigma} f(\bz(\sigma)) \right] \leq \Ex_{\bA, \bK}\left[ f(\bz(\bsigma^*)) \right] &\leq \Ex_{\bA, \bK}\left[ f(\bz)\right] + \sum_{i =1}^n \left( \sum_{j=1}^n d(x_i, x_j)\right) \cdot \Prx\left[ i \in \bB \right]. \label{eq:almost}
\end{align}
We finally may apply Lemma~\ref{lemma:main} as well as
\begin{align*}
\Prx\left[ i \in \bB \right] = \prod_{t=1}^{t_e} \left( 1 - \min\left\{ w_i , 1/ t\right\} \right) \leq \prod_{t=\lceil 1 / w_i \rceil}^{t_e} \left(1 - \frac{1}{t} \right) = \prod_{t = \lceil 1/w_i \rceil}^{t_e} \frac{t - 1}{t} = \dfrac{\lceil 1/w_i \rceil - 1}{t_e}.
\end{align*}
Since $w_i \leq 1/2$, we have $\lceil 1 / w_i \rceil - 1 \leq 1 / w_i$, so the left-most term in (\ref{eq:almost}) is at most
\begin{align*}
\frac{1}{t_e} \sum_{i=1}^n \sum_{j=1}^n \dfrac{d(x_i, x_j)}{w_i} \leq \frac{\lambda \cdot n}{t_e} \sum_{i=1}^n \sum_{k=1}^n d(x_i, x_k). 
\end{align*}
\end{proof}

\ignore{
\begin{lemma}\label{lem:min-over-sigma}
Letting $\sigma$ vary over all strings $\{0,1\}^m$ where $m$ is the number of non-zero entries in the first $t_0$ columns of $\bA \circ \bK$, we have
\begin{align*}
&\Ex_{(\bA, \bK)\sim \calD(w,\gamma,t_0, t_e)}\left[\min_{\sigma}\left( \frac{1}{2} \sum_{i=1}^n \sum_{j=1}^n d(x_i, x_j) \cdot \ind\left\{ g_{\bA, \bK, \sigma}(i) = g_{\bA, \bK, \sigma}(j) \right\} \right) \right] \\
&\qquad \leq \Ex_{(\bz, \bA, \bK, \bm, \bsigma^*)}\left[ f(\bz)\right] + \sum_{i=1}^n \frac{1}{w_i \cdot t_e}\left( \sum_{j=1}^n d(x_i, x_j)\right).
\end{align*}
If $\lambda > 1$ and weights are $\lambda$-compatible, for $t_e \geq n \cdot \lambda / \eps$, 
\[ \sum_{i=1}^n \frac{1}{w_i\cdot t_e} \left(\sum_{j=1}^n d(x_i, x_j) \right) \leq \eps \sum_{i=1}^n \sum_{j=1}^n d(x_i, x_j). \]
\end{lemma}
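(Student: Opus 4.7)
The plan is to directly compare the LHS to $\Ex[f(\bz)]$ by fixing $\sigma$ to the specific seed $\bsigma^*$ produced by \textsc{GreedyProcess}. Since the minimum over all $\sigma$ is bounded above by the value at $\sigma = \bsigma^*$, and since by Observations~\ref{obs:1}--\ref{obs:3} the greedy output satisfies $\bz_i = g_{\bA,\bK,\bsigma^*}(i)$ for every index $i$ activated by some time $t \leq t_e$, the only source of slack between the two sides is the set of unactivated indices $\bB = \{i : \bA_{i,t} = 0 \text{ for all } t \leq t_e\}$. On $[n] \setminus \bB$ the complete cut induced by $g_{\bA,\bK,\bsigma^*}$ and the partial cut $\bz$ encode the same side, whereas unactivated indices have $\bz_i = (0,0)$ and contribute nothing to $f(\bz)$, while still contributing to the LHS.

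Concretely, for every pair $(i,j)$, the per-pair contribution to the inner sum in the LHS (at $\sigma = \bsigma^*$) minus the per-pair contribution to $f(\bz)$ equals $d(x_i,x_j)\bigl(\ind\{g(i) = g(j)\} - \bz_{i,0}\bz_{j,0} - \bz_{i,1}\bz_{j,1}\bigr)$. This vanishes when $i,j \notin \bB$, and is at most $d(x_i,x_j)$ otherwise since the indicator lies in $\{0,1\}$ and the cross terms are non-negative. Summing over all pairs yields $\tfrac{1}{2}\sum_{i,j}d(x_i,x_j)\ind\{g(i) = g(j)\} - f(\bz) \leq \sum_{i \in \bB}\sum_{j=1}^n d(x_i,x_j)$. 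Taking expectation over $(\bA, \bK)$ and using linearity, the additive error becomes $\sum_i \Prx[i \in \bB] \sum_j d(x_i,x_j)$, so it suffices to show $\Prx[i \in \bB] \leq 1/(w_i \cdot t_e)$.

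This probability bound is a standard telescoping computation (the same one carried out inside the proof of Theorem~\ref{thm:main-structural}): since $\bA_{i,t} \sim \Ber(\min\{w_i, 1/t\})$ while $i$ remains unactivated, and since $\min\{w_i,1/t\} = 1/t$ for $t \geq \lceil 1/w_i \rceil$, we have $\Prx[i \in \bB] \leq \prod_{t = \lceil 1/w_i \rceil}^{t_e}(1 - 1/t) = (\lceil 1/w_i \rceil - 1)/t_e$, which is at most $1/(w_i \cdot t_e)$ since $w_i \leq 1/2$ ensures $\lceil 1/w_i \rceil - 1 \leq 1/w_i$. Plugging this back gives the first inequality of the lemma. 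The only non-routine point is the identification of $\bz$ with the cut induced by $g_{\bA,\bK,\bsigma^*}$ on activated indices, which is exactly the content of Observations~\ref{obs:1}--\ref{obs:3}.

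For the second claim, the $\lambda$-compatibility condition of Definition~\ref{def:metric-compat} reads $d(x_i,x_j)/w_j \leq \lambda \sum_{k=1}^n d(x_i,x_k)$. Swapping the roles of $i$ and $j$ and using symmetry of $d$ gives $d(x_i,x_j)/w_i \leq \lambda \sum_{k=1}^n d(x_j,x_k)$, hence
\[
\sum_{i=1}^n \frac{1}{w_i \cdot t_e} \sum_{j=1}^n d(x_i,x_j) \;=\; \frac{1}{t_e} \sum_{i,j} \frac{d(x_i,x_j)}{w_i} \;\leq\; \frac{\lambda}{t_e} \sum_{i,j,k} d(x_j,x_k) \;=\; \frac{n\lambda}{t_e} \sum_{i,j} d(x_i,x_j),
\]
which is at most $\eps \sum_{i,j} d(x_i,x_j)$ whenever $t_e \geq n\lambda/\eps$, proving the final inequality.
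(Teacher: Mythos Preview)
Your proof is correct and follows essentially the same approach as the paper's: bound the minimum by the value at the specific seed $\bsigma^*$, partition indices into activated $\bG$ and unactivated $\bB$, observe that on $\bG \times \bG$ the two quantities agree (so the slack is at most $\sum_{i \in \bB}\sum_j d(x_i,x_j)$), bound $\Prx[i \in \bB]$ via the telescoping product $\prod_{t \geq \lceil 1/w_i\rceil}(1-1/t)$, and finish the second inequality via $\lambda$-compatibility with swapped indices. The only cosmetic difference is that the paper splits the double sum directly into $\bG\times\bG$ plus a remainder, whereas you phrase it as a per-pair comparison $\ind\{g(i)=g(j)\} - (\bz_{i,0}\bz_{j,0}+\bz_{i,1}\bz_{j,1})$; both yield the same bound.
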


\begin{proof}
First, note that 
\begin{align*}
&\Ex_{(\bA, \bK)}\left[ \min_{\sigma}\left(\frac{1}{2} \sum_{i=1}^n \sum_{j=1}^n d(x_i, x_j) \ind\{ g_{\bA, \bK, \sigma}(i) = g_{\bA, \bK, \sigma}(j)\} \right)\right] \\
&\qquad\qquad \leq \Ex_{(\bz,\bA, \bK, \bm, \bsigma^*)}\left[ \frac{1}{2} \sum_{i=1}^n \sum_{j=1}^n d(x_i, x_j) \ind\{ g_{\bA, \bK, \bsigma^*}(i) = g_{\bA, \bK, \bsigma^*}(j)\} \right],
\end{align*}
so it suffices to analyze what happens on the ``correct'' string $\bsigma^*$. Consider the partition $[n]$ into two sets $\bG$ and $\bB$, according to whether or not there is a non-zero entry in the $i$-th row of $\bA$; if there is, then $i \in \bG$ as $t_i \leq t_e$, and $g_{\bA, \bK, \bsigma^*}(i)$ is set according to \textsc{Local-Sim}, and if not, we place $i \in \bB$ and $g_{\bA, \bK, \bsigma^*}(i)$ is defaulted to zero. We may upper bound 
\begin{align*}
&\frac{1}{2} \sum_{i = 1}^n \sum_{j=1}^n d(x_i, x_j) \cdot \ind\left\{ g_{\bA, \bK, \bsigma^*}(i) = g_{\bA, \bK, \bsigma^*}(j) \right\} \\
&\qquad \leq \frac{1}{2} \sum_{i \in \bG}^n \sum_{j\in \bG}^n d(x_i, x_j) \cdot \ind\left\{ g_{\bA, \bK, \bsigma^*}(i) = g_{\bA, \bK, \bsigma^*}(j) \right\} +  \sum_{i \in \bB} \left(\sum_{j=1}^n d(x_i, x_j) \right) \\
&\qquad \leq f(\bz) + \sum_{i \in \bB} \left(\sum_{j=1}^n d(x_i, x_j) \right).
\end{align*}
Finally, we note that
\begin{align*}
\Prx\left[ i \in \bB \right] &= \prod_{t=1}^{t_e} \left( 1 - \min\{w_i, 1/t \}\right) \leq \left(1 - w_i \right)^{\lfloor 1/w_i \rfloor} \dfrac{\lceil 1/w_i \rceil - 1}{t_e} \leq \frac{1}{w_i \cdot t_e},
\end{align*}
whenever giving the desired bound.
\end{proof}}

\ignore{This lemma connects the output of our process to $\bz^t$ to $g_{\bA,\bK, \sigma}$  from \ref{thm:greedy-is-opt}. Now let's analyze this process, first showing that it is computable given access to the weights $w_1, \dots, w_n$ and $\gamma$, and requires very low space. In Section \ref{sec:greedy-analysis}, we will show that the cut output is near optimal. }

\ignore{
\begin{claim} Points are activated independently of each other and can be computed for any time $t$ without information other than $w_1,\dots,w_n$ and $\gamma$.
\end{claim}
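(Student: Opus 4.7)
The plan is to prove both halves of the claim---independence of the activation events across $i$, and computability of any activation from only $(w_1,\dots,w_n)$ and $\gamma$---by direct inspection of the update rules of \textsc{GreedyProcess} together with a one-line induction on the time index $t$. The central structural observation is that the only place where Bernoulli parameters get defined is Line~\ref{en:greedy-ln1}, and there $\bw_i^t$ is determined entirely by $w_i$ and by whether $i \in \bS^{t-1}$, while $\gamma^t$ depends only on the scalars $t, t_0, \gamma$.

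First I would prove by induction on $t$ that the event $\{i \in \bS^{t-1}\}$ is measurable with respect to $(\bA_{i,1},\dots,\bA_{i,t-1})$, i.e.\ only the $i$-th row of $\bA^{t-1}$. This is immediate from the update rule $\bS^t = \bS^{t-1} \cup \{j : \bA^t_{j,t}=1\}$ in Line~\ref{en:greedy-ln3}: membership of $i$ in $\bS^{t-1}$ is equivalent to $\bA_{i,\ell}=1$ for some $\ell \leq t-1$. Composing with Line~\ref{en:greedy-ln1} yields that $\bw_i^t$ is a deterministic function of $w_i$ and of the $i$-th row of $\bA^{t-1}$ alone, with no dependence on $w_j$ for $j \neq i$, on the metric $d$, or on $z^*$.

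Next, conditional on the first $t-1$ columns of $\bA$, the coordinates $\{\bA_{i,t}^t\}_{i\in[n]}$ are drawn as mutually independent Bernoullis with parameters $\{\bw_i^t\}_{i\in[n]}$ in Line~\ref{en:greedy-ln3}; since the $i$-th parameter depends only on $w_i$ and on the $i$-th row (by the step above), an induction on $t$ shows that the $n$ rows $\bA_1, \dots, \bA_n$ are mutually independent as random sequences, and each $\bA_i$ can be sampled from $w_i$ together with a stream of fresh independent Bernoulli coins indexed by $(i,t)$. The mask bits $\bK_{i,t}^t \sim \Ber(\gamma^t)$ involve $\gamma^t$ that depends only on $(t, t_0, \gamma)$ and independent local randomness, so the rows $\bK_1,\dots,\bK_n$ are trivially mutually independent and independent of $\bA$.

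Combining these facts, each activation $\bA_{i,t}$ (and the kept-and-activated event $\bA_{i,t}\cdot\bK_{i,t}$) can be computed from $w_i$, $\gamma$ (through $\gamma^t$), and private randomness of the $i$-th point, without reference to any other $w_j$ or to the quantities $d$ and $z^*$ that appear elsewhere in \textsc{GreedyProcess}. I do not anticipate a real obstacle; the only subtle point is to distinguish the global variable $\bS^{t-1}$ from its projection onto a single index $i$, and to observe that the inductive hypothesis exhibits this projection as a function of the $i$-th row of $\bA^{t-1}$ alone, which is what simultaneously delivers the independence across points and the locality of the computation.
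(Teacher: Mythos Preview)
Your proposal is correct and follows essentially the same approach as the paper: induction on $t$, observing that $\bw_i^t$ in Line~\ref{en:greedy-ln1} depends only on $w_i$ and on whether $i\in\bS^{t-1}$, which in turn is determined by the $i$-th row of $\bA^{t-1}$ alone, and that the fresh Bernoulli draws in Line~\ref{en:greedy-ln3} are independent across $i$. If anything, your write-up is crisper than the paper's, since you explicitly separate the measurability of $\{i\in\bS^{t-1}\}$ with respect to row $i$ from the conditional independence of the new column, and you note that $d$ and $z^*$ play no role in these updates.
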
 
\begin{proof}
Consider any point $x_i$ in the input and it's associated weight $w_i$. We shall show that $\bA^t_i$ and $\bK^t_i$ can be computed without any information other than the weight $w_i$ and a parameter $\gamma$. Let $\bA_i^t$, $\bK_i^t$ represent the entries corresponding to $x_i$ at time step $t$. At time $t=0$, $\bA^t$ and $\bK^t$ are empty, thus are trivially independent and requires no information to compute. Now we proceed by induction, assuming for some $t> 0$, that $\bA^{t-1}$ and $\bK^{t-1}$ can be computed independently and only with access to $w_i$ and $\gamma$. Now consider a single round of the process at time step $t > 0$. The process first computes $\bw_i^t$ which only relies on $w_i$ and $t$. First the process computes $\bA^{t-1}_{i, \ell}$ and $\bK^{t-1}_{i, \ell}=1$ if $x_i$ has already been activated, if $\bA^{t-1}_{i,t-1}=1$ which by the I.H. can be computed independently and only with $w_i$ and $t$. If $x_i$ was activated then $\bw_i^t = \min(w_i, \frac{1}{t})$  else $\bw_i^t = w_i$. Then the process will compute two Bernoulli random variables with probability $\bw_i^t$ and the other with $\min(\frac{\gamma}{t},1)$, independently of any other element $x_j$. The output of these two Bernoulli's and the prior matrices $\bA^{t-1}_i$ and $\bK^{t-1}_i$ are combined to compute $\bA^{t}_i$ and $\bK^{t}_i$. Since all terms are computed independently and only using $w_i$ and $\gamma$ as input, then we can conclude that the $\bA^t_i$ and $\bK^t_{i}$ are computed independently and only using $w_i$ and $\gamma$. 
\end{proof}

\begin{claim} Space needed to compute $Y^t$ is just nonzero's in $\bA^t \cdot \bK^t$ which is small e.g. $\textsc{poly}(d\log(n))$.
\end{claim}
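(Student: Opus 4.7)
The plan is to reduce the space question to counting the number of pairs $(j,\ell) \in [n]\times[t_e]$ for which $\bA_{j,\ell}\bK_{j,\ell}=1$, interpreting $Y^t$ as whatever per-round state is needed to carry out \textsc{GreedyProcess} through time $t$. First I would verify that this ``active-and-kept'' set of pairs is a sufficient statistic: the only place the matrices $\bA$ and $\bK$ enter Line~\ref{en:greedy-ln2} is through the factor $\bA^{t-1}_{j,\ell}\bK^{t-1}_{j,\ell}$, so summands with this product equal to zero drop out of $\tilde{\bc}^t_{i,0}$ and $\tilde{\bc}^t_{i,1}$. Combined with Observations~\ref{obs:1}--\ref{obs:3}, this shows that storing the triples $(x_j,\ell,w_j,\bt_j)$ over exactly these pairs---i.e., the timeline--mask summary of Definition~\ref{def:timeline-mask-sum}---is enough to recompute $\bg^t$ and $\bz^t$ at every future step, without ever materializing the full $n\times t_e$ matrices.

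I would then bound the expected size of this summary by linearity. Since $\bA_{j,\ell}\sim\Ber(\bw_j^{\ell})$ with $\bw_j^{\ell}\leq w_j$, and $\bK_{j,\ell}\sim\Ber(\gamma^{\ell})$ independently, and since $\gamma^{\ell}=1$ for $\ell\leq t_0$ while $\gamma^{\ell}\leq\gamma/\ell$ for $\ell>t_0$,
\[ \Ex\Big[\sum_{j=1}^{n}\sum_{\ell=1}^{t_e}\bA_{j,\ell}\bK_{j,\ell}\Big] \;\leq\; \|w\|_1\cdot\Big(t_0+\sum_{\ell=t_0+1}^{t_e}\tfrac{\gamma}{\ell}\Big) \;\leq\; \|w\|_1\cdot\big(t_0+\gamma\ln(t_e/t_0)\big). \]
In the regimes of interest (i.e.\ those used in the MPC and streaming applications) $\|w\|_1=O(1)$, so substituting the settings $t_0,\gamma=\poly(\log n/\eps)$ from Theorem~\ref{thm:main-structural} yields an expected count of $\poly(\log n/\eps)$.

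To promote this to a high-probability bound, I would couple each $\bA_{j,\ell}\bK_{j,\ell}$ with a dominating independent $\Ber(w_j\gamma^{\ell})$---valid because $\bw_j^{\ell}\leq w_j$ deterministically---and then apply a standard Chernoff bound to the resulting sum of independent Bernoullis, giving $N=O(\mu+\log(1/\delta))$ with probability $1-\delta$. Since each entry of the summary occupies $O(d)$ words for the point $x_j$ and $O(1)$ words for $w_j$, $\ell$, and $\bt_j$, the overall bound becomes $O(d)\cdot\poly(\log n/\eps)$ words, matching the $\poly(d\log n)$ figure in the claim up to the $\eps$-dependence implicit in $t_0$ and $\gamma$. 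The step I expect to be the main obstacle is precisely this independence issue: within a row, $\bw_j^{\ell}$ becomes random once $j$ first activates, so the entries $\bA_{j,\ell}\bK_{j,\ell}$ are not a priori independent across $\ell$. The domination argument above sidesteps this cleanly; a backup plan, should one want sharper constants, is Azuma's inequality applied to the column-by-column filtration generated by revealing $(\bA_{\cdot,\ell},\bK_{\cdot,\ell})$ in order of $\ell$.
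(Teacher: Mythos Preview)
Your proposal is correct and follows essentially the same approach as the paper: both reduce to counting nonzero entries of $\bA\circ\bK$, bound $\Pr[\bA_{j,\ell}\bK_{j,\ell}=1]\leq w_j\gamma^{\ell}$, and sum via linearity using $\|w\|_1=O(1)$ and $\sum_{\ell}\gamma^{\ell}\leq t_0+\gamma\ln t_e$. The paper's proof stops at the expectation bound, whereas you additionally supply the sufficiency argument (via the timeline--mask summary) and the high-probability upgrade by coupling to independent $\Ber(w_j\gamma^{\ell})$ variables; both additions are sound and go slightly beyond what the paper records here.
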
 
\begin{proof}
From the definition of the greedy process, the only information needed to compute $\bY^t$ other than $w_1,\dots,w_n$ and $\gamma$ is just the points $j$ which have nonzero entries in $\bA^t_{j,\cdot}\cdot \bK^t_{j,\cdot}$. Thus the expected space required for our algorithm after $\text{poly}(n)$ time steps can be computed by applying linearity of expectation \begin{align*}
    E[\text{space at time $t$}] &\leq \sum_{j=0}^n d\log(\Delta) \sum_{\ell=0}^t \E[\bA^t_{j,\ell}\cdot\bK^t_{j,\ell}]\\
    &\leq \sum_{j=0}^n d\log(\Delta) \sum_{\ell=0}^t \min(\frac{1}{\ell},w_j)\min(\frac{\gamma}{\ell},1)\\
    &\leq \sum_{j=0}^n d\log(\Delta) \sum_{\ell=0}^{t}w_j\frac{\gamma}{\ell}\\
    &\leq \gamma d\log(\Delta)\log(t)\\
\end{align*}
where the last step follows from $\sum_{j=0}^n w_j \leq 1$. Since $t=\text{poly}(n)$, the total space is bounded by $O(\gamma\log^2(n))$
\end{proof}}

\ignore{
\subsection{Proof of Theorem~\ref{thm:distribution-for-matrix} assuming Lemma~\ref{lemma:main}}

We let $\calD$ be the distribution $\calD(w, \gamma, t_0,t_e)$. Recall that a draw from this distribution may be generated by sampling $(\bz, \bA, \bK, \bm, \bsigma^*) \sim \calD^*(w, \gamma, t_0,t_e)$ from the output of \textsc{GreedyProcess}$((X,d), w, \gamma, t_0, t_e, z^*)$ and returning on the matrices $(\bA, \bK)$. For the first two items,
\begin{itemize}
\item The $(i,t)$-th entry of $\bA$ is generated by sampling from $\Ber(\bw_i^t)$, where $\bw_i^t$ is $\min\{ w_i, 1/t\}$ if $i \notin \bS^{t-1}$ and $w_i$ otherwise. Similarly, $\bK_{i,t}$ is drawn from $\Ber(\gamma^t)$ which is $1$ if $t \leq t_0$ and $\min(\gamma/t, 1)$ otherwise. Note that, $i \in \bS^{t-1}$ is equivalent to the event that there exists $\ell < t$ with $\bA_{i, \ell} = 1$. Hence, every row acts independently.  
\item Note that an entry in $\bA \circ \bK$ is non-zero only if both $\bA$ and $\bK$ are non-zero at that entry. We upper bound the expectation by:
\begin{align*}
\sum_{i=1}^{n} \sum_{t=1}^{t_e}  \Prx\left[ (\bA \circ \bK)_{i, t} = 1 \right] \leq \sum_{i=1}^n w_i \left( t_0 + \gamma \sum_{t=t_0+1}^{t_e} \frac{1}{t} \right) \lsim \|w\|_1 \left( t_0 + \gamma \log\left( t_e\right) \right). 
\end{align*}
\end{itemize}
For the description of the function $g_{\bA, \bK, \sigma}(\cdot)$, we use the local simulation procedure of Lemma~\ref{lem:local-sim}, which specifies the algorithm for computing the function from the desired inputs. The last point of Theorem~\ref{thm:distribution-for-matrix} follows from combining Lemma~\ref{lem:min-over-sigma} and Lemma~\ref{lemma:main}. In particular, the maximizing $\sigma$ of sum of distances among $(i, j)$ with  $g_{\bA, \bK, \sigma}(i) \neq g_{\bA, \bK, \sigma}(j)$ is the same as minimizing $\sigma$ of sum of distances among $(i,j)$ where $g_{\bA, \bK, \sigma}(i) = g_{\bA, \bK, \sigma}(j)$; note that we always incur the additive error
\[ O(\eps) \sum_{i=1}^n \sum_{j=1}^n d(x_i, x_j). \]}

\section{Proof of Lemma~\ref{lemma:main}} 
\label{sec:greedy-analysis}

As with the analysis of~\cite{MS08}, the analysis of the greedy process relies on a ``fictitious cut'' vector. The fictitious cut vector begins by encoding an optimal cut, and transforms into the cut that is output by the greedy process once it terminates. Even though the structure of the argument (in particular, tracking a fictitious cut) remains the same as~\cite{MS08}, the specification of the fictitious cut itself will differ, in order to correspond to our greedy process. 

\paragraph{Notation.} For the remainder of this section, we will consider a fixed metric space $(X = \{ x_1,\dots, x_n\}, d)$ consisting of $n$ points, as well as settings of the weight vector $w \in (0,1/2]^n$, parameters $\gamma \geq 1$, $t_0, t_e \in \N$ and an optimal cut $z^* \in \{0,1\}^{n \times \{0,1\}}$. In the definitions below, we will reference random variables (such as $\bz^t, \bw^t, \bS^t$, \dots, etc.) referring to an execution of $\textsc{GreedyProcess}((X = \{ x_1,\dots, x_n \} , d) , w, \gamma, t_0, t_e, z^*)$ and a draw $(\bz, \bA, \bK, \bm, \bsigma^*) \sim \calD^*(w, \gamma, t_0, t_e)$. 

\begin{definition}[Fictitious cut]\label{def:fictitious}
The fictitious cut is specified by a sequence of random variables $\hat{\bz}^t \in [0, 1]^{n \times \{0,1\}}$ for each $t \in \{0, \dots, t_e\}$. We let:
\begin{itemize}
\item For $t = 0$, we define $\hat{\bz}^0 = z^*$. 
\item For $t > 0$, we denote for each $i \in [n]$, the vector $\hat{\bz}_i^t \in [0,1]^{\{0,1\}}$ by\footnote{Even though it is not yet clear, we will show that $\hat{\bz}_i \in [0,1]^{\{0,1\}}$.}
\begin{align}
\hat{\bz}_i^t &= \left\{ \begin{array}{cc} \bz_i^t & i \in \bS^t \\
							\dfrac{1}{1 - \bw_i^t} \left(\dfrac{t-1}{t} \cdot \hat{\bz}_i^{t-1} + \frac{1}{t} \cdot \bg_i^t -  \bw_i^t \cdot \bg_i^t \right) & \text{o.w.} \end{array} \right. . \label{eq:shadow-def}
\end{align}
\end{itemize}
\end{definition}

\begin{claim}\label{cl:simple-fictitious}
The following are simple facts about the fictitious cut, which follow from Definition~\ref{def:fictitious}:
\begin{itemize}
\item For every $t \in \{0, \dots, t_0\}$, we have $\hat{\bz}^t = z^*$.
\item For every $i \in [n]$, we have $\hat{\bz}^{t_e}_{i, 1} \geq \bz_{i,1}$ and $\hat{\bz}^{t_e}_{i,0} \geq \bz_{i, 0}$.
\end{itemize}
\end{claim}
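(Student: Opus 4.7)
Both parts follow by straightforward induction on $t$ from Definition~\ref{def:fictitious}, together with the case analysis of $\bg_i^t$, $\bw_i^t$, and $\bz_i^t$ specified in \textsc{GreedyProcess}.

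For the first bullet, I would induct on $t \in \{0, 1, \dots, t_0\}$. The base case $t = 0$ is immediate from $\hat{\bz}^0 \eqdef z^*$. For the inductive step, I split into two cases based on whether $i \in \bS^t$. If $i \in \bS^t$, then $\hat{\bz}_i^t = \bz_i^t$ by (\ref{eq:shadow-def}); tracing $\bz_i^t$ back through Line~\ref{en:greedy-ln4} to the first time $\tau \le t \le t_0$ that $i$ activated gives $\bz_i^t = \bg_i^\tau$, and Line~\ref{en:greedy-ln2} forces $\bg_i^\tau = z_i^*$ whenever $\tau \le t_0$, so $\hat{\bz}_i^t = z_i^*$. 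If $i \notin \bS^t$, then for $t \le t_0$ both the inductive hypothesis $\hat{\bz}_i^{t-1} = z_i^*$ and the identity $\bg_i^t = z_i^*$ hold, so the recurrence in (\ref{eq:shadow-def}) collapses:
\[ \hat{\bz}_i^t \;=\; \frac{z_i^*}{1 - \bw_i^t}\left(\frac{t-1}{t} + \frac{1}{t} - \bw_i^t\right) \;=\; z_i^*. \]

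For the second bullet, I would prove the slightly stronger statement that $\hat{\bz}_i^t$ has non-negative entries for every $t \in \{0, \dots, t_e\}$ and every $i \in [n]$. This immediately implies the bullet: for $i \in \bS^{t_e}$ we have $\hat{\bz}_i^{t_e} = \bz_i^{t_e} = \bz_i$ (equality), and for $i \notin \bS^{t_e}$ we have $\bz_i = (0,0)$, so non-negativity suffices. Non-negativity itself is a direct induction on $t$. The base case is trivial since $\hat{\bz}^0 = z^* \in \{0,1\}^{n \times \{0,1\}}$. For the inductive step, when $i \in \bS^t$ the row $\hat{\bz}_i^t = \bz_i^t$ lies in $\{(1,0),(0,1),(0,0)\}$ and is non-negative. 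When $i \notin \bS^t$, the key observation is that $\bw_i^t = \min\{w_i, 1/t\}$ by Line~\ref{en:greedy-ln1} (since $i$ has not yet activated), which gives both $1/t - \bw_i^t \ge 0$ and $1 - \bw_i^t \ge 1/2 > 0$ (using $w_i \le 1/2$). Hence the expression in parentheses in (\ref{eq:shadow-def}) is a non-negative combination of $\hat{\bz}_i^{t-1} \ge 0$ (by induction) and $\bg_i^t \in \{(1,0),(0,1)\}$, and the denominator is positive, so $\hat{\bz}_i^t \ge 0$.

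The only real subtlety is bookkeeping: in bullet one, carefully distinguishing the ``just-activated,'' ``previously-activated,'' and ``never-activated'' cases; in bullet two, recognizing that the cap $\bw_i^t \le 1/t$ for unactivated points is precisely what keeps the fictitious cut non-negative --- the very cap highlighted in the earlier footnote regarding the weighted setting. Beyond this, the argument is routine algebra.
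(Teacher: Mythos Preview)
Your proposal is correct and matches the paper's approach essentially line for line. The only minor difference is in the second bullet: the paper simply asserts that $\hat{\bz}_i^{t_e} \in [0,1]^{\{0,1\}}$ when $i \notin \bS^{t_e}$, deferring the justification to the later Lemma~\ref{lemma:fic-cut-bounded}, whereas you give a self-contained inductive proof of non-negativity here (exploiting $\bw_i^t \le 1/t$ for unactivated $i$). Your version is slightly more self-contained at this point in the argument; the paper's version avoids repetition since the full $[0,1]$ bound is needed anyway for Lemma~\ref{lem:greedy-choice-opt}.
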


\begin{proof}
The first item follows from the definition of \textsc{GreedyProcess}, as well as a simple inductive argument. For $t = 0$, set $\hat{\bz}^0 = z^*$ by Definition~\ref{def:fictitious}, so assume such is the case for $t \in \{0 ,\dots, \tilde{t}\}$ for $\tilde{t} < t_0$, and we will establish it for $t = \tilde{t} + 1$.  Recall that, for every $t \in \{0, \dots, t_0\}$, Line~\ref{en:greedy-ln2} of \textsc{GreedyProcess} sets $\bg_i^t = z^*_i$. So consider the two cases which may happen in (\ref{eq:shadow-def}); if $i \in \bS^t$, then $\hat{\bz}_i^{t} = \bz_i^t$, which is equal to $\bz_i = z^*_i$ for $i \in \bS_{t} \subset \bS^{t_0}$ (by Observation~\ref{obs:2}). If $i \notin \bS^t$, then (\ref{eq:shadow-def}) gives
\[ \hat{\bz}_i^{t} = \frac{1}{1 - \bw_i^t} \left( \frac{t-1}{t} \cdot z_i^* + \frac{1}{t} \cdot z_i^* - \bw_i^t \cdot z_i^* \right) = z_i^*. \]
For the second item, notice that $\hat{\bz}_i^{t_e} = \bz_i$ if $i \in \bS^t$, and otherwise lies in $[0,1]^{\{0,1\}}$, whereas $\bz_i = (0, 0)$.
\end{proof}

Recall that, for the proof of Lemma~\ref{lemma:main}, we seek to upper bound the expected difference between $f(\bz) - f(z^*)$. Thus, the definition of the fictitious cut, as well as Claim~\ref{cl:simple-fictitious},
\begin{align}
\Ex\left[ f(\bz) - f(z^*) \right] &\leq \Ex\left[ f(\hat{\bz}^{t_e}) - f(\hat{\bz}^0)\right] = \sum_{t = t_0+1}^{t_e} \Ex\left[ f(\hat{\bz}^{t}) - f(\hat{\bz}^{t-1}) \right],\label{eq:goal-1}
\end{align} 
where in the first inequality, the fact $f(\bz) \leq f(\hat{\bz}^{t_e})$ follows from definition of $f$, the fact distances are non-negative, and Claim~\ref{cl:simple-fictitious}. We will focus on showing an upper bound for each $t \in \{ t_0+1, \dots, t_e\}$, and by symmetry of $d(\cdot, \cdot)$ and definition of $f$, we may write
\begin{align}
f(\hat{\bz}^{t}) - f(\hat{\bz}^{t-1}) &= \sum_{i=1}^n \left(\hat{\bz}_{i,0}^{t} - \hat{\bz}_{i,0}^{t-1}  \right) \sum_{j=1}^n d(x_i, x_j) \hat{\bz}_{j,0}^{t-1} +  \sum_{i=1}^n \left(\hat{\bz}_{i,1}^{t} - \hat{\bz}_{i,1}^{t-1}  \right) \sum_{j=1}^n d(x_i, x_j) \hat{\bz}_{j,1}^{t-1} \label{eq:linear-term}\\
		&\qquad + \frac{1}{2} \sum_{i=1}^n \sum_{j=1}^n d(x_i, x_j) \left(\hat{\bz}_{i,0}^{t} - \hat{\bz}_{i,0}^{t-1} \right) \left( \hat{\bz}_{j,0}^{t} - \hat{\bz}_{j,0}^{t-1} \right) \label{eq:non-linear-1}\\
		&\qquad + \frac{1}{2} \sum_{i=1}^n \sum_{j=1}^n d(x_i, x_j) \left(\hat{\bz}_{i,1}^{t} - \hat{\bz}_{i,1}^{t-1} \right) \left( \hat{\bz}_{j,1}^{t} - \hat{\bz}_{j,1}^{t-1} \right), \label{eq:non-linear-2}
\end{align}

\begin{lemma}\label{lem:non-linear}
For every $t \in \{0, \dots, t_e\}$ and any fixed execution of the first $t-1$ steps of \textsc{GreedyProcess}, then for every $i \notin S^{t-1}$ and any $b \in \{0,1\}$,
\[ \Ex\left[ \hat{\bz}^{t}_{i,b} - \hat{z}^{t-1}_{i,b} \right] = \frac{1}{t} \cdot \left( g_{i,b}^{t} - \hat{z}_{i,b}^{t-1} \right), \]
where the expectation above is only over the randomness which defines $\hat{\bz}_{i,b}^{t}$ after fixing the first $t-1$ steps of the process (thereby fixing $\hat{z}^{t-1}_{i,b}$ and $g_{i,b}^{t}$). Furthermore, the expectation of expressions (\ref{eq:non-linear-1}) and (\ref{eq:non-linear-2}) is at most
\begin{align*}
\frac{1}{2t^2} \sum_{i=1}^n \sum_{j=1}^n d(x_i, x_j).
\end{align*}
\end{lemma}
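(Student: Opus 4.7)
The plan is to condition on the first $t-1$ steps of $\textsc{GreedyProcess}$ (which fixes $\hat{\bz}^{t-1}$, $\bg^t$, $\bw^t$, and $\bS^{t-1}$) and take expectations only over the fresh randomness introduced at step $t$, namely $\bA_{\cdot,t}$ and $\bK_{\cdot,t}$. The key structural observation is that for $i \notin \bS^{t-1}$ the increment $\hat{\bz}^t_i - \hat{\bz}^{t-1}_i$ is a deterministic function of $\bA_{i,t}$ alone: $\bK_{i,t}$ does not appear in either case of (\ref{eq:shadow-def}), and whether $i \in \bS^t \setminus \bS^{t-1}$ is determined solely by $\bA_{i,t}$. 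Consequently, for distinct $i,j \notin \bS^{t-1}$, the increments at indices $i$ and $j$ are independent given the past.

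For the first claim I will do a direct two-case average. With probability $\bw_i^t$ the point activates, so $i \in \bS^t \setminus \bS^{t-1}$ and the first branch of (\ref{eq:shadow-def}) gives $\hat{\bz}^t_i = \bz^t_i = \bg^t_i$; with probability $1-\bw_i^t$ the second branch applies. Averaging and noting that the $(1-\bw_i^t)$ factor cancels the denominator, the $\bw_i^t \bg_i^t$ terms cancel, yielding
\[ \Ex[\hat{\bz}^t_{i,b}] = \tfrac{t-1}{t}\,\hat{\bz}^{t-1}_{i,b} + \tfrac{1}{t}\,\bg^t_{i,b}, \]
from which the identity follows by subtracting $\hat{\bz}^{t-1}_{i,b}$. (This is the whole point of the awkward-looking second branch of (\ref{eq:shadow-def}): it is cooked up so that the expected single-step increment is $(1/t)(\bg^t_{i,b} - \hat{\bz}^{t-1}_{i,b})$, mimicking the martingale update from~\cite{MS08}.)

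For the bound on (\ref{eq:non-linear-1}) and (\ref{eq:non-linear-2}), I decompose the double sum by three cases. Terms with $i = j$ vanish from $d(x_i, x_i) = 0$. Terms with either $i$ or $j$ in $\bS^{t-1}$ also vanish: for such an index the first branch of (\ref{eq:shadow-def}) together with the update rule gives $\hat{\bz}^t = \hat{\bz}^{t-1}$, so one factor of the product is zero. For the remaining pairs ($i \neq j$, both $\notin \bS^{t-1}$), the independence noted above makes the expectation factor, and the first claim of the lemma then gives
\[ \Ex\bigl[(\hat{\bz}^t_{i,b} - \hat{\bz}^{t-1}_{i,b})(\hat{\bz}^t_{j,b} - \hat{\bz}^{t-1}_{j,b})\bigr] = \tfrac{1}{t^2}\,(\bg^t_{i,b} - \hat{\bz}^{t-1}_{i,b})(\bg^t_{j,b} - \hat{\bz}^{t-1}_{j,b}). \]
A short induction on $t$ shows $\hat{\bz}^{t}_{i,b} \in [0,1]$ with the two entries of each row summing to $1$: in the nontrivial second branch of (\ref{eq:shadow-def}), non-negativity of the numerator uses $\bw_i^t \leq 1/t$ (which holds precisely because $i \notin \bS^{t-1}$ triggers the $\min\{w_i, 1/t\}$ clause in Line~\ref{en:greedy-ln1}), and normalization by $1-\bw_i^t$ preserves the row-sum property. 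Hence each factor on the right is in $[-1,1]$ and the product is bounded by $1$, yielding the stated bound $\frac{1}{2t^2}\sum_{i,j}d(x_i,x_j)$ for each of (\ref{eq:non-linear-1}) and (\ref{eq:non-linear-2}).

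I do not anticipate a main obstacle; the only subtlety is noticing that $\bK_{i,t}$ does not enter the single-step update (so the only source of randomness for $\hat{\bz}^t_i$ given the past is $\bA_{i,t}$), and that the second branch of (\ref{eq:shadow-def}) is exactly the right ``shadow'' update to make the expected increment linear in $\bg^t_{i,b} - \hat{\bz}^{t-1}_{i,b}$ with proportionality $1/t$ regardless of $\bw_i^t$.
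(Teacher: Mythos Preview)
Your proposal is correct and follows essentially the same approach as the paper: the same two-case expectation for the first identity, the same reduction to $i,j \notin S^{t-1}$ with $i \neq j$, the same use of independence of $\bA_{i,t}$ and $\bA_{j,t}$, and the same $[0,1]$ bound to control each factor by $1/t$. Your aside that the $[0,1]$ bound hinges on $\bw_i^t \le 1/t$ when $i \notin \bS^{t-1}$ is exactly the content of the paper's separate Lemma~\ref{lemma:fic-cut-bounded}, which the paper invokes implicitly at this point.
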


\begin{proof}
Consider any fixed execution of the first $t-1$ steps of \textsc{GreedyProcess} which fixes the setting of $\hat{z}_{i,b}^{t-1}$, the setting of $w_{i}^{t}$ (in Line~\ref{en:greedy-ln1}), and $g_{i,b}^{t}$ (in Line~\ref{en:greedy-ln2}), as these depend only on the randomness which occurred up to step $t-1$. We now consider what happens to \smash{$\hat{\bz}_{i,b}^{t}$} at step $t$ whenever $i \notin S^{t-1}$:
\begin{itemize}
\item With probability $w_i^{t}$, we have $i \in \bS^{t}$, so (\ref{eq:shadow-def}) sets $\hat{\bz}_i^{t} = z_i^{t-1}$, which is equal to $g_{i}^{t}$ according to Line~\ref{en:greedy-ln4}:
\[ \hat{\bz}^{t}_{i,b} - \hat{z}^{t-1}_{i,b} = g_{i,b}^{t} - \hat{z}^{t-1}_{i,b}. \]
\item With probability $1 - w_i^{t}$, we have $i \notin \bS^{t}$, and 
\[ \hat{\bz}^{t}_{i,b} - \hat{z}^{t-1}_{i,b} = \frac{1}{1 - w_{i}^{t}} \left( \frac{t-1}{t} \cdot \hat{z}_{i,b}^{t-1} + \frac{1}{t} \cdot g_{i,b}^{t} - w_i^{t}  g_{i,b}^{t}\right) - \hat{z}^{t-1}_{i,b}. \]
\end{itemize}
Thus, taking expectation with respect to the randomness in this step $t$ (namely, the samples in Line~\ref{en:greedy-ln3}), after some cancellations:
\begin{align*}
\Ex\left[ \hat{\bz}^{t}_{i,b} - \hat{z}^{t-1}_{i,b}\right] &= \frac{1}{t} \cdot \left(g_{i,b}^{t} - \hat{z}_{i,b}^{t-1}\right).
\end{align*}
This gives the first part of the lemma. For the second part, we note that for any execution of the first $t-1$ steps, any $i \in S^{t-1}$, will have $i \in \bS^{t}$ in Line~\ref{en:greedy-ln3}, and therefore $\hat{\bz}^{t}_i = \hat{z}_i^{t-1} = z_i^{t-1}$. In other words, may re-write the expression (\ref{eq:non-linear-1}) and (\ref{eq:non-linear-2}) as
\begin{align*}
\frac{1}{2} \sum_{i=1}^n \sum_{j=1}^n d(x_i, x_j) \left( \hat{\bz}^{t}_{i,b} - \hat{z}^{t-1}_{i,b} \right) \left( \hat{\bz}_{j,b}^{t} - \hat{z}_{j,b}^{t-1} \right) = \frac{1}{2} \sum_{i \notin S^{t-1}} \sum_{j \notin S^{t-1}} d(x_i, x_j) \left( \hat{\bz}^{t}_{i,b} - \hat{z}^{t-1}_{i,b} \right) \left( \hat{\bz}_{j,b}^{t} - \hat{z}_{j,b}^{t-1} \right).
\end{align*}
Since $d(x_i, x_j) = 0$ if $i = j$, we have, by independence when $i \neq j$, we have that the expectation over the step $t$ is 
\begin{align*}
\Ex\left[\left( \hat{\bz}_{i,b}^{t} - \hat{z}_{i,b}^{t-1}\right)\left( \hat{\bz}_{j,b}^{t} - \hat{z}_{j,b}^{t-1}\right)\right] &= \dfrac{g_{i,b}^{t} - \hat{z}_{i,b}^{t-1}}{t} \cdot \frac{g_{j,b}^{t} - \hat{z}_{j,b}}{t} \leq \frac{1}{t^2},
\end{align*}
since $g_{\cdot,b}^{t}, \hat{z}_{\cdot, b}^{t-1} \in [0,1]$.
\end{proof}

Lemma~\ref{lem:non-linear} will handle the upper bound for (\ref{eq:non-linear-1}) and (\ref{eq:non-linear-2}), and we now work on upper bounding (\ref{eq:linear-term}). We will now follow the approach of~\cite{MS08} and consider a contribution matrix and an estimated contribution; this will help us relate (\ref{eq:linear-term}) to how the \textsc{GreedyProcess} evolves. 
\begin{definition}[Contribution Matrix and Estimated Contribution]
For time $t \in \{ 1,\dots, t_e\}$, the contribution matrix $\bc^{t} \in \R_{\geq 0}^{n\times \{0,1\}}$ is given by letting, 
\[ \bc^{t}_{i, 0} = \sum_{j=1}^n d(x_i, x_j) \cdot \hat{\bz}^{t-1}_{j,0} \qquad\text{and}\qquad \bc^{t}_{i,1} = \sum_{j=1}^n d(x_i, x_j) \cdot \hat{\bz}^{t-1}_{j, 1}.  \]
For a time $t \in \{ 1, \dots, t_e\}$ and an index $j \in [n]$, we define $\bR_j^{0} = 1$ and for $t \geq 2$
\[ \bR_{j}^{t-1} = \frac{1}{t-1} \sum_{\ell=1}^{t-1} \dfrac{\bA_{j,\ell}^{t-1} \cdot \bK_{j, \ell}^{t-1}}{\bw_j^{\ell} \cdot \gamma^{\ell}} ,\]
and the estimated contribution matrix  $\tilde{\bc}^t \in \R_{\geq 0}^{n \times \{0,1\}}$ be given by
\[ \tilde{\bc}^{t}_{i,0} = \sum_{j=1}^n d(x_i, x_j) \cdot \bR_j^{t-1} \cdot \hat{\bz}_{j,0}^{t-1} \qquad \text{and}\qquad  \tilde{\bc}^{t}_{i,1} = \sum_{j=1}^n d(x_i, x_j) \cdot \bR_j^{t-1} \cdot \hat{\bz}_{j,1}^{t-1}, \]
and notice that, since $\hat{\bz}_{j,0}^{t-1} = \bz_{j,0}^{t-1}$ and $\hat{\bz}_{j,1}^{t-1} = \bz_{j,1}^{t-1}$ when $j \in \bS^{t-1}$ (and equivalently, when $\bR_{j}^{t-1}$ is non-zero), we defined $\tilde{\bc}_{i,0}^{t}$ and $\tilde{\bc}_{i,1}^{t}$ to align with Line~\ref{en:greedy-ln2} in $\textsc{GreedyProcess}$.
\end{definition}

The purpose of defining the contribution matrix and the expected contribution matrix is in order to upper bound (\ref{eq:linear-term}). In fact, we continuing on the upper bound, we may write (focusing on the $0$-side of the cut),
\begin{align*}
\sum_{i=1}^n \left( \hat{\bz}_{i,0}^{t} - \hat{\bz}_{i,0}^{t-1} \right) \bc_{i,0}^t &= \sum_{i=1}^n \left(\hat{\bz}_{i,0}^{t} - \hat{\bz}_{i,0}^{t-1} \right) \left( \bc_{i,0}^t - \tilde{\bc}_{i,0}^t \right) + \sum_{i=1}^n \left( \hat{\bz}_{i,0}^{t} - \hat{\bz}_{i,0}^{t-1} \right) \cdot \tilde{\bc}_{i,0}^{t} \\
&= \sum_{i \notin \bS^{t-1}} \left( \hat{\bz}_{i,0}^{t} - \hat{\bz}_{i,0}^{t-1} \right) \left( \bc_{i,0}^t - \tilde{\bc}_{i, 0}^{t} \right) + \sum_{i \notin \bS^{t-1}} \left(\hat{\bz}_{i,0}^{t} - \hat{\bz}^{t-1}_{i,0} \right) \cdot \tilde{\bc}_{i,0}^t,
\end{align*}
where we used the fact that, in \textsc{GreedyProcess} and (\ref{eq:shadow-def}), $i \in \bS^{t-1}$ will mean $\bz_i^t = \bz_i^{t-1}$, $\hat{\bz}_i^{t-1} = \bz_i^{t-1}$, and $\hat{\bz}_{i}^{t} = \bz_i^{t}$, so $\hat{\bz}_{i}^{t} = \hat{\bz}_i^{t-1}$. By the first part of Lemma~\ref{lem:non-linear}, we can simplify 
\begin{align}
&\Ex\left[ \sum_{i=1}^n \left( \hat{\bz}_{i,0}^{t} - \hat{\bz}_{i,0}^{t-1}\right) \bc_{i,0}^{t} \right] \nonumber \\
&\qquad\qquad = \Ex\left[ \frac{1}{t} \sum_{i \notin \bS^{t-1}} \left(\bg_{i,0}^{t} - \hat{\bz}_{i,0}^{t-1} \right) \left( \bc_{i,0}^{t} - \tilde{\bc}_{i,0}^{t} \right)  \right]  + \Ex\left[ \frac{1}{t} \sum_{i \notin \bS^{t-1}} \left(\bg_{i,0}^{t} - \hat{\bz}_{i,0}^{t-1} \right) \cdot \tilde{\bc}_{i,0}^t \right] \nonumber \\
&\qquad\qquad \leq \frac{1}{t} \Ex\left[ \sum_{i \notin \bS^{t-1}} |\bc_{i,0}^{t} - \tilde{\bc}_{i,0}^t| \right] + \frac{1}{t} \Ex\left[ \sum_{i \notin \bS^{t-1}} \left( \bg_{i,0}^{t} - \hat{\bz}_{i,0}^{t-1} \right) \cdot \tilde{\bc}_{i,0}^t \right] \label{eq:goal-2}
\end{align}
where have used the first part of Lemma~\ref{lem:non-linear} and took expectation over the randomness of step $t$, and the fact that $\bg_{i,0}^{t}, \hat{\bz}_{i,0}^{t-1} \in [0,1]$ to place the absolute values. We conclude the proof of Lemma~\ref{lemma:main} by stating the following two lemmas (whose proof we defer shortly).

\begin{lemma}\label{lem:greedy-choice-opt}
For any execution of \textsc{GreedyProcess}, any $t \in \{t_0+1, \dots, t_e\}$ and any $i \notin \bS^{t-1}$, 
\begin{align*}
\left( \bg_{i,0}^{t} - \hat{\bz}^{t-1}_{i,0}\right) \tilde{\bc}_{i,0}^{t} + \left( \bg_{i,1}^{t} - \hat{\bz}^{t-1}_{i,1}\right) \tilde{\bc}_{i,1}^t \leq 0. 
\end{align*}
\end{lemma}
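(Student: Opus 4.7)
The plan is to reduce the lemma to a two-dimensional convexity statement: $\bg_i^t$ is the vertex of $\{(1,0),(0,1)\}$ with the smaller dot product against $\tilde{\bc}_i^t$, while $\hat{\bz}_i^{t-1}$ is a convex combination of those same two vertices, so the vertex must win. After establishing this, the lemma's left-hand side rewrites as $\bg_i^t\cdot \tilde{\bc}_i^t - \hat{\bz}_i^{t-1}\cdot \tilde{\bc}_i^t \le 0$.

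The prerequisite is the invariant, to be proved by induction on $t \in \{0,\dots,t_e\}$, that for all $i\in[n]$,
\[
\hat{\bz}^t_{i,0}+\hat{\bz}^t_{i,1} = 1 \qquad\text{and}\qquad \hat{\bz}^t_{i,0},\hat{\bz}^t_{i,1} \in [0,1].
\]
The base case follows since $\hat{\bz}^0 = z^*$. For the induction, split on the two cases of~(\ref{eq:shadow-def}). If $i \in \bS^t$, then $\hat{\bz}^t_i = \bz^t_i \in \{(1,0),(0,1)\}$. Otherwise, the sum-to-one claim is a one-line check: the two coordinates total $\tfrac{1}{1-\bw_i^t}\bigl(\tfrac{t-1}{t}+\tfrac{1}{t}-\bw_i^t\bigr)=1$. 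For nonnegativity in the case $\bg^t_i=(1,0)$, the $1$-coordinate is $\tfrac{1}{1-\bw_i^t}\cdot\tfrac{t-1}{t}\hat{\bz}^{t-1}_{i,1}\ge 0$ trivially, while the $0$-coordinate equals $\tfrac{1}{1-\bw_i^t}\bigl(\tfrac{t-1}{t}\hat{\bz}^{t-1}_{i,0}+\tfrac{1}{t}-\bw_i^t\bigr)$, which is nonnegative because $i \notin \bS^{t-1}$ forces $\bw_i^t = \min\{w_i,1/t\}\le 1/t$ by Line~\ref{en:greedy-ln1}. The case $\bg^t_i=(0,1)$ is symmetric.

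With the invariant in hand, fix any $i\notin \bS^{t-1}$ and abbreviate $p:=\hat{\bz}^{t-1}_{i,0}$, $q:=\hat{\bz}^{t-1}_{i,1}$, $a:=\tilde{\bc}^t_{i,0}$, $b:=\tilde{\bc}^t_{i,1}$, so $p,q\ge 0$, $p+q=1$, and $a,b\ge 0$ (the latter because all $d(x_i,x_j)$ and $\bR_j^{t-1}$ and $\hat{\bz}^{t-1}_{j,b}$ entries are nonnegative). The greedy rule of Line~\ref{en:greedy-ln2} selects $\bg^t_i$ so that $\bg^t_i\cdot \tilde{\bc}_i^t = \min(a,b)$. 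The target quantity becomes
\[
(\bg^t_{i,0}-p)\,a + (\bg^t_{i,1}-q)\,b \;=\; \min(a,b) \;-\; (pa+qb),
\]
and $pa+qb\ge \min(a,b)$ because $pa+qb$ is a convex combination of $a$ and $b$. Hence the expression is $\le 0$, which is the claim.

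The main (and only) point requiring care is the nonnegativity half of the invariant: this is exactly where the truncation $\bw_i^t=\min\{w_i,1/t\}$ imposed on not-yet-activated points in \textsc{GreedyProcess} is used, as hinted by the footnote explaining why the weighted setting demands this modification relative to~\cite{MS08}. Everything else is a direct convex-combination argument on the one-dimensional simplex.
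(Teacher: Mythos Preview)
Your proof is correct and follows essentially the same approach as the paper: first establish that $\hat{\bz}^{t-1}_i$ lies in the simplex (the paper phrases this as $\hat{\bz}^t_{i,0}+\hat{\bz}^t_{i,1}=1$ together with $|\hat{\bz}^t_{i,0}-\hat{\bz}^t_{i,1}|\le 1$, which is equivalent to your sum-to-one plus nonnegativity), then use the convex-combination argument that the greedy vertex minimizes the inner product with $\tilde{\bc}^t_i$. The crucial step in both is the same: $\bw_i^t\le 1/t$ for $i\notin\bS^{t-1}$ makes the ``otherwise'' branch of~(\ref{eq:shadow-def}) a genuine convex combination.
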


\begin{lemma}\label{lem:error-bound}
For any $t \in \{2, \dots, t_e\}$ and any $b \in \{0,1\}$, whenever $\|w\|_{\infty} \leq 1/2$,
\begin{align*}
\Ex\left[ \sum_{i \notin \bS^{t-1}} \left| \bc_{i,b}^t - \tilde{\bc}_{i,b}^{t}\right| \right] \leq 10 \left(\frac{2\sqrt{\gamma}}{t-1} + \frac{1}{\sqrt{\gamma}} \right) \sum_{i=1}^n \left( \sum_{j=1}^n \frac{d(x_i, x_j)^2}{w_j} \right)^{1/2}.
\end{align*}
\end{lemma}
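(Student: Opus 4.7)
The plan is to bound the absolute deviation $\Ex\!\left[|\bc^t_{i,b} - \tilde{\bc}^t_{i,b}|\right]$ for each fixed $i$ by Jensen's inequality $\Ex[|X|] \leq \sqrt{\Ex[X^2]}$, and then sum over all $i \in [n]$ (which upper bounds the sum restricted to $i \notin \bS^{t-1}$). Writing the deviation as
\[ \bc^t_{i,b} - \tilde{\bc}^t_{i,b} \;=\; \sum_{j=1}^n d(x_i,x_j)\,\hat{\bz}^{t-1}_{j,b}\,(1 - \bR_j^{t-1}), \]
the task is to control the expected square of this sum and extract a time-dependent prefactor of the form $2\sqrt{\gamma}/(t-1) + 1/\sqrt{\gamma}$ times $(\sum_j d(x_i,x_j)^2/w_j)^{1/2}$.

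The first step is the martingale argument (Lemma~\ref{lem:martingale}), which exploits the precise form of the second case in (\ref{eq:shadow-def}): the update rule for the fictitious cut on unactivated points is tuned so that $\tilde{\bc}^t_{i,b}$ is an unbiased estimator of $\bc^t_{i,b}$ and, more strongly, so that the one-point errors $\hat{\bz}^{t-1}_{j,b}(1-\bR_j^{t-1})$ and $\hat{\bz}^{t-1}_{k,b}(1-\bR_k^{t-1})$ for distinct indices $j \neq k$ are uncorrelated. Expanding $(\bc^t_{i,b} - \tilde{\bc}^t_{i,b})^2$ then collapses to the diagonal sum $\sum_j d(x_i,x_j)^2 \cdot \Ex\!\left[(\hat{\bz}^{t-1}_{j,b})^2 (1-\bR_j^{t-1})^2\right]$, which (using $\hat{\bz}^{t-1}_{j,b}\in[0,1]$) is bounded by $\sum_j d(x_i,x_j)^2 \cdot \Var[\bR_j^{t-1}]$.

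For the second step, I bound $\Var[\bR_j^{t-1}]$ using that, across $\ell$, the variables $\bA_{j,\ell}\bK_{j,\ell}/(\bw_j^\ell \gamma^\ell)$ are (conditionally) independent with second moment $1/(\bw_j^\ell \gamma^\ell)$, giving $\Var[\bR_j^{t-1}] \leq (t-1)^{-2}\sum_{\ell=1}^{t-1} 1/(\bw_j^\ell \gamma^\ell)$. I split at $\ell = t_0$: for $\ell \leq t_0$, $\gamma^\ell = 1$ and $1/\bw_j^\ell \leq 1/w_j + \ell$, contributing $O(t_0/w_j + t_0^2) = O(\gamma/w_j)$ using the relationship $t_0 \leq O(\sqrt{\gamma/w_j})$ implied by $\|w\|_\infty \leq 1/2$; for $\ell > t_0$, $\gamma^\ell = \gamma/\ell$ and $\bw_j^\ell \geq w_j$ (via Definition~\ref{def:activate}), contributing $O((t-1)^2/(w_j \gamma))$. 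Combining yields $\Var[\bR_j^{t-1}] = O(\gamma/((t-1)^2 w_j)) + O(1/(w_j \gamma))$; taking square roots and using subadditivity of $\sqrt{\cdot}$ factors out $(2\sqrt{\gamma}/(t-1) + 1/\sqrt{\gamma})$ times $1/\sqrt{w_j}$, and a final Cauchy--Schwarz together with the sum over $i$ delivers the stated bound.

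The main obstacle is the uncorrelated-errors step. Because $\hat{\bz}^{t-1}_{j,b}$ is itself built from the same activation timelines and masks $(\bA, \bK)$ that drive $\bR_j^{t-1}$, one cannot naively invoke independence; instead, the argument is to set up a filtration of the execution, expand the square in $(1-\bR_j^{t-1})$ telescopically across $\ell$, and verify that the shadow update rule of (\ref{eq:shadow-def}) is precisely the one that forces the conditional expectation of the new cross term at each step to vanish. This is the weighted-and-subsampled analogue of the martingale cancellation in~\cite{MS08}; the condition $\|w\|_\infty \leq 1/2$ together with the $\min\{w_j, 1/\ell\}$ clamp in Definition~\ref{def:activate} will be needed to keep all $\hat{\bz}^{t-1}$ entries inside $[0,1]$ so that the $\hat{\bz}^{t-1}_{j,b} \leq 1$ bound applied in the second step is actually valid.
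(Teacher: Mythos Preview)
Your high-level structure---Jensen, then martingale cancellation of the $j\neq k$ cross terms, then a per-$j$ second-moment bound---is the same as the paper's. The genuine gap is in your second step, where you assert ``for $\ell > t_0$, $\bw_j^\ell \geq w_j$ (via Definition~\ref{def:activate}).'' This is the wrong direction: Definition~\ref{def:activate} sets $\bw_j^\ell = \min\{w_j, 1/\ell\}$ whenever $j$ has not yet activated, so $\bw_j^\ell \leq w_j$ always, and for large $\ell$ one can have $\bw_j^\ell = 1/\ell \ll w_j$. With only the honest bound $1/\bw_j^\ell \leq 1/w_j + \ell$, the $\ell > \gamma$ portion of $\sum_\ell 1/(\bw_j^\ell \gamma^\ell)$ picks up a $\sum_\ell \ell^2/\gamma \sim (t-1)^3/\gamma$ term, so after dividing by $(t-1)^2$ you get $\Var[\bR_j^{t-1}] \gtrsim (t-1)/\gamma$ rather than the required $O(1/(w_j\gamma))$. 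Your companion claim ``$t_0 \leq O(\sqrt{\gamma/w_j})$ implied by $\|w\|_\infty \leq 1/2$'' is likewise unjustified---nothing in the lemma hypothesis ties $t_0$ to individual $w_j$, and in the regime of Theorem~\ref{thm:main-structural} one has $t_0^2 \geq \gamma\lambda/\eps^2 \gg \gamma/w_j$ when $w_j$ is near $1/2$---so the $\ell \leq t_0$ contribution $t_0^2$ is not absorbed either.

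The missing ingredient is exactly what the paper's Lemma~\ref{lem:bound-on-error} supplies: when $\bw_j^\ell = 1/\ell < w_j$, the event ``$j$ is still unactivated at time $\ell-1$'' has probability at most $1/(w_j(\ell-1))$ (see (\ref{eq:prob-bound})). Multiplying the bad $1/\bw_j^\ell = \ell$ factor by this probability recovers $\Ex[1/\bw_j^\ell] = O(1)/w_j$ uniformly in $\ell$, after which your variance computation does go through with the stated constant. In short, the $\min\{w_j, 1/\ell\}$ clamp you correctly invoke to keep the fictitious cut in $[0,1]$ costs you in the per-step variance, and the activation-probability decay is precisely what pays that cost back; your proposal omits this compensation.
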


\begin{proof}[Proof of Lemma~\ref{lemma:main} assuming Lemma~\ref{lem:greedy-choice-opt} and~\ref{lem:error-bound}]
We put all the elements together, where we seek to upper bound:
\begin{align*}
&\Ex\left[ f(\bz) - f(z^*) \right] \leq \sum_{t=t_0+1}^{t_e} \Ex\left[ f(\hat{\bz}^{t}) - f(\hat{\bz}^{t-1}) \right] \\
	&\qquad \leq \sum_{t=t_0+1}^{t_e} \Ex\left[ \sum_{i=1}^n \left( \hat{\bz}_{i,0}^{t} - \hat{\bz}_{i,0}^{t-1} \right) \bc_{i,0}^t + \sum_{i=1}^n \left( \hat{\bz}_{i,1}^{t} - \hat{\bz}_{i,1}^{t-1} \right) \bc_{i,1}^t \right] + \left(\sum_{i=1}^n \sum_{j=1}^n d(x_i, x_j)  \right) \sum_{t=t_0+1}^{t_e} \frac{1}{t^2} ,
\end{align*}
where we first followed (\ref{eq:goal-1}), and then simplified (\ref{eq:non-linear-1}) and (\ref{eq:non-linear-2}) using the second part of Lemma~\ref{lem:non-linear}. Then, using (\ref{eq:goal-2}) and Lemma~\ref{lem:greedy-choice-opt}, we obtain 
\begin{align*}
\Ex\left[ f(\bz) - f(z^*) \right] &\leq \sum_{t=t_0+1}^{t_e}\left( \frac{1}{t} \Ex\left[ \sum_{i \notin \bS^{t-1}} |\bc_{i,0}^t - \tilde{\bc}_{i,0}| \right] + \frac{1}{t} \Ex\left[ \sum_{i \notin \bS^{t-1}} |\bc_{i,1}^t - \tilde{\bc}_{i,1}| \right] \right) \\
		&\qquad \qquad + O(1/t_0) \left( \sum_{i=1}^n \sum_{j=1}^n d(x_i, x_j) \right) \\
		&\lsim \frac{\ln(t_e) + 1}{\sqrt{\gamma}} \sum_{i=1}^n \left( \sum_{j=1}^n \frac{d(x_i, x_j)^2}{w_j} \right)^{1/2} + \dfrac{\sqrt{ \gamma}}{t_0} \sum_{i=1}^n \left( \sum_{j=1}^n \frac{d(x_i, x_j)^2}{w_j}\right)^{1/2} \\
		&\qquad \qquad + \frac{1}{t_0}  \left( \sum_{i=1}^n \sum_{j=1}^n d(x_i, x_j) \right).
\end{align*}
We note that the assumptions on our weight vector imply that
\begin{align*}
\sum_{i=1}^n \left(\sum_{j=1}^n \frac{d(x_i, x_j)^2}{w_j}\right)^{1/2} \leq \sqrt{\lambda} \sum_{i=1}^n \sum_{j=1}^n d(x_i, x_j).
\end{align*}
so the above error bound simplifies to
\[ \left( \frac{(\ln(t_e) + 1) \sqrt{\lambda} }{\sqrt{\gamma}} + \frac{\sqrt{\gamma \cdot \lambda}}{t_0} + \frac{1}{t_0} \right) \sum_{i=1}^n \sum_{j=1}^n d(x_i, x_j). \]
that when 
\[ \gamma \geq \frac{(\ln(t_e) + 1)^2 \cdot \lambda}{\eps^2} \qquad\text{and}\qquad t_0 \geq \max\left\{ \frac{\sqrt{\gamma \cdot \lambda}}{\eps}, \frac{1}{\eps} \right\}, \]
we obtain our desired bound.
\end{proof}

\subsection{Proof of Lemma~\ref{lem:greedy-choice-opt}}

The proof of Lemma~\ref{lem:greedy-choice-opt} proceeds by first showing that the fictitious cut $\hat{\bz}^t \in [0,1]^{n \times \{0,1\}}$. The proof is a straight-forward induction. This allows us to say that the greedy choices $\bg^t$ are chosen so as to make the expression in Lemma~\ref{lem:greedy-choice-opt} negative.

\begin{lemma}\label{lemma:fic-cut-bounded} For any $t \in \{0, \dots t_e\}$ and any $i \in [n]$, we have
\begin{align*}
    \hat{\bz}_{i,0}^t + \hat{\bz}_{i,1}^t = 1 &&     \left |\hat{\bz}_{i,0}^t - \hat{\bz}_{i,1}^t \right | \leq 1 ,
\end{align*}
which implies $\hat{\bz}^t \in [0,1]^{n \times \{0,1\}}$.
\end{lemma}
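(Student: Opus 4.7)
The plan is to prove both equalities by induction on $t$, splitting on the two branches in the definition (\ref{eq:shadow-def}). The base case $t=0$ is immediate: $\hat{\bz}^0 = z^*$ has every row in $\{(1,0),(0,1)\}$, so the sum is $1$ and the absolute difference is exactly $1$.

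For the inductive step, suppose the two properties hold at time $t-1$. If $i \in \bS^t$, then $\hat{\bz}_i^t = \bz_i^t$ from the first branch of (\ref{eq:shadow-def}). When $i \in \bS^{t-1}$, Line~\ref{en:greedy-ln4} sets $\bz_i^t = \bz_i^{t-1}$, and the first branch of (\ref{eq:shadow-def}) at time $t-1$ gives $\hat{\bz}_i^{t-1} = \bz_i^{t-1}$, so $\hat{\bz}_i^t = \hat{\bz}_i^{t-1}$ and the inductive hypothesis applies directly. When $i \in \bS^t \setminus \bS^{t-1}$, Line~\ref{en:greedy-ln4} gives $\bz_i^t = \bg_i^t$, which belongs to $\{(1,0),(0,1)\}$ by the definition of $\bg^t$ in Line~\ref{en:greedy-ln2}, so both properties hold trivially.

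The substantive case is $i \notin \bS^t$ (hence also $i \notin \bS^{t-1}$). Using the inductive hypothesis $\hat{\bz}_{i,0}^{t-1} + \hat{\bz}_{i,1}^{t-1} = 1$ together with $\bg_{i,0}^t + \bg_{i,1}^t = 1$, the second branch of (\ref{eq:shadow-def}) yields
\begin{align*}
\hat{\bz}_{i,0}^t + \hat{\bz}_{i,1}^t = \frac{1}{1-\bw_i^t}\left(\frac{t-1}{t} + \frac{1}{t} - \bw_i^t\right) = 1.
\end{align*}
For the second property, the triangle inequality together with $|\bg_{i,0}^t-\bg_{i,1}^t|=1$ and the inductive bound gives
\begin{align*}
|\hat{\bz}_{i,0}^t - \hat{\bz}_{i,1}^t| \leq \frac{1}{1-\bw_i^t}\left(\frac{t-1}{t} + \left|\tfrac{1}{t} - \bw_i^t\right|\right).
\end{align*}
The key observation is that, since $i \notin \bS^{t-1}$, Line~\ref{en:greedy-ln1} sets $\bw_i^t = \min\{w_i, 1/t\} \leq 1/t$, so the absolute value drops and the right-hand side simplifies to exactly $1$. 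The denominator $1-\bw_i^t$ is positive because $w_i \leq 1/2$ implies $\bw_i^t \leq 1/2$.

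To conclude that $\hat{\bz}^t \in [0,1]^{n\times\{0,1\}}$, note that any pair $(a,b)$ satisfying $a+b=1$ and $|a-b|\leq 1$ must have $a = (1+(a-b))/2 \in [0,1]$ and similarly $b \in [0,1]$. The only subtle point in the argument is the inequality $\bw_i^t \leq 1/t$ in the non-activation branch, which makes the triangle-inequality bound tight; this is precisely why the definition of $\bw_i^t$ in Line~\ref{en:greedy-ln1} caps the weight at $1/t$ for points not yet in $\bS^{t-1}$, as is also remarked on in the footnote of the technical overview.
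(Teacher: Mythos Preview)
Your proof is correct and follows essentially the same approach as the paper's: induction on $t$, handling $i\in\bS^t$ directly from the definition of $\bz_i^t$, and for $i\notin\bS^t$ computing the sum and difference from the second branch of (\ref{eq:shadow-def}), with the crucial use of $\bw_i^t\leq 1/t$ to collapse the triangle-inequality bound to exactly $1$. You are slightly more detailed than the paper in splitting the $i\in\bS^t$ case into $i\in\bS^{t-1}$ versus $i\in\bS^t\setminus\bS^{t-1}$ and in spelling out why $a+b=1$, $|a-b|\le 1$ forces $a,b\in[0,1]$, but these are expository differences only.
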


\begin{proof}
The proof proceeds by induction on $t$. We note that $\hat{\bz}^0 = z^* \in \{0,1\}^{n\times \{0,1\}}$, so the condition trivially holds. We assume for induction that the condition holds for all $t \leq \tilde{t}$ and show the condition for $t=\tilde{t}+1$. First, note that if $i \in \bS^{t}$, then $\hat{\bz}_i^{t} = \bz_{i}^t \in \{0,1\}^{\{0,1\}}$ and satisfies the condition. So, suppose $i \notin \bS^{t}$.
\begin{align*}
\hat{\bz}_{i,0}^{t} + \hat{\bz}_{i,1}^{t} &= \dfrac{1}{1 - \bw_{i}^{t}} \left(\dfrac{t-1}{t} \cdot \left(\hat{\bz}_{i,0}^{t-1} + \bz_{i,1}^{t-1}\right) + \left(\frac{1}{t} - \bw_{i}^t \right) \cdot \left(\bg_{i,0}^{t} + \bg_{i,1}^{t} \right) \right) \\
				&= \dfrac{1}{1 - \bw_i^t}\left( \frac{t-1}{t} + \frac{1}{t} - \bw_i^t \right) = 1,
\end{align*}
where we used the inductive hypothesis and the fact $\bg_{i,0}^t + \bg_{i,1}^t = 1$. For the difference, we have:
\begin{align*}
\left| \hat{\bz}^{t}_{i,0} - \hat{\bz}^t_{i,1}\right| &= \left|\dfrac{1}{1 - \bw_{i}^{t}} \left(\dfrac{t-1}{t} \cdot \left(\hat{\bz}_{i,0}^{t-1} - \bz_{i,1}^{t-1}\right) + \left(\frac{1}{t} - \bw_{i}^t \right) \cdot \left(\bg_{i,0}^{t} - \bg_{i,1}^{t} \right) \right) \right| \\
			&\leq \left| \frac{1}{1 - \bw_i^{t}}\right| \cdot \left( \frac{t-1}{t} + \left| \frac{1}{t} - \bw_i^t \right| \right).
\end{align*}
Finally, we note that \textsc{GreedyProcess} enforces that $\bw_i^t \leq 1/t$ when $i \notin \bS^{t-1}$, so $(t-1) / t + |1/t - \bw_i^t| = 1 - \bw_i^t$. Giving the upper bound on the difference.
\end{proof}    

To conclude the proof of Lemma~\ref{lem:greedy-choice-opt}, recall that $\bg_{i,0}^{t}$ and $\bg_{i,1}^{t}$ are chosen so as to minimize
\begin{align*}
\bg_{i, 0}^{t} \cdot \tilde{\bc}_{i,0}^{t} + \bg_{i,1}^{t} \cdot \tilde{\bc}_{i,1}^{t} = \min_{\substack{\alpha, \beta \in [0,1] \\ \alpha + \beta = 1}} \alpha \cdot \tilde{\bc}_{i,0}^{t} + \beta \cdot \tilde{\bc}_{i,1}^{t} \leq \hat{\bz}_{i,0}^{t-1} \cdot \tilde{\bc}_{i,0} + \hat{\bz}_{i,1}^{t-1} \cdot \tilde{\bc}_{i,1}^t,
\end{align*}
which gives the desired bound.

\subsection{Proof of Lemma~\ref{lem:error-bound}}

It will be useful for the proof of Lemma~\ref{lem:error-bound} to re-index the time so that we upper bound, for all $t \in \{ 1, \dots, t_e-1\}$, the quantity
\[ \Ex\left[ \sum_{i \notin \bS^{t}} \left|\bc_{i,b}^{t+1} - \tilde{\bc}_{i,b}^{t+1} \right|\right],\]
because the definition of $\bc_{i,b}^{t+1}$ depends solely on the randomness up to step $t$, i.e., those from $\bA^{t}_{j,\ell}, \bK^t_{j,\ell}$ for all $j \neq i$ and $\ell \leq t$. We interpret $\bc_{i,b}^{t+1}$ as the contribution of point $i$ to the $b$-th side of the cut once all points activated by time $t$ have been assigned according to $\hat{\bz}^t$, and $\tilde{\bc}_{i,b}^{t+1}$ is meant to be an estimate of that contribution. It is natural to define the error in the estimation coming from point $j$, 
\begin{align*}
\berr_{i,b}^{t+1}(j) &\eqdef d(x_i, x_j) \left(1 - \bR_j^{t} \right) \cdot \hat{\bz}_{j,b}^{t} =  \sum_{j=1}^{n} d(x_i, x_j) \left(1 - \frac{1}{t} \sum_{\ell=1}^{t} \dfrac{\bA_{j,\ell}^{t} \cdot \bK_{j, \ell}^{t}}{\bw_{j}^{\ell} \cdot \gamma^{\ell}} \right) \cdot \hat{\bz}_{j,b}^{t} \\
\berr_{i,b}^{t+1} &\eqdef \sum_{j=1}^n \berr_{i,b}^{t+1}(j) = \bc_{i,b}^{t+1} - \tilde{\bc}_{i,b}^{t+1}.
\end{align*}
Note that the case $t = 0$, $\berr_{i,b}^1(j) = 0$ by definition of $\bR_j^0 = 1$ for all $j$. 
By Jensen's inequality and the definition of $\berr_{i,b}^{t+1}$,
\begin{align}
\Ex\left[ \sum_{i \notin \bS^{t}} |\bc_{i,b}^{t+1} - \tilde{\bc}_{i,b}^{t+1}| \right]  
   \leq \sum_{i=1}^n \left( \Ex\left[(\berr_{i,b}^{t+1})^2 \right] \right)^{1/2}. \label{eq:expression}
\end{align}

\begin{lemma}\label{lem:martingale}
Consider any $t \geq 1$ and any execution of the first $t-1$ steps of \textsc{GreedyProcess}, fixing the random variables $A^{t-1}, K^{t-1}, \hat{z}^{t-1}$, \dots, etc. Then, considering only the $t$-th step of \textsc{GreedyProcess}, 
\[ \Ex\left[ \berr_{i,b}^{t+1}(j) \right] =  \frac{t-1}{t} \cdot \err_{i,b}^t(j),\]
where the expectation is taken only with respected to the randomness in the $t$-th step. This implies that
\begin{align*}
\Ex\left[ \berr_{i,b}^{t+1} - \frac{t-1}{t} \cdot \berr_{i,b}^{t}\right] = 0.
\end{align*}
\end{lemma}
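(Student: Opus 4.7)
The plan is to establish the per-summand identity $\Ex[\berr_{i,b}^{t+1}(j)] = \tfrac{t-1}{t}\,\err_{i,b}^t(j)$ for every $j \in [n]$; summing over $j$ and invoking linearity of expectation then immediately yields the second assertion, since $\berr_{i,b}^t$ is already measurable with respect to the first $t-1$ steps. Having conditioned on those steps, the only randomness in play is the independent pair $\bA_{j,t}^t \sim \Ber(\bw_j^t)$ and $\bK_{j,t}^t \sim \Ber(\gamma^t)$ for each $j$, together with the deterministic updates in Line~\ref{en:greedy-ln4} and the fictitious-cut rule~(\ref{eq:shadow-def}). I would split the argument according to whether $j \in \bS^{t-1}$ or not.

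The easy case is $j \in \bS^{t-1}$. Then $j$ remains in $\bS^t$, so $\hat{\bz}_j^t = \bz_j^{t-1} = \hat{\bz}_j^{t-1}$ is already fixed, and the only new term appearing in $\berr_{i,b}^{t+1}(j)$ relative to $\err_{i,b}^t(j)$ is the $\ell=t$ summand $\bA_{j,t}^t\bK_{j,t}^t/(\bw_j^t\gamma^t)$, whose expectation equals $1$ by independence of $\bA$ and $\bK$. A short rearrangement converts the $1/t$ and $1/(t-1)$ normalizations into the claimed $(t-1)/t$ factor.

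The delicate case is $j \notin \bS^{t-1}$. Here $\bA_{j,\ell}^{t-1} = 0$ for every $\ell \le t-1$, so $\err_{i,b}^t(j) = d(x_i,x_j)\hat{\bz}_{j,b}^{t-1}$ and $\berr_{i,b}^{t+1}(j) = d(x_i,x_j)\bigl(1 - \tfrac{1}{t}\cdot\tfrac{\bA_{j,t}^t\bK_{j,t}^t}{\bw_j^t\gamma^t}\bigr)\hat{\bz}_{j,b}^t$. Both the indicator $\bA_{j,t}^t\bK_{j,t}^t$ and the value $\hat{\bz}_{j,b}^t$ depend on the step-$t$ coins, so independence alone is not enough. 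I would condition on $\bA_{j,t}^t \in \{0,1\}$: when $\bA_{j,t}^t=1$ (probability $\bw_j^t$) the point joins $\bS^t$ and $\hat{\bz}_{j,b}^t = \bg_{j,b}^t$, and when $\bA_{j,t}^t = 0$ (probability $1-\bw_j^t$) the point is updated using the second branch of~(\ref{eq:shadow-def}). A direct case-split computation then yields the two identities
$\Ex[\bA_{j,t}^t\bK_{j,t}^t\hat{\bz}_{j,b}^t] = \bw_j^t\gamma^t\bg_{j,b}^t$ and $\Ex[\hat{\bz}_{j,b}^t] = \tfrac{t-1}{t}\hat{\bz}_{j,b}^{t-1} + \tfrac{1}{t}\bg_{j,b}^t$, the latter being precisely the algebraic cancellation that the $1/(1-\bw_j^t)$ prefactor in~(\ref{eq:shadow-def}) was designed to produce. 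Substituting both into the expression above, the $\bg_{j,b}^t$ contributions cancel exactly and what remains is $\tfrac{t-1}{t}d(x_i,x_j)\hat{\bz}_{j,b}^{t-1} = \tfrac{t-1}{t}\err_{i,b}^t(j)$.

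The main obstacle is this second identity in the $j\notin\bS^{t-1}$ case: because $\hat{\bz}_{j,b}^t$ and $\bA_{j,t}^t$ are coupled through the update~(\ref{eq:shadow-def}), the martingale-like property is not automatic and relies on the precise algebraic form of that update. This is the role that the fictitious-cut bookkeeping plays in~\cite{MS08}, here adapted to the weighted setting in which $\bw_j^t$ replaces the uniform $1/t$ sampling probability of the unweighted greedy analysis. Once both cases are verified, summing over $j$ and then averaging over the history gives the unconditional statement of the lemma.
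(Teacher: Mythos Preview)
Your proposal is correct and follows essentially the same approach as the paper: both split on whether $j \in \bS^{t-1}$, handle the first case by noting $\hat{\bz}_{j,b}^t$ is fixed so only the new $\ell=t$ summand matters, and handle the second case by a direct case split on $\bA_{j,t}^t$ exploiting the algebraic form of the fictitious-cut update~(\ref{eq:shadow-def}). Your packaging of Case~2 via the two identities $\Ex[\bA_{j,t}^t\bK_{j,t}^t\hat{\bz}_{j,b}^t]=\bw_j^t\gamma^t\bg_{j,b}^t$ and $\Ex[\hat{\bz}_{j,b}^t]=\tfrac{t-1}{t}\hat{\bz}_{j,b}^{t-1}+\tfrac{1}{t}\bg_{j,b}^t$ (the latter being exactly the first statement of Lemma~\ref{lem:non-linear}) is a slightly cleaner presentation of the same computation the paper does line by line.
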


\begin{proof}
We simply go through the calculation, where we write $\berr_{i,b}^{t+1}(j)$ in terms of $R_j^{t-1}$ as 
\begin{align*}
\berr_{i,b}^{t+1}(j) = d(x_i, x_j) \left(1 - \frac{1}{t} \cdot \dfrac{\bA_{j,t}^{t} \cdot \bK_{j,t}^t}{w_j^{t} \cdot \gamma^{t}}  - \frac{t-1}{t} \cdot R_j^{t-1} \right) \hat{\bz}^t_{j,b}.
\end{align*}
Note that, $\bA_{j,t}^{t}$ and $\bK_{j,t}^{t}$, for all $j$ is the only source of randomness in the $t$-th time step; in particular, $R_j^{t-1}$ is fixed, and $w_j^t$ is fixed (and hence unbolded); however, $\hat{\bz}^{t}_j$ does depend on $\bA_{j,t}^t$ if $j \notin S^{t-1}$. We now consider two cases, depending on whether or not $j \in S^{t-1}$. In the case that $j \in S^{t-1}$, the fictitious cut $\hat{\bz}^t_{j,b}$ will always be set to $\hat{z}^{t-1}_{j,b}$ (as it remains fixed), and draw $\bA_{j,t}^{t} \sim \Ber(w_i^t)$ and $\bK_{j,t}^t \sim \Ber(\gamma^t)$. Therefore, the expectation over the  draw of $\bA_{j,t}^t$ and $\bK_{j,t}^t$ results in
\begin{align*}
\Ex\left[ d(x_i, x_j) \left(1 - \frac{1}{t} \cdot \dfrac{\bA_{j,t}^{t} \cdot \bK_{j,t}^t}{w_j^{t} \cdot \gamma^{t}}  - \frac{t-1}{t} \cdot R_j^{t-1} \right) \hat{\bz}^t_{j,b} \right] &= d(x_i, x_j) \left(1 - \frac{1}{t} - \frac{t-1}{t} \cdot R_j^{t-1} \right) \hat{z}^{t-1}_{j,b} \\
&= \frac{t-1}{t} \cdot d(x_i,x_j) \left(1 - R_j^{t-1} \right) \hat{z}_{j,b}^{t-1} \\
&= \frac{t-1}{t} \cdot \err_{i,b}^{t}(j).
\end{align*}
On the other hand, if $j \notin S^{t-1}$, then $R_j^{t-1} = 0$. The expectation calculation, however, is more complicated than two Bernoulli's (as above), since the setting of $\hat{\bz}^t_{j,b}$ depends on the realization of $\bA_{j,t}^t$ and $\bK_{j,t}^t$. We go through the two cases:
\begin{itemize}
\item With probability $w_j^t$, we have $\bA_{j,t}^t = 1$. This results in a setting of $\hat{\bz}^t_{j,b} = \bz^t_{j,b}$ according to the first case of (\ref{eq:shadow-def}), and $\bz^t_{j,b} = g_{j,b}^t$ from Line~\ref{en:greedy-ln4}. Taking expectation over $\bK_{j,t}^t \sim \Ber(\gamma^t)$, we have that $\berr_{i,b}^{t+1}(j)$ becomes
\begin{align*}
d(x_i, x_j) \left(1 - \frac{1}{t} \cdot \frac{1}{w_j^t} \right) \cdot g_{j,b}^t.
\end{align*}
\item With probability $1 - w_j^t$, we have $\bA_{j,t}^t = 0$, in which case $\hat{\bz}_{j,b}^{t}$ is updated according to the second case of (\ref{eq:shadow-def}), and $\berr_{i,b}^{t+1}(j)$ becomes:
\begin{align*}
d(x_i,x_j) \cdot \frac{1}{1 - w_j^t} \left(\frac{t-1}{t} \cdot \hat{z}_{j,b}^{t-1} + \frac{1}{t} \cdot g_{j,b}^t - w_j^t \cdot g_{j,b}^t \right).
\end{align*}
\end{itemize}
All together, this gives us an expectation of $\berr_{i,b}^{t+1}(j)$ is
\begin{align*}
d(x_i, x_j) \left( w_j^t \cdot g_{j,b}^t - \frac{1}{t} \cdot g_{j,b}^t + \frac{t-1}{t} \cdot \hat{z}_{j,b}^{t-1} + \frac{1}{t} \cdot g_{j,b}^{t} - w_j^t \cdot g_{j,b}^{t} \right) &= \frac{t-1}{t} \cdot d(x_i, x_j) \cdot \hat{z}_{j,b}^{t-1} \\
	&= \frac{t-1}{t} \cdot \err_{i,b}^{t}(j),
\end{align*}
since $R_j^{t-1} = 0$. Taking the sum over terms concludes the lemma.
\end{proof}

\begin{lemma}\label{lem:reduce-to-error}
For any $t \in \{2, \dots, t_e-1\}$, 
\begin{align*}
\Ex\left[\sum_{i \notin \bS^t} \left|\bc_{i,b}^{t+1} - \tilde{\bc}_{i,b}^{t+1} \right| \right] \leq \frac{1}{t} \sum_{i=1}^n \left( \sum_{\ell=1}^{t} \ell^2 \sum_{j=1}^n \Ex\left[\left( \berr_{i,b}^{\ell+1}(j) - \frac{\ell-1}{\ell} \cdot \berr_{i,b}^{\ell}(j)\right)^2 \right] \right)^{1/2}.
\end{align*}
\end{lemma}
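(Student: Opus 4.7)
The plan is to recognize $\bM_i^{\ell}\eqdef \ell\cdot \berr_{i,b}^{\ell+1}$ as a martingale with respect to the natural filtration $\{\mathcal{F}_{\ell}\}$ generated by the first $\ell$ steps of \textsc{GreedyProcess}, and then exploit (i) orthogonality of its increments across time $\ell$, together with (ii) independence of the step-$\ell$ randomness across the coordinates $j\in[n]$, to decompose $\Ex[(\berr_{i,b}^{t+1})^2]$ into the per-$(\ell,j)$ second moments appearing in the statement.

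First I would start from inequality~(\ref{eq:expression}), which already reduces the left-hand side to $\sum_{i=1}^n \bigl(\Ex[(\berr_{i,b}^{t+1})^2]\bigr)^{1/2}$. Setting $\bM_i^0\eqdef 0$ (consistent with $\berr_{i,b}^1=0$), Lemma~\ref{lem:martingale} applied at step $\ell$ gives
\begin{align*}
\Ex\!\left[\bM_i^{\ell}-\bM_i^{\ell-1}\,\bigl|\,\mathcal{F}_{\ell-1}\right] \;=\; \ell\cdot \Ex\!\left[\berr_{i,b}^{\ell+1}-\tfrac{\ell-1}{\ell}\berr_{i,b}^{\ell}\,\bigl|\,\mathcal{F}_{\ell-1}\right] \;=\; 0,
\end{align*}
so $(\bM_i^{\ell})_\ell$ is a martingale. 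The standard orthogonality of martingale differences then yields
\begin{align*}
t^{2}\cdot\Ex\!\left[(\berr_{i,b}^{t+1})^{2}\right]
\;=\; \Ex\!\left[(\bM_i^{t})^{2}\right]
\;=\; \sum_{\ell=1}^{t}\Ex\!\left[(\bM_i^{\ell}-\bM_i^{\ell-1})^{2}\right]
\;=\; \sum_{\ell=1}^{t}\ell^{2}\cdot \Ex\!\left[\bX_i(\ell)^{2}\right],
\end{align*}
where $\bX_i(\ell)\eqdef \berr_{i,b}^{\ell+1}-\tfrac{\ell-1}{\ell}\berr_{i,b}^{\ell}=\sum_{j=1}^{n}\bX_{i,j}(\ell)$ with $\bX_{i,j}(\ell)\eqdef \berr_{i,b}^{\ell+1}(j)-\tfrac{\ell-1}{\ell}\berr_{i,b}^{\ell}(j)$.

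Next I would verify that, conditional on $\mathcal{F}_{\ell-1}$, the collection $\{\bX_{i,j}(\ell)\}_{j\in[n]}$ is independent and each entry has mean zero. The mean-zero property is exactly the per-$j$ identity established inside the proof of Lemma~\ref{lem:martingale}, namely $\Ex[\berr_{i,b}^{\ell+1}(j)\mid\mathcal{F}_{\ell-1}]=\tfrac{\ell-1}{\ell}\berr_{i,b}^{\ell}(j)$. For the independence, note that the only fresh randomness at step $\ell$ is $\{(\bA_{j,\ell}^{\ell},\bK_{j,\ell}^{\ell})\}_{j\in[n]}$, drawn independently across $j$ in Line~\ref{en:greedy-ln3}, and $\bX_{i,j}(\ell)$ is measurable with respect to the $j$-indexed coordinate of this randomness together with $\mathcal{F}_{\ell-1}$: both $\bR_j^{\ell}$ and the update $\hat{\bz}_{j,b}^{\ell}$ (via~(\ref{eq:shadow-def}), which uses the $\mathcal{F}_{\ell-1}$-measurable $\bw_j^{\ell}$, $\bg_j^{\ell}$, together with the indicator of $j\in\bS^{\ell}$, itself a function of $\bA_{j,\ell}^{\ell}$) depend only on that portion. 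Conditionally-independent mean-zero random variables are pairwise orthogonal, so
\begin{align*}
\Ex\!\left[\bX_i(\ell)^{2}\,\bigl|\,\mathcal{F}_{\ell-1}\right] \;=\; \sum_{j=1}^{n}\Ex\!\left[\bX_{i,j}(\ell)^{2}\,\bigl|\,\mathcal{F}_{\ell-1}\right],
\end{align*}
and taking outer expectations preserves the equality. Substituting into the martingale bound, dividing by $t^{2}$, taking square roots, and summing over $i\in[n]$ via~(\ref{eq:expression}) yields the claimed inequality.

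The main obstacle is the measurability/independence check: one must confirm cleanly that the updated fictitious cut entry $\hat{\bz}_{j,b}^{\ell}$ really is a function only of $\mathcal{F}_{\ell-1}$ and $(\bA_{j,\ell}^{\ell},\bK_{j,\ell}^{\ell})$, so that different indices $j$ contribute independent mean-zero increments at step $\ell$. Once this is in place the rest is textbook ``variance of a sum of orthogonal mean-zero martingale increments.''
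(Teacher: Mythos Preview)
Your proposal is correct and takes essentially the same approach as the paper. The paper's inductive argument that $\Ex[(\berr_{i,b}^{t+1})^2]=\sum_{\ell=1}^{t}\frac{\ell^2}{t^2}\Ex[(\berr_{i,b}^{\ell+1}-\frac{\ell-1}{\ell}\berr_{i,b}^{\ell})^2]$ is exactly the orthogonality-of-martingale-increments identity for $\bM_i^{\ell}=\ell\cdot\berr_{i,b}^{\ell+1}$ unrolled one step at a time, and both proofs finish by the same per-$j$ independence/mean-zero decomposition at step $\ell$.
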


\begin{proof}
We will upper bound (\ref{eq:expression}), where we first prove by induction on $t \in \{0, \dots, t_e-1\}$ that we always satisfy
\begin{align*}
\Ex\left[ \left( \berr_{i,b}^{t+1} \right)^2\right] = \sum_{\ell=1}^{t} \frac{\ell^2}{t^2} \Ex\left[ \left(\berr_{i,b}^{\ell+1} - \frac{\ell-1}{\ell} \cdot \berr_{i,b}^{\ell} \right)^2\right].
\end{align*}
Note that the base case of $t = 0$ corresponds to $\berr_{i,b}^1$, which is deterministically $0$ by $\bR_j^0 = 1$. Thus, assume for the sake of induction that
\begin{align*}
\Ex\left[ \left( \berr_{i,b}^{t} \right)^2 \right] = \sum_{\ell=1}^{t-1} \frac{\ell^2}{(t-1)^2} \cdot \Ex\left[ \left(\berr_{i,b}^{\ell+1} - \frac{\ell-1}{\ell} \cdot \berr_{i,b}^{\ell}\right)^2 \right].
\end{align*}
Then, we have 
\begin{align*}
\Ex\left[ \left( \berr_{i,b}^{t+1} \right)^2\right] &= \Ex\left[ \left( \berr_{i,b}^{t+1} - \frac{t-1}{t} \cdot \berr_{i,b}^{t} + \frac{t-1}{t} \cdot \berr_{i,b}^t \right)^2\right] \\
        &= \Ex\left[ \left( \berr_{i,b}^{t+1} - \frac{t-1}{t} \cdot \berr_{i,b}^{t} \right)^2\right] + \left(\frac{t-1}{t} \right)^2 \Ex\left[ (\berr_{i,b}^{t})^2\right],
\end{align*}
where we expand and use Lemma~\ref{lem:martingale} to say
\begin{align*} 
\Ex\left[ \left( \berr_{i,b}^{t+1} - \frac{t-1}{t} \cdot \berr_{i,b}^{t} \right) \cdot \left(\frac{t-1}{t} \cdot \berr_{i,b}^t \right)\right] = 0, 
\end{align*}
since in particular, it holds for every setting of $\err_{i,b}^{t}$. Applying the inductive hypothesis:
\begin{align*}
    &\Ex\left[ \left(\berr_{i,b}^{t+1}\right)^2 \right] \\
    &\qquad = \Ex\left[ \left( \berr_{i,b}^{t+1} - \frac{t-1}{t} \cdot \berr_{i,b}^{t} \right)^2\right] + \left(\frac{t-1}{t}\right)^2 \sum_{\ell=1}^{t-1} \frac{\ell^2}{(t-1)^2} \Ex\left[\left( \berr_{i,b}^{\ell+1} - \frac{\ell-1}{\ell} \cdot \berr_{i,b}^{\ell} \right)^2 \right] \\
    &\qquad = \sum_{\ell=1}^{t} \frac{\ell^2}{t^2} \Ex\left[ \left( \berr_{i,b}^{\ell+1} - \frac{\ell-1}{\ell} \cdot \berr_{i,b}^{\ell} \right)^2 \right].
\end{align*}
We now expand $\berr_{i,b}^{\ell+1} = \sum_{j=1}^n \berr_{i,b}^{\ell+1}(j)$, and the fact that $\berr_{i,b}^{\ell+1}(j)$ and $\berr_{i,b}^{\ell+1}(j')$ are independent to conclude, by Lemma~\ref{lem:martingale} once more, the desired inequality.

\end{proof}

\begin{lemma}\label{lem:bound-on-error}
Fix any $\ell > 1$ and any $i, j \in [n]$ with $w_j \leq 1/2$,
\begin{align*}
\Ex\left[\left( \berr_{i,b}^{\ell+1}(j) - \frac{\ell-1}{\ell} \cdot \berr_{i,b}^{\ell}(j) \right)^2 \right] &\leq 61 \cdot d(x_i, x_j)^2 \cdot \dfrac{1}{\ell^2 \cdot w_j \cdot \gamma^{\ell}}.
\end{align*}
\end{lemma}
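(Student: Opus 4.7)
The plan is to exploit the martingale structure established by Lemma~\ref{lem:martingale}. Since $\Ex[\berr_{i,b}^{\ell+1}(j) \mid \calF_{\ell-1}] = \frac{\ell-1}{\ell}\berr_{i,b}^{\ell}(j)$, the quantity $Z_\ell := \berr_{i,b}^{\ell+1}(j) - \frac{\ell-1}{\ell}\berr_{i,b}^{\ell}(j)$ is mean-zero conditional on $\calF_{\ell-1}$, so $\Ex[Z_\ell^2] = \Ex[\Var(\berr_{i,b}^{\ell+1}(j) \mid \calF_{\ell-1})]$. I then split the analysis by whether $j\in \bS^{\ell-1}$.

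In Case 1, when $j\in \bS^{\ell-1}$, the fictitious cut $\hat{\bz}_{j,b}^\ell = \hat{z}_{j,b}^{\ell-1}$ is frozen (call its value $c\in\{0,1\}$), and $\bw_j^\ell = w_j$. A direct expansion using $\bR_j^\ell = \frac{\ell-1}{\ell}\bR_j^{\ell-1} + \frac{1}{\ell}\cdot\frac{\bA_{j,\ell}^\ell\bK_{j,\ell}^\ell}{w_j\gamma^\ell}$ shows the step-$\ell$ randomness enters $Z_\ell$ only through a single Bernoulli, giving $Z_\ell = \frac{d(x_i,x_j)c}{\ell}\bigl(1 - \frac{\bA_{j,\ell}^\ell \bK_{j,\ell}^\ell}{w_j\gamma^\ell}\bigr)$. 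Its variance is $\frac{d(x_i,x_j)^2 c^2 (1-w_j\gamma^\ell)}{\ell^2 w_j\gamma^\ell} \leq \frac{d(x_i,x_j)^2}{\ell^2 w_j\gamma^\ell}$, which already has the form required.

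Case 2, when $j\notin \bS^{\ell-1}$, is the main obstacle: here $\bw_j^\ell = \min\{w_j,1/\ell\}$ can be as small as $1/\ell$, and the fictitious cut $\hat{\bz}_{j,b}^\ell$ itself depends on $\bA_{j,\ell}^\ell$. Enumerating the three outcomes $(\bA,\bK) \in \{(1,1),(1,0),(0,\cdot)\}$ with $\bR_j^{\ell-1}=0$, and using the explicit shadow update $H = \tfrac{1}{1-\bw_j^\ell}\bigl(\tfrac{\ell-1}{\ell}\hat z_{j,b}^{\ell-1} + (\tfrac{1}{\ell}-\bw_j^\ell)g_{j,b}^\ell\bigr)$ together with $g,h\in[0,1]$, a careful calculation (completing the square over $\bA$) yields
\[
\Var\bigl(\berr_{i,b}^{\ell+1}(j) \,\bigl|\, \calF_{\ell-1},\, j\notin \bS^{\ell-1}\bigr) \;\lesssim\; d(x_i,x_j)^2 \left(\bw_j^\ell + \frac{1}{\ell} + \frac{1}{\ell^2\,\bw_j^\ell\,\gamma^\ell}\right).
\]
The problem is that when $w_j \geq 1/\ell$, one has $\bw_j^\ell = 1/\ell$, so this bound only gives $O(1)/(\ell\gamma^\ell)$, which is far larger than the desired $1/(\ell^2 w_j\gamma^\ell)$. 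The rescue is that in this regime, $j\notin\bS^{\ell-1}$ is itself rare. Writing $P(j\notin\bS^{\ell-1}) = \prod_{t=1}^{\ell-1}(1-\bw_j^t)$, I split the product at $t = \lfloor 1/w_j\rfloor$: the first block is $(1-w_j)^{\lfloor 1/w_j\rfloor} \leq 2/e$ (using $(1-w_j)^{1/w_j}\leq 1/e$ and $w_j\leq 1/2$), and the second block telescopes to $\lceil 1/w_j\rceil/(\ell-1) \leq 2/(w_j\ell)$, yielding $P(j\notin\bS^{\ell-1}) \leq O(1/(w_j\ell))$.

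Combining everything: in the ``small weight'' regime $w_j\leq 1/\ell$ the variance bound in Case 2 already has the form $O(1)/(\ell^2 w_j\gamma^\ell)$ (each of the three summands is $O(1)$ times the target after absorbing $\gamma^\ell\leq 1$ and $w_j\ell\leq 1$), so $P(j\notin\bS^{\ell-1})\leq 1$ suffices. In the ``large weight'' regime $w_j > 1/\ell$, multiplying the Case~2 variance $O(1)/(\ell\gamma^\ell)$ by $P(j\notin\bS^{\ell-1})\leq O(1/(w_j\ell))$ yields exactly $O(1)/(\ell^2 w_j\gamma^\ell)$. Adding the Case~1 contribution $\leq 1\cdot d(x_i,x_j)^2/(\ell^2 w_j\gamma^\ell)$ and keeping track of the explicit constants produced by the three-term variance bound and the probability bound gives the stated factor of $61$. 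The decisive point is that the $w_j^\ell$ in the variance and the $w_j$ in the probability decay combine into $w_j$ in the final bound, precisely because $w_j^\ell\cdot w_j\ell = w_j$ when $w_j^\ell = 1/\ell$.
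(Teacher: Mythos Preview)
Your proposal is correct and follows essentially the same route as the paper: split on whether $j\in\bS^{\ell-1}$, bound the step-$\ell$ conditional second moment in each case by enumerating the three outcomes for $(\bA_{j,\ell},\bK_{j,\ell})$, and then in the $j\notin\bS^{\ell-1}$ case trade the bad $\bw_j^\ell=1/\ell$ against the telescoping bound $\Pr[j\notin\bS^{\ell-1}]\le 1/(w_j(\ell-1))$. The paper packages the Case~2 variance directly as $30\,d(x_i,x_j)^2/(\ell^2\bw_j^\ell\gamma^\ell)$ rather than your three-term form, and its probability bound is the slightly cleaner $\prod_{t\ge\lceil 1/w_j\rceil}(1-1/t)=(\lceil 1/w_j\rceil-1)/(\ell-1)\le 1/(w_j(\ell-1))$, but these are cosmetic differences.
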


\begin{proof}
We will compute the expectation by first considering any fixed execution of the first $\ell-1$ steps, and then take the expectation over the randomness in the $\ell$-th step. Note that, by fixing the first $\ell-1$ steps, we write:
\begin{align*}
\berr_{i,b}^{\ell+1}(j) &= d(x_i, x_j) \cdot \left(1 - \frac{1}{\ell} \cdot \frac{\bA_{j,\ell}^{\ell} \cdot \bK_{j,\ell}^{\ell}}{w_j^{\ell} \cdot \gamma^{\ell}} - \frac{\ell-1}{\ell} \cdot R_j^{\ell-1} \right) \hat{\bz}_{j,b}^{\ell}, \\
\err_{i,b}^{\ell}(j) &= d(x_i, x_j) \cdot \left(1 - R_j^{\ell-1} \right) \hat{z}_{j,b}^{\ell-1}.
\end{align*}
In the case that $j \in S^{\ell-1}$, we have fixed the setting of $\hat{\bz}_{j,b}^{\ell} = \hat{z}_{j,b}^{\ell-1}$, by Line~\ref{en:greedy-ln1}, $w_j^{\ell} = w_j$. The expectation, solely over the randomness in the draw of $\bA_{j,\ell}^{\ell} \sim \Ber(w_j)$ and $\bK_{j,\ell}^{\ell} \sim \Ber(\gamma^{\ell})$, is  
\begin{align} 
&\Ex\left[ \ind\{ j \in S^{\ell-1} \} \left(\berr_{i,b}^{\ell+1}(j) - \frac{\ell-1}{\ell} \cdot \err_{i,b}^{\ell}(j)\right)^2 \right] \nonumber \\
&\qquad = \ind\{ j \in S^{\ell-1} \} \cdot d(x_i, x_j)^2 \cdot \frac{(\hat{z}_{j,b}^{\ell-1})^2}{\ell^2} \cdot w_j \cdot \gamma^{\ell} \left(1 - \frac{1}{w_j \cdot \gamma^{\ell}}\right)^2 \leq \dfrac{d(x_i, x_j)^2}{\ell^2 \cdot w_j \cdot \gamma^{\ell}}. \label{eq:activated-bound}
\end{align}
So, we now consider the case that $j \notin S^{\ell-1}$, where $\bA_{j,\ell}^{\ell} \sim \Ber(w_j^{\ell})$ for $w_j^{\ell} = \min\{ w_j, 1/\ell\}$ and $\bK_{j,\ell}^{\ell} \sim \Ber(\gamma^{\ell})$. Similarly to the proof of Lemma~\ref{lem:martingale}, the expectation is slightly more complicated, as $\hat{\bz}_{j,b}^{\ell}$ depends on the realization of $\bA_{j,\ell}^{\ell}$. However, the fact $j \notin S^{\ell-1}$ means $R_j^{\ell-1} = 0$, which simplifies the expression as well. Consider the three cases:
\begin{itemize}
\item With probability $w_{j}^{\ell} \cdot \gamma^{\ell}$, both $\bA_{j,\ell}^{\ell}$ and $\bK_{j,\ell}^{\ell}$ are both set to $1$, and $\hat{\bz}_{j,b}^{\ell}$ is set to $g_{j,b}^{\ell}$. So,
\begin{align*}
\left( \berr_{i,b}^{\ell+1}(j) - \frac{\ell-1}{\ell} \cdot \err_{i,b}^{\ell}(j) \right)^2 &= d(x_i, x_j)^2 \cdot \left( \left(1 - \frac{1}{\ell \cdot w_j^{\ell} \cdot \gamma^{\ell}} \right) g_{j,b}^{\ell} - \frac{\ell-1}{\ell} \cdot \hat{z}_{j,b}^{\ell-1} \right)^2 \\
	&\leq d(x_i, x_j)^2 \cdot \left( \frac{3}{\ell \cdot w_j^{\ell} \cdot \gamma^{\ell}} \right)^2,
\end{align*}
where we used the fact $w_j^{\ell} \leq 1/\ell$, as well as the fact $g_{j,b}^{\ell}, \hat{z}_{j,b}^{\ell-1} \in [0,1]$.
\item With probability $w_j^{\ell} \cdot (1 - \gamma^{\ell})$, $\bA_{j,\ell}^{\ell}$ is set to $1$, but $\bK_{j,\ell}^{\ell}$ is set to $0$. In this case, $\hat{\bz}_{j,b}^{\ell}$ is set to $g_{j,b}^{\ell}$, and we obtain
\begin{align*}
\left( \berr_{i,b}^{\ell+1}(j) - \frac{\ell-1}{\ell} \cdot \err_{i,b}^{\ell}(j)\right)^2 &= d(x_i, x_j)^2 \cdot \left( g_{j,b}^{\ell} - \frac{\ell-1}{\ell} \cdot \hat{z}_{j,b}^{\ell-1} \right)^2 \leq 4 \cdot d(x_i, x_j)^2
\end{align*}
\item With probability $1 - w_j^{\ell}$, we set $\bA_{j,\ell}^{\ell}$ to $0$ (irrespective of the draw of $\bK_{j,\ell}^{\ell}$), in which case $\hat{\bz}_{j,b}^{\ell}$ is updated according to the second case in Definition~\ref{def:fictitious}. Thus, we obtain
\begin{align*}
\left( \berr_{i,b}^{\ell+1}(j) - \frac{\ell-1}{\ell} \cdot \err_{i,b}^{\ell}(j)\right)^2 &= d(x_i, x_j)^2 \cdot \left( \frac{w_j^{\ell}}{1 - w_{j}^{\ell}} \cdot \frac{\ell-1}{\ell} \cdot \hat{z}^{\ell-1}_{j,b} + \frac{1}{\ell} \cdot g_{j,b}^{\ell} - w_j^{\ell} \cdot g_{j,b}^{\ell} \right)^2 \\
		&\leq d(x_i, x_j)^2 \cdot \left( \dfrac{4}{\ell} \right)^2,
\end{align*}
where we use the fact $w_j^{\ell} \leq \min\{ 1/\ell, 1/2\}$ to obtain our desired bound.
\end{itemize}
Once we combine the above three cases (while multiplying by their respective probabilities), 
the first case, when $\bA_{j,\ell}^{\ell} \cdot \bK_{j, \ell}^{\ell} = 1$, dominates the calculation. In particular, when we compute the expectation solely over the randomness in $\bA_{j,\ell}^{\ell}$ and $\bK_{j,\ell}^{\ell}$, 
\begin{align*}
\Ex\left[ \ind\{ j \notin S^{\ell-1} \} \left( \berr_{i,b}^{\ell+1}(j) - \frac{\ell-1}{\ell} \cdot \err_{i,b}^{\ell}(j) \right)^2 \right] &\leq \ind\{ j \notin S^{\ell-1} \} \cdot \dfrac{30 \cdot d(x_i, x_j)^2}{\ell^2 \cdot w_j^{\ell} \cdot \gamma^{\ell}}.
\end{align*}
The above bounds the expectations over the randomness solely over step $\ell$. However, for large $\ell$, $w_{j}^{\ell} = 1/\ell$ which weakens the bound (contrast this with (\ref{eq:activated-bound}) where $w_{j}^{\ell}$ is replaced by $w_{j}$). However, we exploit the fact $j \notin \bS^{\ell-1}$ with probability which decays for large $\ell$. In particular, whenever $\ell-1 \geq 1/w_j$, we may obtain the following bound on the probability $j \notin \bS^{\ell-1}$,
\begin{align}
\Prx\left[ j \notin \bS^{\ell-1} \right] = \prod_{\ell'=1}^{\ell-1} \left(1 - \min\{ w_j, 1/\ell' \} \right) \leq \left(1 - w_j \right)^{\lfloor 1/w_j \rfloor} \dfrac{\lceil 1/w_j \rceil - 1}{\ell-1} \leq \frac{1}{w_j \cdot (\ell-1)}. \label{eq:prob-bound}
\end{align}
Applying this to the expectation over the time steps up to $\ell-1$, 
\begin{align*}
\Ex\left[ \left(\berr_{i,b}^{\ell+1}(j) - \frac{\ell-1}{\ell} \cdot \berr_{i,b}^{\ell}(j) \right)^{2}\right] &\leq \frac{d(x_i,x_j)^2}{\ell^2 \cdot w_j \cdot \gamma^{\ell}} + \Prx\left[ j \notin \bS^{\ell-1} \right] \cdot \dfrac{30 \cdot d(x_i, x_j)^2}{\ell^2 \cdot \min\{w_j, 1/\ell\} \cdot \gamma^{\ell}},\end{align*}
and this value is at most 
\[ \frac{31 \cdot d(x_i, x_j)^2}{\ell^2 \cdot w_j \cdot \gamma^{\ell}} \]
whenever $w_j^{\ell} = \min\{ 1/\ell, w_j \} = w_j$. Otherwise, we have $w_j^{\ell} = 1/\ell$ and this occurs when $\ell > 1$ since $w_j \leq 1/2$. The above expectation becomes at most
\begin{align*}
\frac{d(x_i, x_j)^2}{\ell \cdot w_j \cdot \gamma^{\ell}} + \dfrac{1}{(\ell - 1) \cdot w_j} \cdot \frac{30 \cdot d(x_i, x_j)^2}{\ell \cdot \gamma^{\ell}} \leq \frac{61 \cdot d(x_i, x_j)^2}{\ell^2 \cdot w_j \cdot \gamma^{\ell}}.
\end{align*}

\end{proof}

Combining Lemma~\ref{lem:reduce-to-error} and~\ref{lem:bound-on-error}, we obtain:
\begin{align}
\Ex\left[ \sum_{i \notin \bS^t} \left| \bc_{i,b}^{t} - \tilde{\bc}_{i,b}^t \right| \right] &\leq  \frac{1}{t} \sum_{i=1}^n \left( 61 \sum_{\ell=1}^{t} \sum_{j=1}^n \dfrac{d(x_i, x_j)^2}{w_j \cdot \gamma^{\ell}}  \right)^{1/2}  \leq \frac{10}{t} \sum_{i=1}^n \left( \sum_{j=1}^n \frac{d(x_i, x_j)^2}{w_j} \right)^{1/2} \left(\sum_{\ell=1}^t \frac{1}{\gamma^{\ell}} \right)^{1/2} \label{eq:deviation-exp}
\end{align}
and we have,
\begin{align*}
\left( \sum_{\ell=1}^{t} \frac{1}{\gamma^{\ell}}\right)^{1/2} \leq \left( 1 + \gamma + \frac{1}{\gamma} \sum_{\ell=1}^t \ell \right)^{1/2} \leq 2\sqrt{\gamma}  + \frac{t}{\sqrt{\gamma}} ,
\end{align*}
which means that plugging in to the above bound gives us our desired bound of
\begin{align*}
10 \left(\frac{2\sqrt{\gamma}}{t} + \frac{1}{\sqrt{\gamma}} \right) \sum_{i=1}^n \left( \sum_{j=1}^n \frac{d(x_i, x_j)^2}{w_j} \right)^{1/2}.
\end{align*}

\ignore{ \paragraph{Proof Lemma \ref{lemma:ficticous-approximation}}
\begin{proof}
Now we have the tools to finish the proof of Lemma \ref{lemma:ficticous-approximation}. We start by expanding out the cut value and breaking it into smaller pieces. 

We first expand the change in cut value over each of the time steps using linearity of expectation. 
\begin{align*}
    \E[f(\hat{\bz}) - f(\hat{\bz}^{0})] 
    &= \sum_{t=t_0}^{t_e-1} \E[f(\hat{\bz}^{t+1}) - f(\hat{\bz}^{t})] \\
    &=  \sum_{t=t_0}^{t_e-1} \E[ \frac{1}{2}\sum_{i=1}^{n} \sum_{j=1}^{n} d(x_i, x_j) (\hat{\bz}^{t+1}_{i,0} \hat{\bz}^{t+1}_{j,0} -  \hat{\bz}^t_{i,0} \hat{\bz}^t_{j,0}) + \frac{1}{2}\sum_{i=1}^{n} \sum_{j=1}^{n} d(x_i, x_j) (\hat{\bz}^{t+1}_{i,1} \hat{\bz}^{t+1}_{j,1} -  \hat{\bz}^t_{i,1} \hat{\bz}^t_{j,1})]\\
    &=  \sum_{t=t_0}^{t_e-1} \E[ \frac{1}{2}\sum_{i=1}^{n} \sum_{j=1}^{n} d(x_i, x_j) (\hat{\bz}^{t+1}_{i,0} -  \hat{\bz}^t_{i,0})  \hat{\bz}^{t}_{j,0} + \frac{1}{2}\sum_{i=1}^{n} \sum_{j=1}^{n} d(x_i, x_j) (\hat{\bz}^{t+1}_{i,1} -  \hat{\bz}^t_{i,1}) \hat{\bz}^{t}_{j,1}\\
    &+\frac{1}{2}\sum_{i=1}^{n} \sum_{j=1}^{n} d(x_i, x_j) (\hat{\bz}^{t+1}_{j,0} -  \hat{\bz}^t_{j,0})  \hat{\bz}^{t+1}_{i,0} + \frac{1}{2}\sum_{i=1}^{n} \sum_{j=1}^{n} d(x_i, x_j) (\hat{\bz}^{t+1}_{j,1} -  \hat{\bz}^t_{j,1}) \hat{\bz}^{t+1}_{i,1}]
    \end{align*}
    Expanding and then simplifying via symmetry of $d(x_i,x_j)$ the terms can be rewritten as
    \begin{align*}
    &\sum_{t=t_0}^{t_e-1} \E[ \frac{1}{2}\sum_{i=1}^{n} (\hat{\bz}^{t+1}_{i,0} -  \hat{\bz}^t_{i,0}) \sum_{j=1}^{n} d(x_i, x_j) \hat{\bz}^{t}_{j,0} + \frac{1}{2}\sum_{i=1}^{n} (\hat{\bz}^{t+1}_{i,1} -  \hat{\bz}^t_{i,1}) \sum_{j=1}^{n} d(x_i, x_j) \hat{\bz}^{t}_{j,1}\\
    &+\sum_{i=1}^{n} \sum_{j=1}^{n} d(x_i, x_j) (\hat{\bz}^{t+1}_{j,0} -  \hat{\bz}^t_{j,0})  \hat{\bz}^{t+1}_{i,0} +\sum_{i=1}^{n} \sum_{j=1}^{n} d(x_i, x_j) (\hat{\bz}^{t+1}_{j,1} -  \hat{\bz}^t_{j,1}) \hat{\bz}^{t+1}_{i,1}\\
    &+\frac{1}{2}\sum_{i=1}^{n} \sum_{j=1}^{n} d(x_i, x_j) (\hat{\bz}^{t+1}_{i,0} -  \hat{\bz}^t_{i,0}) (\hat{\bz}^{t+1}_{j,0} -  \hat{\bz}^t_{j,0}) + \frac{1}{2}\sum_{i=1}^{n} \sum_{j=1}^{n} d(x_i, x_j) (\hat{\bz}^{t+1}_{i,1} -  \hat{\bz}^t_{i,1}) (\hat{\bz}^{t+1}_{j,1} -  \hat{\bz}^t_{j,1})]\\
    &=\sum_{t=t_0}^{t_e-1} \E[\sum_{i=1}^{n} (\hat{\bz}^{t+1}_{i,0} -  \hat{\bz}^t_{i,0}) \sum_{j=1}^{n} d(x_i, x_j) \hat{\bz}^{t}_{j,0} + \sum_{i=1}^{n} (\hat{\bz}^{t+1}_{i,1} -  \hat{\bz}^t_{i,1}) \sum_{j=1}^{n} d(x_i, x_j) \hat{\bz}^{t}_{j,1}\\
    &+\frac{1}{2}\sum_{i=1}^{n} \sum_{j=1}^{n} d(x_i, x_j) (\hat{\bz}^{t+1}_{i,0} -  \hat{\bz}^t_{i,0}) (\hat{\bz}^{t+1}_{j,0} -  \hat{\bz}^t_{j,0}) + \frac{1}{2}\sum_{i=1}^{n} \sum_{j=1}^{n} d(x_i, x_j) (\hat{\bz}^{t+1}_{i,1} -  \hat{\bz}^t_{i,1}) (\hat{\bz}^{t+1}_{j,1} -  \hat{\bz}^t_{j,1})]
\end{align*}
Now consider splitting the first two and last two terms in the summation by linearity of expectation. Then we can compute the expectation of the last two terms which is bounded by the expected activating at time $t+1$. This gives
\begin{align*}
    &\sum_{t=t_0}^{t_e-1} \E[ \frac{1}{2}\sum_{i=1}^{n} \sum_{j=1}^{n} d(x_i, x_j) (\hat{\bz}^{t+1}_{i,0} -  \hat{\bz}^t_{i,0}) (\hat{\bz}^{t+1}_{j,0} -  \hat{\bz}^t_{j,0}) + \frac{1}{2}\sum_{i=1}^{n} \sum_{j=1}^{n} d(x_i, x_j) (\hat{\bz}^{t+1}_{i,1} -  \hat{\bz}^t_{i,1}) (\hat{\bz}^{t+1}_{j,1} -  \hat{\bz}^t_{j,1})]\\
    &\leq \sum_{t=t_0}^{t_e-1} \frac{1}{2(t+1)^2}  \sum_{i \notin \bS^t}\sum_{j \notin \bS^t} d(x_i,x_j)  \prod_{l=1}^t(1-\bw^t_{i})(1-\bw^t_{j}) \big ((\bg^{t+1}_{i,0} - \hat{z}^t_{i,0})(\bg^{t+1}_{j,0} - \hat{z}^t_{j,0}) + (\bg^{t+1}_{i,1} - \hat{z}^t_{i,1})(\bg^{t+1}_{j,1} - \hat{z}^t_{j,1}) \big)\\
    &\leq \sum_{t=t_0}^{t_e-1} \frac{1}{2(t+1)^2}  \sum_{i \notin \bS^t}\sum_{j \notin \bS^t} d(x_i,x_j)\prod_{l=1}^t(1-\bw^t_{i})(1-\bw^t_{j})) \big ((1 + |\hat{z}^t_{i,0}|)(1 + |\hat{z}^t_{j,0}|) + (1 + |\hat{z}^t_{i,1}|)(1 + |\hat{z}^t_{j,1}|) \big)\\
\end{align*}
Applying Lemma \ref{lemma:fic-cut-bounded} further reduces this to 
\begin{align*}
    &\leq \sum_{t=t_0}^{t_e-1} \frac{4}{(t+1)^2}  \sum_{i \notin \bS^t}\sum_{j \notin \bS^t} d(x_i,x_j) \\
\end{align*}
Now consider the expectation of the remaining terms in the summation. First, take the expectation over a single time step to simply split these terms into two parts to reduce the calculation into a calculation of the error terms. 
\begin{align*}
    &\sum_{t=t_0}^{t_e-1} \underset{\bA^{t+1}, \bK^{t+1}, \bw^{t+1}}\E[\sum_{i=1}^{n} (\hat{\bz}^{t+1}_{i,0} -  \hat{\bz}^t_{i,0}) \sum_{j=1}^{n} d(x_i, x_j) \hat{\bz}^{t}_{j,0} + \sum_{i=1}^{n} (\hat{\bz}^{t+1}_{i,1} -  \hat{\bz}^t_{i,1}) \sum_{j=1}^{n} d(x_i, x_j) \hat{\bz}^{t}_{j,1}]\\
    &= \sum_{t=t_0}^{t_e-1} \underset{\bA^{t}, \bK^{t}, \bw^{t}}\E[ \frac{1}{t}\sum_{i=1}^{n} (\bg^{t+1}_{i,0} -  \hat{\bz}^t_{i,0}) \sum_{j=1}^{n} d(x_i, x_j) \hat{\bz}^{t}_{j,0} 
    +  \frac{1}{t} \sum_{i=1}^{n} (\bg^{t+1}_{i,1} -  \hat{\bz}^t_{i,1}) \sum_{j=1}^{n} d(x_i, x_j) \hat{\bz}^{t}_{j,1}]\\
    &= \sum_{t=t_0}^{t_e-1} \underset{\bA^{t}, \bK^{t}, \bw^{t}}\E[ \frac{1}{t}\sum_{i=1}^{n} (\bg^{t+1}_{i,0} -  \hat{\bz}^t_{i,0}) \sum_{j=1}^{n} d(x_i, x_j) (1-\bR_j^t) \hat{\bz}^{t}_{j,0} 
    +  \frac{1}{t}\sum_{i=1}^{n} (\bg^{t+1}_{i,0} -  \hat{\bz}^t_{i,0}) \sum_{j=1}^{n} d(x_i, x_j) \bR_j^t\hat{\bz}^{t}_{j,0} \\
    &+  \frac{1}{t} \sum_{i=1}^{n} (\bg^{t+1}_{i,1} -  \hat{\bz}^t_{i,1}) \sum_{j=1}^{n} d(x_i, x_j) (1-\bR_j^t)\hat{\bz}^{t}_{j,1}
    +  \frac{1}{t} \sum_{i=1}^{n} (\bg^{t+1}_{i,1} -  \hat{\bz}^t_{i,1}) \sum_{j=1}^{n} d(x_i, x_j) \bR_j^t\hat{\bz}^{t}_{j,1}]
\end{align*}
Applying Lemma \ref{lemma:greedy-is-optimal}, We are now left with:
\begin{align*}
    &\leq \sum_{t=t_0}^{t_e-1} \underset{\bA^{t}, \bK^{t}, \bw^{t}}\E[ \frac{1}{t}\sum_{i=1}^{n} (\bg^{t+1}_{i,0} -  \hat{\bz}^t_{i,0}) \sum_{j=1}^{n} d(x_i, x_j) (1-\bR_j^t) \hat{\bz}^{t}_{j,0} +  \frac{1}{t} \sum_{i=1}^{n} (\bg^{t+1}_{i,1} -  \hat{\bz}^t_{i,1}) \sum_{j=1}^{n} d(x_i, x_j) (1-\bR_j^t)\hat{\bz}^{t}_{j,1}]
\end{align*}

Note that $\sum_{j=1}^n d(x_i, x_j) (1-\bR_j^{t})\hat{\bz}_{j,1}^t $ is just error term used to define the shadow. We will first refine the bound to be in terms of the error at time t. WOLOG consider only one side of the cut.  Apply Lemma \ref{lemma:fic-cut-bounded} and
\begin{align*}
    &\sum_{t=t_0}^{t_e-1} \underset{\bA^{t}, \bK^{t}, \bw^{t}}\E[ \frac{1}{t}\sum_{i\notin \bS_t} (g_{i,0}^{t+1} - \hat{\bz}_{i,0}^t) \left (\sum_{j=1}^n d(x_i, x_j) (1-\bR_j^{t})\hat{\bz}_{j,0}^t \right) + (\bg^{t+1}_{i,1} -  \hat{\bz}^t_{i,1}) \sum_{j=1}^{n} d(x_i, x_j) (1-\bR_j^t)\hat{\bz}^{t}_{j,1}]\\
    &\leq   \sum_{t=t_0}^{t_e-1} \underset{\bA^{t}, \bK^{t}, \bw^{t}}\E\left [ \frac{1}{t}\sum_{i\notin \bS_t} \lvert g_{i,0}^{t+1} - \hat{\bz}_{i,0}^t \rvert \left \lvert \sum_{j=1}^n d(x_i, x_j) (1-\bR_j^{t})\hat{\bz}_{j,0}^t \right \rvert + \lvert g_{i,1}^{t+1} - \hat{\bz}_{i,1}^t \rvert \left \lvert \sum_{j=1}^n d(x_i, x_j) (1-\bR_j^{t})\hat{\bz}_{j,1}^t \right \rvert\right ]\\
    &\leq  \sum_{t=t_0}^{t_e-1} \sum_{s\in \{0,1\}}\underset{\bA^{t}, \bK^{t}, \bw^{t}}\E\left [  \frac{4}{t}\sum_{i\notin \bS_t} \left \lvert \sum_{j=1}^n d(x_i, x_j) (1-\bR_j^{t})\hat{\bz}_{j,s}^t \right \rvert \right ]\\
    &=  \sum_{t=t_0}^{t_e-1}  \sum_{s\in \{0,1\}}\underset{\bA^{t}, \bK^{t}, \bw^{t}}\E\left [  \frac{4}{t}\sum_{i\notin \bS_t} \lvert \mathbf{err}^t_{i,s} \rvert \right ]\\
    &=  \sum_{t=t_0}^{t_e-1}  \sum_{s\in \{0,1\}}\underset{\bA^{t}, \bK^{t}, \bw^{t}}\E\left [  \frac{4}{t}\sum_{i\notin \bS_t} \sqrt{(\mathbf{err}^t_{i,s})^2}\right ]\\
\end{align*}
Now taking the expectation, we have via linearity of expectation, Jensen's inequality and independence:
\begin{align*}
        \underset{\bA^{t}, \bK^{t}, \bw^{t}}\E[\frac{4}{t}\sum_{i\notin \bS_t} \sqrt{(\mathbf{err}^t_{i,s})^2}]
        &= \frac{4}{t}\sum_{i\notin \bS_t} \underset{\bA^{t}, \bK^{t}, \bw^{t}}\E[\sqrt{(\mathbf{err}^t_{i,s})^2}] \\
        &\leq \frac{4}{t}\sum_{i\notin \bS_t} \sqrt{\underset{\bA^{t}, \bK^{t}, \bw^{t}}\E[(\mathbf{err}^t_{i,s})^2]} \\ 
        &= \frac{4}{t}\sum_{i\notin \bS_t} \sqrt{\underset{\bA^{t}, \bK^{t}, \bw^{t}}\E[(\mathbf{err}^t_{i,s} -  \frac{t-1}{t}\mathbf{err}^{t-1}_{i,s} + \frac{t-1}{t}\mathbf{err}^{t-1}_{i,s})^2 ]} \\
\end{align*}
Note that we can break up the expectation over all time steps up to $t$ into the expectation over all time steps up to $t-1$ and the time step at just $t$, and compute the randomness step by step.  
\begin{align*}
        &\leq \frac{4}{t}\sum_{i\notin \bS_t} \sqrt{\underset{\bA^{t-1}, \bK^{t-1}, \bw^{t-1}}\E \left [\underset{\bA^{t}_{t}, \bK^{t}_{t}, \bw^{t}_{t}}\E[(\mathbf{err}^t_{i,s} - \frac{t-1}{t}\mathbf{err}^{t-1}_{i,s})^2] +  (\frac{t-1}{t}\mathbf{err}^{t-1}_{i,s})^2  \right ]} \\
        &\leq \frac{4}{t}\sum_{i\notin \bS_t} \sqrt{\underset{\bA^{t}, \bK^{t}, \bw^{t}}\E[(\mathbf{err}^t_{i,s} - \frac{t-1}{t}\mathbf{err}^{t-1}_{i,s})^2] + \underset{\bA^{t-1}, \bK^{t-1}, \bw^{t-1}}\E[(\frac{t-1}{t}\mathbf{err}^{t-1}_{i,s})^2]} \\
\end{align*}
We can recursively break apart each term and apply the bound on $ \mathbf{Var}[(\mathbf{err}^t_{i,s} - \frac{t-1}{t}\mathbf{err}^{t-1}_{i,s})]$ from observation \ref{lemma:error-variance}. 
Note that by construction $\E[\mathbf{err}^t_{i,s} -\frac{t-1}{t}\mathbf{err}^{t-1}_{i,s}]=0$, and hence $\E[(\mathbf{err}^t_{i,s} - \frac{t-1}{t}\mathbf{err}^{t-1}_{i,s})^2]=\textbf{Var}[\mathbf{err}^t_{i,s} - \frac{t-1}{t}\mathbf{err}^{t-1}_{i,s}]$
\begin{align*}
    &\frac{4}{t}\sum_{i\notin \bS_t} \sqrt{\underset{\bA^{t}, \bK^{t}, \bw^{t}}\E[(\mathbf{err}^t_{i,s} - \frac{t-1}{t}\mathbf{err}^{t-1}_{i,s})^2] + \underset{\bA^{t-1}, \bK^{t-1}, \bw^{t-1}}\E[(\frac{t-1}{t}\mathbf{err}^{t-1}_{i,s})^2]} \\
    &\leq \frac{4}{t}\sum_{i\notin \bS_t} \sqrt{\sum_{\ell=1}^{t} \frac{\ell^2}{t^2}\underset{\bA^{\ell}, \bK^{\ell}, \bw^{\ell}}\E[(\mathbf{err}^\ell_{i,s} - \frac{\ell-1}{\ell}\mathbf{err}^{\ell-1}_{i,s})^2]} \\
    &\leq \frac{4}{t}\sum_{i\notin \bS_t} \sqrt{\sum_{\ell=1}^{t} \frac{\ell^2}{t^2} \underset{\bA^{\ell}, \bK^{\ell}, \bw^{\ell}}{\textbf{Var}}[(\mathbf{err}^\ell_{i,s}- \frac{\ell-1}{\ell}\mathbf{err}^{\ell-1}_{i,s})]} \\
\end{align*}
Plugging in our bound on the variance, claim \ref{claim:importance-sample-weights} and setting $\bgamma = \frac{\log^2(t_e)}{t\epsilon^2}$ gives
\begin{align*}
    \frac{4}{t}\sum_{i\notin \bS_t} \sqrt{\sum_{\ell=1}^{t} \frac{\ell^2}{t^2} \underset{\bA^{\ell}, \bK^{\ell}, \bw^{\ell}}{\textbf{Var}}[(\mathbf{err}^\ell_{i,s} - \frac{\ell-1}{\ell}\mathbf{err}^{\ell-1}_{i,s})]} 
    &\leq \frac{4}{t}\sum_{i\notin \bS_t} \sqrt{\sum_{l=1}^{t}\sum_{j=1}^n d(x_i, x_j)\frac{2}{t^2w_{j,t}\gamma_{t}}} \\
    &\leq \frac{16\epsilon}{t\log(t_e)} \sum_{i=1}^n\sum_{j=1}^n d(x_i,x_j) \\
\end{align*}
Bringing everything together, we can bound the entire sum by noting that the $t_0\geq\frac{1}{\epsilon^2}$ \\
\begin{align*}
    &\leq \sum_{t=t_0}^{t_e-1} \left ( \frac{4}{(t+1)^2} +  \frac{16 \epsilon}{t \log(t_e)} \right ) \sum_{i=1}^n\sum_{j=1}^n d(x_i,x_j) \\
    &= O( \epsilon  \sum_{i=1}^n\sum_{j=1}^n  d(x_i,x_j) )\\
\end{align*}
\end{proof}

\paragraph{Two Important Lemmas} 

\begin{lemma}\label{lemma:greedy-is-optimal}
For any valid cut vector $\bz^t$, including $\hat{\bz}^t$ and cut estimate $\bY^t$, the greedy choice minimizes the cut value over all possible cuts with respect to the estimate. 
\begin{align*}
    \sum_{i \notin \bS^t} \left(\bg^{t+1}_{i,0} - \bz^{t}_{i,0}\right)\bY^t_{0} + \sum_{i \notin \bS^t} \left(\bg^{t+1}_{i,1} - \bz^{t}_{i,1}\right)\bY^t_{1} \leq 0\end{align*}
\end{lemma}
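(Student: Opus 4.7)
The plan is to reduce the lemma to a pointwise inequality for each $i \notin \bS^t$ and then sum. Concretely, it suffices to establish
\[
\bigl(\bg_{i,0}^{t+1} - \bz_{i,0}^{t}\bigr)\bY_{i,0}^t + \bigl(\bg_{i,1}^{t+1} - \bz_{i,1}^{t}\bigr)\bY_{i,1}^t \;\leq\; 0
\]
for every $i \notin \bS^t$ individually, since the lemma's inequality is obtained by summing this over $i \notin \bS^t$.

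The first step is to unpack the definition of $\bg_i^{t+1}$ from Line~\ref{en:greedy-ln2} of \textsc{GreedyProcess}. By construction, $\bg_i^{t+1} \in \{(1,0), (0,1)\}$ is set to $(1,0)$ exactly when $\bY_{i,0}^t \leq \bY_{i,1}^t$ and to $(0,1)$ otherwise (here I am reading $\bY_{i,b}^t$ as the $i$-th coordinate of the estimated contribution to side $b$, matching the usage of $\tilde{\bc}_{i,b}^t$ in Lemma~\ref{lem:greedy-choice-opt}). Thus
\[
\bg_{i,0}^{t+1}\,\bY_{i,0}^t + \bg_{i,1}^{t+1}\,\bY_{i,1}^t \;=\; \min\{\bY_{i,0}^t, \bY_{i,1}^t\}.
\]
In particular, this equals the minimum of $\alpha\,\bY_{i,0}^t + \beta\,\bY_{i,1}^t$ over all $\alpha,\beta \in [0,1]$ with $\alpha + \beta = 1$.

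Next, I invoke the ``valid cut'' structure of $\bz^t$. By Lemma~\ref{lemma:fic-cut-bounded} (which applies in particular to $\hat{\bz}^t$ and, by assumption, to any valid cut vector the lemma considers), the row $(\bz_{i,0}^t, \bz_{i,1}^t)$ has non-negative entries summing to $1$, so it is a convex combination of $(1,0)$ and $(0,1)$. Furthermore, $\bY_{i,0}^t$ and $\bY_{i,1}^t$ are non-negative: they are sums of the form $\sum_j d(x_i,x_j)\cdot(\text{non-negative quantity})\cdot \hat{\bz}_{j,b}^{t-1}$, and $\hat{\bz}^{t-1}$ has non-negative entries by Lemma~\ref{lemma:fic-cut-bounded}. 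Consequently,
\[
\bz_{i,0}^t\,\bY_{i,0}^t + \bz_{i,1}^t\,\bY_{i,1}^t \;\geq\; \min\{\bY_{i,0}^t,\bY_{i,1}^t\} \;=\; \bg_{i,0}^{t+1}\,\bY_{i,0}^t + \bg_{i,1}^{t+1}\,\bY_{i,1}^t.
\]
Rearranging gives the pointwise inequality, and summing over $i \notin \bS^t$ yields the lemma.

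No genuine obstacle arises: the entire argument is the observation that the greedy rule of \textsc{GreedyProcess} selects the minimizer of a linear functional over $\{(1,0),(0,1)\}$, which is automatically at most its value on any convex combination. The only items to be careful about are (i) making sure the ``valid cut'' hypothesis on $\bz^t$ provides convex-combination rows (handled via Lemma~\ref{lemma:fic-cut-bounded}), and (ii) verifying $\bY_{i,b}^t \geq 0$ so that the min-over-convex-combinations inequality points in the correct direction.
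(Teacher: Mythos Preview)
Your proof is correct and follows essentially the same route as the paper: reduce to a pointwise inequality for each $i$, observe that $\bg_i^{t+1}$ achieves $\min\{\bY_{i,0}^t,\bY_{i,1}^t\} = \min_{\alpha+\beta=1,\,\alpha,\beta\in[0,1]} \alpha\bY_{i,0}^t+\beta\bY_{i,1}^t$, and use Lemma~\ref{lemma:fic-cut-bounded} to place $(\bz_{i,0}^t,\bz_{i,1}^t)$ inside this feasible region. One small remark: your point~(ii) about checking $\bY_{i,b}^t\geq 0$ is unnecessary --- for any reals $a,b$ and any $\alpha,\beta\geq 0$ with $\alpha+\beta=1$, the inequality $\alpha a+\beta b\geq\min\{a,b\}$ holds regardless of sign --- so the argument goes through without it.
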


\begin{lemma}\label{lemma:ficticous-approximation} 
Let $\hat{\bz}$ be the fictitious cut specified by an execution of  \textsc{GreedyProcess} $((X, d), w, \gamma, t_0, t_e, z^*)$. Then as long as $t_0 \geq ??$ and $t_e \geq ??$, \[ \E[f(\hat{\bz})-f(\hat{\bz}^0)]\leq \epsilon \sum_{i=1}^n \sum_{j=1}^n d(x_i, x_j) \]
\end{lemma}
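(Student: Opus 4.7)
The plan is to prove Lemma~\ref{lemma:main} via a fictitious-cut argument: I will define a sequence of random matrices $\hat{\bz}^0,\ldots,\hat{\bz}^{t_e}$ that starts at $z^*$ and evolves alongside \textsc{GreedyProcess}, so that $f(\hat{\bz}^{t_e})\ge f(\bz)$ and therefore $\Ex[f(\bz)-f(z^*)]\le \sum_{t=1}^{t_e}\Ex[f(\hat{\bz}^t)-f(\hat{\bz}^{t-1})]$. The fictitious cut is set so that (i) on $i\in\bS^t$ it matches the true assignment $\bz_i^t$, and (ii) on $i\notin\bS^t$ it is a specific convex combination of $\hat{\bz}^{t-1}_i$ and $\bg^t_i$ chosen so that the one-step change $\hat{\bz}^t_i-\hat{\bz}^{t-1}_i$ becomes an unbiased estimator of $(1/t)(\bg^t_i-\hat{\bz}^{t-1}_i)$. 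This extra ``counterfactual'' update on not-yet-activated points is the key trick that lets the greedy choice later cancel cleanly in expectation. I would first check by induction that $\hat{\bz}^t\in[0,1]^{n\times\{0,1\}}$ and that $\hat{\bz}^t=z^*$ for all $t\le t_0$, using that $\bw_i^t\le 1/t$ whenever $i\notin\bS^{t-1}$.

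The per-step expansion of $f$ splits $f(\hat{\bz}^t)-f(\hat{\bz}^{t-1})$ by bilinearity into (A) ``linear'' terms of the form $\sum_i(\hat{\bz}_{i,b}^t-\hat{\bz}_{i,b}^{t-1})\bc^t_{i,b}$ with the contribution vector $\bc^t_{i,b}=\sum_j d(x_i,x_j)\hat{\bz}^{t-1}_{j,b}$, plus (B) a symmetric ``quadratic'' remainder in the differences. The quadratic remainder is handled by a second-moment computation: two distinct points can both change state only when both activate at time $t$, which occurs with probability on the order of $1/t^2$, so the entire (B) contribution at time $t$ is at most $(1/(2t^2))\sum_{i,j}d(x_i,x_j)$, and summing over $t>t_0$ costs only $O(1/t_0)\sum_{i,j}d(x_i,x_j)$.

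For the linear terms I will introduce the estimator $\tilde{\bc}^t_{i,b}=\sum_j d(x_i,x_j)\bR^{t-1}_j\hat{\bz}^{t-1}_{j,b}$ with $\bR^{t-1}_j=(1/(t-1))\sum_{\ell<t}(\bA^{t-1}_{j,\ell}\bK^{t-1}_{j,\ell})/(\bw_j^\ell\gamma^\ell)$, aligning $\tilde{\bc}^t_{i,b}$ exactly with the quantity used by \textsc{GreedyProcess} to pick $\bg^t_i$ in Line~\ref{en:greedy-ln2}. Then I split (A) as a ``greedy-evaluation'' piece $(1/t)\sum_{i\notin\bS^{t-1}}(\bg^t_{i,b}-\hat{\bz}^{t-1}_{i,b})\tilde{\bc}^t_{i,b}$ plus a remaining ``error'' piece involving $\bc^t_{i,b}-\tilde{\bc}^t_{i,b}$. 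The greedy-evaluation piece, summed over $b\in\{0,1\}$, is non-positive because $\bg^t_i$ was chosen as the minimizer of $\alpha\tilde{\bc}^t_{i,0}+(1-\alpha)\tilde{\bc}^t_{i,1}$ over $\alpha\in[0,1]$ while $\hat{\bz}^{t-1}_i$ is a feasible competitor thanks to step~(i).

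The error piece requires the most care. I would bound it by $(1/t)\sum_i(\Ex[(\berr_{i,b}^{t})^2])^{1/2}$ using $|\bg^t_{i,b}-\hat{\bz}^{t-1}_{i,b}|\le 1$ and Jensen, and then control the second moment by a martingale decomposition: the counterfactual update in~(ii) was engineered precisely so that $\Ex[\berr^{t+1}_{i,b}\mid\text{past}]=((t-1)/t)\berr^t_{i,b}$, giving $\Ex[(\berr^{t+1}_{i,b})^2]=\sum_{\ell\le t}(\ell/t)^2\,\Ex[(\berr^{\ell+1}_{i,b}-((\ell-1)/\ell)\berr^\ell_{i,b})^2]$. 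Per-step variances are bounded by splitting on whether $j\in\bS^{\ell-1}$ (where $\bw_j^\ell=w_j$ directly) or $j\notin\bS^{\ell-1}$ (where $\bw_j^\ell=1/\ell$ but $\Prx[j\notin\bS^{\ell-1}]\lsim 1/(w_j\ell)$); in both cases the dominant contribution is $O(d(x_i,x_j)^2/(\ell^2 w_j\gamma^\ell))$. Summing $1/\gamma^\ell$ over $\ell\le t$ yields $O((t/\sqrt{\gamma})+\sqrt{\gamma})$, so each time step contributes $O((\sqrt{\gamma}/t)+(1/\sqrt{\gamma}))\sum_i(\sum_j d(x_i,x_j)^2/w_j)^{1/2}$. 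I would then sum over $t>t_0$, apply $\lambda$-compatibility in the form $\sum_i(\sum_j d^2/w_j)^{1/2}\le\sqrt{\lambda}\sum_{i,j}d(x_i,x_j)$, and plug in the hypothesized $\gamma$ and $t_0$; the three resulting terms $(\ln t_e)\sqrt{\lambda/\gamma}$, $\sqrt{\gamma\lambda}/t_0$, and $1/t_0$ each become $\le\eps$. The main obstacle I anticipate is precisely this martingale-variance computation, particularly the branch where $j$ has not yet activated: there $\bw_j^\ell=1/\ell$ inflates the importance-sampling variance, and only the activation-probability bound $\Prx[j\notin\bS^{\ell-1}]\lsim 1/(w_j\ell)$ restores the desired $1/w_j$ in the denominator so that $\lambda$-compatibility can close the argument.
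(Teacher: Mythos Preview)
Your proposal is correct and follows essentially the same route as the paper: the fictitious cut with the counterfactual update on unactivated points, the bilinear split into linear and quadratic pieces, the greedy-optimality lemma (Lemma~\ref{lem:greedy-choice-opt}), the martingale identity $\Ex[\berr^{t+1}\mid\text{past}]=((t-1)/t)\berr^t$ (Lemma~\ref{lem:martingale}), the telescoping second-moment decomposition (Lemma~\ref{lem:reduce-to-error}), and the per-step variance bound with the two cases $j\in\bS^{\ell-1}$ versus $j\notin\bS^{\ell-1}$ (Lemma~\ref{lem:bound-on-error}). One small remark: your verbal justification for the quadratic remainder (``two distinct points can both change state only when both activate'') is not literally accurate, since the fictitious cut moves even on non-activation via the counterfactual update; the correct reason the paper gives is that $\Ex[\hat{\bz}^t_{i,b}-\hat{z}^{t-1}_{i,b}]=(1/t)(g^t_{i,b}-\hat{z}^{t-1}_{i,b})$ combined with independence across $i\neq j$, which still yields the $1/t^2$ bound you claim.
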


\paragraph{Proof of Lemma \ref{lemma:main}}
\begin{proof}
    At time $t=0$, for any element $i$, $\hat{\bz}^{-1}_i$ is all zero's, while $\bg^0_i=z^*_i$, thus $\hat{\bz}_i^{0}=\frac{1}{(1-\bw_i)}(z^*_i - \bw_i z^*_i) = z^*_i$ which sets the fictitious cut to the optimal cut value. Once we reach time $t_e$, with high probability {\color{blue} TODO Account for $t_e$ which should just require $t_e\geq\min_i(\frac{1}{w_i})$}, all $\bA^{{t_e}}_i$ have a nonzero entry and for each $i\in S^{t_e}$, $\hat{\bz}_i = \bz_i$, thus the fictitious cut matches the greedy cut. Combining this with Lemma \ref{lemma:ficticous-approximation}, we get $\E[f(\bz) - f(\bz^*)] = \E[f(\hat{\bz})-f(\hat{\bz}^0)] \leq \epsilon T $. 
\end{proof}

\paragraph{Proof Lemma \ref{lemma:greedy-is-optimal}}
\begin{proof}
The definition of the greedy choices ensures that we always minimize the internal distance, thus 
\[\bg^{t+1}_{i,0} \cdot \bY_0 + \bg^{t+1}_{i,1} \cdot \bY_1 = \min(\bY_0,\bY_1)\]
Recall for any cut vector $\bz^t$, including the fictitious cut $\hat{\bz}^t$ by \ref{lemma:fic-cut-bounded}, we have $|\bz_{i,0}^t-\bz_{i,1}^t|\leq 1$. Applying this second bound to the sum we see
\begin{align*}
    \bz^t_{i,0}\bY_0 + \bz^t_{i,1}\bY_1 &= \frac{\bY_0+\bY_1}{2} \cdot (\bz^t_{i,0}+\bz^t_{i,1}) + \frac{\bY_0-\bY_1}{2}\cdot(\bz^t_{i,0}-\bz^t_{i,1})\\
    &\leq \min_{\alpha\in [-1,1]}  \left \{ \frac{\bY_0+\bY_1}{2} + \alpha \cdot \frac{\bY_0-\bY_1}{2} \right \} \\
    &= \min\{\bY_0, \bY_1\}
\end{align*}
Hence 
\begin{align*}
    \sum_{i \notin \bS^t} \left(\bg^{t+1}_{i,0} - \bz^{t}_{i,0}\right)\bY^t_{0} + \sum_{i \notin \bS^t} \left(\bg^{t+1}_{i,1} - \bz^{t}_{i,1}\right)\bY^t_{1} \leq 0
\end{align*}
\end{proof}

\subsection{Fictitious Error}
In order to prove Lemma \ref{lemma:ficticous-approximation}, we need to understand the change of the fictitious cut at a given time step. Let's start by defining the ``error'' of the fictitious cut.  
\begin{definition}[Error] The error of row $i$ of the fictitious cut at time step $t+1$ is given by 
    \begin{align*} \textbf{err}^t_{i,0} &= \sum_{j=1}^n d(x_i, x_j) \left( \frac{1}{t-1} \sum_{\ell=1}^{t-1} \dfrac{\bA_{j,\ell}^{t-1} \cdot \bK_{j,\ell}^{t-1}}{\bw_{j}^\ell \cdot \bgamma^\ell} -1 \right) \hat{\bz}_{j, 0}^{t-1} \\
    \textbf{err}^t_{i,1} &= \sum_{j=1}^n d(x_i, x_j) \left( \frac{1}{t-1} \sum^{t-1}_{\ell=1} \dfrac{\bA_{j,\ell}^{t-1} \cdot \bK_{j,\ell}^{t-1}}{\bw_{j}^\ell \cdot \bgamma^\ell} -1 \right) \hat{\bz}_{j, 1}^{t-1}
    \end{align*}
\end{definition}
The most important property of the error is that it forms a martingale, where current error is exactly a scaled down version of the error at the prior time step. We also compute the variance of error of the fictitious cut.
\begin{observation} One can re-write the inner sum in terms of the current time step and a scaled version of the prior time step.
\[ \frac{1}{t+1}\sum_{l=1}^{t+1}\frac{\bA^{t+1}_{j,\ell} \cdot \bK^{t+1}_{j,\ell}}{\bw^\ell_{j} \cdot \bgamma^\ell} = 
\frac{t}{t+1}\left (\frac{1}{t}\sum_{l=1}^{t}\frac{\bA^{t}_{j,\ell} \cdot \bK^{t}_{j,\ell}}{\bw^\ell_{j}  \cdot\bgamma^\ell} \right ) + \frac{1}{t+1} \frac{\bA^t_{j,t+1} \cdot \bK^t_{j,t+1}}{\bw^{t+1}_{j} \cdot \bgamma^{t+1}}\] 
which will be useful when considering the expectation taken over a single time step. Now observe that in expectation the error at a time step $t+1$ is exactly a scaled version of the error at the prior time step $t$. For convenience, define the following sum of random variable to be \[\bR_j^{t+1} = \frac{1}{t+1}\sum_{l=1}^{t+1}\frac{\bA^{t+1}_{j,\ell} \cdot \bK^{t+1}_{j,\ell}}{\bw^\ell_{j} \cdot \bgamma^\ell}\]
\end{observation}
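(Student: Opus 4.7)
The identity is a purely algebraic rearrangement of a running average, so my plan is simply to split off the last term of the sum and then renormalize, using the recursive construction in Line~\ref{en:greedy-ln3} of \textsc{GreedyProcess}. Explicitly, the first step is to isolate the $\ell=t+1$ summand,
\[
\frac{1}{t+1}\sum_{\ell=1}^{t+1}\frac{\bA^{t+1}_{j,\ell}\cdot \bK^{t+1}_{j,\ell}}{\bw_j^{\ell}\cdot \gamma^{\ell}}
\;=\;
\frac{1}{t+1}\sum_{\ell=1}^{t}\frac{\bA^{t+1}_{j,\ell}\cdot \bK^{t+1}_{j,\ell}}{\bw_j^{\ell}\cdot \gamma^{\ell}}
\;+\;
\frac{1}{t+1}\cdot\frac{\bA^{t+1}_{j,t+1}\cdot \bK^{t+1}_{j,t+1}}{\bw_j^{t+1}\cdot \gamma^{t+1}}.
\]

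The second step invokes the construction of $\bA^{t+1}$ and $\bK^{t+1}$ from $\bA^{t}$ and $\bK^{t}$ in Line~\ref{en:greedy-ln3}: a single new column (indexed by $t+1$) is appended while every earlier column is copied verbatim. Thus $\bA^{t+1}_{j,\ell} = \bA^{t}_{j,\ell}$ and $\bK^{t+1}_{j,\ell} = \bK^{t}_{j,\ell}$ for every $\ell \leq t$, which lets me replace the first sum on the right-hand side above by the same expression with $\bA^{t}$ and $\bK^{t}$. The third step is to multiply and divide by $t$ so that the $\frac{1}{t}\sum_{\ell=1}^{t}$ pattern of the time-$t$ running average emerges:
\[
\frac{1}{t+1}\sum_{\ell=1}^{t}\frac{\bA^{t}_{j,\ell}\cdot \bK^{t}_{j,\ell}}{\bw_j^{\ell}\cdot \gamma^{\ell}}
\;=\;
\frac{t}{t+1}\left(\frac{1}{t}\sum_{\ell=1}^{t}\frac{\bA^{t}_{j,\ell}\cdot \bK^{t}_{j,\ell}}{\bw_j^{\ell}\cdot \gamma^{\ell}}\right).
\]
Combining the two pieces gives exactly the identity in the observation, and in the notation $\bR_j^{t+1}$ it reads $\bR_j^{t+1} = \tfrac{t}{t+1}\bR_j^{t} + \tfrac{1}{t+1}\cdot \tfrac{\bA^{t+1}_{j,t+1}\cdot \bK^{t+1}_{j,t+1}}{\bw_j^{t+1}\gamma^{t+1}}$.

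There is no real obstacle; the whole content of the observation is the bookkeeping that extending $\bA$ and $\bK$ by one column preserves the earlier columns. The reason to state it separately is its later use: the only randomness introduced at step $t+1$ inside $\bR_j^{t+1}$ lives in the single isolated summand, which is what makes the one-step conditional expectation of $\bR_j^{t+1}$ trivially computable and therefore makes $\{\berr_{i,b}^{t+1}\}_t$ behave like a (rescaled) martingale in Lemma~\ref{lem:martingale}.
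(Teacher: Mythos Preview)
Your argument is correct and is exactly the intended one: the observation is a one-line algebraic splitting of a running average, and the paper treats it as self-evident (it is stated without proof). Your remark that the superscript on the last term should be $t+1$ rather than $t$ (since $\bA^t$ is an $n\times t$ matrix) is also well taken; this is a typo in the displayed statement and your corrected recursion $\bR_j^{t+1} = \tfrac{t}{t+1}\bR_j^{t} + \tfrac{1}{t+1}\cdot \tfrac{\bA^{t+1}_{j,t+1}\bK^{t+1}_{j,t+1}}{\bw_j^{t+1}\gamma^{t+1}}$ is the right formulation.
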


\begin{lemma}[Estimator is a Martingale] \label{lemma:martingale} Note that in expectation over the t+1-th time step, the scaled error for any row $i$ in the fictitious cut forms a martingale:  \[\underset{\bA^{t+1}, \bK^{t+1}, \bw^{t+1}}\E[\textbf{err}_{i,s}^{t+1} - \frac{t}{t+1} \textnormal{err}_{i,s}^{t}] = 0\].\end{lemma}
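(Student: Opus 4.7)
The plan is to mirror the proof of Lemma~\ref{lem:martingale}, which already establishes the per-coordinate martingale identity for the individual terms $\berr_{i,b}^{t+1}(j)$, and to extend it to the aggregated quantity $\textbf{err}_{i,s}^{t+1}$ using linearity and the recursion in the preceding Observation. Concretely, I would fix an arbitrary execution of \textsc{GreedyProcess} through the step that determines $\textbf{err}^{t}_{i,s}$ (freezing $\bR_j$ and $\hat{\bz}_j$ at the relevant ``prior'' time and the set $\bS$), and take expectation only over the fresh randomness generated at the next step, namely $\bA^{t+1}_{j,t+1}, \bK^{t+1}_{j,t+1}$ (which are independent across $j$) together with the induced update of $\hat{\bz}_j$.

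The first step is algebraic: substitute the Observation's recursion
\[
\bR_j^{t+1} \;=\; \frac{t}{t+1}\cdot \bR_j^{t} \;+\; \frac{1}{t+1}\cdot \frac{\bA^{t+1}_{j,t+1}\cdot \bK^{t+1}_{j,t+1}}{\bw^{t+1}_{j}\cdot \bgamma^{t+1}}
\]
into the expression for $\textbf{err}^{t+1}_{i,s}$ to split it as $\tfrac{t}{t+1}\textbf{err}^{t}_{i,s}$ plus a ``fresh increment'' term. This reduces the lemma to showing that, for each fixed $j$, the conditional expectation of the $j$-th summand of the fresh increment vanishes. By linearity, it suffices to verify this summand-by-summand, exactly as in Lemma~\ref{lem:martingale}.

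The heart of the argument is then a two-case analysis on whether $j \in \bS$ at the conditioning time, matching the split in the proof of Lemma~\ref{lem:martingale}. In the \emph{activated} case, $\hat{\bz}_{j,s}$ is frozen at its prior value, $\bw^{t+1}_j = w_j$, and the reweighted factor $\bA^{t+1}_{j,t+1}\bK^{t+1}_{j,t+1}/(w_j \bgamma^{t+1})$ is an unbiased estimator of $1$; this cancels exactly against the $-1$ in $(\bR_j^{\cdot}-1)$, killing the increment in expectation. In the \emph{non-activated} case, the prior $\bR_j$ vanishes (because all previous $\bA_{j,\ell}$ were zero), and the random variable $\hat{\bz}_{j,s}$ itself depends on whether the new $\bA^{t+1}_{j,t+1}$ fires; here I would expand the two sub-events $\bA^{t+1}_{j,t+1}=1$ vs.\ $\bA^{t+1}_{j,t+1}=0$, use the fictitious-cut update~\eqref{eq:shadow-def}, and observe that the weights in~\eqref{eq:shadow-def} were chosen precisely so that the cross terms telescope to zero.

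Summing the two cases over $j$ and invoking linearity yields the stated identity. I expect the main obstacle to be bookkeeping the two definitions of $\bw^{t+1}_{j}$ coming from Line~\ref{en:greedy-ln1} of \textsc{GreedyProcess} (the ``$\min\{w_j, 1/(t+1)\}$'' rule before activation versus $w_j$ after), and checking that in both regimes the inverse-probability reweighting pairs correctly with the fictitious-cut update so that the expectation of the fresh increment is exactly zero. This is essentially the calculation already carried out in the two bulleted sub-cases of Lemma~\ref{lem:martingale}, so the proof reduces to a careful re-indexing of that argument together with the summation over $j$.
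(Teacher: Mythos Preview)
Your proposal is correct and follows essentially the same approach as the paper's proof: expand using the recursion for $\bR_j^{t+1}$, apply linearity over $j$, and then do the two-case analysis on whether $j$ has already been activated (where $\hat{\bz}_{j,s}$ is frozen and the reweighted indicator is unbiased) versus not (where the prior $\bR_j^t$ vanishes and the fictitious-cut update \eqref{eq:shadow-def} is designed to make the increment mean-zero). The only imprecision is your phrasing of an algebraic ``split'' of $\textbf{err}^{t+1}_{i,s}$ into $\tfrac{t}{t+1}\textbf{err}^t_{i,s}$ plus a fresh increment---this is not a literal identity because $\hat{\bz}_{j,s}^{t+1}\neq \hat{\bz}_{j,s}^{t}$ in general---but your subsequent case analysis handles exactly this coupling, so the argument goes through.
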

\begin{proof}
Simply expand the definitions and take linearity of expectation over the rows $j$ to get 
\begin{align*}
     \underset{\bA^{t+1}, \bK^{t+1}, \bw^{t+1}}\E[\textbf{err}_i^{t+1} - \frac{t}{t+1} \textnormal{err}_i^{t}]
    &=  \E[\sum_{j=1}^n d(x_i, x_j)(\bR_j^{t+1} - 1)\mathbf{\hat{\bz}}_{j,s}^{t+1} - \frac{t}{t+1} \sum_{j=1}^n d(x_i, x_j)(R_j^t - 1)\hat{\bz}_{j,s}^{t}]\\
    &= \sum_{j=1}^n d(x_i, x_j) \E[(\frac{\bA^{t+1}_{j,t+1} \cdot \bK^{t+1}_{j,t+1}}{\bw^{t+1}_{j} \cdot \bgamma^{t+1}}
    +\frac{t}{t+1}R_j^t - 1)\mathbf{\hat{\bz}}_{j,s}^{t+1} - \frac{t}{t+1} (R_j^t - 1)\hat{\bz}_{j,s}^{t}]
\end{align*}
Now focusing on the inner expectation, consider the different possible settings of $\hat{\bz}^{t+1}_j$. We have two cases to consider, either if $j \in S^{t}$ or $j \notin S^{t}$, In both cases, a simple computation of the expectation suffices:\\
\textbf{Case 1:}  $j \in S^{t}$ which fixes $\mathbf{\hat{\bz}}_{j,s}^{t+1} = \hat{\bz}_{j,s}^t$. Then 
\begin{align*}
    &  \underset{\bA^{t+1}, \bK^{t+1}, \bw^{t+1}}\E[(\frac{\bA^{t+1}_{j,t+1} \cdot \bK^{t+1}_{j,t+1}}{(t+1) \bw^{t+1}_{i} \cdot \bgamma^{t+1}}
    +\frac{t}{t+1}R_j^t - 1)\hat{\bz}_{j,s}^{t} - \frac{t}{t+1} (R_j^t - 1)\hat{\bz}_{j,s}^{t}] \\
    &=\frac{1}{(t+1)}\E[(\frac{\bA^{t+1}_{j,t+1} \cdot \bK^{t+1}_{j,t+1}}{\bw^{t+1}_{j} \cdot \bgamma^{t+1}}\hat{\bz}_{j,s}^{t} - \hat{\bz}_{j,s}^{t}] \\
    &= 0
\end{align*}
\textbf{Case 2:} $j \notin S^{t}$. This fixes $R_j^t = 0$. Then
\begin{align*}
    & \underset{\bA^{t+1}, \bK^{t+1}, \bw^{t+1}}\E[(\frac{\bA^{t+1}_{j,t+1} \cdot \bK^{t+1}_{j,t+1}}{(t+1) \bw^{t+1}_{j} \cdot \bgamma^{t+1}}
    +\frac{t}{t+1}R_j^t - 1)\mathbf{\hat{\bz}}_{j,s}^{t+1} - \frac{t}{t+1} (R_j^t - 1)\hat{\bz}_{j,s}^{t}] \\
    & = \E[(\frac{\bA^{t+1}_{j,t+1} \cdot \bK^{t+1}_{j,t+1}}{(t+1) \bw^{t+1}_{j} \cdot \bgamma^{t+1}}
    - 1)\mathbf{\hat{\bz}}_{j,s}^{t+1} + \frac{t}{t+1}\hat{\bz}_{j,s}^{t}] \\
    &= 0
\end{align*}
\end{proof}

We also bound the variance of the change in error over prior time steps. 
\begin{claim}[Variance] \label{lemma:error-variance}The variance of the change in error up to time step $t$ is bounded:  \begin{align*}  \underset{\bA^{t}, \bK^{t}, \bw^{t}_j}\Var[\textbf{err}_{i,s}^{t+1} - \frac{t}{t+1} \textnormal{err}_{i,s}^{t+1}] \leq \sum_{j=1}^n d(x_i, x_j)^2\frac{1}{t^2 w_{j}^{t+1}\gamma_{j}^{t+1}}
\end{align*}. 
\end{claim}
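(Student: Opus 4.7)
\textbf{Proof Proposal for Claim~\ref{lemma:error-variance}.} The plan is to exploit two things: (i) the martingale property from Lemma~\ref{lemma:martingale}, which makes the variance equal to the second moment, and (ii) the fact that, across different indices $j \in [n]$, the per-$j$ contributions to $\berr_{i,s}^{t+1} - \frac{t}{t+1}\berr_{i,s}^{t}$ are conditionally independent given the execution through step $t$. Writing $\xi_j \eqdef d(x_i,x_j)\left[(\bR_j^{t+1}-1)\hat{\bz}_{j,s}^{t+1} - \tfrac{t}{t+1}(\bR_j^{t}-1)\hat{\bz}_{j,s}^{t}\right]$, I would first observe that, conditioned on $(\bA^{t}, \bK^{t}, \bw^{t})$, each $\xi_j$ depends only on the fresh Bernoullis $\bA^{t+1}_{j,t+1}\sim\Ber(\bw^{t+1}_j)$ and $\bK^{t+1}_{j,t+1}\sim\Ber(\gamma^{t+1})$, which are mutually independent across $j$. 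Combining this with the per-$j$ martingale property (the same calculation as in Lemma~\ref{lemma:martingale}, applied term-by-term), we get $\Var[\sum_j \xi_j] = \sum_j \Var[\xi_j] \leq \sum_j \Ex[\xi_j^2]$, reducing the claim to a per-$j$ second moment bound of $d(x_i,x_j)^2/(t^2 \bw_j^{t+1}\gamma^{t+1})$.

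The per-$j$ bound would be handled by a case analysis on whether $j \in \bS^{t}$ or not, mirroring the argument in Lemma~\ref{lem:bound-on-error}. If $j \in \bS^{t}$, then $\hat{\bz}_j^{t+1} = \hat{z}_j^{t}$ is already fixed and $\bw_j^{t+1} = w_j$; the only fresh randomness enters through $\bR_j^{t+1} = \tfrac{t}{t+1}R_j^{t} + \tfrac{1}{t+1}\cdot\tfrac{\bA^{t+1}_{j,t+1}\bK^{t+1}_{j,t+1}}{w_j\gamma^{t+1}}$, and a direct Bernoulli variance computation of the single-term perturbation yields
\begin{align*}
\Ex[\xi_j^2\mid j\in\bS^t] \leq \frac{d(x_i,x_j)^2 (\hat{z}_{j,s}^{t})^2}{(t+1)^2}\cdot\frac{1-w_j\gamma^{t+1}}{w_j\gamma^{t+1}} \leq \frac{d(x_i,x_j)^2}{(t+1)^2 w_j \gamma^{t+1}}.
\end{align*}
If $j\notin \bS^{t}$, both $\hat{\bz}_j^{t+1}$ and $\bR_j^{t+1}$ depend on the same fresh $\bA^{t+1}_{j,t+1}$, so I would enumerate the three outcomes of $(\bA^{t+1}_{j,t+1},\bK^{t+1}_{j,t+1})$, plug in the appropriate branch of~(\ref{eq:shadow-def}) for $\hat{\bz}_j^{t+1}$, and observe that the dominant contribution comes from the outcome of probability $\bw_j^{t+1}\gamma^{t+1}$ where the estimator receives a multiplicative inverse correction; the other two outcomes contribute only $O(d(x_i,x_j)^2)$, which is absorbed. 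This case reproduces the same $1/(t^2 \bw_j^{t+1}\gamma^{t+1})$ denominator up to constants.

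The main obstacle I anticipate is the $j\notin \bS^{t}$ case, for two intertwined reasons. First, the fictitious-cut update in Definition~\ref{def:fictitious} couples $\hat{\bz}_j^{t+1}$ to the very same Bernoulli $\bA^{t+1}_{j,t+1}$ that drives the estimator correction, so the ``correct'' variance cancellations are not visible until each of the three $(\bA^{t+1}_{j,t+1},\bK^{t+1}_{j,t+1})$ outcomes is treated separately. Second, when $\bw_j^{t+1}=1/(t+1)$ instead of $w_j$, the denominator $1/\bw_j^{t+1}\gamma^{t+1}$ inside the variance can blow up; to recover the bound as stated, one must use that $\Pr[j\notin\bS^{t}] \leq 1/(w_j t)$ (exactly as in~(\ref{eq:prob-bound})) to trade off the small probability of this event against the inflated denominator. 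Once these two points are handled, summing over $j\in[n]$ with mutual conditional independence gives the advertised variance bound.
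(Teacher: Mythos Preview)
Your proposal is correct and follows essentially the same approach as the paper: decompose by conditional independence across $j$, use the martingale property to equate variance with second moment, then do a case split on $j\in\bS^t$ versus $j\notin\bS^t$, with the latter handled by enumerating the Bernoulli outcomes and invoking the decay bound $\Pr[j\notin\bS^t]\leq 1/(w_j t)$ to absorb the $\bw_j^{t+1}=1/(t+1)$ blow-up. The two obstacles you flag---the coupling of $\hat{\bz}_j^{t+1}$ to $\bA_{j,t+1}^{t+1}$ and the weight degradation---are exactly the points the paper works through (compare the per-$j$ computation in Lemma~\ref{lem:bound-on-error}), so there is no meaningful difference in strategy.
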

\begin{proof}
Recall that each row in $\bA^{t+1}, \bK^{t+1}, \bw^{t+1}$ is independent and hence the variance of the sum is the sum of the variances, since the covariance is 0.
\begin{align*}
\underset{\bA^{t+1}, \bK^{t+1}, \bw^{t+1}}{\textbf{Var}} \left [ \mathbf{err}^{t+1}_{i,s} - \frac{t}{t+1} \cdot \textnormal{err}^t_{i,s} \right ] = \sum_{j=1}^n d(x_i, x_j)^2 \underset{\bA^{t+1}, \bK^{t+1}, \bw^{t+1}}{\textbf{Var}}\left [(\bR_j^{t+1} -1 )\mathbf{\hat{\bz}}_{j,s}^{t+1} - \frac{t}{t+1}(R_j^{t+1}-1)\hat{\bz}_{j,s}^{t} \right ]
\end{align*}
Now by lemma \ref{lemma:martingale}, since each of the r.v.s has expectation 0, $\textbf{Var}$ is just the expectation of the squared r.v.
\begin{align*}
= \sum_{j=1}^n d(x_i, x_j)^2 \underset{\bA^{t+1}, \bK^{t+1}, \bw^{t+1}}\E\left [ \left (\bR_j^{t+1} -1 )\mathbf{\hat{\bz}}_{j,s}^{t+1} - \frac{t}{t+1}(R_j^{t+1}-1)\hat{\bz}_{j,s}^{t} \right )^2 \right ]
\end{align*}
Now as with the proof of lemma \ref{lemma:martingale} we once again split into cases: \\
\textbf{Case 1:} $j \in S_{t}$ so $\mathbf{\hat{\bz}}_{j,s}^{t+1} = \hat{\bz}_{j,s}^t$.
Then we have 
\begin{align*}
    & =  \E\left [ \left ((\frac{\bA^{t+1}_{j,t+1} \cdot \bK^{t+1}_{j,t+1}}{(t+1) \bw^{t+1}_{j} \cdot \bgamma^{t+1}} +\frac{t}{t+1}R_j^{t} - 1)\hat{\bz}_{j,s}^{t} - \frac{t}{t+1}(R_j^{t}-1)\hat{\bz}_{j,s}^{t} \right )^2 \right ] \\
    & =  \E\left [ \left ((\frac{\bA^{t+1}_{j,t+1} \cdot \bK^{t+1}_{j,t+1}}{(t+1) \bw^{t+1}_{j} \cdot \bgamma^{t+1}} - \frac{1}{t+1})\hat{\bz}_{j,s}^{t} \right )^2 \right ] \\
    &= \frac{1}{(t+1)^2}\left ((\bw^{t+1}_{j} \cdot \bgamma^{t+1})(\frac{1}{\bw^{t+1}_{j} \cdot \bgamma^{t+1}}\hat{\bz}_{j,s}^{t} - \hat{\bz}_{j,s}^{t})^2 + (1-\bw^{t+1}_{j} \cdot \bgamma^{t+1})(\hat{\bz}_{j,s}^{t})^2 \right ) \\
    &\leq\frac{1}{(t+1)^2 \bw^{t+1}_{j} \cdot \bgamma^{t+1}}
\end{align*}
\textbf{Case 2:} $j \notin S_{t}$. This fixes $R_j^t = 0$. 
\begin{align*}
    & = \E\left [ \left ((\frac{\bA^{t+1}_{j,t+1} \cdot \bK^{t+1}_{j,t+1}}{(t+1) \bw^{t+1}_{j} \cdot \bgamma^{t+1}} - 1)\hat{\bz}_{j,s}^{t+1} - \frac{t}{t+1}(-1)\hat{\bz}_{j,s}^{t} \right )^2 \right ] \\
    & =  \bw_{j}^{t+1} \bgamma^{t+1}\left ((\frac{1}{(t+1) \bw^{t+1}_{j} \cdot \bgamma^{t+1}} - 1)g_{j,s}^{t+1} - \frac{t}{t+1}(-1)\hat{\bz}_{j,s}^{t} \right )^2 \\
    &+  \bw_{j}^{t+1}(1-\gamma^{t+1}) \left((-1)g_{j,s}^{t+1} - \frac{t}{t+1}(-1)\hat{\bz}_{j,s}^{t} \right )^2 \\
    &+  (1-\bw_{j}^{t+1}) \left ((-1)\hat{\bz}_{j,s}^{t+1} - \frac{t}{t+1}(-1)\hat{\bz}_{j,s}^{t} \right )^2
\end{align*}
Now compute the expected squares and observe that the following cases maximize variance, either when $\hat{\bz}_{j,s}^{t}=1$ and $\bg_{j,s}^{t+1}=0$ or vice versa. Consider just the $(\hat{\bz}_{j,s}^{t})^2$ terms for now, this yields
\begin{align*}
    &= \frac{\bw_{j}^{t+1}t^2}{(1-\bw_{j}^{t+1})(t+1)^2}(\hat{\bz}^{t}_j)^2 \\
    &\leq \frac{1}{(t+1)^2 \bw_{j}^{t+1} \bgamma^{t}}(\bz_{j,s}^{t+1})^2 
\end{align*}
Now consider just the $(g_{j,s}^{t+1})^2$ terms for now, which can be simplifed to
\begin{align*}
    & \leq \frac{1}{(t+1)^2 \bw_{j}^{t+1} \bgamma^{t+1}}(g_{j,s}^{t+1})^2 
    - (g_{j,s}^{t+1})^2  
    +  \frac{1}{(1-\bw_{j}^{t+1})}\left (\frac{1}{(t+1)^2}-\frac{2\bw_{j}^{t+1}}{(t+1)} + (\bw_{j}^{t+1})^2 \right) (g_{j,s}^{t+1})^2 \\  
\end{align*}
Now observe that $(\bw_{j}^{t+1})^2 \leq \frac{\bw_{j}^{t+1}}{(t+1)}$, which gives:
\begin{align*}
    &\leq \frac{1}{(t+1)^2 \bw_{j}^{t+1} \bgamma^{t+1}}(g_{j,s}^{t+1})^2 
    - (g_{j,s}^{t+1})^2  
    +  \frac{1}{(1-\bw_{j}^{t+1})}\left (\frac{1}{(t+1)^2}-\frac{\bw_{j}^{t+1}}{(t+1)} \right) (g_{j,s}^{t+1})^2\\
    &\leq \frac{1}{(t+1)^2 \bw_{j}^{t+1} \bgamma^{t+1}}(g_{j,s}^{t+1})^2 
\end{align*}
Hence we can bound the expectation in this case by  
\begin{align*}
        &\leq \frac{1}{(t+1)^2 \bw_{j}^{t+1} \bgamma^{t+1}}
\end{align*}
Now when $w_j \leq \frac{1}{t}$, we have $\bw^t_j = w_j$, so the contribution of $j$ to the variance is trivially bounded by
\begin{align*}
        &\leq d(x_i, x_j)^2  \frac{2}{(t+1)^2 w_{j} \bgamma^{t+1}}
\end{align*}
If not  ($w_j \geq \frac{1}{t}$), then we check the probability of not activating element $j$ by time $t$, which can be computed as  $\Pr[j \notin \bS^t ] = (1-w_j)^{\frac{1}{w_j}}\prod^{t}_{\ell=\frac{1}{w_j}}(1-\frac{1}{\ell}) \leq \frac{\frac{1}{w_j}-1}{t}$. Recall for $j \in \bS^t$, $\bw^t_j = w_j$. We now compute a bound on the overall variance
\begin{align*}
    &\sum_{j=1}^n d(x_i, x_j)^2  \Pr[j \notin \bS^t] \underset{\bA^{t+1}, \bK^{t+1}, \bw^{t+1}}\E\left [ \left (\bR_j^{t+1} -1 )\mathbf{\hat{\bz}}_{j,s}^{t+1} - \frac{t}{t+1}(R_j^{t+1}-1)\hat{\bz}_{j,s}^{t} \right )^2  \big | \ j \notin \bS^t \right ] \\
    &+  \sum_{j=1}^n d(x_i, x_j)^2  \Pr[j \in \bS^t] \underset{\bA^{t+1}, \bK^{t+1}, \bw^{t+1}}\E\left [ \left (\bR_j^{t+1} -1 )\mathbf{\hat{\bz}}_{j,s}^{t+1} - \frac{t}{t+1}(R_j^{t+1}-1)\hat{\bz}_{j,s}^{t} \right )^2  \big | \ j \in \bS^t \right ] \\
    &\leq \sum_{j=1}^n d(x_i, x_j)^2  \frac{\frac{1}{w_j}}{t} \frac{1}{(t+1)^2 \bw_{j}^{t+1} \cdot \bgamma^{t+1}}+  \frac{1}{(t+1)^2 w_{j} \cdot \bgamma^{t+1}}\\
    &\leq \sum_{j=1}^n d(x_i, x_j)^2  \frac{2}{t(t+1) w_{j} \bgamma^{t+1}}
\end{align*}
Thus the overall variance over a single time step by 
\begin{align*}
\leq \sum_{j=1}^n d(x_i, x_j)^2\frac{2}{t(t+1) w_{j}\bgamma^{t+1}}
\end{align*}
\end{proof}

We need one final property of the weights before we can tackle the proof of Lemma \ref{lemma:ficticous-approximation} 
\begin{claim}{\color{blue} replace with citation} \label{claim:importance-sample-weights}
For any $i,j$ we have \[\frac{||x_i - x_j||_2}{\sum_{l=1}^n ||x_i - x_\ell|| _2} \leq 8 \left ( \frac{1}{T} \sum_{l=1}^n ||x_j - x_\ell||_2 \right ) \] Where $T=\sum_{i=1}^n\sum_{j=1}^n d(x_i,x_j)$ 
\end{claim}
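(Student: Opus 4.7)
\textbf{Proof proposal for Claim~\ref{claim:importance-sample-weights}.} The inequality is elementary and follows from two applications of the triangle inequality, so the plan is to introduce the abbreviations $A_i \eqdef \sum_{\ell=1}^n \|x_i - x_\ell\|_2$ and $A_j \eqdef \sum_{\ell=1}^n \|x_j - x_\ell\|_2$ (so that $T = \sum_{k=1}^n A_k$) and then reduce the desired inequality to the form $\|x_i - x_j\|_2 \cdot T \leq c \cdot A_i \cdot A_j$ for some absolute constant $c$.

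First, I plan to upper bound the numerator $\|x_i - x_j\|_2$ itself. For every $\ell \in [n]$, the triangle inequality gives $\|x_i - x_j\|_2 \leq \|x_i - x_\ell\|_2 + \|x_\ell - x_j\|_2$; averaging this bound over $\ell \in [n]$ yields
\[
\|x_i - x_j\|_2 \leq \frac{A_i + A_j}{n}.
\]
Second, I plan to upper bound $T$ by anchoring distances at $x_i$ (and symmetrically at $x_j$). Again by triangle inequality, $\|x_k - x_\ell\|_2 \leq \|x_k - x_i\|_2 + \|x_i - x_\ell\|_2$; summing over all $k, \ell \in [n]$ collapses the right-hand side to $2 n A_i$, and the symmetric anchoring at $x_j$ gives $T \leq 2 n A_j$. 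Hence $T \leq 2 n \min\{A_i, A_j\}$.

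Combining the two bounds, I get
\[
\|x_i - x_j\|_2 \cdot T \;\leq\; \frac{A_i + A_j}{n} \cdot 2 n \min\{A_i, A_j\} \;=\; 2(A_i + A_j) \min\{A_i, A_j\} \;\leq\; 4 \, A_i \, A_j,
\]
using $A_i + A_j \leq 2 \max\{A_i, A_j\}$. Rearranging gives $\|x_i - x_j\|_2 / A_i \leq 4 A_j / T$, which is even stronger than the claimed factor of $8$.

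I do not expect any real obstacle: both steps are one-line triangle-inequality averages, and the constant obtained is actually better than what is stated. The only thing to be a bit careful about is making sure the sums in $A_i$ and $A_j$ include the zero-distance terms $\|x_i - x_i\|_2 = 0$ and $\|x_j - x_j\|_2 = 0$ (which is harmless, since they contribute $0$), and that the bound $T \leq 2nA_i$ uses both indices of the double sum.
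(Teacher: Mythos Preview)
Your proof is correct and rests on the same two triangle-inequality ingredients as the paper's version of this fact (Lemma~\ref{lem:weight-to-compatible}, following~\cite{FK01,CJK23}): the averaged bound $\|x_i-x_j\|_2 \le (A_i+A_j)/n$ and the anchoring bound $T \le 2nA_k$. The paper routes the latter through the global minimizer $\ell = \arg\min_k A_k$ (so that $A_j \ge A_\ell \ge T/(2n)$), then splits the resulting expression into two pieces and bounds each separately, arriving at the constant $8$. Your direct combination via $(A_i+A_j)\min\{A_i,A_j\} \le 2A_iA_j$ avoids both the minimizer detour and the split, and sharpens the constant to $4$. Nothing is lost in generality either: since $T \le 2n\sum_k d(c,x_k)$ holds for any point $c$ (not just $c=x_i\in X$) by the same anchoring-at-$c$ triangle inequality, your argument extends verbatim to the more general form the paper actually uses.
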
 }

\renewcommand{\vec}{\mathrm{vec}}
\newcommand{\epspr}{\eps_{\mathrm{pr}}}

\section{Dynamic Streaming Algorithm}

 In dynamic geometric streaming algorithms, a low-space streaming algorithm will receive an input set of points $X \subset [\Delta]^d$ presented as a stream. The stream may consist of both insertions and deletions of individual points $x \in [\Delta]^d$, and the algorithm maintains a memory state for points currently in the set. We associate (multi)-sets $X$ from $[\Delta]^d$ by vectors $\vec(X)$ of integer coordinates with $\Delta^d$ coordinates. Each coordinate corresponds to a point in $[\Delta]^d$, and counts the number of times that point appears in the multi-set. An update specifies a coordinate $x \in [\Delta]^d$ and a number $\delta \in \{-1, 1\}$, and the underlying vector is updated by increasing the $x$-th coordinate by $\delta$ (i.e., inserting or deleting a point). With this encoding, all of our algorithm will be \emph{linear sketches}, they are given by $\bS \cdot \vec(X)$ for a random matrix $\bS$. However, we will use the fact that the final vector $\vec(X) \in \{0,1\}^{[\Delta]^d}$ corresponds to a set (and there are no duplicate points).

\paragraph{Geometric Sketches and Geometric Sampling Sketches.} We state the two preliminary components. The first will be the geometric sampling sketches from~\cite{CJK23}, and the second an algorithm to estimate the geometric weights. These sketches play the role of (i) sampling the ``active and kept'' points, i.e., those $x_i \in X$ with $\bA_{i,t} \cdot \bK_{i,t} = 1$ (for some $t$) in Theorem~\ref{thm:main-structural}, and (ii) computing the metric-compatible weights as in Section~\ref{sec:mpc} and Section~\ref{sec:insertion-only}. 

\begin{lemma}[Geometric Sampling Sketches (Lemma 4.2~\cite{CJK23})]\label{lem:geo-samp}
There is a randomized streaming algorithm initialized with $\epspr \in (0,1)$, $\delta \in (0,1)$, $p \geq 1$, $\Delta, d \in \N$, $s \in \N$ uses $\poly(s \cdot d \log (\Delta / \delta) / \epspr)$ space and satisfies the following:
\begin{itemize}
\item It receives an input dataset $X \subset [\Delta]^d$ of at least two distinct points as a dynamic stream. 
\item Then, it outputs $s$ pairs $(\by^*_j,\bp^*_j)$ of random variables $\by^*_j \in X \cup \{ \bot \}$, together with a number $\bp^*_j \in [0,1]$. 
\end{itemize}
With probability at most $\delta$, the algorithm fails (in a manner which is undetectable). If it doesn't fail, then each sample $\by^*_j$ is independent. With probability at least $1-1/\poly(\delta^{-1} s \Delta^d)$, for every $x \in X$,
\begin{align*}
\Prx_{(\by_j^*, \bp_j^*)}\left[ \by^*_j = x \right ] \eqdef p_j(x) \geq \frac{1}{\sfD} \cdot \dfrac{\sum_{y \in X} d_{\ell_p}(x, y)}{\sum_{y \in X} \sum_{y' \in X} d_{\ell_p}(y, y')},
\end{align*}
for known $\sfD = \poly(d \log \Delta / \delta)$. Whenever $\by^*_j \neq \bot$, then $\bp^*_j$ satisfies
\begin{align*}
p_j(\by_j^*) \leq \bp^*_j \leq (1+\epspr) p_j(\by_j^*).
\end{align*}
\end{lemma}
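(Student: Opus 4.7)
The plan is to follow the construction of~\cite{CJK23}, which reduces the geometric sampling problem to $\ell_p$-sampling on a hierarchically-organized representation of $\vec(X)$, using a quadtree to handle the multi-scale nature of $\ell_p$-distances.

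First, I would replace the $\ell_p$-metric on $[\Delta]^d$ with a probabilistic tree metric $d_T$ obtained from a randomly-shifted quadtree with $O(\log \Delta)$ levels. Standard tree-embedding analysis gives $d_T(x,y) \geq d_{\ell_p}(x,y)$ always, and $\Ex[d_T(x,y)] \leq \poly(d \log \Delta) \cdot d_{\ell_p}(x,y)$; by Markov's inequality, with probability at least $1-\delta/2$ the total $\sum_{y,y' \in X} d_T(y,y')$ is at most $\sfD = \poly(d \log \Delta/\delta)$ times $\sum_{y,y' \in X} d_{\ell_p}(y,y')$. This already suffices for the ``$1/\sfD$-importance'' lower bound of the lemma, because $\sum_{y} d_T(x,y)$ dominates $\sum_{y} d_{\ell_p}(x,y)$ for the numerator, so sampling proportional to $w_T(x) \eqdef \sum_{y \in X} d_T(x,y)$ automatically inherits the required distortion. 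Second, I would exploit the fact that $W_T \eqdef \sum_{y,y' \in X} d_T(y,y')$ is a linear functional of $\vec(X) \in \{0,1\}^{[\Delta]^d}$ whose evaluation decomposes cell-by-cell over the quadtree. Sampling proportional to $w_T(x)$ is then implemented by a top-down descent: at each level, I use linear $\ell_1(\ell_p)$-cascaded sketches (Lemma~\ref{lem:cascaded}) applied to a suitable reindexing of $\vec(X)$ to estimate each candidate child cell's contribution to $W_T$ up to a factor $(1 \pm \epspr/\log\Delta)$, and descend into a child chosen proportional to its estimate. At the finest level, an $\ell_0$-sampler (Jowhari-Saglam-Tardos) recovers a uniformly random point of $X$ within a non-empty unit cell, producing $\by_j^* \in X$.

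The probability estimate $\bp_j^*$ is the product of the per-level conditional probabilities actually used during the descent, and telescoping the per-level $(1\pm \epspr/\log\Delta)$ errors gives the required $(1+\epspr)$-approximation to $p_j(\by_j^*)$. Running $s$ independent copies of every sketch yields $s$ independent samples conditional on no sketch failing. Setting each internal sketch's failure probability to $1/\poly(s\Delta^d\delta^{-1})$ preserves the overall space budget of $\poly(s \cdot d \log(\Delta/\delta)/\epspr)$ because the cascaded sketch size scales only polylogarithmically in the inverse failure probability.

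The main obstacle is controlling error propagation during the descent: a naive analysis would union-bound the accuracy of every cell's weight estimate across the whole tree, which involves $\Delta^{O(d)}$ cells. The key observation is that only $O(\log \Delta)$ cells are actually queried per sample, so it suffices to argue accuracy along the realized descent path. This can be done either by preparing fresh independent sketches for each level (decoupling the randomness of the descent from the sketch randomness) or, as in~\cite{CJK23}, by observing that the polylogarithmic dependence of sketch size on $1/\delta$ makes an outright union bound over all cells at all levels affordable within the stated space budget.
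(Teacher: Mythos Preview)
The paper does not prove this lemma; it is imported from~\cite{CJK23} (their Lemma~4.2), with only the cosmetic adjustments described in Remark~\ref{rem:additive-err}. Your outline is a faithful high-level account of that construction: the randomly-shifted quadtree embedding to pass to a tree metric, level-by-level top-down sampling proportional to tree-metric weights, telescoping the per-level conditional probabilities to produce $\bp_j^*$, and independent replication to obtain $s$ samples.

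Two small remarks on details. First, once one has passed to the tree metric, the contribution of a cell at a given level to $W_T$ is a function of cell occupancy counts (points inside versus outside) rather than an $\ell_p$ norm of a matrix row, so the per-level estimates in~\cite{CJK23} are implemented with count-type sketches rather than the cascaded $\ell_1(\ell_p)$ sketch of Lemma~\ref{lem:cascaded}; this is a choice of primitive and does not affect the plan. Second, the lemma as restated here asks for the one-sided guarantee $p_j(\by_j^*) \leq \bp_j^* \leq (1+\epspr)\,p_j(\by_j^*)$, whereas your telescoping argument naturally yields a two-sided $(1\pm\epspr)$ estimate; a trivial upward rescaling recovers the one-sided form.
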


Lemma~\ref{lem:geo-samp} gives an algorithm to sample from $X$ with probability according to distances within $X$. We will refer to the distribution of the $j$-th sample generated from a call to Lemma~\ref{lem:geo-samp} conditioned on a successful initialization as $\calG_j(X)$. In the analysis, we set the failure probability $\delta$ to $\eps$, and we will condition on a successful initialization of Lemma~\ref{lem:geo-samp}. The case of an unsuccessful initialization is considered a ``catastrophic'' failure (but are rare enough so as to not affect the algorithm significantly).

\begin{remark}[Additive Error in Lemma~4.2 of~\cite{CJK23}]\label{rem:additive-err} \emph{The statement of Lemma~\ref{lem:geo-samp} differs slightly from Lemma~4.2 in~\cite{CJK23}, but the changes are solely cosmetic. We allow $\epspr \in (0, 1)$ and always overestimate, bound the failure probability by $\delta$ (instead of $.01$ in Lemma 4.2 of~\cite{CJK23}) by paying a $1 / \delta$ factor in $\sfD$, and replace $\lambda$ with $\sfD$. Similarly, we bound the probability of an individual sample failing by $1-\poly(\delta/(s\Delta^d))$ (instead of $1-1/\poly(\Delta^d)$ in Lemma 4.2 of~\cite{CJK23}) by paying an additional $\poly(\log(s\delta^{-1}))$ bits of space. By setting $\delta \leq \eps$, the probability that any of the samples fail is at most $\eps$, which would contribute at most $\eps \sum_{i=1}^n \sum_{j=1}^n d(x_i,x_j)$ total distance to the expected cut value. We will henceforth assume that the sampling procedure succeeds. }
\end{remark}

\begin{lemma}[Weight Sketches]\label{lem:geo-sketch}
There is a randomized streaming algorithm initialized with $p \geq 1$, $\Delta, d \in \N$, $\delta \in (0,1)$, which uses $\poly(d \log (\Delta/\delta))$ words of space, and for a known parameter $\sfD = \poly(d\log \Delta)$, satisfies the following:
\begin{itemize}
\item It receives an input set $X = \{ x_1,\dots, x_n \} \subset [\Delta]^d$ of at least two distinct points as a dynamic stream. 
\item Given a point $x \in [\Delta]^d$, it outputs a number $\bw(x)$.
\end{itemize} 
With probability at least $1 - \delta$ over the randomness in construction, every $x \in X$ satisfies
\[ \dfrac{\sum_{i=1}^n d_{\ell_p}(x_i, x)}{\sum_{i=1}^n \sum_{j=1}^{n} d_{\ell_p}(x_i, x_j)} \leq \bw(x) \leq \sfD \cdot \dfrac{\sum_{i=1}^n d_{\ell_p}(x_i, x)}{\sum_{i=1}^n \sum_{j=1}^{n} d_{\ell_p}(x_i, x_j)} . \]
\end{lemma}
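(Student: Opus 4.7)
The plan is to build a linear sketch that, for any query $x \in X$, outputs a $\sfD$-multiplicative estimate of the ratio $\sum_i d_{\ell_p}(x_i, x) / \sum_{i,j} d_{\ell_p}(x_i, x_j)$. A direct approach via $\ell_1(\ell_p)$ cascaded sketches (Lemma~\ref{lem:cascaded}) of a $\Delta^d \times d$ matrix $A$ with row $y$ equal to $y \cdot \chi_X(y)$ breaks down because $\|A - x^{\circ}\|_{\ell_1(\ell_p)} = \sum_{y \in X} \|y-x\|_p + (\Delta^d - |X|)\|x\|_p$, and cancelling the second term by an estimated $|X|$ introduces catastrophic error whenever $X$ has small diameter relative to $\|x\|_p$. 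Instead, the plan is to replace $d_{\ell_p}$ with a tree metric $d_T$ of distortion $\sfD' = \poly(d\log\Delta)$, whose distance sums decompose additively over levels and are amenable to low-space dynamic linear sketching.

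First I would fix a randomly-shifted quadtree embedding of $[\Delta]^d$ with $L = O(\log(d\Delta))$ levels, so that for all $y, y' \in [\Delta]^d$ one has $d_{\ell_p}(y, y') \leq d_T(y, y') \leq \sfD' \cdot d_{\ell_p}(y, y')$ with probability $1 - \delta/2$; this uses $R = \poly(d\log(\Delta/\delta))$ independent copies together with a $\min$-style aggregation to turn the standard expected-distortion guarantee into a simultaneous worst-case one over all pairs. For each level $\ell \in [L]$ and each copy, I would maintain a dynamic linear sketch (e.g.\ CountSketch together with an $\ell_0$-estimator) of the vector, indexed by level-$\ell$ cells, whose entries count the number of $X$-points in each cell. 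Each stream update $(y, \pm 1)$ touches one cell per level per copy, and all sketches together fit in $\poly(d\log(\Delta/\delta))$ words.

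On a query $x$, the estimator $\widehat{S}(x) = \sum_{y \in X} d_T(y, x)$ is assembled by walking from the leaf for $x$ up the tree and, at each level, adding the diameter of that level times the estimated count of $X$-points entering $x$'s ancestor cell at that level; the denominator $\widehat{T} = \sum_{y, y'} d_T(y, y')$ is assembled analogously from second-moment sums of per-level cluster counts. Outputting $\bw(x) = \widehat{S}(x)/\widehat{T}$, the distortion bound yields $\bw(x) \in [\sfD'^{-1}, \sfD'] \cdot S(x)/T$ up to $(1 \pm O(1))$ estimator errors, which after rescaling by $\sfD'$ gives the required overestimate with $\sfD = \poly(d\log\Delta)$. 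A union bound over the at most $\Delta^d$ possible query points (absorbed into the per-sketch failure probability $\delta' = \delta / \Delta^{O(d)}$) yields the $1 - \delta$ overall success guarantee.

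The main technical obstacle will be calibrating the random quadtree so that the upper bound on distortion becomes a worst-case guarantee over all $\binom{\Delta^d}{2}$ pairs simultaneously without inflating either the space or $\sfD$ beyond $\poly(d\log\Delta)$; this requires standard but careful HST composition. A nearly black-box alternative would be to observe that the geometric sampling sketches of Lemma~\ref{lem:geo-samp} must internally compute per-point weight estimates of exactly this form, since their sampling probabilities are lower-bounded by $(1/\sfD) \cdot \sum_y d_{\ell_p}(x, y)/T$; exposing these internal estimates directly as the $\textsc{Weight}(\cdot)$ oracle would bypass the quadtree construction entirely while reusing the same $\sfD$.
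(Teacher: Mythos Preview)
Your approach is essentially the paper's: embed into randomly-shifted quadtrees, use linear sketches to evaluate the per-tree sum $\sum_i d_{T_\ell}(x_i,x)$, and take the minimum over $\poly(d\log(\Delta/\delta))$ independent trees to turn the expected-distortion guarantee into a high-probability one. The paper packages the tree-sum computation as the $1$-median cost with center $x$ and invokes Indyk's MediEval sketch from~\cite{I04} as a black box rather than spelling out the per-level cell counts, and it estimates the denominator simply by recycling the max-cut value estimator of~\cite{CJK23}, which is already a factor-$2$ approximation to $\sum_{i,j}d_{\ell_p}(x_i,x_j)$. One clarification: what you actually compute and need is $\min_\ell \sum_i d_{T_\ell}(x_i,x)$, so applying Markov to each per-query \emph{sum} and then union-bounding over the $\Delta^d$ queries already suffices; the stronger pointwise guarantee $d_T(y,y') \le \sfD'\, d_{\ell_p}(y,y')$ for all pairs simultaneously is unnecessary, and the pointwise minimum over trees is not itself a tree metric, so your ``walking up the tree'' estimator only makes sense per-copy anyway.
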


\begin{proof}
    We describe a sketch which will process the dataset $X = \{x_1,\dots, x_n\} \subset [\Delta]^d$ and will be able to produce, for every $x \in [\Delta]^d$, a $\poly(d\log \Delta)$-approximation to the quantity $\boldeta(x) = \sum_{i=1}^n d_{\ell_p}(x_i, x)$, which is the desired numerator of the quantity $\bw(x)$ above. The denominator, which is the sum of all distances, can also be estimated up to a factor of $2$ within the space bound (in particular, it the value of the max-cut output by~\cite{CJK23} would be one such estimate). In order to approximate $\boldeta(x)$, we combine two ingredients:
    \begin{itemize}
        \item The first is Fact~1 from~\cite{I04} (and Lemma~4.3 in~\cite{CJK23}) which shows an embedding into a probabilistic tree $\bT$ (equivalently, an embedding into $\ell_1$) such that $d_{\bT}(x, y) \geq d_{\ell_p}(x, y)$ and $\Ex_{\bT}\left[ d_{\bT}(x,y) \right] \leq O(d\log \Delta) d_{\ell_p}(x,y)$ for any $x,y \in [\Delta]^d$. 
        \item The second is the Median Cost Evaluation sketch (MediEval) from~\cite{I04}, which can process a dataset of points $X$ such that for a set $C$ of at most $k$ points, the $k$-median cost of clustering $X$ with the center set $C$ can be recovered. 
    \end{itemize}
    In~\cite{I04}, the MediEval sketch is described for two-dimensional point sets, but it follows immediately for evaluating $k$-median distances in a tree metric as well. Our sketch will consider $L = O(d \log \Delta)$ draws of probabilistic tree embeddings $\bT_1, \dots, \bT_{L}$ as well as MediEval sketches for the dataset points in each tree instantiated to succeed with probability at least $1 - \delta/(\Delta^d \cdot L)$, instantiated for $k = 1$. Note that $\sum_{i=1}^n d_{\ell_p}(x_i, x)$ is the $k$-median with center $x$, so each of the $L$ trees gives us an estimate of $\sum_{i=1}^n d_{\bT_{\ell}}(x_i, x)$, which is at least $\sum_{i=1}^n d_{\ell_p}(x_i, x)$ and has expectation at most $O(d\log \Delta) \sum_{i=1}^n d_{\ell_p}(x_i, x)$. Hence, using the minimum over all $\ell \in [L]$ of the estimates $\sum_{i=1}^n d_{\bT_{\ell}}(x_i, x)$ gives the desired approximation to $\boldeta(x)$.
\end{proof}

We will use Lemma~\ref{lem:geo-sketch} in order to generate the metric-compatible weights $w_1, \dots, w_n$ which we use in order to activate points. The main difference between the lemmas used in prior sections (namely, Lemma~\ref{lem:mpc-weights} and Lemma~\ref{lem:computing-weights}) and Lemma~\ref{lem:geo-sketch} is that the above lemma works in the dynamic geometric streaming model and obtains a multiplicative $\poly(d\log \Delta)$-approximation, as opposed to $(1+\eps)$-approximations for $p \in [1,2]$. We also state the following lemma from~\cite{CJK23}, which will be used to generate samples from $X$ distributed according to metric-compatible weights.

\subsection{Activation Timelines and Masks for Dynamic Streams}

In this section, we state the main structural theorem that is needed for controlling the cut quality of the dynamic streaming algorithm (analogous to Theorem~\ref{thm:main-structural}). The structural theorem will refer to weights generated according to Lemma~\ref{lem:geo-sketch} and the distribution generated from Lemma~\ref{lem:geo-samp}, so that it is tailored to what can be achieved algorithmically for dynamic streams. We first define an appropriate version of activation timelines and masks, adapting Definition~\ref{def:activate} and Definition~\ref{def:mask}, as well as a timeline-mask summary, adapting Definition~\ref{def:timeline-mask-sum}. Then, we define a new assignment procedure, and state the main theorem.

\begin{definition}[Correlated Timeline-Mask]\label{def:corr-timeline-mask}
Let $X = \{ x_1,\dots, x_n \} \subset [\Delta]^d$, $t_0, t_e \in \N$, $\gamma > 1$ and $\eps,\epspr \in (0, 1)$. A correlated timeline-mask is sampled according to the following procedure:
\begin{enumerate}
\item\label{en:mask-corr} Sample a single mask $\bK \sim \calK(t_0, \gamma)$ from Definition~\ref{def:mask}, let $s$ denote the number of entries of $\bK$ set to $1$.
\item\label{en:weights} Initialize an algorithm from Lemma~\ref{lem:geo-sketch} with $\delta = \eps$, and let $\bw_i = \bw(x_i) / (2\sfD)$ for all $i \in [n]$. Similarly, initialize an algorithm from Lemma~\ref{lem:geo-samp} with $s$ (coming from $\bK$ above) and $\delta = \eps$, and let $(\by_j^*, \bp^*_j)$ for $j \in [s]$.
\item\label{en:iter-t-cor-def} Iterate through $t=1, \dots, t_e$ to generate $\bA_{1,t}, \dots, \bA_{n,t}$:
\begin{enumerate}
\item\label{en:k-0} If $\bK_t = 0$, each $i \in [n]$ proceeds independently by letting
\[ \bw^t_i = \left\{ \begin{array}{cc} \min\{ \bw_i, 1/t \} & \forall \ell < t , \bA_{i, \ell} = 0 \\
								 \bw_i & \text{o.w.} \end{array} \right. , \]
and then setting $\bA_{i,t} \sim \Ber(\bw_i^t)$ independently. 
\item\label{en:k-1} If $\bK_t = 1$, use a sample $(\by_t^*, \bp_t^*)$ from Line~\ref{en:weights} (re-indexed, see Remark~\ref{rem:re-indexing}), with error parameter $\epspr$ and failure probability $\eps$. If $i^* \in [n]$ is such that $\by^*_t = x_{i^*}$, let
\[ \brho_{i^*}^t = \left\{ \begin{array}{cc} \min\{ \bp^*_t, 1/t\} & \forall \ell < t, \bA_{i^*, \ell} = 0 \\
                \bp^*_t & \text{o.w.} \end{array} \right. .\]
Let $\bA_{i,t} = 0$ if $i \neq i^*$ (or all $i$ in case $\by^*_t = \bot$). If $\bA_{i^*,\ell} = 0$ for all $\ell < t$, sample $\bb_t \sim \Ber(\min\{ 1/(t\bp_t^*), 1\})$ and set $\bA_{i^*, t} = \bb_t$.\footnote{It is not exactly the case $\bA_{i^*, t}$ is one with probability $\brho_{i^*}^t$ because $\bp_t^{*}$ is not exactly the probability that $\by_t^* = x_{i^*}$.} Otherwise, let $\bA_{i^*, t} = 1$.
\end{enumerate}
\end{enumerate}
As in Definition~\ref{def:activate}, the activation time $\bt_i$ for $\bA_i$ is the smallest $t$ with $\bA_{i,t} = 1$.
\end{definition}

\begin{remark}[Re-indexing the Geometric Samples $(\by_t^*, \bp^*_t)$]\label{rem:re-indexing}
Given a mask $\bK$ and $s=\|\bK\|_1$ geometric samples $\{(\by^*_j,\bp^*_j)\} \sim \calG_s(X)$ (where $j\in [s]$) via Lemma \ref{lem:geo-samp}. Each sample corresponds to a non-zero entry of $\bK$. In the algorithm and later the analysis, it will be convenient to reference these samples using the index of the corresponding non-zero-entry in the mask. Hence we can re-index the samples $(\by^*_j,\bp^*_j)\rightarrow (\by^*_t,\bp^*_t)$ where $j$-th sample gets indexed by time $t$ where $\bK_t$ is the $j$-th non-zero entry of $\bK$.

\end{remark}

The remainder of the section mirrors that of Section~\ref{sec:structural-1}. The correlated timeline-mask from Definition~\ref{def:corr-timeline-mask} serves the purpose of the activation timelines and masks from Definition~\ref{def:activate} and~\ref{def:mask}. Furthermore, the $n$ timelines $\bA_1,\dots, \bA_n$ sampled from Definition~\ref{def:corr-timeline-mask} define a seed (see Definition~\ref{def:seed}). We define an analogous version of Definition~\ref{def:timeline-mask-sum}, where the same mask $\bK$ is used for $\bK_j$ for all $j \in [n]$.

\begin{definition}[Correlated Timeline-Mask Summary]\label{def:corr-timeline-mask-sum}
Consider $X = \{x_1,\dots, x_n\} \subset [\Delta]^d$, $t_e, t_0 \in \N$ and $\gamma \geq 1$.
\begin{itemize}
\item Sample a correlated timeline-mask $\bA_1,\dots, \bA_n$ and $\bK$ from Definition~\ref{def:corr-timeline-mask}.
\item The summary of $\bA_1,\dots, \bA_n$ and $\bK$ is given by the set of tuples
\begin{align*}
\bP &= \textsc{Summ}(\bA_1,\dots, \bA_n, \bK) \\
      &\eqdef \left\{ (x_j, \ell; p_{\ell}, t_j)  : \begin{array}{l} \bA_{j,\ell} \cdot \bK_{\ell} = 1 \\ \text{$t_j$ is the activation time of $\bA_j$} \\ p_{\ell} = \bp^*_{\ell} \text{ for $(\by^*_{\ell}, \bp^*_{\ell})$ produced by Line~\ref{en:k-1} in Definition~\ref{def:corr-timeline-mask}} \end{array} \right\}  .
\end{align*}
\end{itemize}
\end{definition}

We present the assignment rule for the dynamic streaming algorithm, $\textsc{Assign}$, in Figure~\ref{fig:assign-dynamic}. The assignment rule is analogous to that of Figure~\ref{fig:assign} which we used for Theorem~\ref{thm:main-structural}. The only difference between Figure~\ref{fig:assign} and Figure~\ref{fig:assign-dynamic} is in the use of a correlated timeline-mask $\bA_1,\dots, \bA_n$ and $\bK$ from Definition~\ref{def:corr-timeline-mask} and the correlated timeline-mask summary $\bP$ from Definition~\ref{def:corr-timeline-mask-sum}.

\begin{figure}
\begin{framed}
\textbf{Assignment Rule} \textsc{Assign}$_{\sigma,\bP}(x_i, t_i)$. The correlated timeline-mask $\bA_1, \dots, \bA_n$ and $\bK$ from Definition~\ref{def:corr-timeline-mask} has been sampled. $\bP$ is the correlated timeline-mask summary from Definition~\ref{def:corr-timeline-mask-sum} and $\sigma$ is a seed for $\bA_1,\dots, \bA_n$ (see Definition~\ref{def:seed}). Since the timeline-mask summary $\bP$ is completely determined by $\bA_1,\dots,\bA_n$ and $\bK$, we refer to the rule as $\textsc{Assign}_{\sigma}(x_i, t_i)$ for convenience. \\

\textbf{Input}: The point $x_i$ to be assigned and the activation time $t_i$ for the timeline $\bA_i$.  

\textbf{Inductive Hypothesis}: We assume the algorithm has access to a partial cut $\bz = z(\sigma, \bP, t_i) \in \{0,1\}^{[n] \times \{0,1\}}$ which assigns every point $x_j$ which participates in a tuple of $\bP$ and whose activation time $t_j$ is smaller than $t_i$.

\textbf{Output}: The assignment, either $(1, 0)$ or  $(0,1)$ corresponding to the 0-side or 1-side of the cut, for the $i$-th point $x_i$. 

\begin{itemize}
\item \textbf{Base Case}: If $t_i \leq t_0$, the point $x_i$ appears in $\bP$ (and so does all of $\bS^{t_0}$). Hence, the index $\pi(x_i) \in [m]$ can be determined. Output $(1, 0)$ if $\sigma_{\pi(x_i)} = 0$ and $(0,1)$ if $\sigma_{\pi(x_i)}= 1$. 
\item \textbf{Incorporating Estimates}: Otherwise, $t_i > t_0$ and we proceed as follows:
\begin{enumerate}
\item As per Definition~\ref{def:corr-timeline-mask} for every $(x_j, \ell; p_{\ell}, t_j) \in \bP$, we let $\brho_{j}^{\ell} = \min\{ p_{\ell}, 1/\ell \}$ if $\ell = t_j$ and $p_{\ell}$ if $\ell > t_j$. As per Definition~\ref{def:mask}, we let $\gamma^{\ell} = \min\{ \gamma / \ell, 1\}$. We consider the subset of pairs 
\[ \bP_{t_i} = \left\{ (x_j, \ell) : (x_j, \ell; p_{\ell}, t_j) \in \bP \text{ and } \ell \leq t_j < t_{i}\right\} \]
which have been assigned by the inductive hypothesis (by recursively executing $\textsc{Assign}_{\sigma, \bP}(x_j, t_j)$). 
\item We compute the estimated contributions, from the $0$-side and $1$-side of the cut, at the time $t_i$:
\[ \bC_{0} = \sum_{(x_j, \ell) \in \bP_{t_i}} \dfrac{d(x_i, x_j)}{\brho_{j}^{\ell} \cdot \gamma^{\ell}} \cdot \bz_{j,0} \qquad \text{and}\qquad \bC_{1} = \sum_{(x_j,\ell) \in \bP_{t_i}} \dfrac{d(x_i, x_j)}{\brho_{j}^{\ell} \cdot \gamma^{\ell}} \cdot \bz_{j,1}. \]
Output $(0, 1)$ if $\bC_{0} > \bC_{1}$ (to assign $x_i$ to the $1$-side of the cut), and $(1, 0)$ otherwise (to assign $x_i$ to the $0$-side of the cut). 
\end{enumerate}
\end{itemize}
\end{framed}
\caption{The Assignment Procedure $\textsc{Assign}_{\sigma}(\cdot,\cdot)$ for Dynamic Streaming Algorithms} \label{fig:assign-dynamic}
\end{figure}

\begin{theorem}[Main Structural for Dynamic Streams]\label{thm:main-structural-dynamic}
For any set $X = \{ x_1, \dots, x_n \} \subset [\Delta]^d$ and $p \geq 1$, let $\sfD = \poly(d\log(\Delta)/\eps)$ from Lemma \ref{lem:geo-samp} and \ref{lem:geo-sketch}, then let
\[ t_e \geq \frac{n \cdot \sfD}{\eps}  \qquad \gamma \geq \dfrac{\left( \sfD \ln(t_e) + 1 \right)^2}{\eps^2} \qquad t_0 \geq \max\left\{ \frac{\sqrt{\gamma} \cdot \sfD}{\eps}, \frac{1}{\eps} \right\}, \qquad\text{and}\qquad \epspr \leq \frac{\eps}{\ln(t_e)} \]
Consider the following setup:
\begin{itemize}
\item Draw a correlated timeline-mask $\bA_1,\dots, \bA_n$ and $\bK$ from Definition~\ref{def:corr-timeline-mask}. 
\item For any seed $\sigma$ of $\bA_1,\dots, \bA_n$, let $\bz(\sigma)$ denote the following cut:
\begin{align*}
\bz(\sigma)_i = \left\{\begin{array}{cc} \textsc{Assign}_{\sigma}(x_i, \bt_i) & \bt_i \text{ is the activation time for }\bA_i \\
(1, 0) & \bA_{i,t} = 0 \text{ for all $t \in [t_e]$} \end{array} \right. ,
\end{align*}
where we use the algorithm $\textsc{Assign}_{\sigma}(x_i, \bt_i)$ from Figure~\ref{fig:assign-dynamic}.
\end{itemize}
Then,
\begin{align*}
\Ex_{\bA, \bK}\left[ \min_{\sigma} f(\bz(\sigma)) \right] \leq \min_{\substack{z \in \{0,1\}^{n\times\{0,1\}}\\ z_i \neq (0,0)}} f(z) + O(\eps) \sum_{i=1}^n \sum_{j=1}^n d_{\ell_p}(x_i, x_j).
\end{align*}
with $f$ from (\ref{eq:internal-f-def}) with $\ell_p$-metric.
\end{theorem}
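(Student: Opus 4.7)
The plan is to mirror the proof of Theorem~\ref{thm:main-structural} from Sections~\ref{sec:greedy-process}--\ref{sec:greedy-analysis}, defining a dynamic analog \textsc{DynGreedyProcess} that generates the correlated timeline-mask of Definition~\ref{def:corr-timeline-mask} in lockstep with the greedy assignment, and tracking a fictitious cut $\hat{\bz}^t \in [0,1]^{n\times\{0,1\}}$ that begins at $z^*$ and ends at the algorithm's output. The telescope $\Ex[f(\bz) - f(z^*)] \leq \sum_{t=t_0+1}^{t_e}\Ex[f(\hat{\bz}^t) - f(\hat{\bz}^{t-1})]$ from Claim~\ref{cl:simple-fictitious} and the decomposition of each summand into the linear type-A terms (\ref{eq:linear-term}) and the non-linear type-B terms (\ref{eq:non-linear-1})--(\ref{eq:non-linear-2}) are retained verbatim; only the ``not activated'' branch of (\ref{eq:shadow-def}) is modified to reweight by $\brho_j^\ell$ in place of the weight, so that the estimator $\bR_j^{t-1}$ defined in Figure~\ref{fig:assign-dynamic} remains an approximately unbiased proxy for the true row-contribution.

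Three structural changes distinguish the correlated setting from the independent one. First, the weights $\bw_i = \bw(x_i)/(2\sfD)$ produced by Lemma~\ref{lem:geo-sketch} are only $O(\sfD)$-compatible in the sense of Definition~\ref{def:metric-compat}, which propagates a $\sfD$-factor loss into both the $\sum_i(\sum_j d(x_i,x_j)^2/\bw_j)^{1/2}$ bound that governs the variance and into the time horizon $t_e$, accounting for the extra $\sfD$ and $\sqrt{\sfD}$ factors in the stated settings of $\gamma$ and $t_0$. Second, when $\bK_t = 1$, Line~\ref{en:k-1} of Definition~\ref{def:corr-timeline-mask} activates at most one point, so the pairwise cross terms in (\ref{eq:non-linear-1})--(\ref{eq:non-linear-2}) are nonzero only in the $\bK_t = 0$ branch, where activations are independent Bernoullis; the per-pair $O(d(x_i,x_j)/t^2)$ bound of Lemma~\ref{lem:non-linear} therefore still applies. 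Third, the reweighting factor inside $\bR_j^{t-1}$ is $\brho_j^\ell = \bp^*_\ell \in [p_\ell(x_j), (1+\epspr)p_\ell(x_j)]$, so the estimator is no longer exactly but only approximately unbiased.

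With these in hand, the analogs of Lemma~\ref{lem:greedy-choice-opt} and Lemma~\ref{lemma:fic-cut-bounded} follow verbatim, since the greedy rule in Figure~\ref{fig:assign-dynamic} is defined to minimize the estimated contribution and the fictitious update still keeps rows in $[0,1]^{\{0,1\}}$. The analog of the martingale identity in Lemma~\ref{lem:martingale} becomes an approximate one: $\Ex[\berr_{i,b}^{t+1}(j) - \tfrac{t-1}{t}\berr_{i,b}^t(j)]$ is bounded in absolute value by $O(\epspr) \cdot d(x_i,x_j)/t$, because the per-step expected reweighting $\Ex[\bA_{j,\ell}\bK_\ell/(\brho_j^\ell \gamma^\ell)]$ lies in $[1/(1+\epspr), 1+\epspr]$ by the $(1+\epspr)$-slack in Lemma~\ref{lem:geo-samp}. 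Summing this bias over $t \leq t_e$ and $j$ adds an extra $O(\epspr \ln(t_e))\sum_{i,j} d_{\ell_p}(x_i,x_j) = O(\eps)\sum_{i,j} d_{\ell_p}(x_i,x_j)$ by the choice $\epspr \leq \eps/\ln(t_e)$. The variance analog of Lemma~\ref{lem:bound-on-error} replaces $\bw_j$ by $\Theta(\bw_j/\sfD)$, since $\brho_j^\ell = \Omega(\bw(x_j)/\sfD)$ by Lemma~\ref{lem:geo-samp}, and re-running the derivation of (\ref{eq:deviation-exp}) with this replacement and with $\sfD$-compatibility yields exactly the parameter demands in the theorem statement. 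The catastrophic failure events (geometric sampling or weight sketch failing, each with probability $\leq \eps$) contribute at most $O(\eps)\sum_{i,j} d_{\ell_p}(x_i,x_j)$, as in the insertion-only proof of Theorem~\ref{thm:insertion-only}.

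The main obstacle is propagating the approximate-martingale slack through the telescoping variance decomposition of Lemma~\ref{lem:reduce-to-error}. In the exact-martingale case, the cross-term $\Ex[(\berr^{t+1}-\tfrac{t-1}{t}\berr^t)\cdot\tfrac{t-1}{t}\berr^t]$ vanishes; here it is only $O(\epspr) \cdot \Ex[|\berr^t|^2]$, and one must prevent this from compounding over the $t_e$ steps. The cleanest way is to peel off the deterministic multiplicative bias introduced by $\brho_j^\ell$ before squaring, absorb its first-order contribution into the linear-term bias budget already carried in the previous paragraph, and apply the exact variance decomposition of Lemma~\ref{lem:reduce-to-error} to the residual zero-mean process. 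Once this is done, combining all of the pieces as in the proof of Lemma~\ref{lemma:main} gives the desired $O(\eps)\sum_{i,j} d_{\ell_p}(x_i,x_j)$ bound under the stated settings of $t_e, \gamma, t_0$, and $\epspr$.
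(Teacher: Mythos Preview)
Your high-level plan is right, but two concrete steps break down, and the paper handles both differently.

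\textbf{Non-linear type-B terms.} You assert that when $\bK_t=1$ at most one point activates, hence the cross-terms $(\hat{\bz}_{i,b}^t-\hat{\bz}_{i,b}^{t-1})(\hat{\bz}_{j,b}^t-\hat{\bz}_{j,b}^{t-1})$ vanish for $i\neq j$. This is false: the fictitious-cut update in the ``not activated'' branch of (\ref{eq:shadow-def}) (or its dynamic analog) changes $\hat{\bz}_j^t$ for \emph{every} $j\notin\bS^{t-1}$, not only for the one sampled point. So when $\bK_t=1$ the differences $\hat{\bz}_{i,b}^t-\hat{\bz}_{i,b}^{t-1}$ and $\hat{\bz}_{j,b}^t-\hat{\bz}_{j,b}^{t-1}$ are both nonzero and genuinely correlated. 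The paper handles this by an explicit two-outcome covariance identity (Lemma~\ref{lem:neg-corr}) that bounds $\Ex[\bX_1\bX_2]-\Ex[\bX_1]\Ex[\bX_2]$ by $O(1/t^2)$, recovering a constant-factor-weaker version of Lemma~\ref{lem:non-linear}. The same issue recurs inside the variance decomposition: in the correlated setting you can no longer drop the cross-terms $\Ex[(\berr_{i,b}^{\ell+1}(j)-\tfrac{\ell-1}{\ell}\berr_{i,b}^{\ell}(j))(\berr_{i,b}^{\ell+1}(j')-\tfrac{\ell-1}{\ell}\berr_{i,b}^{\ell}(j'))]$ by independence. The paper shows these are $\le 0$ via a sign analysis (the activated point always realizes the negative branch) together with Lemma~\ref{lem:neg-corr}; your proposal does not address this at all.

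\textbf{Approximate martingale.} Your plan to define the fictitious update with $\brho_j^\ell$ and then ``peel off the deterministic multiplicative bias'' is workable in principle but is exactly the compounding headache you flag. The paper sidesteps it entirely: it defines the fictitious cut using the \emph{true} activation probabilities $\tilde{\bp}_i^t$ (known to the analysis, not the algorithm), which makes the martingale identity of Lemma~\ref{lem:martingale} exact again. The $\epspr$-slack is then quarantined in a separate middle term $|\hat{\bc}_{i,b}^t-\tilde{\bc}_{i,b}^t|$ comparing the ``ideal'' estimator (built from $\tilde{\bp}_j^\ell$) to the algorithm's estimator (built from $\brho_j^\ell$); this term is bounded directly by $\epspr\sum_{i,j} d(x_i,x_j)$ per step (Lemma~\ref{lem:dynamic-non-ideal-est-error-bound}), with no recursion, and summing over $t$ gives the $\epspr\ln(t_e)\le\eps$ budget. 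This three-way split $\bc\to\hat{\bc}\to\tilde{\bc}$ is the cleanest way to avoid the bias-compounding problem you identified.
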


\subsection{Dynamic Streaming Algorithm}\label{sec:dynamic-stream-alg}

Given Theorem~\ref{thm:main-structural-dynamic}, we will show how to design a dynamic streaming algorithm which can maintain a low-memory space while receiving an input set $X \subset [\Delta]^d$ as a dynamic stream; execute a preprocessing step which can build the correlated timeline-mask summary $\bP$ and provide query access to activation times, and select an approximately minimum seed $\sigma$; and finally, provide query access to the assignment specified by $\textsc{Assign}_{\sigma}(x_i, t_i)$ from Figure~\ref{fig:assign-dynamic}. 

\begin{theorem}[Euclidean Max-Cut in Dynamic Streams]\label{thm:dynamic}
For any $p \geq 1$, $d, \Delta \in \N$ and $\eps \in (0, 1)$, there exists a randomized streaming algorithm with the following guarantees:
\begin{itemize}
\item \emph{\textbf{Maintenance}}: The streaming algorithm receives an input set $X \subset [\Delta]^d$ as a dynamic stream, and uses a random oracle $\boldr$ and $\poly(d \log \Delta / \eps)$ words of space. 
\item \emph{\textbf{Query}}: For a point $x \in [\Delta]^d$, the sub-routine $\textsc{Assign}(x)$ outputs $(1, 0)$ or $(0,1)$ corresponding to the $0$-side or the $1$-side of the cut. 
\end{itemize}
If, after receiving the set $X = \{ x_1, \dots, x_n \} \subset [\Delta]^d$, we let $\bz \in \{0, 1\}^{n \times \{0,1\}}$ be given by
\[ \bz_i \leftarrow \textsc{Assign}(x_i),\]
then
\[ \Ex\left[ \boldf(z) \right] \leq \min_{\substack{z \in \{0,1\}^{n\times\{0,1\}} \\ z_i \neq (0,0)}} f(z) + \eps \sum_{i=1}^n \sum_{j=1}^n d_{\ell_p}(x_i, x_j),\]
using $f$ from (\ref{eq:internal-f-def}) with $\ell_p$-metric.
\end{theorem}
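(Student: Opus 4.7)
} The plan is to assemble the algorithm along the lines of the insertion-only algorithm in Section~\ref{sec:insertion-only}, but replace the per-point activation bookkeeping by the ``column-wise'' sampling scheme of Definition~\ref{def:corr-timeline-mask} which is compatible with dynamic streams; then invoke Theorem~\ref{thm:main-structural-dynamic} to certify the cut quality. During the stream, the algorithm will only need to maintain two groups of linear sketches: (i) a weight sketch from Lemma~\ref{lem:geo-sketch} instantiated with failure probability $\delta = \eps$, which later provides, for any $x \in [\Delta]^d$, a $\sfD$-approximation to the geometric weight; and (ii) a geometric sampling sketch from Lemma~\ref{lem:geo-samp} instantiated with $s = O(t_0 + \gamma \log t_e)$ samples, accuracy parameter $\epspr = \eps/\ln t_e$, and failure probability $\delta = \eps$, as well as parameter settings from Theorem~\ref{thm:main-structural-dynamic}. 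Both occupy $\poly(d\log\Delta/\eps)$ words, and being linear, they correctly process insertions and deletions of points in $[\Delta]^d$.

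Once the stream ends, I would execute a preprocessing step analogous to Figure~\ref{fig:insertion-only-preprocess}. First, draw a single mask $\bK \sim \calK(t_0, \gamma)$ using the random oracle $\boldr$, and observe that $\|\bK\|_1 \leq s$ with high probability by Markov/Chernoff on $\Ex[\|\bK\|_1] \leq t_0 + \gamma \log t_e$; if the bound fails we pay a ``catastrophic'' $\eps$-additive term as in Section~\ref{sec:insertion-only}. For each time $t$ with $\bK_t=1$, re-index (Remark~\ref{rem:re-indexing}) a geometric sample $(\by_t^*, \bp_t^*)$ and implement Line~\ref{en:k-1} of Definition~\ref{def:corr-timeline-mask}: read an additional random bit from $\boldr$ to decide $\bA_{i^*, t}$ according to $\min\{1/(t\bp_t^*), 1\}$ if $\by_t^* = x_{i^*}$ was not previously activated. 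This produces, with a single linear pass over the $\leq s$ sampled points and their activation times, the correlated timeline-mask summary $\bP = \textsc{Summ}(\bA_1, \dots, \bA_n, \bK)$ of Definition~\ref{def:corr-timeline-mask-sum} and in particular the seed domain $\bS^{t_0}$. Having $\bP$, apply Lemma~\ref{lem:seed-check} (as in Claim~\ref{cl:preprocess-sum}) to select an approximately optimal seed $\bsigma^*$, where the ``checking'' points are another independent batch of $\poly(t_0/\eps \cdot d\log\Delta)$ geometric samples maintained alongside.

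For a query $\textsc{Assign}(x)$ with $x \in X$, I would evaluate the weight sketch to get $\bw(x)$ and set $\bw_x = \bw(x)/(2\sfD)$; then use $\boldr$ keyed on $x$ (as in Remark~\ref{rem:generating-randomness} and Remark~\ref{rem:init}) to regenerate, on the fly, the ``off the mask'' part of the timeline $\bA_x$---namely, at every $t$ with $\bK_t = 0$, sample $\bA_{x,t}\sim \Ber(\bw_x^t)$ from $\boldr$, as in Line~\ref{en:k-0} of Definition~\ref{def:corr-timeline-mask}. The entries of $\bA_x$ at times with $\bK_t = 1$ are already determined by the geometric samples in $\bP$. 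Let $t_x$ be the minimum $t$ with $\bA_{x,t} = 1$ (defaulting to $(1,0)$ if none exists); then output $\textsc{Assign}_{\bsigma^*, \bP}(x, t_x)$ from Figure~\ref{fig:assign-dynamic}. The correctness guarantee is then a direct instantiation of Theorem~\ref{thm:main-structural-dynamic}, combined with (as in the proofs of Theorem~\ref{thm:mpc-full} and Theorem~\ref{thm:insertion-only}) the crude $\eps\sum d_{\ell_p}(x_i,x_j)$ upper bound for the probability that weight-sketch or geometric-sampling initialization fails, $\|\bK\|_1$ exceeds its target, or Lemma~\ref{lem:seed-check} fails.

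The hard part, which is entirely absorbed into Theorem~\ref{thm:main-structural-dynamic}, is that timelines across points are now correlated through the shared mask $\bK$ and through the geometric sampler (which uses one sample to decide which point activates at each masked time, rather than sampling each $\bA_{i,t}$ independently). The reason we can still simulate the process in a dynamic stream is precisely this correlation: the total number of activated-and-kept points is dominated by $\|\bK\|_1$, which is an \emph{a priori} bounded random variable independent of the stream, so only $\poly(d\log\Delta/\eps)$ geometric samples are needed. Everything else---the per-point off-mask part of $\bA_x$---is generated at query time using $\boldr$, so it does not need to be stored during the stream; using Nisan's pseudorandom generator to derandomize $\boldr$ makes the overall memory footprint $\poly(d\log\Delta/\eps)$ throughout. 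This yields the claimed Euclidean max-cut guarantees.
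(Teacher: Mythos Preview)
Your overall architecture is right and tracks the paper closely: maintain a weight sketch (Lemma~\ref{lem:geo-sketch}) and a bank of geometric sampling sketches (Lemma~\ref{lem:geo-samp}) during the stream, after the stream draw a single mask $\bK$ and use the geometric samples to fill in the ``activated and kept'' columns, build the summary $\bP$, pick a seed, and on a query regenerate the off-mask part of $\bA_x$ from $\boldr$ and call $\textsc{Assign}_{\bsigma^*,\bP}$. This is exactly the paper's Figure~\ref{fig:preprocess-tm-dynamic} and Lemma~\ref{lem:build-sum-act-time}.

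The one genuine gap is your seed-selection step. You propose to ``apply Lemma~\ref{lem:seed-check} (as in Claim~\ref{cl:preprocess-sum})'' using an extra batch of geometric samples as the checking set. But Lemma~\ref{lem:seed-check} is stated for a \emph{Poisson} sample $\bC$ in which each $x_i$ is included independently with a \emph{known} probability $\min\{\xi w_i,1\}$; its proof uses that structure in the Bernstein bound and in splitting the quadratic form. Geometric samples from Lemma~\ref{lem:geo-samp} are i.i.d.\ draws (with replacement) from an unknown distribution, and you only get a $(1+\epspr)$-approximation $\bp^*$ to the sampling probability. The paper addresses this explicitly: in the seed-selection paragraph of Section~\ref{sec:dynamic-stream-alg} it writes that Lemma~\ref{lem:seed-check} is not invoked ``because we do not have access to a random set $\bC$ as generated in that lemma, and because we will only have approximate access to the sampling probabilities.'' Instead, the paper reserves $s/2$ geometric samples, groups them into \emph{pairs} $(\by^*_{h,1},\bp^*_{h,1}),(\by^*_{h,2},\bp^*_{h,2})$, and uses the unbiased (up to $(1+\epspr)^2$) estimator
\[
\frac{1}{2H}\sum_{h=1}^{H}\frac{d_{\ell_p}(x_{\bi_{h,1}},x_{\bi_{h,2}})}{\bp^*_{h,1}\bp^*_{h,2}}\bigl(z_{\bi_{h,1},0}z_{\bi_{h,2},0}+z_{\bi_{h,1},1}z_{\bi_{h,2},1}\bigr)
\]
for $f(z)$, together with a direct variance bound and a union bound over the $2^{\bm}$ seeds. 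If you replace your invocation of Lemma~\ref{lem:seed-check} by this pairwise importance-sampling estimate (or prove a variant of Lemma~\ref{lem:seed-check} for i.i.d.\ geometric samples with approximate probabilities), your proof goes through; as written, the hypotheses of Lemma~\ref{lem:seed-check} are not met.
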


\paragraph{Maintenance and Updates.} The memory contents of the dynamic streaming algorithm, $\textsc{DS-E-Max-Cut}$, are simple to describe, and the non-trivial part of Theorem~\ref{thm:dynamic} is showing how to unpack the memory contents to apply Theorem~\ref{thm:main-structural-dynamic}. In particular, the algorithm is initialized with parameters $\Delta, d \in \N$, $p \geq 1$ and an accuracy parameter $\eps \in (0, 1)$. The algorithm will maintain:
\begin{itemize}
\item A single instance of the streaming algorithm from Lemma~\ref{lem:geo-sketch}, initialized with $\delta = \eps / \Delta^d$. 
\item An instance of the streaming algorithm from Lemma~\ref{lem:geo-samp}, initialized with parameters derived from the initialization, $\epspr \leq \frac{\eps}{\ln(t_e)}$, $\delta = \eps $, and $s = \poly(d \log \Delta / \eps)$ which is set large enough so as to be more than the number of entries in $\bK$ will be $1$.
\item A pointer to a random oracle $\boldr \colon \{0,1\}^* \to \{0,1\}$.
\end{itemize}
Whenever a point is inserted or removed from the stream, the corresponding update is performed on the streaming algorithm from Lemma~\ref{lem:geo-sketch}, as well as the instance of Lemma~\ref{lem:geo-samp}. The space complexity is trivially bounded by $\poly(d\log \Delta / \eps)$, since $s = \poly(d \log \Delta /\eps)$ instances of Lemma~\ref{lem:geo-sketch} and Lemma~\ref{lem:geo-samp}, each of which uses $\poly(d\log \Delta /\eps)$ words.

\paragraph{Preprocessing: Generating the Summary and Accessing Activation Times.} We now give a procedure which will unpack the memory contents of $\textsc{DS-E-Max-Cut}$ in order to ``implicitly'' sample a correlated timeline-mask $\bA_1,\dots, \bA_n$ and $\bK$ (see Definition~\ref{def:corr-timeline-mask}). Here, ``implicit'' refers to the fact that the algorithm does not need to output the $(n+1) \cdot t_e$ bits which specify $\bA_1,\dots, \bA_n$ and $\bK$. We will show that the algorithm may produce the correlated timeline-mask summary $\bP = \textsc{Summ}(\bA_1,\dots, \bA_n, \bK)$ (see Definition~\ref{def:corr-timeline-mask-sum}) and provide access to the activation time $\bt_i$ for the timeline $\bA_i$ when given the point $x_i$. This is all that the algorithm will need in order to execute $\textsc{Assign}_{\sigma, \bP}(x_i, t_i)$.

We will utilize the random oracle $\boldr \colon \{0,1\}^* \to \{0,1\}$ in a way similarly to the insertion-only algorithm (see Remark~\ref{rem:generating-randomness}). We store a pointer to the random oracle $\boldr$, as well as the parameters $t_e,t_0$ and $\gamma$. Using the random oracle $\boldr$, we reference random variables
\[ \bK_t^{(r)} \sim [0,1] \qquad \text{and} \qquad \bA_{x, t}^{(r)} \sim [0, 1] \qquad \text{for all $t \in [t_e]$ and $x \in [\Delta]^d$.}\]
Importantly, the algorithm can use the random oracle $\boldr$ to access $\bK_t^{(r)}$ or $\bA_{x, t}^{(r)}$ when given $x$ and $t$ without having to explicitly store them. As in Remark~\ref{rem:generating-randomness}, it will be clear from the way we use $\bK_t^{(r)}$ and $\bA_{x,t}^{(r)}$ that they may sampled and stored to finite precision. The algorithms are specified in Figure~\ref{fig:preprocess-tm-dynamic}.

\begin{figure}
\begin{framed}
\textbf{Preprocessing Subroutines for} \textsc{DS-E-Max-Cut}. We provide access to the memory contents of \textsc{DS-E-Max-Cut} after processing $X = \{ x_1,\dots, x_n \} \subset [\Delta]^d$.
\\

\textsc{Build-Summ}$()$. Outputs the correlated timeline-mask summary $\bP = \textsc{Summ}(\bA_1,\dots, \bA_n, \bK)$.
\begin{enumerate}
\item Initialize a set of tuples $\bP$, initially empty, as well as a counter $c=1$.
\item\label{ln:iter-t} Iterate through $t=1,\dots, t_e$:
\begin{enumerate}
\item\label{ln:iter-t-1} Set $\gamma^t$ to $1$ if $t \leq t_0$ and $\min\{ \gamma / t, 1\}$ if $t > t_0$. Let $\bK_t = \ind\{ \bK_t^{(r)} \leq \gamma^t \}$. If $\bK_t = 0$, skip the following lines and continue the loop of Line~\ref{ln:iter-t} for next $t$. If $\bK_t = 1$, continue.
\item\label{ln:iter-t-2} Let $(\by^*_t, \bp^*_t)$ be the $c$-th sketch from Lemma~\ref{lem:geo-samp}. Increment $c$; if $c$ reaches $s/2$, output ``fail.''
\item\label{ln:iter-t-3} Execute \textsc{Activation-Time}$(\by_t^*, t-1, \bP)$ (specified below).
\begin{enumerate}
\item If it outputs $\ell \leq t-1$, insert $(\by_t^*, t; \bp_t^*, \ell) \in \bP$ if $\by_t^* \neq \bot$. 
\item\label{en:det-if-kept} If it outputs ``activated after $t-1$'', let $\bb_t \sim \Ber(\min\{1/(t \bp_t^*), 1 \})$ and insert $(\by_t^*, t; \bp_t^*, t)$ if $\bb_t= 1$ and $\by_t^* \neq \bot$. 
\end{enumerate}
\end{enumerate}
\item Output $\bP$.
\end{enumerate}

\textsc{Activation-Time}$(x, t, \bP)$. Assume (for inductive hypothesis) that the summary $\bP$ contains all tuples $(x, \ell; p_{\ell}, \tilde{t})$ in $\textsc{Summ}(\bA_1,\dots, \bA_n, \bK)$ with $\ell \leq t$. On input $x_i \in X, t$ and $\bP$, it outputs the activation time $\bt_i$ of $\bA_i$ if it is at most $t$, or ``activated after $t$'' otherwise. 
\begin{enumerate}
\item If $x$ appears in a tuple $(x, \ell; p_{\ell}, \tilde{t}) \in \bP$, output $\tilde{t}$.
\item\label{ln:activated-second-check} If not, execute the sketch from Lemma~\ref{lem:geo-sketch} with input $x$, and let $\bw = \bw(x) / (2\sfD)$ be the output. Using the random oracle $\boldr$, find the smallest $\ell \leq t$ which satisfies $\bK_{\ell} = 0$ (as in Line~\ref{ln:iter-t-1} and $\smash{\bA^{(r)}_{x, \ell} \leq \min \{ \bw, \frac{1}{\ell} \}}$ if one exists, and output $\ell$. If no such $\ell$, output ``activated after $t$.''
\end{enumerate}
\end{framed}
\caption{Preprocessing sub-routines \textsc{Build-Summ} and \textsc{Activation-Time}.}\label{fig:preprocess-tm-dynamic}
\end{figure}

\begin{lemma}\label{lem:build-sum-act-time}
An execution of $\textsc{Build-Summ}()$ which does not output ``fail'' specifies a draw to a correlated timeline-mask $\bA_1,\dots ,\bA_n$ and $\bK$ from Definition~\ref{def:corr-timeline-mask}. In addition,
\begin{itemize}
\item \textsc{Build-Summ}$()$ outputs $\bP=\textsc{Summ}(\bA_1,\dots, \bA_n, \bK)$ from Definition~\ref{def:corr-timeline-mask-sum}.
\item \textsc{Activation-Time}$(x_i, t, \bP)$ outputs the activation time $\bt_i$ of $\bA_i$ if $\bt_i \leq t$, or ``activated after $t$'' otherwise. 
\end{itemize}
For $s = O\left( (t_0 + \gamma \log t_e)/\eps\right)$, the probability that \textsc{Build-Summ}$()$ outputs ``fail'' is at most $\eps$. 
\end{lemma}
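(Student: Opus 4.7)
The plan is to verify that the procedures in Figure~\ref{fig:preprocess-tm-dynamic} faithfully simulate the sampling of Definition~\ref{def:corr-timeline-mask}, and then bound the failure probability separately. Conditional on a successful initialization of the sketches from Lemma~\ref{lem:geo-samp} and Lemma~\ref{lem:geo-sketch}, I would first exhibit an implicit draw of the correlated timeline-mask $(\bA_1,\dots,\bA_n,\bK)$ defined directly in terms of the random oracle $\boldr$ and the sketches: $\bK_t = \ind\{\bK_t^{(r)} \leq \gamma^t\}$ (matching Definition~\ref{def:mask} since $\bK_t^{(r)}$ is uniform in $[0,1]$); the counter $c$ in $\textsc{Build-Summ}$ advances through the samples $(\by^*_j,\bp^*_j)$ of Lemma~\ref{lem:geo-samp} according to the re-indexing of Remark~\ref{rem:re-indexing}; and for each $i\in[n]$ the timeline $\bA_i$ is specified pointwise, with $\bA_{i,t} = \ind\{\bA^{(r)}_{x_i,t} \leq \min\{\bw_i, 1/t\}\}$ on not-yet-activated indices $t$ with $\bK_t=0$ (matching Line~\ref{en:k-0}) and $\bA_{i,t}$ determined by the Bernoulli $\bb_t$ of Line~\ref{en:det-if-kept} on indices $t$ with $\bK_t=1$ (matching Line~\ref{en:k-1}). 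The weight $\bw_i = \bw(x_i)/(2\sfD)$ is a deterministic function of the instantiated sketch and hence consistent across all calls.

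Given this implicit specification, I would argue the two correctness claims by induction on the loop variable $t$ in $\textsc{Build-Summ}()$. The invariant at the start of iteration $t$ is: (i) the set $\bP$ equals $\{(x_j,\ell;p_\ell,\tilde{t})\in\textsc{Summ}(\bA_1,\dots,\bA_n,\bK):\ell<t\}$; and (ii) for every $x\in X$ and $s<t$, a call to $\textsc{Activation-Time}(x,s,\bP)$ returns the activation time $\bt_x$ of $\bA_x$ if $\bt_x\leq s$, and ``activated after $s$'' otherwise. For the inductive step, when $\bK_t=0$ no tuple with second coordinate $t$ belongs to $\textsc{Summ}$, so $\bP$ is untouched, and (ii) is preserved because the fallback scan in Line~\ref{ln:activated-second-check} uses exactly the thresholds $\bA^{(r)}_{x,\ell}\leq\min\{\bw,1/\ell\}$ that define first activation in the $\bK_\ell=0$ branch. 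When $\bK_t=1$, the only candidate for a new summary tuple is $\by^*_t$, and the call $\textsc{Activation-Time}(\by^*_t,t-1,\bP)$ returns (by the inductive hypothesis) the true activation time $\ell\leq t-1$ if one exists; then $\bA_{\by^*_t,t}\cdot\bK_t = 1$ by Line~\ref{en:k-1}, and Line~\ref{ln:iter-t-3} inserts $(\by^*_t,t;\bp^*_t,\ell)$, matching $\textsc{Summ}$. Otherwise $\bb_t$ is drawn as in Line~\ref{en:k-1} to determine $\bA_{\by^*_t,t}$, and a tuple with activation time $t$ is added precisely when $\bb_t=1$, again matching $\textsc{Summ}$.

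The main delicate point is coordinating the two modes of activation inside $\textsc{Activation-Time}$: a point $x$ may first activate at a $\bK_\ell=0$ time $\ell$ and then later reappear as $\by^*_t$ at a $\bK_t=1$ time $t>\ell$. The algorithm must return $\ell$ (not $t$) in this case. This is handled by the ordering of checks in $\textsc{Activation-Time}$: the $\bP$-lookup succeeds only for activations at $\bK_\ell=1$ times, since those are the only ones stored in $\bP$; otherwise the scan in Line~\ref{ln:activated-second-check} recovers the earliest $\bK_\ell=0$ activation through the random oracle, using the deterministic weight query $\bw(x)$. Because the same $\bw(x)$ is returned on every call, the scan answers consistently across all invocations (both those made from inside $\textsc{Build-Summ}()$ and those made later by downstream queries), which is what makes the induction go through.

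Finally, for the failure probability, $\textsc{Build-Summ}()$ outputs ``fail'' exactly when more than $s/2$ of the indices $t\in[t_e]$ satisfy $\bK_t=1$. Since the $\bK_t$ are independent Bernoullis with parameters $\gamma^t$,
\begin{align*}
\Ex\!\left[\,\|\bK\|_1\,\right] \;=\; \sum_{t=1}^{t_e}\gamma^t \;\leq\; t_0 + \sum_{t=t_0+1}^{t_e}\frac{\gamma}{t} \;\leq\; t_0 + \gamma\ln(t_e),
\end{align*}
so by Markov's inequality $\Prx[\|\bK\|_1 > (t_0+\gamma\ln t_e)/\eps]\leq\eps$. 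Choosing $s$ with $s/2 \geq (t_0+\gamma\ln t_e)/\eps$, i.e., $s = O((t_0+\gamma\log t_e)/\eps)$, yields the claimed bound.
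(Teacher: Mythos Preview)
Your proposal is correct and follows essentially the same approach as the paper: both argue by induction on the loop variable $t$ that the procedures simulate Definition~\ref{def:corr-timeline-mask} (treating the cases $\bK_t=0$ and $\bK_t=1$ separately, and verifying the intermediate correctness of $\textsc{Activation-Time}$), and both bound the failure probability via Markov's inequality on $\|\bK\|_1$. One minor imprecision: your invariant (ii) as stated for \emph{all} $s<t$ is slightly too strong, since $\textsc{Activation-Time}$ does not filter $\bP$ by $\ell\leq s$ and could return an activation time in $(s,t)$; but the algorithm only ever invokes it with $s=t-1$ (inside the loop) or $s=t_e$ (at the end), for which the argument goes through exactly as you describe.
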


\begin{proof}
\ignore{We describe the draw of the correlated timeline-mask $\bA_1,\dots, \bA_n$ and $\bK$ which is specified by the memory contents of $\textsc{DS-E-Max-Cut}$ and $\textsc{Build-Summ}()$ which does not output ``fail.'' }
First, the state of a streaming algorithm from Lemma~\ref{lem:geo-sketch} is used as in Line~\ref{en:weights} in Definition~\ref{def:corr-timeline-mask}, where we denote $\bw_i = \bw(x_i)$. Then, $\bK$ is distributed as $\calK(t_0,\gamma)$, since $\bK_t = \ind\{ \bK_{t}^{(r)} \leq \gamma^t\}$ so $\bK_t \sim \Ber(\gamma^t)$, as in Line~\ref{en:mask-corr} in Definition~\ref{def:corr-timeline-mask}.
We specify $\bA_1,\dots, \bA_n$ by iterating through $t = 1,\dots, t_e$ according to Line~\ref{en:iter-t-cor-def} in Definition~\ref{def:corr-timeline-mask}. 
Assume for the sake of induction that all $\ell \leq t-1$ have $\bA_{1,\ell}, \dots, \bA_{n,\ell}$ generated according to Definition~\ref{def:corr-timeline-mask} (where the base case of $t=0$ is trivial). In step $t$,
\begin{itemize}
\item If $\bK_t = 0$, we let $\bA_{i,t} = \ind\{ \bA_{x_i, t}^{(r)} \leq \min\{ \bw_i, 1/t\}\}$ if $\bA_{i,\ell} = 0$ for all $\ell < t$ and $\bA_{i,t} = \ind\{ \bA_{x_i, t}^{(r)} \leq \bw_i\}$ if there exists $\ell < t$ with $\bA_{i,\ell} = 1$. By inductive hypothesis, $\bA_{i,t}$ is distributed as a Bernoulli random variable with desired parameter, as in Line~\ref{en:k-0} of Definition~\ref{def:corr-timeline-mask}.
\item If $\bK_t = 1$, we use the sketch from Lemma~\ref{lem:geo-samp} in \textsc{DS-E-Max-Cut} to extract the next sample $(\by_t^*, \bp_t^*)$. (Note that the failure condition occurs whenever the number of non-zero entries in $\bK$ exceeds $s$). Whenever $\by_t^* \neq x_i$, we let $\bA_{i, t} = 0$ as desired in Line~\ref{en:k-1} of Definition~\ref{def:corr-timeline-mask}. On the other hand, if $\by_t^* = x_i$, we let $\bA_{i, t} = 1$ iff $\bA_{i,\ell} = 1$ for some $\ell < t$; if $\bA_{i,\ell} = 0$ for all $\ell < t$, $\bA_{i,t} = \bb_t$ for $\bb_t \sim \Ber(\min\{1/(t \bp_t^*), 1 \}))$ sampled in Line~\ref{en:det-if-kept} of \textsc{Build-Summ}$()$, as desired in Line~\ref{en:k-1} of Definition~\ref{def:corr-timeline-mask}.
\end{itemize}
This completes the proof that the sampled $\bA_1,\dots, \bA_n$ and $\bK$ is a correlated timeline-mask. To prove that \textsc{Build-Summ}$()$ outputs $\bP = \textsc{Summ}(\bA_1,\dots ,\bA_n, \bK)$, we proceed by induction on $t$, and show that after the first $t$ iterations of Line~\ref{ln:iter-t} of \textsc{Build-Summ}$()$, all tuples with $\bA_{i,\ell} \cdot \bK_{\ell} = 1$ with $\ell \leq t$ are in $\bP$. The base case of $t=0$ is trivial since $\bP$ is empty. For the $t$-th iteration of Line~\ref{ln:iter-t}, Line~\ref{ln:iter-t-1} correctly sets $\bK_t$, and if $\bK_t = 0$, then there are no tuples with $\bA_{i,t} \cdot \bK_t = 1$, so the algorithm may skip the lines. When $\bK_t= 1$, there is at most one tuple with $\bA_{i,t} \cdot \bK_t = 1$, since $\bA_{i,t}$ can only be one for $x_i = \by_{t}^*$. 

As an intermediate claim, note that \textsc{Activation-Time}$(x_i, t-1, \bP)$ correctly outputs the activation time $\bt_i$ as specified in $\bA_i$ if it is at most $t-1$, and outputs ``activated after $t-1$'' otherwise. This is because if $x_i$ participated in a tuple in $\bP$, it's activation time was recorded in the tuple (by induction that $\bP$ is correct up to time $t-1$). If it did not participate in a tuple of $\bP$, then $\bA_{i, \ell} = 0$ for any $\ell \leq t-1$ with $\bK_{\ell} = 1$; then, Line~\ref{ln:activated-second-check} of \textsc{Activation-Time}$(x_i, t-1, \bP)$ checks whether $\bA_{i,\ell} = 1$ for any $\ell \leq t-1$ with $\bK_{\ell} = 0$. 

From the above paragraph, Line~\ref{ln:iter-t-3} in \textsc{Build-Summ}$()$ correctly determines the activation time $\bt_i$ of $x_i = \by_{t}^*$ if it occurred before $t$, or outputs ``not activated by $t-1$'' otherwise. Since $\by_t^* = x_i$, if $\textsc{Activation-Time}(\by_{t}^*, t-1, \bP)$ outputs $\ell \leq t-1$, then $\bA_{i, t} = 1$ and we insert the tuple in $\bP$ with the prior activation time $\ell$. If not, then all $\bA_{i,\ell} = 0$ for $\ell < t$ and $\bA_{i,t} = \bb_t$ for $\bb_t \sim \Ber(\min\{1/(t \bp_t^*), 1 \})$; if $\bb_t = 1$, then we insert the tuple into $\bP$. This concludes the proof that $\bP = \textsc{Summ}(\bA_1,\dots, \bA_n, \bK)$, and that \textsc{Activation-Time}$(x_i ,t_e, \bP)$ provides access to the activation time.

Finally, we upper bound the probability that $\textsc{Build-Summ}()$ outputs ``fail.'' Note that this occurs if and only if the algorithm \textsc{Build-Summ}$()$ must use more than $s$ samples of Lemma~\ref{lem:geo-samp}. Let $\| \bK \|_1 = \sum_{t=1}^{t_e} \bK_t$ denote the total number of samples $(\by_t^*, \bp_t^*)$ that the algorithm needs. The expectation of $\|\bK\|_1$ at most $O(t_0 + \gamma \log t_e)$. By Markov's inequality, we obtain the bound on the failure probability of $\textsc{Build-Summ}()$ for $s = O((t_0 + \gamma \log t_e) / \eps)$.  
\end{proof}

\paragraph{Preprocessing: Selecting the Seed.} From Lemma~\ref{lem:build-sum-act-time}, the dynamic streaming algorithm may sample a correlated timeline-mask $\bA_1,\dots, \bA_n$ and $\bK$ from Definition~\ref{def:corr-timeline-mask}, build the summary $\bP = \textsc{Summ}(\bA_1,\dots, \bA_n, \bK)$, and access the activation times $\bt_i$ for each point $x_i \in X$. The algorithm can extract from $\bP$ the set $\bS^{t_0}$ of points in $X$ whose activation time occurred up to time $t_0$, and determine $m = |\bS^{t_0}|$. For any seed $\sigma \in \{0,1\}^m$ for $\bA_1,\dots, \bA_n$ (Definition~\ref{def:seed}), and any $x_i \in X$, the algorithm may execute $\textsc{Assign}_{\sigma,\bP}(x_i,\bt_i)$ in Figure~\ref{fig:assign-dynamic} for activation time $\bt_i$ of $\bA_i$ as follows:
\begin{enumerate}
\item First, the algorithm executes $\textsc{Activation-Time}(x_i, t_e, \bP)$, and obtains as output the activation time $\bt_i$ of $\bA_i$, or ``not activated by $t_e$.''
\item Then, it either outputs $\textsc{Assign}_{\sigma,\bP}(x_i, \bt_i)$, or $(1, 0)$ if ``not activated by $t_e$.''
\end{enumerate} 
The seeds define a family of $2^m$ cuts $z \in \{0,1\}^{n \times \{0,1\}}$ obtained by executing the above procedure for $z_i$. For any fixed cut $z$, and one may approximately estimate $f(z)$ up to multiplicative error $(1+\eps)$ and additive error $\eps \sum_{i=1}^n \sum_{j=1}^n d_{\ell_p}(x_i, x_j)$; as in Lemma~\ref{lem:seed-check}, we union bound over all $2^m$ cuts.\footnote{The reason we do not invoke Lemma~\ref{lem:seed-check} directly is because we do not have access to a random set $\bC$ as generated in that lemma, and because we will only have approximate access to the sampling probabilities.} Since \textsc{Build-Summ}$()$ (if it does not fail) uses at most $s/2$ geometric sampling sketches (Lemma~\ref{lem:geo-samp}) in Figure~\ref{fig:preprocess-tm-dynamic}, the algorithm still has $s/2$ samples. It partitions samples into pairs; we denote the $h$-th successful pair when the samples $(\by_{h,1}^*, \bp_{h,1}^*)$ and $(\by_{h,2}^*, \bp_{h,2}^*)$ are both not $\bot$ (which occurs for any individual pair with probability at least $1/\sfD^2$). Let $\bi_{h,1}$ and $\bi_{h,2}$ be so $\by_{h,1}^* = x_{\bi_{h,1}}$ and $\by_{h,2}^* = x_{\bi_{h,2}}$. Letting $\bH$ be the total number of successful pairs, $\bH$ is stochastically dominated by $\Bin(s/4, 1/\sfD^2)$. Conditioned on any $\bH = H$ and it being realized by a fixed set $G \subset [s/4]$ of successful pair of sketches of size $|G| = H$, 
\begin{align*}
\frac{1}{(1+\epspr)^2} \cdot f(z) \leq \Ex\left[ \frac{1}{2 \cdot H} \sum_{h=1}^{H} \dfrac{d_{\ell_p}(x_{\bi_{h,1}}, x_{\bi_{h,2}})}{\bp_{h,1}^* \cdot \bp_{h,2}^*} \left(z_{\bi_{h_1}, 0} z_{\bi_{h_2}, 0} + z_{\bi_{h_1},1} z_{\bi_{h_2}, 1} \right) \right] \leq f(z),
\end{align*}
where the expectation is over the realization of the samples $(\by_{\cdot}^*, \bp_{\cdot}^*)$ from Lemma~\ref{lem:geo-samp}, and the approximate guarantee on the expectation is because $\bp_{h,1}^*$ and $\bp_{h,2}^*$ are $(1+\epspr)$-approximations of the probability that $\by_{h,1}^*$ and $\by_{h,2}^*$ are sampled. For any fixed $H$, above estimator is an average of independent random variables, with the variance of the $h$-th summand at most
\begin{align*}
\Varx\left[\dfrac{d_{\ell_p}(x_{\bi_{h,1}}, x_{\bi_{h_2}})}{\bp_{h,1}^* \cdot \bp_{h,2}^*} \left(z_{\bi_{h_1}, 0} z_{\bi_{h_2}, 0} + z_{\bi_{h_1},1} z_{\bi_{h_2}, 1} \right) \right] \leq \sfD^2 \sum_{i=1}^n \sum_{j=1}^n \dfrac{d_{\ell_p}(x_i,x_j)^2}{w_i \cdot w_j},
\end{align*}
where $w_i$ is proportional to the sum of all distances in $X$ to $x_i$. The same argument as in the proof of Lemma~\ref{lem:weight-to-compatible} implies that the above variance is at most
\[ 16 \left(\sum_{i=1}^n \sum_{j=1}^n d_{\ell_p}(x_i, x_j) \right)^2,\]
and thus, whenever $\bH$ is larger than $O(m / \eps^2)$, we may obtain the desired approximation to $f(z)$ for all $z \in \{0,1\}^m$. By selecting the minimizing $\bsigma^*$, we can select the best seed in $\{0,1\}^{m}$. 

\paragraph{Proof of Theorem~\ref{thm:dynamic}: Putting Everything Together.} For the correctness of algorithm, it suffices to check that we can apply Theorem~\ref{thm:main-structural-dynamic}. First, our algorithm initializes the sketch of Lemma~\ref{lem:geo-sketch} and the $s$ samples of Lemma~\ref{lem:geo-samp}. Both Lemma~\ref{lem:build-sum-act-time} and Lemma~\ref{lem:geo-samp}, fail with probability at most $\eps$ (either by outputting ``fail'', or in a manner undetectable to the algorithm); either way, we consider these catastrophic failure (and we upper bound its contribution later). If \textsc{Build-Summ}$()$ does not output ``fail,'' then it specifies a draw to a correlated timeline-mask $\bA_1,\dots, \bA_n$ and $\bK$ from Definition~\ref{def:corr-timeline-mask} and outputs $\bP = \textsc{Summ}(\bA_1,\dots, \bA_n, \bK)$; furthermore, \textsc{Activation-Time}$(x_i, t, \bP)$ provides query access to the activation time $\bt_i$ of $\bA_i$. We apply the algorithm from above to select the optimal seed $\bsigma^*$, and we let $\textsc{Assign}(x)$ output $(1,0)$ or $(0,1)$ according to the following process:
\begin{itemize}
\item Use \textsc{Activation-Time}$(x_i, t_e, \bP)$ to obtain the activation time $\bt_i$ for $\bA_i$ if its activated by time $t_e$, or output $(1, 0)$ as a default if it outputs ``not activated by $t_e$'' if it does not get activated.
\item If the algorithm produced an activation time $\bt_i$, then output $\textsc{Assign}_{\bsigma^*, \bP}(x_i, \bt_i)$
\end{itemize}
Since $\bsigma^*$ approximately minimizes $f(\bz(\sigma))$, we have that
\begin{align*}
\Ex\left[ f(\bz(\bsigma^*))  \right] &\leq (1+\eps) \Ex\left[ \min_{\sigma \in \{0,1\}^m} f(\bz(\sigma)) \right] + O(\eps) \sum_{i=1}^n \sum_{j=1}^n d_{\ell_p}(x_i, x_j) \\
						&\leq (1+\eps) \min_{\substack{z \in \{0,1\}^{n\times\{0,1\}} \\ z_i \neq (0,0)}} f(z) + O(\eps) \sum_{i=1}^n \sum_{j=1}^n d_{\ell_p}(x_i, x_j), 
\end{align*}
where the first line uses the fact that $\bsigma^*$ is chosen to minimize $f(\bz(\sigma))$, and the second inequality applies Theorem~\ref{thm:main-structural-dynamic}.
\section{Proof of Theorem~\ref{thm:main-structural-dynamic}}\label{sec:dynamic-greedy-process}

To prove Theorem~\ref{thm:main-structural-dynamic}, we follow the same proof structure as of Theorem \ref{thm:main-structural} incorporating the changes to the timeline and access to estimates of the sampling probability. As before, we first describe a randomized algorithm, which receives as input access to the metric space, certain parameters which describe the random process, and description of a minimizing cut $z^*$ of $f$. However, there are two important differences which add some complication:
\begin{itemize}
    \item We will sample ``activated'' and ``kept'' points using Lemma~\ref{lem:geo-samp}, which will produce $s$ samples $\by^*_j$ ($j\in [s]$) from a probability distribution which is $\poly(d\log \Delta /\eps)$-metric compatible; furthermore, it outputs an approximation $\bp^*_j$ to the probability the point was sampled. This probability can be used for importance sampling, but it is only an estimate of the true probability. We must ensure that the errors incurred from this estimate do not accumulate. 
    \item Perhaps more apparent, the procedure no longer behaves independently for each point. There is a single mask $\bK$, and whenever $\bK_t = 1$, at most a single point will have $\bA_{x, t} = 1$ (i.e., that which is sampled from Lemma~\ref{lem:geo-samp}). 
\end{itemize}

\paragraph{Process.} We call the modified random process $\textsc{CorrelatedGreedyProcess}$ with the following input/output behavior (designed to produce a correlated Timeline-Mask from Definition~\ref{def:corr-timeline-mask}):
\begin{itemize}
\item \textbf{Input}: access to the metric space $(X = \{ x_1,\dots, x_n \}, d)$. We consider a fixed successful instance of Lemma~\ref{lem:geo-sketch}, and we let $w \in (0, 1/2]^n$ denote the vector where $w_i = \bw(x_i) / (2\sfD)$. The times $t_0, t_e \in \N$, a parameter $\gamma \geq 1$, and a matrix $z^* \in \{0,1\}^{n \times \{0,1\}}$ which will be a minimizer of $f$ encoding cuts:
\[ z^* = \argmin\left\{ f(z) : z \in \{0,1\}^{n \times \{0,1\}} \text{ where every $i \in [n]$ has $z_i \in \{ (1, 0) , (0, 1) \}$} \right\}. \]
\item \textbf{Output}: the process outputs a random (partial) cut $\bz \in \{0,1\}^{n\times \{0,1\}}$ where $\bz_i \in \{ (1, 0), (0, 1), (0, 0)\}$ for all $i\in[n]$. In addition, a number $\bm \in \N$, and a string $\bsigma^* \in \{0,1\}^{\bm}$.
\end{itemize}
The algorithm is described below.
\paragraph{Description.} \textsc{CorrelatedGreedyProcess}$((X = \{ x_1,\dots, x_n\}, d), w, \gamma, t_0, t_e, z^*)$ proceeds in rounds $t=1, 2, \dots, t_e$, and will 
generate the following sequences of random variables:
\begin{align*}
    &\bz^t \text{: a matrix $\{0,1\}^{n\times \{0,1\}}$ which encodes a partial cut at time $t$.} \\
    &\bw^t \text{: a vector of $n$ weights at time $t$.} \\
    &\bP^{t} \text{: tuples $(x_j, \ell; p_{\ell}, t_j)$ for indices $\ell \leq t$, values $p_{\ell}$, and $t_{j} \leq \ell$ where $\bA_{j,\ell}^{t} \cdot \bK_{\ell}^t = 1$.} \\
    &\bS^t \text{: a set of assigned points at time $t$.}\\
    &\bg^t \text{: a binary $n\times \{0,1\}$ matrix which encodes the  greedy assignment at time $t$.}\\
    &\bA^t \text{: an $n \times t$ binary matrix.} 
\end{align*}
as well as a random binary vector $\bK \in \{0,1\}^{t_e}$ which determines how the process behaves at each time step. At the start, the process generates $\bK$, by 
iterating through each time $1 \leq t \leq t_e$  and setting $\gamma^t = 1$ when $t\leq t_0$ and $\gamma^t = \min(1,\gamma/t)$ otherwise. Then, sample $\bK_t \sim \Ber(\gamma^t)$. 
Let $s$ be the number of nonzero-entries in $\bK$, e.g. $s=\|\bK\|_1$. The algorithm then uses Lemma $\ref{lem:geo-samp}$ to drawn $s$ many samples from $\calG_s(X)$. We re-index these samples, as described by Remark \ref{rem:re-indexing}.

We let $\bz^0$ be the all zero's matrix, $\bA^0$ be the empty $n \times 0$ matrices, $\bP^0 = \emptyset$, and $\bS^0 = \emptyset$. For time $1 \leq t \leq t_e$, we instantiate $\bA^t$:
\begin{enumerate}
\item\label{en:dynamic-greedy-ln1} For each $i \in [n]$, if $i \notin \bS^{t-1}$ we set $\bw_i^t$ to $\min\{ w_i, 1/t\}$. Otherwise we set  $\bw_i^t$ to $w_i$. For each $(x_j, \ell; p_{\ell}, t_j) \in \bP^{t-1}$, we let $\brho_{j}^{\ell} = \min\{ p_{\ell}, 1/\ell \}$ if $\ell = t_j$ and $p_{\ell}$ if $\ell > t_j$.

\item\label{en:dynamic-greedy-ln2} For each $i \in [n]$, if $t \leq t_0$ let $ \bg^t_i = z^*_i$,
else for $t > t_0$, we let 
\begin{align*}
\tilde{\bc}_{i, 0}^{t} &= \sum_{j=1}^n d(x_i, x_j) \left( \frac{1}{t-1} \sum_{\ell=1}^{t-1} \dfrac{\bA_{j,\ell}^{t-1} \cdot \bK_{\ell}^{t-1}}{\brho_{j}^{\ell} \cdot \gamma^\ell} \right) \bz_{j, 0}^{t-1} \\
\tilde{\bc}_{i,1}^t &= \sum_{j=1}^n d(x_i, x_j) \left( \frac{1}{t-1} \sum_{\ell=1}^{t-1} \dfrac{\bA_{j,\ell}^{t-1} \cdot \bK_{\ell}^{t-1}}{\brho_{j}^{\ell} \cdot \gamma^\ell} \right) \bz_{j, 1}^{t-1};
\end{align*}
even though $\brho_j^{\ell}$ is only defined for $(x_j, \ell; p_{\ell}, t_j) \in \bP^{t-1}$, $\bA_{j,\ell}^{t-1} \cdot \bK_{\ell}^{t-1}$ is non-zero only for these. We let $\bg^t_i$ be set to $(1, 0)$ if $\tilde{\bc}_{i, 0}^{t} < \tilde{\bc}_{i, 1}^t$ or $(0, 1)$ otherwise. 
\item\label{en:dynamic-greedy-ln3} If $\bK_t = 1$, select the corresponding sample $(\by^*_t, \bp^*_t)$, and let $i^* \in [n]$ such that $\by^*_t = x_{i^*}$. 
\begin{itemize}
    \item If $i^* \notin \bS^{t-1}$, sample $\bb_t \sim \Ber(\min\{ 1/(t\bp_t^*), 1\})$ and set $\bA_{i^*, t}^t = \bb_t$. Update $\bP^t \leftarrow \bP^{t-1}$ and add $(x_{i^*}, t; \bp_t^*, t)$ if $\bb_t=1$. For $j \neq i^*$, $\bA_{j, t}^{t} = 0$, and $\bA_{j,\ell}^{t} = \bA_{j,\ell}^{t-1}$ for $\ell < t$.
    \item If $i^* \in \bS^{t-1}$, find the activation time $t_{i^*}$. Update $\bP^{t} \leftarrow \bP^{t-1} \cup \{ (x_{i^*}, t; \bp_t^*, t_{i^*}) \}$. Set $\bA_{j, t}^{t} = \ind\{ j = i^* \}$ and $\bA_{j,\ell}^{t} = \bA_{j,\ell}^{t-1}$ for $j < t$.
\end{itemize}
Finally, update $\bS^t$ to activated points $\bS^{t-1} \cup \{ i \in [n] : \bA_{i,t}^t = 1\}$ by time $t$.
\item\label{en:dynamic-greedy-ln4}  For each $i \in [n]$, define \begin{align*}
    \bz^t_i = \begin{cases}
        \bz^{t-1}_i &  i \in \bS^{t-1} \\
        \bg_i^t &  i \in \bS^{t} \land i \notin \bS^{t-1} \\
        (0,0) &  \text{else} \\
    \end{cases}
\end{align*}
\item Let $\bm = |\bS^{t_0}|$, and consider a natural ordering $\pi$ where $\pi_j$ is the $j$-th element of $\bS^{t_0}$. Let
\begin{align*}
    \bsigma_{i}^* = \by^*_{\pi_i}
\end{align*}
\end{enumerate}
When $t > t_e$, the process completes and we output $(\bz, \bm, \bsigma^*)$, and we will refer to $\bA^{t_e}$ and $\bP^{t_e}$ as $\bA$ and $\bP$ respectively.

One complication is that geometric samples $(\by_t^*, \bp_t^*)$ only provide approximate access to the true probabilities from which they are sampled, and only for the point which is sampled. We define the following variables which help us keep track of the quantities that the algorithm knows or does not know.
\begin{definition}\label{def:probs-seen-unseen}
Consider an execution of $\textsc{CorrelatedGreedyProcess}$ with the parameters $((\{x_1,\dots, x_n\},d), w, \gamma, t_0, t_e, z^*)$. Then, for any time $t$ where $\bK_t = 1$, consider the set of samples $\{(\by_t^*, \bp_t^*)\}$ from Lemma~\ref{lem:geo-samp}. With probability $1-\delta$ the initialization succeeds and we define the following:
    \begin{itemize}
        \item For $i \in [n]$, $p_i^t$ is the probability over $(\by_t^*, \bp_t^*)$ that $\by_t^* = x_i$, which by Lemma~\ref{lem:geo-samp} (see Remark~\ref{rem:additive-err}) satisfies
        \[ p_i^t = \Prx\left[ \by_t^* = x_i \right] \geq \frac{1}{\sfD} \cdot \dfrac{\sum_{j=1}^n d(x_i, x_j)}{\sum_{j=1}^n \sum_{j'=1}^n d(x_{j}, x_{j'})}. \]
        The algorithm does not observe $p_i^t$, but for $\by_t^* = x_{i^*}$, $p_{i^*}^t \leq \bp_t^* \leq (1+\epspr) p_{i^*}^t$.
        \item For $i^* \in [n]$ with $\by_t^* = x_{i^*}$, $\brho_{i^*}^t$ is set to $\min\{ \bp_t^*, 1/t\}$ if $i^* \notin \bS^{t-1}$ (i.e., first activated at $t$), and $\bp_t^*$ otherwise. The algorithm does know $\brho_{i^*}^t$.
        \item For $i^* \in [n]$ with $\by_t^* = x_{i^*}$, $\tilde{\bp}_{i^*}^t$ is set to $p_{i^*}^t \cdot \min\{ 1/(t \bp_t^*), 1\}$ if $i^* \notin \bS^{t-1}$, and $p_{i^*}^t$ otherwise.
    \end{itemize}
    For an execution of the first $t-1$ steps and $\bK_t = 1$, the probability that $\bA_{i,t} = 1$ is $\tilde{\bp}_i^t$, which is at most $1/t$ since $\bp_t^* \geq p_i^t$ whenever $\by_t^* = x_i$.
\end{definition}

\paragraph{Connecting \textsc{CorrelatedGreedyProcess} and $\textsc{Assign}_{\bsigma^*,\bP}(\cdot,\cdot)$.} Similarly to $\textsc{GreedyProcess}$, we have designed \textsc{CorrelatedGreedyProcess} so that $\bS^{t_0}$ contains all activated points by $t_0$ (which are all in $\bP$, and $\bsigma^*$ specifies the optimal cut assignment for $\bS^{t_0}$ (analogous to Observation~\ref{obs:1} and~\ref{obs:2}). The only minor difference is Observation~\ref{obs:3}, which incorporates the geometric samples.
\ignore{\subsection{From \textsc{CorrelatedGreedyProcess} to \textsc{Assign}}

We collect the following observations, which allows us to connect the analysis of \textsc{CorrelatedGreedyProcess} to the assignment rule from Theorem~\ref{thm:main-structural-dynamic}. In particular, we will show that the distribution over matrices $(\bA, \bK)$ at the end of the above greedy process is exactly that which is defined for the statement of Theorem~\ref{thm:main-structural-dynamic}. This allows us to prove Theorem~\ref{thm:main-structural-dynamic} by showing that the partial cut produced by the greedy process is approximately optimal. 

\begin{observation}\label{obs:dynamic-1}
Since $\gamma^{\ell} = 1$ for all $\ell \leq t_0$, $\bK_{\ell} = 1$ for all $\ell \leq t_0$ and $i \in [n]$. Therefore,
\[ \bS^{t_0} = \left\{ j\in [n] : \exists \ell \leq t_0 \text{ s.t }\bA_{j,\ell} = 1 \right\} = \left\{ j \in [n] : \exists \ell \leq t_0 \text{ s.t } \bA_{j,\ell} \cdot \bK_{\ell} = 1\right\}.\]
Therefore, by having access to pairs $(j, \ell)$ with $\bA_{j,\ell} \cdot \bK_{\ell} = 1$ (which is given through $\bP$) and knowledge of $t_0$, one can construct the set $\bS^{t_0}$, identify the parameter $\bm$, and determine the ordering $\pi$ so that $\pi_j$ denotes the $j$-th element of $\bS^{t_0}$.  
\end{observation}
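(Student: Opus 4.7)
The plan is to prove this observation by directly unpacking the definitions of $\bK$, $\bS^{t_0}$, and $\bP$ as they appear in Definitions~\ref{def:mask},~\ref{def:corr-timeline-mask}, and~\ref{def:corr-timeline-mask-sum} together with the description of \textsc{CorrelatedGreedyProcess}. The first step is to argue that $\bK_{\ell} = 1$ deterministically for every $\ell \leq t_0$: by construction, $\bK_{\ell} \sim \Ber(\gamma^{\ell})$ with $\gamma^{\ell} = 1$ whenever $\ell \leq t_0$ (the definition of the mask distribution $\calK(t_0,\gamma)$), hence $\bK_{\ell} = 1$ with probability one. This observation is essentially a restatement of the first sentence.

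Next, I will verify the equality of the two descriptions of $\bS^{t_0}$. From the process, $\bS^{t_0}$ is defined as the set of indices $j \in [n]$ with $\bA_{j,\ell} = 1$ for some $\ell \leq t_0$, which gives the first equality directly. For the second equality, since $\bK_{\ell} = 1$ for every $\ell \leq t_0$ (by the previous paragraph), the identity $\bA_{j,\ell} \cdot \bK_{\ell} = \bA_{j,\ell}$ holds on the relevant time interval, and the two set-builder expressions therefore coincide.

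Finally, I will argue that $\bS^{t_0}$ can be reconstructed from the summary $\bP$. Recall that $\bP = \textsc{Summ}(\bA_1,\dots,\bA_n,\bK)$ records every tuple $(x_j,\ell;p_{\ell},t_j)$ for which $\bA_{j,\ell} \cdot \bK_{\ell} = 1$. By the equality just proved, every $j \in \bS^{t_0}$ contributes at least one tuple to $\bP$ with $\ell \leq t_0$, and conversely any point appearing in $\bP$ with $\ell \leq t_0$ lies in $\bS^{t_0}$. Filtering the tuples of $\bP$ by the condition $\ell \leq t_0$ therefore yields exactly the set $\{x_j : j \in \bS^{t_0}\}$, so one reads off $\bm = |\bS^{t_0}|$, and then defines the lexicographic ordering $\pi$ on these points as in Definition~\ref{def:seed}. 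No randomness beyond the deterministic fact $\bK_{\ell} \equiv 1$ for $\ell \leq t_0$ is invoked, so no probabilistic step is needed—the only ``obstacle'' is simply chasing which definitions pin down $\bS^{t_0}$ versus $\bP$, which is routine.
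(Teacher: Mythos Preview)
Your proposal is correct and matches the paper's treatment: the paper states this as a self-evident observation (in fact, in the final version it is absorbed into a one-sentence remark pointing back to Observation~\ref{obs:1}), relying on exactly the facts you spell out---$\gamma^{\ell}=1$ for $\ell\le t_0$ forces $\bK_{\ell}=1$, which makes the two descriptions of $\bS^{t_0}$ coincide and allows reconstruction from the tuples in $\bP$ with $\ell\le t_0$. You have simply written out the definition-chase in more detail than the paper bothers to.
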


\begin{observation}\label{obs:dynamic-2}
Since the row $\bg_i^t = z_i^*$ for every $t \leq t_0$, letting $\pi$ denote the natural ordering where $\pi_j$ is the $j$-th element of $\bS^{t_0}$, for every $j \in [\bm]$,
\[ \bz_{\pi(j)} = z_{\pi(j)}^* = \left\{\begin{array}{cc} (1, 0) & \bsigma^*_{j} = 0 \\
						 			(0, 1) & \bsigma^*_j = 1 \end{array} \right. .\]
\end{observation}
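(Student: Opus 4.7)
The plan is to prove Observation~\ref{obs:dynamic-2} by directly unpacking the definitions of \textsc{CorrelatedGreedyProcess} and of the seed $\bsigma^*$; there is no real obstacle here, only bookkeeping. Fix $j \in [\bm]$ and let $i = \pi(j)$ be the $j$-th element of $\bS^{t_0}$ under the natural ordering. By membership in $\bS^{t_0}$, there exists some $t \leq t_0$ with $\bA_{i,t}^t = 1$; let $\bt_i$ denote the smallest such time, so that $\bt_i \leq t_0$, and $i \in \bS^{\bt_i} \setminus \bS^{\bt_i - 1}$.

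Next I would invoke the two relevant lines of \textsc{CorrelatedGreedyProcess}. Line~\ref{en:dynamic-greedy-ln4} applied at $t = \bt_i$ gives $\bz_i^{\bt_i} = \bg_i^{\bt_i}$, because $i$ enters $\bS^t$ exactly at time $\bt_i$. Since $\bt_i \leq t_0$, Line~\ref{en:dynamic-greedy-ln2} sets $\bg_i^{\bt_i} = z_i^*$, so $\bz_i^{\bt_i} = z_i^*$. A one-line induction on $t \in \{\bt_i + 1, \dots, t_e\}$, using the first case of Line~\ref{en:dynamic-greedy-ln4} (which copies $\bz_i^{t-1}$ forward whenever $i \in \bS^{t-1}$), then shows $\bz_i^t = \bz_i^{\bt_i} = z_i^*$ for every such $t$. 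In particular, the output cut satisfies $\bz_{\pi(j)} = \bz_i^{t_e} = z_{\pi(j)}^*$.

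Finally, I would read off the seed. By the definition of $\bsigma^*$ in the last step of \textsc{CorrelatedGreedyProcess} (aligned with Definition~\ref{def:seed} and the analogous step in \textsc{GreedyProcess}), $\bsigma_j^* = 0$ when $z^*_{\pi(j)} = (1,0)$ and $\bsigma_j^* = 1$ when $z^*_{\pi(j)} = (0,1)$, which are the only two possibilities since $z^*$ is a complete cut with every row in $\{(1,0),(0,1)\}$. Combining this with $\bz_{\pi(j)} = z^*_{\pi(j)}$ yields exactly the displayed identity. The only subtlety worth flagging is the consistency between Line~\ref{en:dynamic-greedy-ln3} and the correlated timeline-mask from Definition~\ref{def:corr-timeline-mask}: when $t \leq t_0$ we have $\gamma^t = 1$ so $\bK_t = 1$ deterministically, hence activation at or before $t_0$ requires a geometric-sample event with $\by^*_t = x_i$, and the membership $i \in \bS^{t_0}$ recorded by the process coincides with the membership used to define $\pi$, ensuring $\pi$ is well-defined and $\bP$ in fact contains $(x_i, \bt_i; \bp_{\bt_i}^*, \bt_i)$ for every $i \in \bS^{t_0}$.
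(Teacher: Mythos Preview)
Your proposal is correct and is exactly the kind of direct definition-unpacking the paper has in mind; indeed, the paper states this as an observation without proof (and in the dynamic section merely remarks that the analogue of Observation~\ref{obs:2} holds), so your argument is the natural elaboration. One minor note: in the description of \textsc{CorrelatedGreedyProcess} the line defining $\bsigma^*$ reads $\bsigma_i^* = \by^*_{\pi_i}$, which is a typo for $\bsigma_i^* = z^*_{\pi_i}$ (as in \textsc{GreedyProcess}); your reading, aligned with Definition~\ref{def:seed}, is the intended one.
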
}

\begin{observation}\label{obs:dynamic-3}
For every $i \in [n]$, let $\bt_i$ denote the smallest number where $\bA_{i, \bt_i} = 1$. Then, $\bz_i = \bg_i^{\bt_i}$, and $\bg_i^{\bt_i}$ depends on the values $\tilde{\bc}_{i,0}^{\bt_i}$ and $\tilde{\bc}_{i, 1}^{\bt_i}$, which may be computed from:
\begin{itemize}
\item The point $x_i$ and $\bt_i$,
\item The tuples $(x_j, \ell;\bp_{\ell}^*, \bt_j) \in \bP$ for $\ell \leq \bt_i-1$ which contain the point $x_j$, the time $\ell$ where $\bA_{j,\ell}\cdot \bK_{\ell}=1$, the sampling weight $\bp^*_{\ell}$, and the smallest $\bt_j$ where $\bA_{j, \bt_j} = 1$, and $\bz_j$ (which may be recursively computed from $\bsigma^*$ and $\bP$).
\end{itemize}
Similarly to $\textsc{GreedyProcess}$, the final vector $\bz$ is set by $\textsc{Assign}_{\bsigma^*,\bP}(x_i, \bt_i)$.
\end{observation}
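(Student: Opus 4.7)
The plan is to verify the observation by unpacking the definitions in \textsc{CorrelatedGreedyProcess} step by step; no new estimates are needed, only careful bookkeeping that each quantity referenced in $\bg_i^{\bt_i}$ is encoded in $\bP$ together with $x_i, \bt_i$. First I would show $\bz_i = \bg_i^{\bt_i}$. By definition $\bt_i$ is the first $t$ with $\bA_{i,t}^t = 1$, so $i \in \bS^{\bt_i} \setminus \bS^{\bt_i-1}$. For $t < \bt_i$, the third case of Line~\ref{en:dynamic-greedy-ln4} sets $\bz_i^t = (0,0)$; at $t = \bt_i$, the second case gives $\bz_i^{\bt_i} = \bg_i^{\bt_i}$; for $t > \bt_i$, the first case preserves the entry by induction. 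Hence $\bz_i = \bz_i^{t_e} = \bg_i^{\bt_i}$.

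Next I would verify that $\bg_i^{\bt_i}$ is determined by $x_i$, $\bt_i$, and the listed tuples in $\bP$. If $\bt_i \leq t_0$, Line~\ref{en:dynamic-greedy-ln2} sets $\bg_i^{\bt_i} = z_i^*$, which by construction of $\bsigma^*$ equals the entry $\bsigma^*_{\pi(i)}$ (note $\bK_{\ell} = 1$ for $\ell \leq t_0$, so $i \in \bS^{t_0}$ gives $(x_i, \bt_i; \bp^*_{\bt_i}, \bt_i) \in \bP$, so $\pi(i)$ is recoverable from $\bP$). If $\bt_i > t_0$, the greedy rule depends on $\tilde{\bc}_{i,0}^{\bt_i}$ and $\tilde{\bc}_{i,1}^{\bt_i}$. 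The summand in Line~\ref{en:dynamic-greedy-ln2} is supported only on pairs $(j,\ell)$ with $\bA^{\bt_i-1}_{j,\ell}\cdot \bK^{\bt_i-1}_{\ell} = 1$; by construction of $\bP^t$ in Line~\ref{en:dynamic-greedy-ln3}, these are exactly the pairs for which some tuple $(x_j, \ell; \bp^*_{\ell}, \bt_j)$ with $\ell \leq \bt_i - 1$ lies in $\bP^{\bt_i-1} \subseteq \bP$. The remaining factors reduce to accessible quantities: $d(x_i, x_j)$ uses $x_i$ and $x_j$; $\brho_j^{\ell}$ is $\min\{\bp^*_{\ell}, 1/\ell\}$ if $\ell = \bt_j$ and $\bp^*_{\ell}$ otherwise, computable from $(\bp^*_{\ell}, \bt_j, \ell)$ in the tuple; $\gamma^{\ell}$ is a fixed function of $\ell$; and $\bz^{\bt_i-1}_j = \bg_j^{\bt_j}$ by the first step (since $\bt_j \leq \ell \leq \bt_i - 1$), which is obtained recursively from a strictly earlier activation time.

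Finally I would match this line-by-line with $\textsc{Assign}_{\bsigma^*,\bP}(x_i, \bt_i)$ in Figure~\ref{fig:assign-dynamic}. The base case of $\textsc{Assign}$ uses $\bsigma^*_{\pi(x_i)}$ exactly when $\bt_i \leq t_0$, agreeing with $\bg_i^{\bt_i} = z_i^*$. The inductive case of $\textsc{Assign}$ computes $\bC_0$ and $\bC_1$ by summing $d(x_i, x_j)/(\brho_j^{\ell}\gamma^{\ell})\cdot \bz_{j,b}$ over the same set $\bP_{t_i}$ of pairs $(x_j, \ell)$ with $\ell \leq \bt_j < \bt_i$; up to the $1/(\bt_i - 1)$ normalization (which does not affect the argmin between the two sides), these are precisely $\tilde{\bc}_{i,0}^{\bt_i}$ and $\tilde{\bc}_{i,1}^{\bt_i}$, and the tie-breaking conventions coincide. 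Thus $\textsc{Assign}_{\bsigma^*,\bP}(x_i, \bt_i)$ returns $\bg_i^{\bt_i} = \bz_i$, completing the observation.

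The only potential obstacle is ensuring every pair contributing to $\tilde{\bc}_{i,b}^{\bt_i}$ is in fact recorded in $\bP$ by the time $\bt_i - 1$ and not pruned later; this is immediate from the fact that $\bP^t$ is monotonically accumulated in Line~\ref{en:dynamic-greedy-ln3} and never shrinks, so $\bP^{\bt_i-1} \subseteq \bP^{t_e} = \bP$.
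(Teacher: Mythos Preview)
Your proposal is correct and is precisely the natural verification; the paper states this observation without proof (as it does for the analogous Observation~\ref{obs:3} in the uncorrelated setting), so your careful unpacking of Lines~\ref{en:dynamic-greedy-ln2}--\ref{en:dynamic-greedy-ln4} of \textsc{CorrelatedGreedyProcess} and matching against the $\textsc{Assign}_{\bsigma^*,\bP}$ procedure in Figure~\ref{fig:assign-dynamic} is exactly what is intended.
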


\subsection{Proof of Theorem~\ref{thm:main-structural-dynamic} assuming Lemma~\ref{lemma:dynamic-main}}


\begin{lemma}[Main Lemma] \label{lemma:dynamic-main}
Let $(\bz, \bm, \bsigma^*)$ be the output of \textsc{CorrelatedGreedyProcess}$((X, d),$ $ w, \gamma, t_0, t_e, z^*)$, where the weights $w$ are $\sfD$-compatible. For any $\eps > 0$, if 
\[  \gamma \geq \dfrac{(\sfD \ln(t_e) + 1)^2}{\eps^2} \qquad \text{and}\qquad t_0 \geq \max\left\{ \frac{\sqrt{\gamma} \cdot \sfD}{\eps}, \frac{1}{\eps} \right\} \qquad \text{and}\qquad \epspr \leq \frac{\eps}{\ln(t_e)}. \] 
Then, 
\[ \Ex[f(\bz) - f(z^*)] \lsim \eps \sum_{i=1}^n \sum_{j=1}^n d(x_i, x_j).  \]
\end{lemma}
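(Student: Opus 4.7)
The plan is to adapt the proof of Lemma~\ref{lemma:main} (which analyzes $\textsc{GreedyProcess}$) to the correlated setting of $\textsc{CorrelatedGreedyProcess}$. The analysis must now accommodate two new features: (i) at each time $t$ where $\bK_t = 1$, at most one point is ``activated and kept'' (via the single geometric sample), rather than each point being independently activated and kept; (ii) the true sampling probability $p_i^t$ is accessible only through the $(1+\epspr)$-approximation $\bp_t^*$, and the distribution from Lemma~\ref{lem:geo-samp} is only $\sfD$-metric-compatible rather than $O(1)$-compatible. The overall structure---defining a fictitious cut, telescoping $\Ex[f(\hat{\bz}^t) - f(\hat{\bz}^{t-1})]$, splitting into type-A and type-B terms, and bounding the deviation of the estimator from the true contribution---is preserved.

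First, I would introduce a fictitious cut $\hat{\bz}^t$ analogous to Definition~\ref{def:fictitious}, with $\hat{\bz}^0 = z^*$ and $\hat{\bz}^t = z^*$ for $t \leq t_0$. For $t > t_0$ and points $i \notin \bS^{t-1}$, the update rule must be chosen so that, conditioned on any fixed execution through step $t-1$, $\Ex[\hat{\bz}_i^t - \hat{\bz}_i^{t-1}]$ is proportional to $(\bg_i^t - \hat{\bz}_i^{t-1})$ (up to a bias of order $\epspr$). The natural choice separates the two branches: when $\bK_t = 0$ no update occurs, and when $\bK_t = 1$ the non-sampled points $i$ receive the ``non-activation'' update of (\ref{eq:shadow-def}) with $\bw_i^t$ replaced by the marginal activation probability $\tilde{\bp}_i^t$ from Definition~\ref{def:probs-seen-unseen}. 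Analogs of Claim~\ref{cl:simple-fictitious} and Lemma~\ref{lemma:fic-cut-bounded} then follow from the same induction, keeping $\hat{\bz}^t \in [0,1]^{n \times \{0,1\}}$ with row sums equal to $1$.

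With this in place, the decomposition of $f(\hat{\bz}^t) - f(\hat{\bz}^{t-1})$ into type-A terms (\ref{eq:linear-term}) and type-B terms (\ref{eq:non-linear-1})--(\ref{eq:non-linear-2}) carries over. The type-B contribution is actually simpler than in Lemma~\ref{lem:non-linear}: since at most one point per time step has a ``real'' activation, the cross-product $(\hat{\bz}_i^t - \hat{\bz}_i^{t-1})(\hat{\bz}_j^t - \hat{\bz}_j^{t-1})$ for $i \neq j$ pairs a real change with an ``analytical'' change, whose expectation is suppressed by the martingale property. The type-A terms then split into the greedy part (non-positive by a direct adaptation of Lemma~\ref{lem:greedy-choice-opt}, since $\bg_i^t$ still minimizes $\bg_{i,0}^t \tilde{\bc}_{i,0}^t + \bg_{i,1}^t \tilde{\bc}_{i,1}^t$ and Lemma~\ref{lemma:fic-cut-bounded} still applies to $\hat{\bz}^{t-1}$) and the deviation part $\Ex[\sum_i |\bc_{i,b}^t - \tilde{\bc}_{i,b}^t|]$.

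The main obstacle, and technically the hardest part, is the analog of Lemma~\ref{lem:error-bound}. The martingale decomposition of Lemma~\ref{lem:reduce-to-error} still applies once one verifies that $\Ex[\berr_{i,b}^{\ell+1}(j) - \frac{\ell-1}{\ell} \berr_{i,b}^{\ell}(j)] = O(\epspr) \cdot \err_{i,b}^{\ell}(j)$ instead of exactly $0$, because $\brho_j^{\ell}$ is only a $(1+\epspr)$-approximation of the true sampling probability. The compounded bias across $\ell = 1, \dots, t$ is $(1+O(\epspr))^t$, which stays bounded by $e^{O(\eps)}$ under the assumption $\epspr \leq \eps/\ln(t_e)$. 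For the per-step variance analog of Lemma~\ref{lem:bound-on-error}, the normalizer is $\brho_j^{\ell} \gamma^{\ell} \approx p_j^{\ell} \gamma^{\ell}$, and since the distribution $p^{\ell}$ is $\sfD$-metric-compatible (by Lemma~\ref{lem:geo-samp}), the inequality $\sum_i \bigl(\sum_j d(x_i, x_j)^2 / p_j^{\ell}\bigr)^{1/2} \lsim \sqrt{\sfD} \sum_{i,j} d(x_i, x_j)$ replaces the factor $\sqrt{\lambda}$ used in Lemma~\ref{lemma:main}. Plugging these into the analog of (\ref{eq:deviation-exp}) and using the hypotheses $\gamma \geq (\sfD \ln(t_e) + 1)^2 / \eps^2$ and $t_0 \geq \sqrt{\gamma}\, \sfD / \eps$, the telescoping sum over $t \in \{t_0+1, \dots, t_e\}$ yields the claimed $O(\eps) \sum_{i,j} d(x_i, x_j)$ bound on $\Ex[f(\bz) - f(z^*)]$.
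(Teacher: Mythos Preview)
Your proposal captures the high-level structure, but there are two genuine gaps.

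First, your fictitious-cut update is incorrect. You write that ``when $\bK_t = 0$ no update occurs,'' but in the correlated process (Definition~\ref{def:corr-timeline-mask}), at times with $\bK_t = 0$ every point still activates \emph{independently} with probability $\bw_i^t$, exactly as in \textsc{GreedyProcess}. So the fictitious cut for $i \notin \bS^t$ must be updated at those times too, using the rule (\ref{eq:shadow-def}) with $\bw_i^t$; only when $\bK_t = 1$ is the normalization by $\tilde{\bp}_i^t$ used instead. Without this, the one-step identity $\Ex[\hat{\bz}_i^t - \hat{z}_i^{t-1}] = \tfrac{1}{t}(g_i^t - \hat{z}_i^{t-1})$ fails on the $\bK_t = 0$ branch. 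Relatedly, your claim that the type-B terms are ``simpler because at most one point has a real activation'' is wrong when $\bK_t = 0$; independence handles that case, but when $\bK_t = 1$ the paper needs an explicit two-point correlation computation to bound the cross terms.

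Second, and more seriously, your treatment of the $\epspr$-error does not work. You propose to run the martingale argument directly on the observed estimator (built from $\brho_j^\ell$) and absorb the bias as a compounded factor $(1+O(\epspr))^t$. But $t$ ranges up to $t_e$, and $\epspr \cdot t_e$ is not bounded under $\epspr \le \eps/\ln(t_e)$; even restricting to the steps where $\bK_\ell = 1$ gives $\epspr \cdot \|\bK\|_1 \approx \epspr \cdot \gamma \ln t_e \gg 1$. The paper instead introduces an \emph{ideal} estimator $\hat{\bR}_j^{t-1}$ that divides by the true marginal probability $\tilde{\bp}_j^\ell$ rather than the observed $\brho_j^\ell$. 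For this ideal estimator the martingale identity holds \emph{exactly}, so the analogs of Lemmas~\ref{lem:reduce-to-error}--\ref{lem:bound-on-error} go through (with the extra $\sfD$-factor from metric compatibility, as you correctly anticipated). The gap $|\hat{\bc}_{i,b}^t - \tilde{\bc}_{i,b}^t|$ between the ideal and actual estimators is then bounded separately and directly by $\epspr \sum_{i,j} d(x_i,x_j)$ per step; summed over $t$ with the $1/t$ weight this contributes $O(\epspr \ln t_e) \cdot T \le O(\eps) T$. This three-way split of the type-A term---greedy part, contribution versus ideal estimator, and ideal versus actual estimator---is the key new ingredient your outline is missing.
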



\begin{proof}[Proof of Theorem~\ref{thm:main-structural-dynamic} assuming Lemma~\ref{lemma:dynamic-main}] Notice that, by construction $\bK_{t}$ is sampled from $\Ber(\gamma^t)$ which is $1$ if $t \leq t_0$ and $\min\{ \gamma /t, 1\}$ otherwise; hence it is distributed exactly as $\calK(t_0, \gamma)$. This now imposes that each column of $\bA$ has it's entries either behave independently and distributed solely based on $w_i$ and $t$, or dependently, such that there is at most one activation, based on a sample $(\by^*_t,\bp^*_t)$ from $\calG_t(X)$ and $t$. 
In particular, note that when $\bK_t=0$, $\bA_{i, t}$ is always sampled from $\Ber(\bw_i^{t})$, where the setting of $\bw_i^t$ depends solely on whether $i$ is or is not in $\bS^{t-1}$; if it is not in $\bS^{t-1}$, it is set to $\min \{ w_i, 1/t\}$ and otherwise $w_i$. 
 $\bK_{i,t}=1$, $\bA_{i, t}$ is depends on a sample from $(\by^*_t,\bp^*_t) \sim\calG_t(X)$ from Lemma~\ref{lem:geo-samp}. Notice that if $i \notin \bS^{t-1}$, then $\bA_{i,t} = 1$ with probability $\tilde{\bp}_i^t = p_i^t \cdot \min \left\{ 1/(t\bp^*_t), 1 \right\}$, where $\bp^*_t$ is within $(1+\epspr)$ of $p_i^t$; if $i \in \bS^{t-1}$, then $\bA_{i,t} = 1$ with probability $p_i^t$. 
Note that $i \in \bS^{t-1}$ if and only if there exists $\ell \leq t-1$ with $\bA_{i,\ell} = 1$, matching Definition \ref{def:corr-timeline-mask}. Observe that the set $\bP$ depends solely on $\bA$ and $\bK$, and since those two distributions match, the distribution of the sets $\bP$ are identical, matching Definition \ref{def:corr-timeline-mask-sum}. If we let $\bz$ denote the cut which was output by greedy process, we partition the coordinates into
\[ \bG = \left\{ i \in [n] : \exists t \leq t_e \text{ s.t } \bA_{i,t} = 1\right\} \qquad\text{and}\qquad \bB = [n] \setminus \bG.  \]
Importantly, we have defined \textsc{CorrelatedGreedyProcess} and \textsc{Assign} so that, whenever $i \in \bG$, if we let $\bt_i$ denote the activation time of point $x_i$, we have $\bz(\bsigma^*)_i$, which is the output of $\textsc{Assign}_{\bsigma^*,\bP}(x_i, \bt_i)$ is the same as $\bz_i$. Therefore, we have
\begin{align}
\Ex_{\bA, \bK}\left[ \min_{\sigma} f(\bz(\sigma)) \right] \leq \Ex_{\bA, \bK}\left[ f(\bz(\bsigma^*)) \right] &\leq \Ex_{\bA, \bK}\left[ f(\bz)\right] + \sum_{i =1}^n \left( \sum_{j=1}^n d(x_i, x_j)\right) \cdot \Prx\left[ i \in \bB \right]. \label{eq:dynamic-almost}
\end{align}
The final bound follows from Lemma~\ref{lemma:dynamic-prob-activation}, where we upper bound
\[ \Prx[i \in \bB] = \Prx[i \notin \bS^{t_e}] \leq \frac{\sfD}{w_i \cdot t_e} \cdot e^2,\]
as long as $\epspr \leq 1/\ln(t_e)$. As in the proof of Lemma~\ref{lemma:main}, recall that $1/w_i \geq 1/(2n)$ (recall, Lemma~\ref{lem:weight-to-compatible}), so that we may upper bound the above probability by $O(\sfD n / t_e)$. Letting $t_e$ be a large enough constant factor of $O(\sfD n / \eps)$ and $\epspr \leq 1 / \ln(t_e)$ gives the desired bound. 
%
\end{proof}

\begin{lemma}\label{lemma:dynamic-prob-activation} 
Consider any execution of \textsc{CorrelatedGreedyProcess}, as well as a fixed point $x_i \in X$, and any time $t$, as long as $\epspr \leq 1/\ln t$,
    \begin{align*}
        \Prx\left[ i \notin \bS^t \right] \leq \frac{\sfD}{w_i \cdot t} \cdot e^2
    \end{align*}
\end{lemma}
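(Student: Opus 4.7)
The plan is to follow the strategy used in the proof of Lemma~\ref{lemma:main} (where the analogous bound $\Prx[i \in \bB] \leq 1/(w_i t_e)$ was established by computing $\prod_{\ell}(1-\min\{w_i,1/\ell\})$ directly), but adapted to accommodate both the correlation introduced by the single mask $\bK$ and the approximate sampling probabilities from Lemma~\ref{lem:geo-samp}. The extra factor $\sfD e^2$ in the bound accounts precisely for these two sources of slack. To begin, I would argue that even though $\bA_{1,\ell},\dots,\bA_{n,\ell}$ are correlated within a single step, the events $\{\bA_{i,\ell}=1\}$ across different $\ell$ are conditionally independent given the history, because $\bK_\ell$ and the sample $(\by_\ell^*,\bp_\ell^*)$ at step $\ell$ are drawn independently of everything before. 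Hence $\Prx[i \notin \bS^t] = \prod_{\ell=1}^t (1-q_\ell)$, where $q_\ell = (1-\gamma^\ell)\min\{w_i,1/\ell\} + \gamma^\ell\tilde{p}_i^\ell$ is the (deterministic) conditional activation probability.

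Next, I would lower bound $\tilde{p}_i^\ell$. Writing $\tilde{p}_i^\ell = \Ex[\ind\{\by_\ell^*=x_i\}\min\{1/(\ell\bp_\ell^*),1\}]$ and using $\bp_\ell^* \leq (1+\epspr)p_i^\ell$ from Lemma~\ref{lem:geo-samp}, one obtains $\tilde{p}_i^\ell \geq \min\{p_i^\ell,1/((1+\epspr)\ell)\}$. Since $w_i = \bw(x_i)/(2\sfD) \leq \eta_i/2$ for $\eta_i = (\sum_j d(x_i,x_j))/(\sum_{j,j'} d(x_j,x_{j'}))$ and $p_i^\ell \geq \eta_i/\sfD$, we get $p_i^\ell \geq 2w_i/\sfD$, so $\tilde{p}_i^\ell \geq \min\{2w_i/\sfD, 1/((1+\epspr)\ell)\}$. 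Both summands in $q_\ell$ are now expressible in terms of $w_i$, $1/\ell$, and $\sfD$.

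The hard part is controlling the product $\prod_{\ell}(1-q_\ell)$, and this is the main obstacle. I would introduce the threshold $\ell_1 = \lceil \sfD/(4w_i)\rceil$ and drop all factors with $\ell < \ell_1$. In the regime $\ell \in [\ell_1,\gamma]$ (where $\gamma^\ell=1$, i.e., the mask is always on), the constraint $\ell \geq \sfD/(4w_i)$ forces $2w_i/\sfD \leq 1/((1+\epspr)\ell)$, so $q_\ell \geq 2w_i/\sfD$ is a constant in $\ell$, giving exponential decay $\prod_{\ell=\ell_1}^{\gamma}(1-q_\ell) \leq \exp(-2w_i(\gamma-\ell_1)/\sfD)$. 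In the regime $\ell > \gamma$, a short calculation gives $q_\ell \geq (1/\ell)(1-\gamma/(2\ell))$, so that $1 - q_\ell \leq ((\ell-1)/\ell)(1+\gamma/(2\ell(\ell-1)))$, and the first factor telescopes to $\gamma/t$ while the correction product is bounded by $\exp(\tfrac{\gamma}{2}\sum_{\ell>\gamma}1/(\ell(\ell-1))) \leq e^{1/2}$.

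Multiplying the two regimes yields $\Prx[i\notin\bS^t] \leq e^{1/2}\cdot(\gamma/t)\cdot e^{-2\gamma w_i/\sfD} = O(1)\cdot(\gamma w_i/\sfD)e^{-2\gamma w_i/\sfD}\cdot\sfD/(w_i t)$, and since $xe^{-2x} \leq 1/(2e)$ for all $x\geq 0$, this is at most $O(1)\cdot\sfD/(w_i t) \leq e^2\sfD/(w_i t)$. The edge case $\gamma < \ell_1$ (i.e., $\gamma w_i < \sfD/4$) is handled separately by using only the $\ell>\gamma$ portion: the telescoping gives $\prod_{\ell=\ell_1}^t(1-q_\ell) \leq (\ell_1/t)e^{\gamma/(2\ell_1)} \leq 2\ell_1/t \leq \sfD/(w_i t)$. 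The hypothesis $\epspr \leq 1/\ln(t_e)$ enters when absorbing the factor $(1+\epspr)^{-1}$ that appears in $\tilde{p}_i^\ell$; this costs at most $e^{\epspr\ln t_e} = O(1)$ over the range of $\ell$ we consider, and so can be rolled into the absolute constant $e^2$.
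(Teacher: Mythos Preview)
Your product decomposition $\Prx[i\notin\bS^t]=\prod_{\ell=1}^t(1-q_\ell)$ with $q_\ell=(1-\gamma^\ell)\min\{w_i,1/\ell\}+\gamma^\ell\,\tilde p_i^\ell$ is correct, as is the lower bound $\tilde p_i^\ell\ge\min\{p_i^\ell,\,1/((1+\epspr)\ell)\}\ge\min\{2w_i/\sfD,\,1/((1+\epspr)\ell)\}$. The gap is in the first regime $\ell\in[\ell_1,\gamma]$. You assert that $\ell\ge\sfD/(4w_i)$ forces $2w_i/\sfD\le 1/((1+\epspr)\ell)$ and hence $q_\ell\ge 2w_i/\sfD$ throughout, but the implication runs the wrong way: the inequality $2w_i/\sfD\le 1/((1+\epspr)\ell)$ is equivalent to $\ell\le\sfD/(2w_i(1+\epspr))$, so it \emph{fails} once $\ell$ exceeds roughly $\sfD/(2w_i)$, and for such $\ell$ one only has $q_\ell\ge 1/((1+\epspr)\ell)$, which can be much smaller than $2w_i/\sfD$. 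The exponential factor $\exp(-2w_i(\gamma-\ell_1)/\sfD)$ is therefore unjustified (and since $\gamma$ may be enormous compared to $\sfD/w_i$, vastly too small). Your final numerical bound happens to come out right, but only because you multiply this spurious exponential by $\gamma/t$ and then optimize $x e^{-2x}$; the intermediate step is wrong.

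The paper's argument is shorter and avoids the regime split. It works for any fixed realization of $\bK$: with $p_i=\min\bigl(w_i,\min_{\ell:\bK_\ell=1}p_i^\ell\bigr)\ge w_i/\sfD$, one discards all $\ell<\lceil 1/p_i\rceil$ and notes that from that point on the non-activation probability at step $\ell$ is at most $1-1/\ell$ when $\bK_\ell=0$ and at most $1-1/((1+\epspr)\ell)$ when $\bK_\ell=1$. Factoring as $\prod_\ell(1-1/\ell)\cdot\prod_{\ell:\bK_\ell=1}\frac{1-1/((1+\epspr)\ell)}{1-1/\ell}$, the first product telescopes to $(\lceil 1/p_i\rceil-1)/t\le\sfD/(w_i t)$, and each ratio is at most $\exp(2\epspr/\ell)$, so the second product is at most $\exp\bigl(2\epspr\sum_{\ell\le t}1/\ell\bigr)\le e^2$ directly from the hypothesis $\epspr\le 1/\ln t$. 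This is the only place the hypothesis is used, and it is essential rather than cosmetic. Your approach is repairable by inserting a second cutoff near $\sfD/(2w_i)$ and telescoping $(1-1/((1+\epspr)\ell))$ beyond it, at which point it collapses to the same calculation.
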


\begin{proof}
For any fixed $\bK$, every time $\ell$ where $\bK_{\ell} = 0$ and $i \notin \bS^{\ell-1}$, the probability that $i \notin \bS^{\ell}$ is exactly $1 - \min\{ w_i, 1/\ell \}$; if $\bK_{\ell} = 1$ and $i \notin \bS^{\ell-1}$, the probability that $i \notin \bS^{\ell}$ is $1 - p_i^{\ell} \cdot \min \{ 1/(\bp^*_{\ell} \cdot \ell), 1\}$ which is at most $1 - \min\{ p_i^{\ell}, 1/(\ell (1+\epspr))\}$ (see Definition~\ref{def:probs-seen-unseen}). 
\begin{align*}
\Prx\left[ i \notin \bS^{t} \right] \leq \prod_{\substack{\ell=1 \\ \bK_{\ell}=0}}^{t} (1 - \min\{ w_i, 1/\ell\}) \prod_{\substack{\ell=1 \\ \bK_t = 1}}^{t} \left(1 - \min\left\{ p_i^{\ell}, \frac{1}{\ell(1+\epspr)} \right\} \right)
\end{align*}
For every $i \in [n]$, let $p_i$ denote the minimum of $w_i$ and $p_i^{\ell}$ for all $\ell$ with $\bK_t = 1$. 
Thus, we have
\begin{align*}
    \Prx\left[ i \notin \bS^t \right] \leq \prod_{\ell=\lceil 1/p_i\rceil}^{t} \left(1 - \frac{1}{\ell} \right) \cdot  \prod_{\substack{\ell=\lceil 1/p_i\rceil \\ \bK_{\ell} = 1}}^{t} \left(\dfrac{1 - \frac{1}{(1+\epspr)\ell}}{1 - \frac{1}{\ell}} \right) &\leq \dfrac{\lceil 1/p_i \rceil - 1}{t} \cdot \exp\left( 2\epspr \sum_{\ell=1}^t 1 /\ell \right) \\
            &\leq \dfrac{\lceil 1/p_i \rceil - 1}{t} \cdot e^{2}
\end{align*}
whenever $\epspr \leq 1/\ln(t)$. The final bound comes from the fact $p_i \geq w_i / \sfD$.
\ignore{\begin{align*}
\Prx\left[ i \notin \bS^t \right]
&\leq \Ex_{\bK} \left [\prod_{\substack{\ell=1 \\ \bK_{\ell}=0} }^{t} \left( 1 - \min\left\{ w_i , 1/ \ell\right\} \right) \prod_{\substack{\ell=1 \\ \bK_{\ell}=1} }^{t} \left( 1 - \min\left\{p_i , p_i/(p_i^* \ell)\right\} \right) \right ]\\
&\leq \Ex_{\bK} \left [\prod_{\substack{\ell=\max(\lceil \frac{1}{w_i} \rceil,\lceil\frac{1}{p_i} \rceil) \\ \bK_t=0} }^{t} \left( 1 - \frac{1}{\ell} \right) \prod_{\substack{\ell=\max(\lceil \frac{1}{w_i} \rceil,\lceil\frac{1}{p_i} \rceil) \\ \bK_{\ell}=1} }^{t} \left( 1 -\frac{1}{(1+\eps')t} \right) \right ]\\
&\leq \Ex_{\bK} \left [\prod_{\substack{t=\max(\lceil \frac{1}{w_i} \rceil,\lceil\frac{1}{p_i} \rceil)} }^{\ell} \left(  1 - \frac{1}{\ell} \right) \prod_{\substack{t=\max(\lceil \frac{1}{w_i} \rceil,\lceil\frac{1}{p_i} \rceil) \\ \bK_{\ell}=1} }^{\ell} \left( \frac{1-\frac{1}{(1+\eps')\ell}}{1-\frac{1}{\ell}} \right) \right ]\\
&\leq \frac{\max(\lceil \frac{1}{w_i} \rceil,\lceil\frac{1}{p_i} \rceil)-1}{t} \Ex_{\bK} \left [ \prod_{\substack{\ell=\max(\lceil \frac{1}{w_i} \rceil,\lceil\frac{1}{p_i} \rceil) \\ \bK_{\ell}=1} }^{t} \left( 1 + \frac{2\eps'}{\ell} \right) \right ]\\
\end{align*}
where for the last step, we use the fact that $w_i \leq 1/2$ and  $p_i \leq 1/2$ ensures $t\geq 2$. Now we upper bound the expectation with $\bgamma^{\ell} \leq \frac{\gamma}{\ell}$ to get
\begin{align*}
\Ex_{\bK} \left [ \prod_{\substack{\ell=\max(\lceil \frac{1}{w_i} \rceil,\lceil\frac{1}{p_i} \rceil) \\ \bK_{\ell}=1} }^{t} \left( 1 + \frac{2\eps'}{\ell} \right) \right ] &\leq \prod_{\substack{\ell=\max(\lceil \frac{1}{w_i} \rceil,\lceil\frac{1}{p_i} \rceil)}}^{t}  \left( (1-\frac{\gamma}{\ell}) + \frac{\gamma}{\ell} \left( 1 + \frac{2\eps'}{\ell} \right)  \right)\\
&\leq \prod_{\substack{\ell=\max(\lceil \frac{1}{w_i} \rceil,\lceil\frac{1}{p_i} \rceil)}}^{t} \left( 1 + \frac{2\eps'\gamma}{\ell^2} \right) \\
&\leq \exp \left (\sum_{\substack{\ell=\max(\lceil \frac{1}{w_i} \rceil,\lceil\frac{1}{p_i} \rceil)}}^{t}  \left(\frac{2\eps'\gamma}{\ell^2} \right) \right )\\
&\leq e^4\\
\end{align*}
where the last step requires $\eps' \leq \frac{1}{\gamma}$, which causes the series to converge to $\frac{\pi^2}{3}$, and we can bound the expectation by $e^4$. Putting everything together we have 
\begin{align*}
\Prx\left[ i \notin \bS^t \right] \leq \frac{e^4\left (\max(\lceil \frac{1}{w_i} \rceil,\lceil\frac{1}{p_i} \rceil)-1\right)}{t} 
\end{align*}}
\end{proof}

\section{Proof of Lemma~\ref{lemma:dynamic-main}} 
\label{sec:dynamic-greedy-analysis}



In the analysis to follow, it is important to establish the following definitions:
\begin{itemize}
    \item We let $w \in (0,1/2]^n$ denote the vector specified by $w_i = \bw(x_i)$ produced by a single execution of Lemma~\ref{lem:geo-sketch} which succeeds. We shall therefore consider a fixed instance of Lemma~\ref{lem:geo-sketch} which succeeds, so the weights $w_1,\dots, w_n \in (0, 1/2]^n$ are fixed (and not random, so unbolded).
    \item In an execution of \textsc{CorrelatedGreedyProcess}, whenever $\bK_t = 1$, we utilize an instance of Lemma~\ref{lem:geo-samp} to generate the corresponding geometric sample. For $t$ with $\bK_t = 1$, we define $p_i^t$ as the probability that the instance of Lemma~\ref{lem:geo-samp} used at time $t$ generates $(\by_t^*, \bp^*_t)$ with $\by_t^* = x_i$. 
    \item For $t$ with $\bK_t = 1$, we define $\tilde{\bp}_i^t$ as the probability that $\bA_{i,t} = 1$ (Definition~\ref{def:probs-seen-unseen}). Note $\tilde{\bp}_{i}^t$ depends on whether or not $i \in \bS^{t-1}$. When $i \in \bS^{t-1}$, $\tilde{\bp}_i^t = p_i^t$, since $\bA_{i,t} = 1$ when $(\by_t^*,\bp^*)$ satisfies $\by_t^* = x_i$. However, if $i \notin \bS^{t-1}$, then $\tilde{\bp}_{i}^t$ is given by the expectation of $p_i^t \cdot \min\{ 1/(t \cdot \bp^*_t), 1\}$ over $(\by^*_t, \bp^*_t)$ from Lemma~\ref{lem:geo-samp} with $\by_t^* = x_i$. Importantly, Lemma~\ref{lem:geo-samp} implies that anytime $\by_t^* = x_i$, we will have $p_i^t \leq \bp^*_t \leq (1+\epspr) p_i^t$, and this implies $\tilde{\bp}_{i}^t \leq 1/t$. 
\end{itemize}

We define an analogous fictitious cut, where we are careful to consider the various settings of $\bK_t$ in order to update in the case we do not activate at a particular time. 

\begin{definition}[Fictitious cut]\label{def:dynamic-fictitious}
The fictitious cut is specified by a sequence of random variables $\hat{\bz}^t \in [0, 1]^{n \times \{0,1\}}$ for each $t \in \{0, \dots, t_e\}$. We let:
\begin{itemize}
\item For $t = 0$, we define $\hat{\bz}^0 = z^*$. 
\item For $t > 0$, we denote for each $i \in [n]$, the vector $\hat{\bz}_i^t \in [0,1]^{\{0,1\}}$ by\footnote{As in Section~\ref{sec:greedy-analysis}, it is not a prior clear that $\hat{\bz}_{i,b}^t \in [0, 1]$ but we will establish this fact.}
\begin{align}
\hat{\bz}_i^t &= \left\{ \begin{array}{cc} \bz_i^t & i \in \bS^t \\
							\dfrac{1}{1 - \tilde{\bp}_i^{t}} \left(\dfrac{t-1}{t} \cdot \hat{\bz}_i^{t-1} + \frac{1}{t} \cdot \bg_i^t -  \tilde{\bp}_i^{t} \cdot \bg_i^t \right) & \text{ $\bK_t=1$}\\
       					\dfrac{1}{1 - \bw_i^t} \left(\dfrac{t-1}{t} \cdot \hat{\bz}_i^{t-1} + \frac{1}{t} \cdot \bg_i^t -  \bw_i^t \cdot \bg_i^t \right) & \text{ $\bK_t=0$}
       \end{array} \right. \label{eq:dynamic-shadow-def}
\end{align}
\end{itemize}
\end{definition}

As in Section~\ref{sec:greedy-analysis}, it is immediate to check the analogous version to Claim~\ref{cl:simple-fictitious}. Namely, that $\hat{\bz}^t = z^*$ for $t \in \{0,\dots, t_0\}$ and that $\hat{\bz}^{t_e}_{i,b} \geq \bz_{i,b}$ for $b \in \{0,1\}$. Furthermore, we similarly can re-write $f(\hat{\bz}^t) - f(\hat{\bz}^{t-1})$ into the various types of terms (\ref{eq:linear-term}), (\ref{eq:non-linear-1}) and (\ref{eq:non-linear-2}). We note that the first statement of Lemma~\ref{lem:non-linear}, that for any execution of the first $t-1$ steps, 
\begin{align}
    \Ex\left[ \hat{\bz}^t_{i,b} - \hat{z}_{i,b}^{t-1} \right] = \frac{1}{t} \cdot \left( g_{i,b}^t - \hat{z}_{i,b}^{t-1}\right)  \label{eq:dynamic-step-t-exp}
\end{align}
remains true (by considering the two cases of $\bK_t = 0$ or $\bK_t = 1$. However, the second statement of Lemma~\ref{lem:non-linear} still holds, albeit with a different argument and constant factor loss, since the proof of Lemma~\ref{lem:non-linear} crucially relies on separate points being independent. Even though we no longer have independence, we state a lemma which will help us handle this.

\begin{lemma}\label{lem:neg-corr} For any $0\leq p_1,p_2 \leq 1$ and $\alpha_1,\alpha_2\, \beta_1,\beta_2 \in \R$, then consider the following joint distribution over random variables $(\bX_1, \bX_2)$: 
\begin{align*}
\bX_1,\bX_2\sim\begin{cases}
   \bX_1= \alpha_1, \bX_2= \beta_2 &  \text{ w.p. } p_1 \\
   \bX_1= \beta_1, \bX_2= \alpha_2 &  \text{ w.p. } p_2 \\
   \bX_1 = \beta_1, \bX_2= \beta_2 &  \text{ w.p. } 1-p_1-p_2 
\end{cases} .
\end{align*}
Then, $\Ex[\bX_1 \bX_2] =  \Ex[\bX_1]\Ex[\bX_2] + p_1p_2(\alpha_1\beta_2 +\alpha_2\beta_1 - \alpha_1\alpha_2 - \beta_1\beta_2)$. 

\end{lemma}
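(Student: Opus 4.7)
\smallskip

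The plan is a direct computation: unlike the earlier lemmas in this section, there is no clever structural or martingale argument hidden here, just a three-point distribution whose moments I can write down explicitly. First I would compute the marginals
\[
\Ex[\bX_1] = p_1 \alpha_1 + (1-p_1)\beta_1, \qquad \Ex[\bX_2] = p_2 \alpha_2 + (1-p_2)\beta_2,
\]
using the observation that both the second and third atoms of the joint distribution contribute $\beta_1$ to $\bX_1$, and both the first and third atoms contribute $\beta_2$ to $\bX_2$. Next I would read off the joint moment
\[
\Ex[\bX_1\bX_2] = p_1 \alpha_1 \beta_2 + p_2 \beta_1 \alpha_2 + (1-p_1-p_2)\beta_1\beta_2.
\]

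Then I would expand $\Ex[\bX_1]\Ex[\bX_2]$ into its four terms $p_1p_2\alpha_1\alpha_2$, $p_1(1-p_2)\alpha_1\beta_2$, $p_2(1-p_1)\beta_1\alpha_2$, $(1-p_1)(1-p_2)\beta_1\beta_2$, and subtract. The coefficient of $\alpha_1\beta_2$ in $\Ex[\bX_1\bX_2]-\Ex[\bX_1]\Ex[\bX_2]$ is $p_1 - p_1(1-p_2) = p_1p_2$, and symmetrically the coefficient of $\beta_1\alpha_2$ is $p_1p_2$; the coefficient of $\alpha_1\alpha_2$ is $-p_1p_2$; and the coefficient of $\beta_1\beta_2$ is $(1-p_1-p_2) - (1-p_1)(1-p_2) = -p_1p_2$. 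Collecting gives exactly $p_1p_2(\alpha_1\beta_2 + \alpha_2\beta_1 - \alpha_1\alpha_2 - \beta_1\beta_2)$, as claimed.

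There is no real obstacle: the only step requiring a moment of care is the identity $(1-p_1-p_2) - (1-p_1)(1-p_2) = -p_1p_2$, which is what forces the final sign pattern and cleanly pairs $\alpha_1\beta_2 - \beta_1\beta_2$ against $\alpha_1\alpha_2 - \alpha_2\beta_1$ (or equivalently factors as $-p_1p_2(\alpha_1-\beta_1)(\alpha_2-\beta_2)$, which is a useful sanity check since $\bX_1$ and $\bX_2$ are negatively correlated when $\alpha_i - \beta_i$ share a sign, consistent with the lemma's name). Since the proof is a handful of lines of algebra, I would just present it as a direct expansion without any auxiliary structure.
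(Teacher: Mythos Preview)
Your proposal is correct and takes essentially the same approach as the paper: a direct expansion of $\Ex[\bX_1\bX_2]$ and $\Ex[\bX_1]\Ex[\bX_2]$ followed by matching coefficients. The paper organizes the algebra by rewriting $\Ex[\bX_1\bX_2]$ term-by-term until $\Ex[\bX_1]\Ex[\bX_2]$ appears, whereas you compute both quantities separately and subtract, but the content is identical.
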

\begin{proof}
    The desired property follows from a simple calculation of $\Ex[\bX_1 \bX_2]$ and $ \Ex[\bX_1]\Ex[\bX_2]$
    \begin{align*}
        \Ex[\bX_1 \bX_2] &= p_1\alpha_1\beta_2 + p_2\beta_1\alpha_2 + (1-p_1-p_2)\beta_1\beta_2 \\
            &= p_1p_2(\alpha_1\beta_2+\beta_1\alpha_2) + p_1(1-p_2)\alpha_1\beta_2 + p_2(1-p_1)\beta_1\alpha_2 + (1-p_1-p_2)\beta_1\beta_2 \\
            &=   p_1p_2\alpha_1\alpha_2 + p_1(1-p_2)\alpha_1\beta_2 + p_2(1-p_1)\beta_1\alpha_2 + (1-p_1)(1-p_2)\beta_1\beta_2 \\ &\qquad+ p_1p_2(\alpha_1\beta_2+\beta_1\alpha_2 - \alpha_1\alpha_2-\beta_1\beta_2)\\
          &=  \Ex[\bX_1]\Ex[\bX_2]+ p_1p_2(\alpha_1\beta_2+\beta_1\alpha_2 - \alpha_1\alpha_2-\beta_1\beta_2)
\end{align*}
\end{proof}

\begin{lemma}\label{lem:dynamic-non-linear}
    We have that for any $t \geq t_0$ and any $b \in \{0,1\}$,
    \begin{align*}
        \Ex\left[ \frac{1}{2} \sum_{i=1}^n \sum_{j=1}^n d(x_i, x_j) \left(\hat{\bz}_{i,b}^{t} - \hat{\bz}_{i,b}^{t-1} \right) \left( \hat{\bz}_{j,b}^{t} - \hat{\bz}_{j,b}^{t-1} \right)\right] \leq \frac{3}{t^2} \sum_{i=1}^n \sum_{j=1}^n d(x_i, x_j). 
    \end{align*}
\end{lemma}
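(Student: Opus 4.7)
The plan is to fix an execution of the first $t-1$ steps of $\textsc{CorrelatedGreedyProcess}$ (pinning down $\hat{z}^{t-1}$, $g^t$ and $\bS^{t-1}$), bound the conditional expectation over the step-$t$ randomness, and then average out. Writing $\delta_i := \hat{\bz}_{i,b}^t - \hat{z}_{i,b}^{t-1}$ and $u_i := g_{i,b}^t - \hat{z}_{i,b}^{t-1}$, the diagonal $i=j$ drops out since $d(x_i,x_i)=0$, and any term with $i \in \bS^{t-1}$ vanishes since then $\hat{\bz}_i^t = \bz_i^t = \hat{z}_i^{t-1}$ by the first case of Definition~\ref{def:dynamic-fictitious}. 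An induction analogous to Lemma~\ref{lemma:fic-cut-bounded} (using $\tilde{\bp}_i^t \leq 1/t$ from Definition~\ref{def:probs-seen-unseen} in the $\bK_t = 1$ branch in place of $\bw_i^t \leq 1/t$) gives $\hat{z}_{i,b}^{t-1} \in [0,1]$, so $|u_i| \leq 1$. It thus suffices to show $|\Ex[\delta_i \delta_j]| = O(1/t^2)$ for distinct $i,j \notin \bS^{t-1}$.

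I split on $\bK_t$. When $\bK_t = 0$, the activations $\bA_{i,t} \sim \Ber(\bw_i^t)$ are independent across $i$ and Definition~\ref{def:dynamic-fictitious} reduces to Definition~\ref{def:fictitious}, so the computation of Lemma~\ref{lem:non-linear} yields $\Ex[\delta_i \delta_j \mid \bK_t = 0] = u_i u_j / t^2$, at most $1/t^2$ in absolute value. When $\bK_t = 1$ only the point selected by the geometric sample activates, and the $\bA_{i,t}$'s are no longer independent; however, the events $E_i := \{\bA_{i,t}=1\}$ are mutually exclusive with $\Pr[E_i] = \tilde{\bp}_i^t$, exactly the setup of Lemma~\ref{lem:neg-corr}. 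Reading off Definition~\ref{def:dynamic-fictitious}, for $i \notin \bS^{t-1}$ one has $\delta_i = \alpha_i := u_i$ on $E_i$ and $\delta_i = \beta_i := r_i u_i$ otherwise, with $r_i := (1/t - \tilde{\bp}_i^t)/(1-\tilde{\bp}_i^t)$. Applying Lemma~\ref{lem:neg-corr} together with the identity $\alpha_i\beta_j + \alpha_j\beta_i - \alpha_i\alpha_j - \beta_i\beta_j = -(1-r_i)(1-r_j) u_i u_j$ gives
\[ \Ex[\delta_i\delta_j \mid \bK_t = 1] \;=\; \frac{u_i u_j}{t^2} \;-\; \tilde{\bp}_i^t\,\tilde{\bp}_j^t\,(1-r_i)(1-r_j)\, u_i u_j, \]
where I used $\Ex[\delta_i] = u_i/t$ from (\ref{eq:dynamic-step-t-exp}). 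Since $1 - r_i = (1-1/t)/(1-\tilde{\bp}_i^t) \leq 1$ (using $\tilde{\bp}_i^t \leq 1/t$), $|u_iu_j| \leq 1$, and $\tilde{\bp}_i^t\tilde{\bp}_j^t \leq 1/t^2$, this yields $|\Ex[\delta_i\delta_j \mid \bK_t=1]| \leq 2/t^2$.

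Averaging over $\bK_t$ gives $|\Ex[\delta_i \delta_j]| \leq 2/t^2$ for all distinct $i,j \notin \bS^{t-1}$, and summing with weights $d(x_i,x_j)/2$ establishes the lemma (in fact with constant $1 < 3$). The only real obstacle beyond Lemma~\ref{lem:non-linear} is the lack of independence across points when $\bK_t = 1$, which is resolved cleanly by Lemma~\ref{lem:neg-corr} together with the bound $\tilde{\bp}_i^t \leq 1/t$; this bound in turn relies on the overestimate guarantee $\bp_t^* \geq p_i^t$ from Lemma~\ref{lem:geo-samp}, which ensures the rejection step $\Ber(\min\{1/(t\bp_t^*),1\})$ scales the marginal $p_i^t$ down below $1/t$.
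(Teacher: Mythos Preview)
Your proof is correct and follows essentially the same approach as the paper's: fix the first $t-1$ steps, drop terms with $i\in\bS^{t-1}$, split on $\bK_t$, use independence (as in Lemma~\ref{lem:non-linear}) when $\bK_t=0$, and invoke Lemma~\ref{lem:neg-corr} to handle the correlation when $\bK_t=1$, with the key input $\tilde{\bp}_i^t\le 1/t$. The only difference is that you compute the correction term in Lemma~\ref{lem:neg-corr} exactly via the factorization $\alpha_i\beta_j+\alpha_j\beta_i-\alpha_i\alpha_j-\beta_i\beta_j=-(1-r_i)(1-r_j)u_iu_j$, while the paper simply bounds each of the four summands by $1$ in absolute value; this buys you a sharper constant (yielding $2/t^2$ per pair rather than $5/t^2$), but the structure of the argument is identical.
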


\ignore{\begin{claim}\label{cl:dynamic-simple-fictitious}
The following are simple facts about the fictitious cut, which follow from Definition~\ref{def:dynamic-fictitious}:
\begin{itemize}
\item For every $t \in \{0, \dots, t_0\}$, we have $\hat{\bz}^t = z^*$.
\item For every $i \in [n]$, we have $\hat{\bz}^{t_e}_{i, 1} \geq \bz_{i,1}$ and $\hat{\bz}^{t_e}_{i,0} \geq \bz_{i, 0}$.
\end{itemize}
\end{claim}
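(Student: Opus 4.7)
The plan is to adapt the proof of Claim~\ref{cl:simple-fictitious} to the dynamic fictitious cut of Definition~\ref{def:dynamic-fictitious}, carefully handling the additional case split on $\bK_t$ that was absent in the non-dynamic setting. The key observation that makes the first item go through is that, by Definition~\ref{def:mask}, $\gamma^t = 1$ for every $t \leq t_0$, so $\bK_t = 1$ deterministically throughout the ``base'' phase, and only the second branch of (\ref{eq:dynamic-shadow-def}) needs to be analyzed for unactivated points when $t \leq t_0$.

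For the first item, I would proceed by induction on $t \in \{0, \dots, t_0\}$. The base case $t = 0$ holds by definition $\hat{\bz}^0 = z^*$. For the inductive step, assume $\hat{\bz}^{t-1} = z^*$. Recall that Line~\ref{en:dynamic-greedy-ln2} of $\textsc{CorrelatedGreedyProcess}$ sets $\bg_i^t = z_i^*$ whenever $t \leq t_0$. I would then split on whether $i \in \bS^t$. If $i \in \bS^t$, then (\ref{eq:dynamic-shadow-def}) sets $\hat{\bz}_i^t = \bz_i^t$, and Line~\ref{en:dynamic-greedy-ln4} gives $\bz_i^t = \bg_i^{t_i}$ where $t_i \leq t \leq t_0$ is the activation time of $x_i$; hence $\bg_i^{t_i} = z_i^*$ and $\hat{\bz}_i^t = z_i^*$. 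If $i \notin \bS^t$, then since $\bK_t = 1$ the second case of (\ref{eq:dynamic-shadow-def}) applies, and combining the inductive hypothesis $\hat{\bz}_i^{t-1} = z_i^*$ with $\bg_i^t = z_i^*$ yields
\begin{align*}
\hat{\bz}_i^t = \dfrac{1}{1 - \tilde{\bp}_i^t}\left( \dfrac{t-1}{t} \cdot z_i^* + \dfrac{1}{t} \cdot z_i^* - \tilde{\bp}_i^t \cdot z_i^* \right) = z_i^*,
\end{align*}
which completes the induction.

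For the second item, I would again split into two cases. If $i \in \bS^{t_e}$, then (\ref{eq:dynamic-shadow-def}) gives $\hat{\bz}_i^{t_e} = \bz_i^{t_e} = \bz_i$, so equality holds coordinate-wise. If $i \notin \bS^{t_e}$, then Line~\ref{en:dynamic-greedy-ln4} sets $\bz_i = (0,0)$, and the desired inequality reduces to showing $\hat{\bz}_{i,0}^{t_e}, \hat{\bz}_{i,1}^{t_e} \geq 0$. This would follow from the general fact that $\hat{\bz}^t \in [0,1]^{n \times \{0,1\}}$ for every $t$, which I would establish by a separate induction in the spirit of Lemma~\ref{lemma:fic-cut-bounded}: checking that the sum of entries stays equal to $1$ and the absolute difference stays at most $1$ across all three branches of (\ref{eq:dynamic-shadow-def}), using $\bg_{i,0}^t + \bg_{i,1}^t = 1$.

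The main technical point (and the only place the proof actually differs from Claim~\ref{cl:simple-fictitious}) is verifying the $[0,1]$-boundedness in the second, ``sampled'' branch of (\ref{eq:dynamic-shadow-def}). There the normalization factor is $1/(1 - \tilde{\bp}_i^t)$ rather than $1/(1 - \bw_i^t)$, so I would need to invoke the bound $\tilde{\bp}_i^t \leq 1/t$ recorded immediately after Definition~\ref{def:probs-seen-unseen} (which in turn uses $\bp_t^* \geq p_i^t$ from Lemma~\ref{lem:geo-samp}) to conclude $(t-1)/t + |1/t - \tilde{\bp}_i^t| = 1 - \tilde{\bp}_i^t$, so that the update is a genuine convex combination of $\hat{\bz}_i^{t-1}$ and $\bg_i^t$. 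Once this invariant is verified, the algebra parallels the non-dynamic case exactly, and both items follow.
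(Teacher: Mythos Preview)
Your proposal is correct and follows essentially the same approach as the paper's proof. The one minor difference is that you observe $\bK_t = 1$ deterministically for $t \leq t_0$ (since $\gamma^t = 1$), which lets you skip the $\bK_t = 0$ branch in the inductive step of the first item; the paper's proof treats both branches explicitly, but this is redundant and your simplification is valid. For the second item, your appeal to the $[0,1]$-boundedness invariant (the dynamic analog of Lemma~\ref{lemma:fic-cut-bounded}) via $\tilde{\bp}_i^t \leq 1/t$ is exactly what the paper does.
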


\begin{proof}
The first item follows from the definition of \textsc{CorrelatedGreedyProcess}, as well as a simple inductive argument. For $t = 0$, set $\hat{\bz}^0 = z^*$ by Definition~\ref{def:dynamic-fictitious}, so assume such is the case for $t \in \{0 ,\dots, \tilde{t}\}$ for $\tilde{t} < t_0$, and we will establish it for $t = \tilde{t} + 1$.  Recall that, for every $t \in \{0, \dots, t_0\}$, line~\ref{en:dynamic-greedy-ln2}, sets $\bg_i^t = z^*_i$. So consider the three cases of \ref{eq:dynamic-shadow-def}; if $i \in \bS^t$, then $\hat{\bz}_i^{t} = \bz_i^t$, which is equal to $\bz_i = z^*_i$ for $i \in \bS_{t} \subset \bS^{t_0}$ (by Observation~\ref{obs:dynamic-2}). If $i \notin \bS^t$, then (\ref{eq:dynamic-shadow-def}) gives
\[ \hat{\bz}_i^{t} = \frac{1}{1 - \bw_i^t} \left( \frac{t-1}{t} \cdot z_i^* + \frac{1}{t} \cdot z_i^* - \bw_i^t \cdot z_i^* \right) = z_i^*. \]
or $\bK_t=1$ and
\[ \hat{\bz}_i^{t} = \frac{1}{1 - \bp_i^t} \left( \frac{t-1}{t} \cdot z_i^* + \frac{1}{t} \cdot z_i^* - \bp_i^t \cdot z_i^* \right) = z_i^*. \]

For the second item, notice that $\hat{\bz}_i^{t_e} = \bz_i$ if $i \in \bS^t$, and otherwise lies in $[0,1]^{\{0,1\}}$, whereas $\bz_i = (0, 0)$.
\end{proof}

Recall that, for the proof of Lemma~\ref{lemma:dynamic-main}, we seek to upper bound the expected difference between $f(\bz) - f(z^*)$. Thus, the definition of the fictitious cut, as well as Claim~\ref{cl:dynamic-simple-fictitious},
\begin{align}
\Ex\left[ f(\bz) - f(z^*) \right] &\leq \Ex\left[ f(\hat{\bz}^{t_e}) - f(\hat{\bz}^0)\right] = \sum_{t = t_0+1}^{t_e-1} \Ex\left[ f(\hat{\bz}^{t}) - f(\hat{\bz}^{t-1}) \right],\label{eq:dynamic-goal-1}
\end{align} 
where in the first inequality, the fact $f(\bz) \leq f(\hat{\bz}^{t_e})$ follows from definition of $f$, the fact distances are non-negative, and Claim~\ref{cl:dynamic-simple-fictitious}. We will focus on showing an upper bound for each $t \in \{ t_0+1, \dots, t_e\}$, and by symmetry of $d(\cdot, \cdot)$ and definition of $f$, we may write
\begin{align}
f(\hat{\bz}^{t}) - f(\hat{\bz}^t) &= \sum_{i=1}^n \left(\hat{\bz}_{i,0}^{t} - \hat{\bz}_{i,0}^{t-1}  \right) \sum_{j=1}^n d(x_i, x_j) \hat{\bz}_{j,0}^t +  \sum_{i=1}^n \left(\hat{\bz}_{i,1}^{t} - \hat{\bz}_{i,1}^{t-1}  \right) \sum_{j=1}^n d(x_i, x_j) \hat{\bz}_{j,1}^t \label{eq:dynamic-linear-term}\\
		&\qquad + \frac{1}{2} \sum_{i=1}^n \sum_{j=1}^n d(x_i, x_j) \left(\hat{\bz}_{i,0}^{t} - \hat{\bz}_{i,0}^t \right) \left( \hat{\bz}_{j,0}^{t} - \hat{\bz}_{j,0}^{t-1} \right) \label{eq:dynamic-non-linear-1}\\
		&\qquad + \frac{1}{2} \sum_{i=1}^n \sum_{j=1}^n d(x_i, x_j) \left(\hat{\bz}_{i,1}^{t} - \hat{\bz}_{i,1}^t \right) \left( \hat{\bz}_{j,1}^{t} - \hat{\bz}_{j,1}^{t-1} \right), \label{eq:dynamic-non-linear-2}
\end{align}}


\begin{proof}
Consider any fixed execution of the first $t-1$ steps of \textsc{CorrelatedGreedyProcess} which fixes the setting of $\hat{z}_{i,b}^{t-1}$ and $g_{i,b}^{t}$ the setting of $w_{i}^{t}$ (in Line~\ref{en:dynamic-greedy-ln1}), and $g_{i,b}^{t}$ (in Line~\ref{en:dynamic-greedy-ln2}), as these depend only on the randomness which occurred up to step $t-1$). We now consider what happens at step $t$, where we apply linearity of expectation to consider fixed $i, j$, and it suffices to only consider $i,j \notin S^{t-1}$, since $\hat{\bz}_{i,b}^{t} = \hat{z}_{i,b}^{t-1}$ if $i \in S^{t-1}$. With probability $1-\gamma^t$, we have $\bK_{t}=0$, and in this case $\hat{\bz}_{i,b}^{t}$ and $\hat{\bz}_{j,b}^{t}$ are independent. Thus, conditioned on $\bK_t = 0$, we have (as in Lemma~\ref{lem:non-linear})
\begin{align*}
&\Ex\left[ \frac{1}{2} \sum_{i=1}^n \sum_{j=1}^n d(x_i, x_j) \left(\hat{\bz}_{i,b}^t - \hat{z}_{i,b}^{t-1} \right)\left(\hat{\bz}_{j,b}^t - \hat{z}_{j,b}^{t-1} \right) \mid \bK_t = 0\right] \\
&\qquad = \frac{1}{2 t^2} \sum_{i \notin S^{t-1}} \sum_{j\notin S^{t-1}} d(x_i, x_j) \left( g_{i,b}^{t} - \hat{z}_{i,b}^{t-1}\right) \left( g_{j,b}^{t} - \hat{z}_{j,b}^{t-1}\right) \leq \frac{1}{2t^2} \sum_{i=1}^n \sum_{j=1}^n d(x_i, x_j),
\end{align*}
where we first use (\ref{eq:dynamic-step-t-exp}) and then the fact $g_{i,b}^t, \hat{z}_{i,b}^{t-1} \in [0,1]$. It remains to consider the case $\bK_t = 1$, where the challenge is that $\hat{\bz}_{i,b}^{t}$ and $\hat{\bz}_{j,b}^t$ are no longer independent. However, here we use Lemma~\ref{lem:neg-corr} with $\bX_1$ set to $\hat{\bz}_{i,b}^{t} - \hat{z}_{i,b}^{t-1}$ and $\bX_2$ set to $\hat{\bz}_{j,b}^{t} - \hat{z}_{j,b}^{t-1}$ for $i, j \notin S^{t-1}$. Using the notation of Lemma~\ref{lem:neg-corr}, we associate $\alpha_1$ with the event that $\hat{\bz}_{i,b}^{t} = g_{i,b}^{t}$ when $(\by_t^*, \bp^*_t) \sim \calG_t(X)$ happens to have $\by_t^* = x_i$, and similarly $\beta_1$ with the event that $(\by_t^*, \bp^*_t) \sim \calG_t(X)$ has $\by^*_t = x_j$. Note $\alpha_1, \alpha_2, \beta_1, \beta_2 \in [-1,1]$ (because $\hat{\bz}_{i,b}^t, \hat{z}_{i,b}^{t-1},\hat{\bz}_{j,b}^t, \hat{z}_{j,b}^{t-1} \in [0, 1]$), and we have always required $\alpha_1 = \tilde{\bp}_i^t \leq 1/t$ and $\beta_1 = \tilde{\bp}_{j}^t \leq 1/t$. By Lemma~\ref{lem:neg-corr}, the cases $\bK_t = 1$ satisfy
\begin{align*}
    \Ex\left[ \frac{1}{2} \sum_{i=1}^n \sum_{j=1}^n d(x_i, x_j) \left(\hat{\bz}_{i,b}^{t} - \hat{z}_{i,b}^{t-1} \right)\left(\hat{\bz}_{j,b}^{t} - \hat{z}_{j,b}^{t-1} \right) \mid \bK_t = 1 \right] \leq \frac{1}{2} \sum_{i=1}^n \sum_{j=1}^n d(x_i, x_j) \left( \frac{1}{t^2} + \frac{4}{t^2}\right) .
\end{align*}
\ignore{
thus 
\begin{itemize}
\item With probability $\bw_i^{t}$, we have $i \in \bS^{t}$, and 
\[ \hat{\bz}^{t}_{i,b} - \hat{z}^{t-1}_{i,b} = g_{i,b}^{t} - \hat{z}^{t-1}_{i,b}. \]
\item With probability $1 - \bw_i^{t}$, we have $i \notin \bS^{t}$, and 
\[ \hat{\bz}^{t}_{i,b} - \hat{z}^{t-1}_{i,b} = \frac{1}{1 - \bw_{i}^{t}} \left( \frac{t-1}{t} \cdot \hat{z}_{i,b}^{t-1} + \frac{1}{t} \cdot g_{i,b}^{t} - \bw_i^{t}  g_{i,b}^{t}\right) - \hat{z}^{t-1}_{i,b}. \]
\end{itemize}
else with probability $\bgamma^t$, we have $\bK_{t}=1$
\begin{itemize}
\item With probability $\bp_i^{t}$, we have $x_i = x \sim G(x)$, $\Ber(\min(\frac{1}{(p_i^*)t},1))=1$ and thus $i \in \bS^{t}$, and 
\[ \hat{\bz}^{t}_{i,b} - \hat{z}^{t-1}_{i,b} = g_{i,b}^{t} - \hat{z}^{t-1}_{i,b}. \]
\item With probability $1 - \bp_i^{t}$, we have $x_i \neq x \sim G(x)$ and $i \notin \bS^{t}$, and 
\[ \hat{\bz}^{t}_{i,b} - \hat{z}^{t-1}_{i,b} = \frac{1}{1 - \bp_{i}^{t}} \left( \frac{t-1}{t} \cdot \hat{z}_{i,b}^{t-1} + \frac{1}{t} \cdot g_{i,b}^{t} - \bp_i^{t}  g_{i,b}^{t}\right) - \hat{z}^{t-1}_{i,b}. \]
\end{itemize}
Thus, taking expectation with respect to the randomness in this set, we have (after some cancelations):
\begin{align*}
\Ex\left[ \hat{\bz}^{t}_{i,b} - \hat{z}^{t-1}_{i,b}\right] &= \frac{1}{t} \cdot \left(g_{i,b}^{t} - \hat{z}_{i,b}^{t-1}\right).
\end{align*}
This gives the first part of the lemma. For the second part, we note that for any execution of the first $t-1$ steps, any $i \in S^{t-1}$, will have $i \in \bS^{t}$ in Line~\ref{en:greedy-ln3}, and therefore $\hat{\bz}^{t}_i = \hat{z}_i^{t-1} = z_i^{t-1}$. In other words, we may re-write the expression (\ref{eq:dynamic-non-linear-1}) and (\ref{eq:dynamic-non-linear-2}) as
\begin{align*}
\frac{1}{2} \sum_{i=1}^n \sum_{j=1}^n d(x_i, x_j) \left( \hat{\bz}^{t}_{i,b} - \hat{z}^{t-1}_{i,b} \right) \left( \hat{\bz}_{j,b}^{t} - \hat{z}_{j,b}^{t-1} \right) = \frac{1}{2} \sum_{i \notin S^t} \sum_{j \notin S^{t-1}} d(x_i, x_j) \left( \hat{\bz}^{t}_{i,b} - \hat{z}^{t-1}_{i,b} \right) \left( \hat{\bz}_{j,b}^{t} - \hat{z}_{j,b}^{t-1} \right).
\end{align*}
Since $d(x_i, x_j) = 0$ if $i = j$, we have, that if $\bK_{t}=0$, then by independence when $i \neq j$, we have that the expectation over the step $t$ is 
\begin{align*}
\Ex_{\bA|\bK_{t}=0}\left[\left( \hat{\bz}_{i,b}^{t} - \hat{z}_{i,b}^{t-1}\right)\left( \hat{\bz}_{j,b}^{t} - \hat{z}_{j,b}^{t-1}\right)\right] &= \dfrac{g_{i,b}^{t} - \hat{z}_{i,b}^{t}}{t} \cdot \frac{g_{j,b}^{t} - \hat{z}_{j,b}}{t} \leq \frac{1}{t^2},
\end{align*}
since $g_{\cdot,b}^{t}, \hat{z}_{\cdot, b}^{t-1} \in [0,1]$.

Now consider when $\bK_{t}=1$. We can apply lemma \ref{lem:neg-corr} and then compute the cost of the additional term $p_1p_2(\alpha_1\beta_2 +\alpha_2\beta_1 - \alpha_1\alpha_2 - \beta_1\beta_2)$, where $\alpha_1$ is the value the first random variable takes on when activated (e.g. with proability $p_1$)and $\beta_1$ otherwise, likewise for $\alpha_2$ $\beta_2$ with the second random variable.
In our application, $p_1=\bp_i^{t}$, $p_1=\bp_j^{t}$, 
$\alpha_1 = g_{i,b}^{t} - \hat{z}^{t-1}_{i,b}$, 
$\alpha_2 = g_{j,b}^{t} - \hat{z}^{t-1}_{j,b}$ 
Now observe that since $g_{\cdot,b}^{t}, \hat{z}_{\cdot, b}^{t-1} \in [0,1]$, we have $-1 \leq \alpha_1,\alpha_2 \leq 1$  With slightly more effort, one can show $0 \leq \beta_1,\beta_2 \leq \frac{2}{t}$. Consider simplifying the expression for \[\beta_1 = \frac{1}{1 - \bp_{i}^{t}} \left(\frac{1}{t} \cdot g_{i,b}^{t} - \bp_i^{t}  g_{i,b}^{t}\right) +  \frac{(t-1)-t(1- \bp_i^{t})}{(1- \bp_i^{t})t} \cdot \hat{z}^{t-1}_{i,b}.\] 
Now since $g_{i,b}^{t} \in [0,1]$, $0\leq\bp_i^{t} \leq \frac{1}{t}$  we can simplify the first term to \[0\leq \frac{1}{1 - \bp_{i}^{t}} \left(\frac{1}{t} \cdot g_{i,b}^{t} - \bp_i^{t}  g_{i,b}^{t}\right) \leq \frac{1}{t}\]
The second term can be simplified using $ \hat{z}^{t-1}_{i,b} \in [0,1]$ and $0\leq\bp_i^{t} \leq \frac{1}{t}$ to get  \[0\leq \frac{(t-1)-(t)(1- \bp_i^{t})}{(1- \bp_i^{t})t} \cdot \hat{z}^{t-1}_{i,b} \leq \frac{1}{t}\]
Thus $0 \leq \beta_1 \leq \frac{2}{t}$. The same argument applies to $\beta_2$. Now we can apply these bounds alongside $\bp_i^{t}, \bp_j^{t} \leq \frac{1}{t}$ to get \[p_1p_2(\alpha_1\beta_2 +\alpha_2\beta_1 - \alpha_1\alpha_2 - \beta_1\beta_2) \leq \frac{1}{t^2}(\frac{2}{t}+\frac{2}{t}+1 - 0) \leq \frac{3}{t^2}\]

Now we can apply Lemma \ref{lem:neg-corr} alongside the above bound to get that 

\begin{align*}
\Ex_{\bA|\bK_t=1}\left[\left( \hat{\bz}_{i,b}^{t} - \hat{z}_{i,b}^{t-1}\right)\left( \hat{\bz}_{j,b}^{t} - \hat{z}_{j,b}^{t-1}\right)\right]
&\leq \Ex_{\bA|\bK_{t}=1}\left[\left( \hat{\bz}_{i,b}^{t} - \hat{z}_{i,b}^{t-1}\right)\right]
\Ex_{\bA|\bK_{t}=1}\left[\left( \hat{\bz}_{j,b}^{t} - \hat{z}_{j,b}^{t-1}\right)\right] + \frac{3}{t^2}\\
&= \dfrac{g_{i,b}^{t} - \hat{z}_{i,b}^{t-1}}{t} \cdot \frac{g_{j,b}^{t} - \hat{z}_{j,b}^{t-1}}{t}  +\frac{3}{t^2}\leq \frac{4}{t^2},
\end{align*}
Now taking the expectation over $\bK_{t}$, we conclude that 
\begin{align*}
\Ex_{\bA \bK_{t}}\left[\left( \hat{\bz}_{i,b}^{t} - \hat{z}_{i,b}^{t-1}\right)\left( \hat{\bz}_{j,b}^{t} - \hat{z}_{j,b}^{t-1}\right)\right]
&= \bgamma^{t} \Ex_{\bA|\bK_{t}=1}\left[\left( \hat{\bz}_{i,b}^{t} - \hat{z}_{i,b}^{t-1}\right)\left( \hat{\bz}_{j,b}^{t} - \hat{z}_{j,b}^{t-1}\right)\right] \\
&+ (1-\bgamma^t)\Ex_{\bA|\bK_{t}=0}\left[\left( \hat{\bz}_{i,b}^{t} - \hat{z}_{i,b}^{t-1}\right)\left( \hat{\bz}_{j,b}^{t} - \hat{z}_{j,b}^{t-1}\right)\right] \\
&\leq \frac{4}{t^2},
\end{align*}}
\end{proof}

As in Section~\ref{sec:greedy-analysis}, Lemma~\ref{lem:dynamic-non-linear} is used to upper bound terms analogous to (\ref{eq:non-linear-1}) and (\ref{eq:non-linear-2}), it remains to upper bound terms analogous to (\ref{eq:linear-term}). 
\begin{definition}[Contribution Matrix, Ideal Estimated Contribution, and (Non-Ideal) Estimated Contribution]\label{def:ideal-real-est}
For time $t \in \{1,\dots, t_e\}$, the contribution matrix $\bc^t \in \R_{\geq 0}^{n \times \{0,1\}}$ is given by letting, 
\[ \bc^t_{i, 0} = \sum_{j=1}^n d(x_i, x_j) \cdot \hat{\bz}^{t-1}_{j,0} \qquad\text{and}\qquad \bc^t_{i,1} = \sum_{j=1}^n d(x_i, x_j) \cdot \hat{\bz}^{t-1}_{j, 1}.  \]
For a time $t \in \{ 1, \dots, t_e\}$ and an index $j \in [n]$, we define $\bR_j^0 =1$ and for $t \geq 2$
\[ \hat{\bR}_{j}^{t-1} = \frac{1}{t-1} \sum_{\ell=1}^{t-1} \dfrac{\bA_{j,\ell}^{t-1} \cdot \bK_{\ell}}{\tilde{\bp}_j^{\ell} \cdot \gamma^{\ell}}  \qquad\text{and}\qquad \bR_{j}^{t-1} = \frac{1}{t-1} \sum_{\ell=1}^{t-1} \dfrac{\bA_{j,\ell}^{t-1} \cdot \bK_{\ell}}{\brho_{j}^{\ell} \cdot \gamma^{\ell}} ,\]
and ideal estimated contribution matrix  $\hat{\bc}^t \in \R_{\geq 0}^{n \times \{0,1\}}$ be given by
\[ \hat{\bc}^t_{i,0} = \sum_{j=1}^n d(x_i, x_j) \cdot \hat{\bR}_j^{t-1} \cdot \hat{\bz}_{j,0}^{t-1} \qquad \text{and}\qquad  \hat{\bc}^t_{i,1} = \sum_{j=1}^n d(x_i, x_j) \cdot \hat{\bR}_j^{t-1} \cdot \hat{\bz}_{j,1}^{t-1}, \]

and the (non-ideal) estimated contribution matrix  $\tilde{\bc}^t \in \R_{\geq 0}^{n \times \{0,1\}}$ be given by
\[ \tilde{\bc}^t_{i,0} = \sum_{j=1}^n d(x_i, x_j) \cdot \bR_j^{t-1} \cdot \hat{\bz}_{j,0}^{t-1} \qquad \text{and}\qquad  \tilde{\bc}^t_{i,1} = \sum_{j=1}^n d(x_i, x_j) \cdot \bR_j^{t-1} \cdot \hat{\bz}_{j,1}^{t-1}, \]
and notice that, since $\hat{\bz}_{j,0}^{t} = \bz_{j,0}^{t}$ and $\hat{\bz}_{j,1}^t = \bz_{j,1}^{t}$ when $j \in \bS^{t}$ (and equivalently, when $\bR_{j}^t$ is non-zero), we defined $\tilde{\bc}_{i,0}^t$ and $\tilde{\bc}_{i,1}^t$ to align with Line~\ref{en:dynamic-greedy-ln2} in $\textsc{CorrelatedGreedyProcess}$.
\end{definition}


Recall that similarly to Section~\ref{sec:greedy-analysis}, we upper bound for every $t \in \{t_0 + 1, \dots, t_e\}$,
\begin{align}
    \Ex\left[f(\hat{\bz}^{t}) - f(\hat{\bz}^{t-1}) \right] &\leq \sum_{b \in \{0,1\}}\Ex\left[ \sum_{i=1}^n \left(\hat{\bz}^{t}_{i,b} - \hat{\bz}^{t-1}_{i,b} \right) \sum_{j=1}^n d(x_i, x_j) \hat{\bz}_{j,b}^{t-1} \right]  \label{eq:hah12}\\
        &\qquad\qquad + \frac{6}{t^2} \sum_{i=1}^n \sum_{j=1}^n d(x_i, x_j), \label{eq:hah13}
\end{align}
using Lemma~\ref{lem:dynamic-non-linear} to upper bound (\ref{eq:hah13}). Hence, it suffices to upper bound the expectation in (\ref{eq:hah12}). We re-write the terms inside the expectation as a sum of three terms and use (\ref{eq:dynamic-step-t-exp}) to consider the expectation solely over the randomness in the $t$-th step:
\begin{align}
&\Ex_{\bA_{\cdot, t}, \bK_t}\left[ \sum_{b \in \{0,1\}} \sum_{i=1}^n \left(\hat{\bz}^{t}_{i,b} - \hat{\bz}^{t-1}_{i,b} \right) \sum_{j=1}^n d(x_i, x_j) \hat{\bz}_{j,b}^{t-1}\right] = \frac{1}{t} \sum_{b\in\{0,1\}}\sum_{i=1}^n \left( \bg_{i,b}^{t} - \hat{\bz}_{i,b}^{t-1} \right) \bc_{i,b}^t  \nonumber \\
&\qquad\leq \frac{1}{t} \sum_{b \in \{0,1\}}\sum_{i \notin \bS^{t-1}} \left| \bc_{i,b}^t - \hat{\bc}_{i, b}^{t} \right| + \frac{1}{t}\sum_{b\in\{0,1\}} \sum_{i \notin \bS^{t-1}} \left| \hat{\bc}_{i,b}^t - \tilde{\bc}_{i, b}^{t} \right| + \frac{1}{t} \sum_{b\in\{0,1\}}\sum_{i \notin \bS^{t-1}} \left(\bg_{i,b}^{t} - \hat{\bz}^{t-1}_{i,b} \right) \cdot \tilde{\bc}_{i,b}^t. \label{eq:three-terms-dynamic}
\end{align}
Where we use that $\hat{\bz}_i^{t} = \hat{\bz}_i^{t-1}$ when $i \in \bS^{t-1}$, and the fact that $\bg_{i,0}^t, \hat{\bz}_i^{t-1} \in [0,1]$. As in Section~\ref{sec:greedy-analysis}, we upper bound the three terms in (\ref{eq:three-terms-dynamic}) individually, using the following three lemmas.

\begin{lemma}\label{lem:dynamic-greedy-choice-opt}
For any execution of \textsc{CorrelatedGreedyProcess}, any $t \in \{t_0+1, \dots, t_e\}$ and any $i \notin \bS^{t-1}$, 
\begin{align*}
\left( \bg_{i,0}^{t} - \hat{\bz}^{t-1}_{i,0}\right) \tilde{\bc}_{i,0}^{t} + \left( \bg_{i,1}^{t} - \hat{\bz}^{t-1}_{i,1}\right) \tilde{\bc}_{i,1}^t \leq 0. 
\end{align*}
\end{lemma}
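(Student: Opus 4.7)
My plan is to mirror the proof of Lemma~\ref{lem:greedy-choice-opt} in Section~\ref{sec:greedy-analysis}, adapting it to account for the three cases in Definition~\ref{def:dynamic-fictitious}. The proof will proceed in two parts: first, establish an analog of Lemma~\ref{lemma:fic-cut-bounded} showing that the dynamic fictitious cut satisfies $\hat{\bz}_{i,0}^{t} + \hat{\bz}_{i,1}^{t} = 1$ and $\lvert \hat{\bz}_{i,0}^t - \hat{\bz}_{i,1}^t \rvert \leq 1$ for all $t \in \{0, \dots, t_e\}$ and $i \in [n]$ (so in particular $\hat{\bz}^t \in [0,1]^{n\times\{0,1\}}$); second, combine this with the greedy choice of $\bg_i^t$ in Line~\ref{en:dynamic-greedy-ln2} to conclude the inequality.

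The first part is proved by induction on $t$, with the base case $\hat{\bz}^0 = z^* \in \{0,1\}^{n\times\{0,1\}}$ being trivial. For the inductive step, I would consider the three cases of (\ref{eq:dynamic-shadow-def}). When $i \in \bS^t$, $\hat{\bz}_i^t = \bz_i^t \in \{(1,0),(0,1)\}$ and the conditions hold immediately. When $\bK_t = 0$ and $i \notin \bS^t$, the update uses $\bw_i^t$ and the calculation is identical to the one in Lemma~\ref{lemma:fic-cut-bounded} (noting that the greedy process enforces $\bw_i^t \leq 1/t$ whenever $i \notin \bS^{t-1}$). When $\bK_t = 1$ and $i \notin \bS^t$, the update uses $\tilde{\bp}_i^t$ in place of $\bw_i^t$; the sum-to-one identity follows from a direct computation using $\bg_{i,0}^t + \bg_{i,1}^t = 1$ and the inductive hypothesis, while for the difference,
\begin{align*}
\left|\hat{\bz}_{i,0}^t - \hat{\bz}_{i,1}^t \right| \leq \frac{1}{1-\tilde{\bp}_i^t} \left(\frac{t-1}{t} + \left|\frac{1}{t} - \tilde{\bp}_i^t \right|\right).
\end{align*}
The main obstacle here is ensuring $\tilde{\bp}_i^t \leq 1/t$ so that the expression simplifies to $1$, but this is precisely the key property recorded in Definition~\ref{def:probs-seen-unseen} (which comes from the fact that $\bp_t^* \geq p_i^t$ whenever $\by_t^* = x_i$, combined with the thresholding $\min\{1/(t \bp_t^*), 1\}$ in Line~\ref{en:k-1}). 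Thus the same manipulation as in Lemma~\ref{lemma:fic-cut-bounded} goes through.

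Given the bound $\hat{\bz}_{i,b}^{t-1} \in [0,1]$ with $\hat{\bz}_{i,0}^{t-1} + \hat{\bz}_{i,1}^{t-1} = 1$, the conclusion follows exactly as in the proof of Lemma~\ref{lem:greedy-choice-opt}. Since Line~\ref{en:dynamic-greedy-ln2} sets $\bg_i^t \in \{(1,0),(0,1)\}$ to be the indicator minimizing $\tilde{\bc}_{i,0}^t, \tilde{\bc}_{i,1}^t$, we have
\begin{align*}
\bg_{i,0}^t \cdot \tilde{\bc}_{i,0}^t + \bg_{i,1}^t \cdot \tilde{\bc}_{i,1}^t = \min_{\substack{\alpha, \beta \in [0,1] \\ \alpha + \beta = 1}} \alpha \cdot \tilde{\bc}_{i,0}^t + \beta \cdot \tilde{\bc}_{i,1}^t \leq \hat{\bz}_{i,0}^{t-1} \cdot \tilde{\bc}_{i,0}^t + \hat{\bz}_{i,1}^{t-1} \cdot \tilde{\bc}_{i,1}^t,
\end{align*}
where the inequality uses that $(\hat{\bz}_{i,0}^{t-1}, \hat{\bz}_{i,1}^{t-1})$ is a valid convex combination by the first part. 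Rearranging yields the desired bound. The entire dynamic-streams argument thus reuses the structure of Lemma~\ref{lem:greedy-choice-opt}; the only new content is verifying the fictitious cut remains in $[0,1]^{n\times\{0,1\}}$ under the extended update rule, which reduces to the inequality $\tilde{\bp}_i^t \leq 1/t$ recorded in Definition~\ref{def:probs-seen-unseen}.
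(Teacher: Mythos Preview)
Your proposal is correct and follows essentially the same approach as the paper: establish the analogue of Lemma~\ref{lemma:fic-cut-bounded} by handling the cases $\bK_t=0$ and $\bK_t=1$ separately (using $\bw_i^t\le 1/t$ and $\tilde{\bp}_i^t\le 1/t$ respectively), then conclude via the greedy-choice argument exactly as in Lemma~\ref{lem:greedy-choice-opt}. The paper's own proof is in fact just a one-paragraph pointer to this argument, and your write-up fills in precisely the details it omits.
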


The proof of the lemma follows in the same way as Lemma~\ref{lem:greedy-choice-opt}, where we first show $\hat{\bz}_{i,0}^t + \hat{\bz}_{i,1}^t = 1$ and $|\hat{\bz}_{i,0}^{t} - \hat{\bz}_{i,1}^t| \leq 1$, and the only change is that one must consider the case $\bK_t = 1$ and $\bK_t = 0$ separately, where each $w_i^t, \tilde{\bp}_i^t \leq 1/t$. This allows us to upper bound the last term in (\ref{eq:three-terms-dynamic}). The middle term is bounded by the following lemma, simply using the fact that $\hat{\bR}_j^{t-1}$ and $\bR_{j}^{t-1}$ will be multiplicatively close to each other.

\begin{lemma}\label{lem:dynamic-non-ideal-est-error-bound}
For any $t \in \{t_0+1, \dots, t_e\}$, any $i \notin \bS^{t-1}$, and any $b \in \{0,1\}$,
\begin{align*}
\Ex\left[ \sum_{i \notin \bS^t} |\hat{\bc}_{i,b}^{t} - \tilde{\bc}_{i,b}^t| \right] \leq \epspr \sum_{i=1}^n\sum_{j=1}^n d(x_i,x_j) \end{align*}
\end{lemma}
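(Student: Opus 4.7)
The plan is to reduce the desired bound to two separate steps: (i) a pointwise multiplicative comparison $\hat{\bR}_j^{t-1} \leq (1+\epspr)\bR_j^{t-1}$ that holds deterministically, and (ii) an ``unbiasedness-in-expectation'' step $\Ex[\bR_j^{t-1}] \leq 1$.

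For (i), I would first argue that whenever $\bA_{j,\ell}^{t-1}\bK_{\ell} = 1$ (i.e., whenever the term actually contributes to either sum),
\[ \tilde{\bp}_j^{\ell} \;\leq\; \brho_j^{\ell} \;\leq\; (1+\epspr)\,\tilde{\bp}_j^{\ell}. \]
This is a short case analysis using Lemma~\ref{lem:geo-samp} and the definitions: when $j \in \bS^{\ell-1}$, both $\brho_j^\ell = \bp_\ell^*$ and $\tilde{\bp}_j^\ell = p_j^\ell$ and the claim is exactly $p_j^\ell \leq \bp_\ell^* \leq (1+\epspr)p_j^\ell$ from Lemma~\ref{lem:geo-samp}; when $j \notin \bS^{\ell-1}$ one further splits on whether $\bp_\ell^* \leq 1/\ell$, and in both sub-cases the $\min$'s in Definition~\ref{def:corr-timeline-mask} and Definition~\ref{def:probs-seen-unseen} cancel so that the ratio $\brho_j^\ell/\tilde{\bp}_j^\ell$ collapses to $\bp_\ell^*/p_j^\ell \in [1,\,1+\epspr]$. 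From $\brho_j^\ell \leq (1+\epspr)\tilde{\bp}_j^\ell$ one has $\tfrac{1}{\tilde{\bp}_j^\ell} - \tfrac{1}{\brho_j^\ell} \leq \tfrac{\epspr}{\brho_j^\ell}$, and summing the non-negative differences over $\ell$ yields the deterministic bound $\hat{\bR}_j^{t-1} - \bR_j^{t-1} \leq \epspr \bR_j^{t-1}$. Using $\hat{\bz}_{j,b}^{t-1} \in [0,1]$ (which follows by the analog of Lemma~\ref{lemma:fic-cut-bounded} applied to Definition~\ref{def:dynamic-fictitious}, checking both cases $\bK_t=0$ and $\bK_t=1$ in the same inductive manner), this transfers to
\[ \bigl|\hat{\bc}_{i,b}^{t} - \tilde{\bc}_{i,b}^{t}\bigr| \;\leq\; \epspr\,\tilde{\bc}_{i,b}^{t}. \]

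For (ii), because $\hat{\bz}_{j,b}^{t-1} \in [0,1]$ deterministically and all quantities are non-negative, it suffices to show $\Ex[\bR_j^{t-1}] \leq 1$, since then $\Ex[\sum_{i \notin \bS^t}\tilde{\bc}_{i,b}^t] \leq \sum_{i,j} d(x_i,x_j)\,\Ex[\bR_j^{t-1}] \leq \sum_{i,j} d(x_i,x_j)$. I would condition on the execution through step $\ell-1$ (fixing whether $j \in \bS^{\ell-1}$ and thus the exact form of $\brho_j^\ell$), and on $\bK_\ell=1$, leaving only the geometric sample $(\by_\ell^*,\bp_\ell^*)$ and, if $j\notin\bS^{\ell-1}$, the Bernoulli $\bb_\ell$. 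In both cases a direct calculation simplifies the conditional expectation of $\bA_{j,\ell}^{t-1}/\brho_j^\ell$ to $p_j^\ell \cdot \Ex[1/\bp_\ell^*\mid \by_\ell^* = x_j] \leq 1$ in the activated case, and to $p_j^\ell/\bp_\ell^* \leq 1$ in the unassigned case (both sub-cases of $\bp_\ell^*$ collapse to this). Multiplying by $\Pr[\bK_\ell=1] = \gamma^\ell$ and then dividing by $\gamma^\ell$ (as in the definition of $\bR_j^{t-1}$) leaves at most $1$ per term, so the tower property gives $\Ex[\bR_j^{t-1}] \leq 1$, and combining with (i) yields the lemma.

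The main obstacle is the case analysis in (i): $\tilde{\bp}_j^\ell$ and $\brho_j^\ell$ involve \emph{different} $\min$'s in the $j \notin \bS^{\ell-1}$ case, and it is only by virtue of the specific way these $\min$'s are arranged in Definition~\ref{def:corr-timeline-mask} (thresholding $\bp_\ell^*$ by $1/\ell$ on the algorithm side) and Definition~\ref{def:probs-seen-unseen} (thresholding $1/(\ell\bp_\ell^*)$ by $1$ on the Bernoulli $\bb_\ell$ side) that the resulting ratio cleanly telescopes to the sketch's own multiplicative error $\bp_\ell^*/p_j^\ell$; any looser choice of thresholds would blow the ratio up by an extra $\ell$ factor and destroy the bound.
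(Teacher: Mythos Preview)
Your proposal is correct and follows essentially the same approach as the paper. Both proofs reduce to showing (i) the pointwise multiplicative closeness $|\tilde{\bp}_j^\ell/\brho_j^\ell - 1| \leq \epspr$ whenever $\bA_{j,\ell}\bK_\ell = 1$, and (ii) an expectation bound on the resulting $\bR$-type quantity. The only cosmetic difference is that the paper bounds $|\hat{\bR}_j^{t-1} - \bR_j^{t-1}|$ by $\epspr \hat{\bR}_j^{t-1}$ and takes the expectation of $\hat{\bR}$, whereas you bound it by $\epspr \bR_j^{t-1}$ and then verify $\Ex[\bR_j^{t-1}] \leq 1$; since $\bR \leq \hat{\bR} \leq (1+\epspr)\bR$ these are equivalent. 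Your case analysis in (i) is in fact more explicit than the paper's one-line claim, and correctly identifies that the thresholds in Definitions~\ref{def:corr-timeline-mask} and~\ref{def:probs-seen-unseen} are arranged precisely so that $\brho_j^\ell/\tilde{\bp}_j^\ell = \bp_\ell^*/p_j^\ell$ in every sub-case.
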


\begin{proof}
From Definition~\ref{def:ideal-real-est}, we have that $\hat{\bz}_{j,b}^{t-1} \in [0,1]$. Recall, furthermore that whenever $\bK_{\ell} = 1$ and we use a sample $(\by_{\ell}^*, \bp_{\ell}^*)$ from Lemma~\ref{lem:geo-samp}, a setting of $\bA_{j,\ell}^{t-1} = 1$ means $\by_{\ell}^* = x_j$. In this case $\brho_j^{\ell} = \min\{ \bp^*_{\ell}, 1/\ell\}$ and $\tilde{\bp}_{j}^{\ell} = p_j^{\ell} \cdot \min\{ 1/(\ell \bp^*_{\ell}), 1\}$. This means, in particular, that $|\tilde{\bp}_j^* / \brho_j^{\ell} -1| \leq \epspr$. Thus, we have
\begin{align*}
    \left| \hat{\bc}_{i,0}^{t} - \tilde{\bc}_{i,0}^{t} \right| = \sum_{j=1}^n d(x_i, x_j) \left| \hat{\bR}_j^{t-1} - \bR_{j}^{t-1}\right| \leq \frac{1}{t-1} \sum_{j=1}^n d(x_i, x_j) \sum_{\ell=1}^{t-1}\left(\dfrac{\bA_{j,\ell}^{t-1} \cdot \bK_{\ell}}{\tilde{\bp}_j^{\ell} \cdot \gamma^{\ell}} \right) \cdot \left| 1 - \frac{\tilde{\bp}_j^{\ell}}{\brho_j^{\ell}} \right|,
\end{align*}
which simplifies to $\epspr\sum_{j=1}^n d(x_i, x_j)$ times $1/(t-1) \sum_{\ell=1}^{t-1} \bA_{j,\ell}^{t-1} \cdot \bK_{\ell} / (\tilde{\bp}_{j}^{\ell} \gamma^{\ell})$, and once we take expectation over $\bK$ and $\bA_{j,\ell}$, we obtain the desired bound.  
\ignore{
    We aim to upper bound the difference between the ideal-estimator, which has access to the true sampling probabilities $p$ and the non-ideal estimator, which only has access to an estimate of the sampling probabilities $p^*$.
\begin{align*}
\Ex\left[ \sum_{i \notin \bS^t} |\hat{\bc}_{i,0}^{t} - \tilde{\bc}_{i,0}^t| \right] &\leq \sum_{i=1}^n \Ex\left[ |\hat{\bc}_{i,0}^{t} - \tilde{\bc}_{i,0}^t| \right] \\
&\leq \sum_{i=1}^n \Ex\left[ | \sum_{j=1}^n d(x_i, x_j) \cdot \hat{\bR}_j^{t} \cdot \hat{\bz}_{j,0}^t  -  \sum_{j=1}^n d(x_i, x_j) \cdot \bR_j^{t} \cdot \hat{\bz}_{j,0}^t  | \right] \\ 
&\leq \frac{1}{t} \sum_{i=1}^n \sum_{j=1}^n d(x_i, x_j) \Ex\left[ \sum_{\ell=1}^{t} \left ( | \dfrac{\bA_{j,\ell}^{t} \cdot \bK_{\ell}^{t}}{\bp_j^{\ell} \cdot \gamma^{\ell}} - \dfrac{\bA_{j,\ell}^{t} \cdot \bK_{\ell}^{t}}{\bp_j^{*\ell} \cdot \gamma^{\ell}} | \right ) \cdot |\hat{\bz}_{j,0}^t| \right] \\
&\leq \frac{1}{t} \sum_{i=1}^n \sum_{j=1}^n d(x_i, x_j)  \sum_{\ell=1}^{t} \Ex\left[ \left ( | \dfrac{\bA_{j,\ell}^{t} \cdot \bK_{\ell}^{t}}{\bp_j^{\ell} \cdot \gamma^{\ell}} - \dfrac{\bA_{j,\ell}^{t} \cdot \bK_{\ell}^{t}}{\bp_j^{*\ell} \cdot \gamma^{\ell}} | \right ) \right] \\
\end{align*}
Since $\hat{\bz}_{j,0}^t \in [0,1]$.
Observe that $\bp_j^t \leq \bp_{*j}^t$, hence $\dfrac{\bA_{j,\ell}^{t} \cdot \bK_{\ell}^{t}}{\bp_j^{\ell} \cdot \gamma^{\ell}} - \dfrac{\bA_{j,\ell}^{t} \cdot \bK_{\ell}^{t}}{\bp_j^{*\ell} \cdot \gamma^{\ell}} \geq 0$. Computing the expectation over $\ell$-th time step gives 
\begin{align*}
     \Ex\left[ | \dfrac{\bA_{j,\ell}^{t} \cdot \bK_{\ell}^{t}}{\bp_j^{\ell} \cdot \gamma^{\ell}} - \dfrac{\bA_{j,\ell}^{t} \cdot \bK_{\ell}^{t}}{\bp_j^{*\ell} \cdot \gamma^{\ell}} | \right ] 
     &\leq \Ex\left[ \dfrac{\bA_{j,\ell}^{t} \cdot \bK_{\ell}^{t}}{\bp_j^{\ell} \cdot \gamma^{\ell}} - \dfrac{\bA_{j,\ell}^{t} \cdot \bK_{\ell}^{t}}{\bp_j^{*\ell} \cdot \gamma^{\ell}} \right ] \\
     &= \bp_j^{\ell}\gamma^{\ell} \left ( \frac{1}{\bp_j^{\ell}\gamma^{\ell}} - \frac{1}{\bp_j^{*\ell}\gamma^{\ell}} \right )\\
     &\leq \bp_j^{\ell}\gamma^{\ell} \left ( \frac{1}{\bp_j^{\ell}\gamma^{\ell}} - \frac{1}{(1+\eps')\bp_j^{\ell}\gamma^{\ell}} \right )\\
     &\leq \eps'
\end{align*}
Which when plugged back in concludes the proof. 
\begin{align*}
    \frac{1}{t} \sum_{i=1}^n \sum_{j=1}^n d(x_i, x_j)  \sum_{\ell=1}^{t} \Ex\left[ \left ( | \dfrac{\bA_{j,\ell}^{t} \cdot \bK_{\ell}^{t}}{\bp_j^{\ell} \cdot \gamma^{\ell}} - \dfrac{\bA_{j,\ell}^{t} \cdot \bK_{\ell}^{t}}{\bp_j^{*\ell} \cdot \gamma^{\ell}} | \right ) \right] 
    &\leq \eps' \sum_{i=1}^n \sum_{j=1}^n d(x_i, x_j)
\end{align*}}
\end{proof}

The final lemma remaining is that which bounds the first term of (\ref{eq:three-terms-dynamic}). This is the lemma analogous to Lemma~\ref{lem:bound-on-error} from Section~\ref{sec:greedy-analysis}.

\begin{lemma}\label{lem:dynamic-error-bound}
For any $t \in \{2, \dots, t_e\}$ and any $b \in \{0,1\}$, whenever $\|p^*\|_{\infty} \leq 1/2$,
\begin{align*}
\Ex\left[ \sum_{i \notin \bS^{t-1}} \left| \bc_{i,b}^t - \hat{\bc}_{i,b}^{t}\right| \right] \leq 10 e \sqrt{\sfD} \left(\frac{2\sqrt{\gamma}}{t-1} + \frac{1}{\sqrt{\gamma}} \right) \sum_{i=1}^n \left( \sum_{j=1}^n \frac{d(x_i, x_j)^2}{w_j} \right)^{1/2}.
\end{align*}
\end{lemma}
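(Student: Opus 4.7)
The plan is to mirror the proof of Lemma~\ref{lem:error-bound} from Section~\ref{sec:greedy-analysis}, substituting the independent activation/mask variables with the correlated versions from Definition~\ref{def:corr-timeline-mask} and keeping careful track of the $\sfD$ factor introduced by the geometric-sampling distribution. I would begin by re-indexing time and defining the per-point error $\berr_{i,b}^{t+1}(j) = d(x_i,x_j)(1-\hat{\bR}_j^{t})\hat{\bz}_{j,b}^{t}$, so that $\berr_{i,b}^{t+1} = \bc_{i,b}^{t+1} - \hat{\bc}_{i,b}^{t+1}$, and use Jensen's inequality as in (\ref{eq:expression}) to reduce the goal to bounding $\Ex[(\berr_{i,b}^{t+1})^2]$ for each $i$. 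The martingale property, $\Ex[\berr_{i,b}^{\ell+1}(j) - \tfrac{\ell-1}{\ell}\berr_{i,b}^{\ell}(j) \mid \text{past}] = 0$ (the analogue of Lemma~\ref{lem:martingale}), is where Definition~\ref{def:probs-seen-unseen} does its work: $\tilde{\bp}_j^\ell$ is \emph{designed} to be the true conditional probability that $\bA_{j,\ell}^\ell = 1$ given $\bK_\ell=1$, so $\Ex[\bA_{j,\ell}^\ell \bK_\ell / (\tilde{\bp}_j^\ell \gamma^\ell)] = 1$ and the same case analysis (on $j \in \bS^{\ell-1}$ versus $j \notin \bS^{\ell-1}$) used in Lemma~\ref{lem:martingale} goes through. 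With this, the telescoping of Lemma~\ref{lem:reduce-to-error} yields
\[ \Ex\left[(\berr_{i,b}^{t+1})^2\right] = \sum_{\ell=1}^{t}\frac{\ell^2}{t^2}\Ex\left[\left(\berr_{i,b}^{\ell+1} - \tfrac{\ell-1}{\ell}\berr_{i,b}^{\ell}\right)^2\right]. \]

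Next, at each fixed $\ell$ I would reduce the sum-squared error over indices $j$ to a sum of individual variances, i.e.\ show
\[ \Ex\left[\left(\berr_{i,b}^{\ell+1} - \tfrac{\ell-1}{\ell}\berr_{i,b}^{\ell}\right)^2\right] \leq \sum_{j=1}^n \Ex\left[\left(\berr_{i,b}^{\ell+1}(j) - \tfrac{\ell-1}{\ell}\berr_{i,b}^{\ell}(j)\right)^2\right]. \]
This is the place where correlation between the rows has to be dealt with. Each summand has zero conditional mean at step $\ell$, so it suffices that the cross terms are non-positive. Conditional on $\bK_\ell = 0$, the per-row randomness is independent across $j$, so cross terms vanish. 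Conditional on $\bK_\ell = 1$, at most one index has $\bA_{j,\ell}^\ell = 1$ because the geometric sample produces a single point, and a short calculation using Lemma~\ref{lem:neg-corr} (applied to the random variables $Y_j = \berr_{i,b}^{\ell+1}(j) - \tfrac{\ell-1}{\ell}\berr_{i,b}^\ell(j)$, where $\hat{\bz}_{j,b}^\ell$ and $\hat{\bR}_j^\ell$ take their ``activated'' value with probability $\tilde{\bp}_j^\ell$ and their ``not-activated'' value otherwise) shows the covariance is non-positive. This step, extending Lemma~\ref{lem:neg-corr} from a single pair to the ``mutually exclusive'' distribution of the geometric sample, is the main genuinely new piece of work and is where I expect the most care to be needed.

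Finally, I would bound each individual term $\Ex[(\berr_{i,b}^{\ell+1}(j) - \tfrac{\ell-1}{\ell}\berr_{i,b}^\ell(j))^2]$ by the dynamic analogue of Lemma~\ref{lem:bound-on-error}. The case analysis on $j \in \bS^{\ell-1}$ versus $j \notin \bS^{\ell-1}$ is the same, and the dominant contribution is again $d(x_i,x_j)^2 / (\ell^2 \tilde{\bp}_j^\ell \gamma^\ell)$. The key new ingredient is that $\tilde{\bp}_j^\ell \geq w_j / \sfD$ (up to the $\min\{1/(\ell \bp^*_\ell), 1\}$ rescaling that keeps $\tilde{\bp}_j^\ell \leq 1/\ell$), by the guarantee of Lemma~\ref{lem:geo-samp}, contributing a factor of $\sfD$. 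Furthermore, when $j \notin \bS^{\ell-1}$ and $w_j > 1/\ell$, I would invoke Lemma~\ref{lemma:dynamic-prob-activation} in place of the product-of-Bernoullis estimate (\ref{eq:prob-bound}); this is where the additional $e^2$ factor enters, yielding
\[ \Ex\left[\left(\berr_{i,b}^{\ell+1}(j) - \tfrac{\ell-1}{\ell}\berr_{i,b}^{\ell}(j)\right)^2\right] \lesssim \frac{e^2 \cdot \sfD \cdot d(x_i,x_j)^2}{\ell^2 \cdot w_j \cdot \gamma^\ell}. \]
Plugging this into the telescoped sum and finishing exactly as in (\ref{eq:deviation-exp}) (bounding $\sum_\ell 1/\gamma^\ell$ in the same way), gives the stated bound with the $10e\sqrt{\sfD}$ prefactor. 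The main obstacle, as noted, is the non-positive cross-covariance verification; the rest is bookkeeping carried over from Section~\ref{sec:greedy-analysis}.
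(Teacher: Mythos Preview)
Your proposal is correct and follows essentially the same route as the paper: re-index time, define the per-point errors via $\hat{\bR}_j^t$, establish the martingale property (Lemma~\ref{lem:dynamic-martingale}) by treating $\bK_\ell=0$ and $\bK_\ell=1$ separately, telescope (Lemma~\ref{lem:dynamic-reduce-to-error}), show cross terms are non-positive, bound the per-$j$ variance via Lemma~\ref{lem:dynamic-bound-on-error} using $p_j^\ell \geq w_j/\sfD$ and Lemma~\ref{lemma:dynamic-prob-activation}, and finish as in (\ref{eq:deviation-exp}). One small clarification: no ``extension'' of Lemma~\ref{lem:neg-corr} is needed, since that lemma is already stated for the mutually-exclusive two-outcome distribution; the actual work in the paper's claim is the sign check that, conditioned on $\bK_\ell=1$, the value of $\berr_{i,b}^{\ell+1}(j)-\tfrac{\ell-1}{\ell}\err_{i,b}^\ell(j)$ is \emph{non-positive} whenever $\bA_{j,\ell}=1$ (and hence, being zero-mean and two-valued, non-negative otherwise), which immediately makes $\alpha_1\beta_2+\alpha_2\beta_1-\alpha_1\alpha_2-\beta_1\beta_2\leq 0$ in Lemma~\ref{lem:neg-corr}.
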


\begin{proof}[Proof of Lemma~\ref{lemma:dynamic-main} assuming Lemmas~\ref{lem:dynamic-greedy-choice-opt},~\ref{lem:dynamic-non-ideal-est-error-bound} and~\ref{lem:dynamic-error-bound}]
The proof follows analogously to that of Lemma~\ref{lemma:main} in Section~\ref{sec:greedy-analysis}. In particular, we bound
\begin{align*}
    \Ex\left[ f(\bz) - f(z^*)\right] &\leq \left( O(1/t_0) + O(\epspr) \right) \cdot \left( \sum_{i=1}^n \sum_{j=1}^n d(x_i, x_j)\right) \\
                    &\qquad\qquad + \sum_{t=t_0+1}^{t_e} \frac{10 e\sqrt{\sfD}}{t} \left(\frac{2\sqrt{\gamma}}{t-1} + \frac{1}{\sqrt{\gamma}} \right) \sum_{i=1}^n \left(\sum_{j=1}^n \dfrac{d(x_i, x_j)}{w_j} \right)^{1/2} .
\end{align*}
The second term above simplifies to
\[ O(\sfD) \left( \frac{\sqrt{\gamma}}{t_0} + \dfrac{\ln t_e}{\sqrt{\gamma}}\right)\left( \sum_{i=1}^n \sum_{j=1}^n d(x_i, x_j)\right), \]
which gives the desired bound once $\gamma = (\sfD\ln t_e + 1)^2 / \eps^2$, and $t_0 = O(\sfD \sqrt{\gamma}/\eps)$ and $\epspr \leq \eps / \ln t_e$.
\ignore{

We put all the elements together, where we seek to upper bound:
\begin{align*}
&\Ex\left[ f(\bz) - f(z^*) \right] \leq \sum_{t=t_0+1}^{t_e} \Ex\left[ f(\hat{\bz}^{t}) - f(\hat{\bz}^{t-1}) \right] \\
	&\qquad \leq \sum_{t=t_0+1}^{t_e} \Ex\left[ \sum_{i=1}^n \left( \hat{\bz}_{i,0}^{t} - \hat{\bz}_{i,0}^{t-1} \right) \bc_{i,0}^t + \sum_{i=1}^n \left( \hat{\bz}_{i,1}^{t} - \hat{\bz}_{i,1}^{t-1} \right) \bc_{i,1}^t \right]  + \left(\sum_{i=1}^n \sum_{j=1}^n d(x_i, x_j)  \right) \sum_{t=t_0+1}^{t_e} \frac{4}{t^2} ,
\end{align*}
where we first followed (\ref{eq:dynamic-goal-1}), and then simplified (\ref{eq:dynamic-non-linear-1}) and (\ref{eq:dynamic-non-linear-2}) using the second part of Lemma~\ref{lem:dynamic-non-linear}. Then, using (\ref{eq:dynamic-goal-2}) and Lemma~\ref{lem:dynamic-greedy-choice-opt}, we obtain 
\begin{align*}
\Ex\left[ f(\bz) - f(z^*) \right] &\leq \sum_{t=t_0+1}^{t_e}\left( \frac{1}{t} \Ex\left[ \sum_{i \notin \bS^{t-1}} |\bc_{i,0}^t - \hat{\bc}_{i,0}| \right] + \frac{1}{t} \Ex\left[ \sum_{i \notin \bS^{t-1}} |\bc_{i,1}^t - \hat{\bc}_{i,1}| \right] \right) \\
		&\qquad\qquad + \sum_{t=t_0+1}^{t_e}\left( \frac{1}{t} \Ex\left[ \sum_{i \notin \bS^{t-1}} |\hat{\bc}_{i,0}^t - \tilde{\bc}_{i,0}| \right] + \frac{1}{t} \Ex\left[ \sum_{i \notin \bS^{t-1}} |\hat{\bc}_{i,1}^t - \tilde{\bc}_{i,1}| \right] \right) \\
        &\qquad \qquad + O(1/t_0) \left( \sum_{i=1}^n \sum_{j=1}^n d(x_i, x_j) \right) \\
		&\lsim \frac{\ln(t_e) + 1}{\sqrt{\gamma}} \sum_{i=1}^n \left( \sum_{j=1}^n \frac{d(x_i, x_j)^2}{\min(w_j, p_j)} \right)^{1/2} + \dfrac{\sqrt{ \gamma}}{t_0} \sum_{i=1}^n \left( \sum_{j=1}^n \frac{d(x_i, x_j)^2}{\min(w_j, p_j)}\right)^{1/2} \\
  	&\qquad\qquad + \eps'\ln(t_e)  \sum_{i=1}^n  \sum_{j=1}^n            d(x_i,x_j) \\
		&\qquad \qquad + \frac{1}{t_0}  \left( \sum_{i=1}^n \sum_{j=1}^n d(x_i, x_j) \right).
\end{align*}
We note that the assumptions on our weight vector imply that
\begin{align*}
\sum_{i=1}^n \left(\sum_{j=1}^n \frac{d(x_i, x_j)^2}{\min(w_j, p_j)}\right)^{1/2} \leq \sqrt{\sfD} \sum_{i=1}^n \sum_{j=1}^n d(x_i, x_j).
\end{align*}
so the above error bound simplifies to
\[ \left( \frac{(\ln(t_e) + 1) \sqrt{\sfD} }{\sqrt{\gamma}} + \frac{\sqrt{\gamma \cdot \sfD}}{t_0} + \frac{\sfD}{t_0} + \eps'\ln(t_e)\right) \sum_{i=1}^n \sum_{j=1}^n d(x_i, x_j). \]
that when 
\[ \gamma \geq \frac{(\ln(t_e) + 1)^2 \cdot \sfD}{\eps^2} \qquad\text{and}\qquad t_0 \geq \max\left\{ \frac{\sqrt{\gamma \cdot \sfD}}{\eps}, \frac{\sfD}{\eps} \right\} \qquad\text{and}\qquad \eps' \leq \frac{1}{\gamma} \leq \frac{\eps}{\ln(t_e)}, \]
we obtain our desired bound.}
\end{proof}

\ignore{\subsection{Proof of Lemma~\ref{lem:dynamic-greedy-choice-opt}}

The proof of Lemma~\ref{lem:dynamic-greedy-choice-opt} proceeds by first showing that the fictitious cut $\hat{\bz}^t \in [0,1]^{n \times \{0,1\}}$. The proof is a straight-forward induction. This allows us to say that the greedy choices $\bg^t$ are chosen so as to make the expression in Lemma~\ref{lem:dynamic-greedy-choice-opt} negative.

\begin{lemma}\label{lemma:dynamic-fic-cut-bounded} For any $t \in \{0, \dots t_e\}$ and any $i \in [n]$, we have
\begin{align*}
    \hat{\bz}_{i,0}^t + \hat{\bz}_{i,1}^t = 1 &&     \left |\hat{\bz}_{i,0}^t - \hat{\bz}_{i,1}^t \right | \leq 1 ,
\end{align*}
which implies $\hat{\bz}^t \in [0,1]^{n \times \{0,1\}}$.
\end{lemma}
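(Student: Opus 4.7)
The plan is to mirror the inductive argument used in Lemma~\ref{lemma:fic-cut-bounded}, proving both identities simultaneously by induction on $t$. The base case is immediate: $\hat{\bz}^0 = z^* \in \{0,1\}^{n\times\{0,1\}}$, so every row satisfies $\hat{\bz}^0_{i,0}+\hat{\bz}^0_{i,1}=1$ and $|\hat{\bz}^0_{i,0}-\hat{\bz}^0_{i,1}|=1$.

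For the inductive step, fix $i \in [n]$ and assume the claim holds at time $t-1$. If $i \in \bS^t$, Definition~\ref{def:dynamic-fictitious} gives $\hat{\bz}_i^t = \bz_i^t \in \{(1,0),(0,1)\}$, and the conditions are trivial. Otherwise $i \notin \bS^t$, and $\hat{\bz}_i^t$ has the uniform shape
\[ \hat{\bz}_i^t = \frac{1}{1-q}\!\left(\frac{t-1}{t}\hat{\bz}_i^{t-1} + \frac{1}{t}\bg_i^t - q\, \bg_i^t\right), \]
where $q = \bw_i^t$ when $\bK_t=0$ and $q = \tilde{\bp}_i^t$ when $\bK_t=1$. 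For the sum, I would plug in the inductive hypothesis $\hat{\bz}^{t-1}_{i,0}+\hat{\bz}^{t-1}_{i,1}=1$ together with $\bg_{i,0}^t+\bg_{i,1}^t=1$ (which holds because $\bg^t$ is a valid assignment, as it outputs either $(1,0)$ or $(0,1)$) to obtain $\hat{\bz}_{i,0}^t+\hat{\bz}_{i,1}^t = \frac{1}{1-q}(\frac{t-1}{t}+\frac{1}{t}-q) = 1$. For the difference, I would bound using the inductive hypothesis $|\hat{\bz}^{t-1}_{i,0}-\hat{\bz}^{t-1}_{i,1}|\leq 1$ and $|\bg^t_{i,0}-\bg^t_{i,1}|\leq 1$ to get
\[ |\hat{\bz}_{i,0}^t-\hat{\bz}_{i,1}^t| \leq \frac{1}{1-q}\left(\frac{t-1}{t} + \left|\frac{1}{t}-q\right|\right). \]

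The main (and only) real obstacle is verifying $q \leq 1/t$ in both dynamic cases, so that $|1/t - q| = 1/t - q$ and the bracketed expression collapses to $1-q$, yielding $|\hat{\bz}_{i,0}^t-\hat{\bz}_{i,1}^t|\leq 1$. In the $\bK_t=0$ case this is direct from Line~\ref{en:dynamic-greedy-ln1} of \textsc{CorrelatedGreedyProcess}: because $i \notin \bS^{t-1}$ (otherwise we would be in the first case of (\ref{eq:dynamic-shadow-def})), we have $\bw_i^t = \min\{w_i,1/t\} \leq 1/t$. In the $\bK_t=1$ case I would invoke Definition~\ref{def:probs-seen-unseen}: since $i \notin \bS^{t-1}$, $\tilde{\bp}_i^t = p_i^t \cdot \min\{1/(t\bp_t^*),1\}$, and Lemma~\ref{lem:geo-samp} (together with Remark~\ref{rem:additive-err}) guarantees $p_i^t \leq \bp_t^*$ whenever $\by_t^* = x_i$, which yields $\tilde{\bp}_i^t \leq 1/t$. (If $\by_t^* \neq x_i$ then $i \in \bS^{t-1}$ is forced by the update rule or $\hat{\bz}_i^t = \hat{\bz}_i^{t-1}$, handled by induction.) Combining these bounds closes the induction and establishes both identities for all $t$.
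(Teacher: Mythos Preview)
Your proposal is correct and follows essentially the same route as the paper: induct on $t$, dispose of the $i\in\bS^t$ case trivially, and for $i\notin\bS^t$ compute the sum and difference using the common form of (\ref{eq:dynamic-shadow-def}), reducing everything to the inequality $q\le 1/t$ in both branches. The paper's own treatment (after Lemma~\ref{lem:dynamic-greedy-choice-opt}) simply says to repeat the argument of Lemma~\ref{lemma:fic-cut-bounded} with the extra case split on $\bK_t$, noting $\bw_i^t,\tilde{\bp}_i^t\le 1/t$; your unification via the single parameter $q$ is a clean way to write that.

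One small clarification: your closing parenthetical about ``if $\by_t^*\neq x_i$'' is unnecessary and slightly off. The quantity $\tilde{\bp}_i^t$ in Definition~\ref{def:dynamic-fictitious} is, for every $i$, the probability that $\bA_{i,t}=1$ given $\bK_t=1$ (as stated at the end of Definition~\ref{def:probs-seen-unseen}), and that definition already asserts $\tilde{\bp}_i^t\le 1/t$ unconditionally when $i\notin\bS^{t-1}$. You do not need to case on the realized sample $\by_t^*$; just cite that bound directly.
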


\begin{proof}
The proof proceeds by induction on $t$. We note that $\hat{\bz}^0 = z^* \in \{0,1\}^{n\times \{0,1\}}$, so the condition trivially holds. We assume for induction that the condition holds for all $t \leq \tilde{t}$ and show the condition for $t=\tilde{t}+1$. First, note that if $i \in \bS^{t}$, then $\hat{\bz}_i^{t} = \bz_{i}^t \in \{0,1\}^{\{0,1\}}$ and satisfies the condition. So, suppose $i \notin \bS^{t}$. Now assume $\bK_{t}=0$, then 
\begin{align*}
\hat{\bz}_{i,0}^{t} + \hat{\bz}_{i,1}^{t} &= \dfrac{1}{1 - \bw_{i}^{t}} \left(\dfrac{t-1}{t} \cdot \left(\hat{\bz}_{i,0}^{t-1} + \bz_{i,1}^{t-1}\right) + \left(\frac{1}{t} - \bw_{i}^t \right) \cdot \left(\bg_{i,0}^{t} + \bg_{i,1}^{t} \right) \right) \\
				&= \dfrac{1}{1 - \bw_i^t}\left( \frac{t-1}{t} + \frac{1}{t} - \bw_i^t \right) = 1,
\end{align*}
and when $\bK_t=1$
\begin{align*}
\hat{\bz}_{i,0}^{t} + \hat{\bz}_{i,1}^{t} &= \dfrac{1}{1 - \bp_{i}^{t}} \left(\dfrac{t-1}{t} \cdot \left(\hat{\bz}_{i,0}^{t-1} + \bz_{i,1}^{t-1}\right) + \left(\frac{1}{t} - \bp_{i}^t \right) \cdot \left(\bg_{i,0}^{t} + \bg_{i,1}^{t} \right) \right) \\
				&= \dfrac{1}{1 - \bp_i^t}\left( \frac{t-1}{t} + \frac{1}{t} - \bp_i^t \right) = 1,
\end{align*}
where we used the inductive hypothesis and the fact $\bg_{i,0}^t + \bg_{i,1}^t = 1$. For the difference, we have:
\begin{align*}
\left| \hat{\bz}^{t}_{i,0} - \hat{\bz}^t_{i,1}\right| &= \left|\dfrac{1}{1 - \bw_{i}^{t}} \left(\dfrac{t-1}{t} \cdot \left(\hat{\bz}_{i,0}^{t-1} - \bz_{i,1}^{t-1}\right) + \left(\frac{1}{t} - \bw_{i}^t \right) \cdot \left(\bg_{i,0}^{t} - \bg_{i,1}^{t} \right) \right) \right| \\
			&\leq \left| \frac{1}{1 - \bw_i^{t}}\right| \cdot \left( \frac{t-1}{t} + \left| \frac{1}{t} - \bw_i^t \right| \right).
\end{align*}
and 
\begin{align*}
\left| \hat{\bz}^{t}_{i,0} - \hat{\bz}^t_{i,1}\right| &= \left|\dfrac{1}{1 - \bp_{i}^{t}} \left(\dfrac{t-1}{t} \cdot \left(\hat{\bz}_{i,0}^{t-1} - \bz_{i,1}^{t-1}\right) + \left(\frac{1}{t} - \bp_{i}^t \right) \cdot \left(\bg_{i,0}^{t} - \bg_{i,1}^{t} \right) \right) \right| \\
			&\leq \left| \frac{1}{1 - \bp_i^{t}}\right| \cdot \left( \frac{t-1}{t} + \left| \frac{1}{t} - \bp_i^t \right| \right).
\end{align*}
Finally, we note that \textsc{CorrelatedGreedyProcess} enforces that $\bw_i^t \leq 1/t$ when $i \notin \bS^{t-1}$, so $(t-1) / t + |1/t - \bw_i^t| = 1 - \bw_i^t$ and $\bp_i^t \leq 1/t$ when $i \notin \bS^{t-1}$, so $(t-1) / t + |1/t - \bp_i^t| = 1 - \bp_i^t$.. Giving the upper bound on the difference.
\end{proof}    

To conclude the proof of Lemma~\ref{lem:dynamic-greedy-choice-opt}, recall that $\bg_{i,0}^{t}$ and $\bg_{i,1}^{t}$ are chosen so as to minimize
\begin{align*}
\bg_{i, 0}^{t} \cdot \tilde{\bc}_{i,0}^{t} + \bg_{i,1}^{t} \cdot \tilde{\bc}_{i,1}^{t} = \min_{\substack{\alpha, \beta \in [0,1] \\ \alpha + \beta = 1}} \alpha \cdot \tilde{\bc}_{i,0}^{t} + \beta \cdot \tilde{\bc}_{i,1}^{t} \leq \hat{\bz}_{i,0}^{t-1} \cdot \tilde{\bc}_{i,0} + \hat{\bz}_{i,1}^{t-1} \cdot \tilde{\bc}_{i,1}^t,
\end{align*}
which gives the desired bound.}

\paragraph{Proof of Lemma~\ref{lem:dynamic-error-bound}}

The proof will be similar to Lemma \ref{lem:error-bound} in Section~\ref{sec:greedy-analysis}. We will similarly re-index $t$ so that the definition of $\bc_{i,b}^{t+1}$ depends solely on the randomness up to step $t$, i.e., those from $\bA^{t}_{j,\ell}, \bK_{\ell}$ for all $j \neq i$ and $\ell \leq t$. We let
\begin{align*}
\berr_{i,b}^{t+1}(j) &\eqdef d(x_i, x_j) \left(1 - \hat{\bR}_j^t \right) \cdot \hat{\bz}_{j,b}^t =  d(x_i, x_j) \left(1 - \frac{\bA_{j,t}^t \cdot \bK_t}{t \cdot \tilde{\bp}_j^{t} \gamma^t} - \frac{t-1}{t} \cdot \hat{\bR}_j^{t-1} \right) \cdot \hat{\bz}^t_{j,b} \\ 
\berr_{i,b}^{t+1} &\eqdef \sum_{j=1}^n \berr_{i,b}^{t+1}(j) = \bc_{i,b}^{t+1} - \hat{\bc}_{i,b}^{t+1}.
\end{align*}
By linearity of expectation, Jensen's inequality, we seek to upper bound
\begin{align}
\Ex\left[ \sum_{i \notin \bS^t} |\bc_{i,b}^{t+1} - \hat{\bc}_{i,b}^{t+1}| \right]  
\leq \sum_{i=1}^n \left( \Ex\left[(\berr_{i,b}^{t+1})^2 \right] \right)^{1/2}. \label{eq:dynamic-expression}
\end{align}

The following lemma is analogous to Lemma~\ref{lem:martingale} in Section~\ref{sec:greedy-analysis}. As in Section~\ref{sec:greedy-analysis}, we have defined the fictitious cut so as to satisfy the conclusion of Lemma~\ref{lem:martingale}. The only minor subtlety is that one must consider the case $\bK_t = 1$ and $\bK_t = 0$ separately, since the fictitious cut and the probability that $\bA_{i,t} = 1$ differs in each case (when $\bK_t = 0$, $\bA_{i,t}$ is $1$ with probability $w_i^t$; when $\bK_t =1$, $\bA_{i,t}$ is $1$ with probability $\tilde{\bp}_i^t$). Considering both cases, we obtain the following lemma.

\begin{lemma}\label{lem:dynamic-martingale}
For any execution of the first $t-1$ steps of \textsc{CorrelatedGreedyProcess}, fixing the random variables $A^{t-1}$ and any setting of the first $t$ values $K_{\ell}$ for $\ell \leq t$, and $\hat{z}^{t-1}$. Then, considering only the $t$-th step of \textsc{CorrelatedGreedyProcess}, 
\[ \Ex_{\bA_{\cdot,t}}\left[ \berr_{i,b}^{t+1}(j) \right] =  \frac{t-1}{t} \cdot \err_{i,b}^t(j) ,\]
where the expectation is taken only with respected to the randomness in the $t$-th step. 
\end{lemma}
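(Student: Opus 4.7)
The plan is to mirror the proof of Lemma~\ref{lem:martingale}, splitting on whether $j \in \bS^{t-1}$ and decomposing $\hat{\bR}_j^t = \frac{t-1}{t}\hat{R}_j^{t-1} + \frac{1}{t}\cdot \frac{\bA_{j,t}^t \bK_t}{\tilde{\bp}_j^t\gamma^t}$. In each case, I will read off $\hat{\bz}_{j,b}^t$ from the appropriate branch of the fictitious-cut update in Definition~\ref{def:dynamic-fictitious} and enumerate the possible realizations of the step-$t$ randomness: the draw of $\bA_{j,t}^t$ together with, when $\bK_t = 1$, the underlying geometric sample $(\by_t^*, \bp_t^*)$ and the Bernoulli $\bb_t$. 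The denominators $1-\tilde{\bp}_j^t$ and $1-\bw_j^t$ in Definition~\ref{def:dynamic-fictitious} have been engineered precisely so that, after taking expectation, the $g_{j,b}^t$-dependent contributions cancel and one is left with $\frac{t-1}{t}\err_{i,b}^t(j)$.

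For $j \in \bS^{t-1}$, $\hat{\bz}_{j,b}^t = \hat{z}_{j,b}^{t-1}$ is already fixed by the prior steps, so the sole source of randomness is the new term $\bA_{j,t}^t \bK_t / (\tilde{\bp}_j^t \gamma^t)$. Whenever this term is non-zero we are in the $\bK_t = 1$ branch, where $\bA_{j,t}^t = \ind[\by_t^* = x_j]$ and Definition~\ref{def:probs-seen-unseen} sets $\tilde{\bp}_j^t = p_j^t$; hence $\Ex[\bA_{j,t}^t / \tilde{\bp}_j^t] = 1$ over the draw of $\by_t^*$. Combined with the factors $\bK_t/\gamma^t$ and $1/t$, the expected new contribution comes out to $1/t$, after which a line of algebra gives $\Ex[\berr_{i,b}^{t+1}(j)] = \frac{t-1}{t}(1-\hat{R}_j^{t-1})\,d(x_i,x_j)\,\hat{z}_{j,b}^{t-1} = \frac{t-1}{t}\err_{i,b}^t(j)$.

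The harder case is $j \notin \bS^{t-1}$, in which $\hat{R}_j^{t-1} = 0$ but both the term $\bA_{j,t}^t\bK_t/(\tilde{\bp}_j^t\gamma^t)$ and the fictitious-cut value $\hat{\bz}_{j,b}^t$ depend on the step-$t$ randomness. The plan is to enumerate the possibilities $(\bA_{j,t}^t, \bK_t) \in \{(1,1),(1,0),(0,\cdot)\}$, apply the $\bK_t=1$ and $\bK_t=0$ branches of Definition~\ref{def:dynamic-fictitious} to read off $\hat{\bz}_{j,b}^t$ in the $\bA_{j,t}^t = 0$ scenarios, and verify (as in Case~2 of the proof of Lemma~\ref{lem:martingale}) that the $g_{j,b}^t$ cross-terms cancel. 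The main obstacle is that $\tilde{\bp}_j^t$ is now random through $\bp_t^*$, which a priori disrupts the clean cancellation available in the unweighted case. The key identity I will exploit is that, conditional on $\by_t^* = x_j$ and $\bp_t^* = p$, one has $\bb_t \sim \Ber(\min\{1/(tp),1\})$ and $\tilde{\bp}_j^t = p_j^t\min\{1/(tp),1\}$, so the $\min\{1/(tp),1\}$ factor cancels between $\Ex[\bb_t]$ and the denominator, giving $\Ex[\bA_{j,t}^t/\tilde{\bp}_j^t \mid \by_t^* = x_j] = 1/p_j^t$ independently of the realized $\bp_t^*$; unconditionally, $\Ex[\bA_{j,t}^t/\tilde{\bp}_j^t \mid \bK_t = 1] = 1$. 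With this identity in hand, the cancellations from the proof of Lemma~\ref{lem:martingale} carry over and yield $\Ex[\berr_{i,b}^{t+1}(j)] = \frac{t-1}{t}\err_{i,b}^t(j)$, completing the proof.
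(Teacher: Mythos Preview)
Your approach matches the paper's: it does not spell out a proof but says the argument follows Lemma~\ref{lem:martingale} after treating $\bK_t=0$ and $\bK_t=1$ as separate cases, and you carry that out. Your observation that, conditional on $\by_t^*=x_j$, the factor $\min\{1/(t\bp_t^*),1\}$ cancels between $\Ex[\bb_t]$ and $\tilde{\bp}_j^t$, so that $\Ex[\bA_{j,t}^t/\tilde{\bp}_j^t\mid\bK_t=1]=1$, is exactly what lets the Lemma~\ref{lem:martingale} cancellations survive the approximate sampling probabilities.

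One point of friction with the statement rather than with your argument: your computation averages over $\bK_t$ (you rely on $\Ex[\bK_t]/\gamma^t=1$), whereas the lemma as written fixes all $K_\ell$ with $\ell\le t$, including $\bK_t$. The identity in fact fails for a fixed $\bK_t$; for instance when $j\in S^{t-1}$ and $\bK_t=0$, the quantity $\berr_{i,b}^{t+1}(j)$ is deterministic and exceeds $\tfrac{t-1}{t}\err_{i,b}^t(j)$ by $\tfrac{1}{t}\,d(x_i,x_j)\,\hat{z}_{j,b}^{t-1}$, and when $\bK_t=1$ the corresponding conditional expectation picks up a stray $1/\gamma^t$ that only disappears after averaging over $\bK_t$. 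So the lemma statement carries a slip (it should fix only $K_\ell$ for $\ell\le t-1$); the version you prove, with expectation over the full step-$t$ randomness, is the one matching the paper's sketch and the analogy to Lemma~\ref{lem:martingale}.
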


\ignore{ \begin{proof}
We simply go through the calculation, where we write $\berr_{i,b}^{t}(j)$ in terms of $R_j^{t-1}$ as 
\begin{align*}
\berr_{i,b}^{t+1}(j) = d(x_i, x_j) \left(1 - \frac{1}{t} \cdot \dfrac{\bA_{j,t}^{t} \cdot \bK_{t}^t}{p_j^{t} \cdot \gamma^{t}}  - \frac{1}{t} \cdot R_j^{t-1} \right) \hat{\bz}^t_{j,b}.
\end{align*}
We consider two cases, depending on whether or not $j \in S^{t-1}$. In the case that $j \in S^{t-1}$, the fictitious cut $\hat{\bz}^t_{j,b}$ will always be set to $\hat{z}^{t-1}_{j,b}$ (as it remains fixed). Therefore, we consider the 3 cases
\begin{itemize}
\item With probability $\gamma^t p_j^t$, we have $\bA_{j,t}^t\cdot \bK_{t}^t = 1$. Thus we get  
\begin{align*}
d(x_i, x_j) \left(1 - \frac{1}{t} \cdot \dfrac{1}{p_j^{t} \cdot \gamma^{t}}  - \frac{1}{t} \cdot R_j^{t-1} \right) \hat{\bz}^{t-1}_{j,b}
\end{align*}
\item With probability $\gamma^t (1-p_j^{t})$, we have $\bA_{j,t}^t=0$ and $\bK_{t}^t = 1$. Thus we get 
\begin{align*}
d(x_i, x_j) \left(1 - \frac{1}{t} \cdot R_j^{t-1} \right) \hat{\bz}^{t-1}_{j,b}
\end{align*}
\item With probability $1 - \gamma^t$, we have $\bK_{j,t}^t = 0$, in which case:
\begin{align*}
d(x_i, x_j) \left(1 - \frac{1}{t} \cdot R_j^{t-1} \right) \hat{\bz}^{t-1}_{j,b}
\end{align*}
\end{itemize}
Now combining these terms, we can compute the expectation:
\begin{align*}
\Ex\left[ d(x_i, x_j) \left(1 - \frac{1}{t} \cdot \dfrac{\bA_{j,t}^{t} \cdot \bK_{t}^t}{p_j^{t} \cdot \gamma^{t}}  - \frac{1}{t} \cdot R_j^{t-1} \right) \hat{\bz}^t_{j,b} \right] 
&= d(x_i, x_j) \left(1 - \frac{p_j^{t}}{p_j^{t} t} - \frac{1}{t} \cdot R_j^{t-1} \right) \hat{z}^{t-1}_{j,b} \\
&\leq \frac{t-1}{t} \cdot d(x_i,x_j) \left(1 - \frac{1}{t-1} \cdot R_j^{t-1} \right) \hat{z}_{j,b}^{t-1}\\
&\leq \frac{t-1}{t} \cdot \err_{i,b}^{t}(j).
\end{align*}
On the other hand, if $j \notin S^{t-1}$, then $R_j^{t-1} = 0$. The expectation calculation, however, is more complicated, since the setting of $\hat{\bz}^t_{j,b}$ depends on the realization of $\bA_{j,t}^t$ and $\bK_{t}^t$. We go through the four cases:
\begin{itemize}
\item With probability $\gamma^t p_j^t$, we have $\bA_{j,t}^t \cdot \bK_{t}^t = 1$. This results in a setting of $\hat{\bz}^t_{j,b} = g_{j,b}^t$, so we have that $\berr_{i,b}^{t+1}(j)$ becomes
\begin{align*}
d(x_i, x_j) \left(1 - \frac{1}{t} \cdot \frac{1}{p_j^{t}\gamma^t} \right) \cdot g_{j,b}^t.
\end{align*}
\item With probability $\gamma (1 - p_j^t)$, we have $\bK_{t}^t = 1$ and $\bA_{j,t}^t = 0$, in which case $\hat{\bz}_{j,b}^t$ is updated according to the second case of Definition~\ref{def:dynamic-fictitious}, and $\berr_{i,b}^{t+1}(j)$ becomes:
\begin{align*}
d(x_i,x_j) \cdot \frac{1}{1 - p_j^t} \left(\frac{t-1}{t} \cdot \hat{z}_{j,b}^{t-1} + \frac{1}{t} \cdot g_{j,b}^t - p_j^t \cdot g_{j,b}^t \right).
\end{align*}

\item With probability $(1-\gamma) w_j$, we have $\bK_{t}^t = 0$ and $\bA_{j,t}^t = 1$, in which case $\hat{\bz}^t_{j,b} = g_{j,b}^t$, and $\berr_{i,b}^{t+1}(j)$ becomes:
\begin{align*}
d(x_i, x_j) \cdot g_{j,b}^t.
\end{align*}
\item With probability $(1-\gamma) (1-w_j)$, we have $\bK_{t}^t = 0$ and $\bA_{j,t}^t = 0$, in which case $\hat{\bz}_{j,b}^t$ is updated according to the third case of Definition~\ref{def:dynamic-fictitious}, and $\berr_{i,b}^{t+1}(j)$ becomes:
\begin{align*}
d(x_i,x_j) \cdot \frac{1}{1 - w_j^t} \left(\frac{t-1}{t} \cdot \hat{z}_{j,b}^{t-1} + \frac{1}{t} \cdot g_{j,b}^t - w_j^t \cdot g_{j,b}^t \right).
\end{align*}
\end{itemize}

Computing the expectation over all 4 cases gives us an expectation of
\begin{align*}
\Ex\left[ \berr_{i,b}^{t+1}(j) \right] &\leq \frac{t-1}{t} \cdot d(x_i, x_j) \cdot \hat{z}_{j,b}^{t-1} \\
	&\leq \frac{t-1}{t} \cdot \err_{i,b}^{t}(j),
\end{align*}
concluding the proof.
\end{proof}}

\begin{lemma}\label{lem:dynamic-reduce-to-error}
For any $t  \in \{2, \dots, t_e-1\}$, 
\begin{align*}
\Ex\left[\sum_{i \notin \bS^t} \left|\bc_{i,b}^{t+1} - \hat{\bc}_{i,b}^{t+1} \right| \right] 
&\leq \frac{1}{t} \sum_{i=1}^n \left( \sum_{\ell=1}^{t} \ell^2 \sum_{j=1}^n \Ex\left[\left( \berr_{i,b}^{\ell+1}(j) - \frac{\ell-1}{\ell} \cdot \berr_{i,b}^{\ell}(j)\right)^2 \right] \right)^{1/2}.
\end{align*}
\end{lemma}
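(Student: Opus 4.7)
The plan is to follow the three-step structure of the proof of Lemma~\ref{lem:reduce-to-error} (its counterpart in the independent setting), adapting the argument to handle the correlated activations introduced at time steps where $\bK_\ell=1$. At a high level: (i) pass from the expected absolute deviation to a square-root of a second moment via Jensen's inequality, (ii) telescope the second moment across time steps using the martingale property of Lemma~\ref{lem:dynamic-martingale}, and (iii) bound the per-step second moment by the sum over $j$ of per-pair second moments.

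Steps (i) and (ii) carry over almost verbatim from Section~\ref{sec:greedy-analysis}. For (i), linearity of expectation pulls the sum over $i$ outside, and Jensen's inequality gives $\Ex[\sum_{i\notin\bS^t}|\bc_{i,b}^{t+1}-\hat{\bc}_{i,b}^{t+1}|]\le \sum_{i=1}^n\sqrt{\Ex[(\berr_{i,b}^{t+1})^2]}$. For (ii), I show by induction on $t$ the identity
\[
\Ex[(\berr_{i,b}^{t+1})^2]\;=\;\sum_{\ell=1}^{t}\frac{\ell^2}{t^2}\Ex\!\left[\!\left(\berr_{i,b}^{\ell+1}-\tfrac{\ell-1}{\ell}\berr_{i,b}^{\ell}\right)^{\!2}\right].
\]
Expanding the inductive step $\Ex[(\berr^{t+1})^2]=\Ex[(\berr^{t+1}-\tfrac{t-1}{t}\berr^{t}+\tfrac{t-1}{t}\berr^{t})^2]$, the cross term vanishes: condition on the first $t-1$ steps (fixing $\berr^t$) and further on $\bK_t$; under either value of $\bK_t$, Lemma~\ref{lem:dynamic-martingale} gives $\Ex[\berr^{t+1}(j)-\tfrac{t-1}{t}\berr^t(j)\mid\mathrm{past},\bK_t]=0$ for every $j$, so summing over $j$ and taking total expectation in $\bK_t$ eliminates the cross term.

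The delicate part is step (iii): showing $\Ex[(\berr^{\ell+1}-\tfrac{\ell-1}{\ell}\berr^{\ell})^2]\le\sum_{j=1}^n\Ex[(\berr^{\ell+1}(j)-\tfrac{\ell-1}{\ell}\berr^{\ell}(j))^2]$. Writing $\Delta_j$ for the per-pair increment, I condition on the past and on $\bK_\ell$. When $\bK_\ell=0$, the activations $\bA_{j,\ell}$ are independent Bernoullis across $j$, the $\Delta_j$'s are conditionally independent with conditional mean zero, and the inequality reduces to its independent form. When $\bK_\ell=1$, the single geometric sample enforces $\bA_{j,\ell}\bA_{j',\ell}=0$ almost surely; I apply Lemma~\ref{lem:neg-corr} to the pair $(\Delta_j,\Delta_{j'})$, whose joint distribution is exactly of the one-hot form needed by that lemma (with $\alpha_1,\beta_1$ the ``activated/not-activated'' values of $\Delta_j$ and likewise for $\Delta_{j'}$). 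Combining the resulting covariance formula with the martingale identity $\tilde{\bp}_j^{\ell}\Delta_j^{(1)}+(1-\tilde{\bp}_j^{\ell})\Delta_j^{(0)}=0$ from Lemma~\ref{lem:dynamic-martingale} reduces the $\bK_\ell=1$ conditional second moment to an expression of the form $\sum_j c_j^2\tilde{\bp}_j^{\ell}-(\sum_j c_j\tilde{\bp}_j^{\ell})^2$, where $c_j=\Delta_j^{(1)}-\Delta_j^{(0)}$.

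The main obstacle is closing step (iii) in the $\bK_\ell=1$ case: the cross covariances of the $\Delta_j$'s are not individually non-positive, so they cannot simply be dropped term by term. The key structural fact I will exploit is that $\sum_{j}\tilde{\bp}_j^{\ell}\le 1$ (at most one point is activated per step) together with the bound $\tilde{\bp}_j^{\ell}\le 1/\ell$, which makes the ``excess'' $\sum_j c_j^2(\tilde{\bp}_j^{\ell})^2$ a lower-order term relative to the diagonal contribution $\sum_j c_j^2\tilde{\bp}_j^{\ell}(1-\tilde{\bp}_j^{\ell})=\sum_j\Ex[\Delta_j^2\mid\bK_\ell=1]$. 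This will let us absorb the leftover cross terms into the diagonal bound (up to an absolute constant hidden in the $O(\cdot)$ factor later used in Lemma~\ref{lem:dynamic-error-bound}), thereby yielding the required per-step inequality and, via step (ii), the statement of the lemma.
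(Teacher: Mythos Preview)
Your steps (i) and (ii) are correct and track the paper's proof of Lemma~\ref{lem:reduce-to-error} verbatim; the martingale identity of Lemma~\ref{lem:dynamic-martingale} is indeed stated conditionally on $\bK_\ell$, so the cross term in the telescoping vanishes exactly as you say.

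The gap is in step (iii). Your premise that ``the cross covariances of the $\Delta_j$'s are not individually non-positive'' is wrong: conditioned on the first $\ell-1$ steps and on $\bK_\ell=1$, one has $\Ex[\Delta_j\Delta_{j'}]\le 0$ for every $j\neq j'$, and this is precisely what the paper establishes. The argument is a sign check rather than a magnitude bound. Write $\alpha_j$ for the value of $\Delta_j$ on the event $\bA_{j,\ell}=1$ and $\beta_j$ for its value on $\bA_{j,\ell}=0$; since $\Delta_j$ is two-valued with conditional mean zero (Lemma~\ref{lem:dynamic-martingale}), it suffices to show $\alpha_j\le 0$. For $j\in S^{\ell-1}$ one gets $\alpha_j=\tfrac{d(x_i,x_j)}{\ell}\bigl(1-\tfrac{1}{\tilde{\bp}_j^{\ell}\gamma^{\ell}}\bigr)\hat z^{\ell-1}_{j,b}\le 0$ because $\tilde{\bp}_j^{\ell},\gamma^{\ell}\le 1$; for $j\notin S^{\ell-1}$ one gets $\alpha_j=d(x_i,x_j)\bigl[(1-\tfrac{1}{\ell\,\tilde{\bp}_j^{\ell}\gamma^{\ell}})g^{\ell}_{j,b}-\tfrac{\ell-1}{\ell}\hat z^{\ell-1}_{j,b}\bigr]\le 0$ because here $\tilde{\bp}_j^{\ell}\le 1/\ell$. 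Hence $\alpha_j\le 0\le\beta_j$ for all $j$, and feeding this into Lemma~\ref{lem:neg-corr} gives $\Ex[\Delta_j\Delta_{j'}]=p_1p_2(\alpha_1\beta_2+\alpha_2\beta_1-\alpha_1\alpha_2-\beta_1\beta_2)\le 0$. The cross terms can therefore be dropped term by term, yielding the lemma with no extra constant.

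Your proposed workaround is also flawed on its own terms: the inequality $\tilde{\bp}_j^{\ell}\le 1/\ell$ holds only for $j\notin S^{\ell-1}$. For $j\in S^{\ell-1}$ one has $\tilde{\bp}_j^{\ell}=p_j^{\ell}$, the raw geometric-sampling probability, which can be $\Theta(1)$. In that regime $\sum_j c_j^2(\tilde{\bp}_j^{\ell})^2$ is not lower order than $\sum_j c_j^2\,\tilde{\bp}_j^{\ell}(1-\tilde{\bp}_j^{\ell})$, so the absorption step does not go through. The missing idea is exactly the sign observation above, after which step (iii) becomes a one-line consequence of Lemma~\ref{lem:neg-corr}.
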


The proof of the above claim also follows similarly to Lemma~\ref{lem:reduce-to-error} in Section~\ref{sec:greedy-analysis}. The one thing to check that the final step in which Lemma~\ref{lem:reduce-to-error}, which uses independence of $\berr_{i,b}^{t+1}(j)$ and $\berr_{i,b}^{t+1}(j')$ may still be done even though these are not independent. In particular, Lemma~\ref{lem:reduce-to-error} analogously establishes
\begin{align*}
    \Ex\left[ \sum_{i \notin \bS^t} \left| \bc_{i,b}^{t+1} - \hat{\bc}_{i,b}^{t+1} \right|\right] \leq \frac{1}{t} \sum_{i=1}^n \left(\sum_{\ell=1}^t \ell^2 \Ex\left[\left(\berr_{i,b}^{\ell+1} - \frac{\ell-1}{\ell} \cdot \berr_{i,b}^{\ell} \right)^2 \right] \right)^{1/2},
\end{align*}
but we must verify that, even though differing values of $j$ are not independent,
\[ \Ex\left[\left( \berr_{i,b}^{\ell+1} - \frac{\ell-1}{\ell}\cdot \berr_{i,b}^{\ell} \right)^2\right] \leq \sum_{j=1}^n \Ex\left[\left(\berr_{i,b}^{\ell+1}(j) - \frac{\ell-1}{\ell} \cdot \berr_{i,b}^{\ell}(j)\right)^2 \right].\]
In particular, we have the following claim.
\begin{claim}
    For any fixed setting of the first $\ell-1$ time steps, any two differing $j, j' \in [n]$ satisfy
    \begin{align*}
        \Ex\left[ \left(\berr_{i,b}^{\ell+1}(j) - \frac{\ell-1}{\ell}\cdot \err_{i,b}^{\ell}(j) \right)\left(\berr_{i,b}^{\ell+1}(j') - \frac{\ell-1}{\ell} \cdot \err_{i,b}^{\ell}(j') \right)\right] \leq 0.
    \end{align*}
\end{claim}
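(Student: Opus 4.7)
The plan is to reduce the claim to a case analysis based on the value of $\bK_\ell$, exploiting that the single mask bit $\bK_\ell$ is the only source of correlation between different points at time $\ell$. Write $X_j = \berr_{i,b}^{\ell+1}(j) - \frac{\ell-1}{\ell}\cdot \err_{i,b}^{\ell}(j)$, and observe that Lemma~\ref{lem:dynamic-martingale} gives $\Ex[X_j] = 0$ for any fixed setting of the first $\ell-1$ steps \emph{and} any fixed setting of $\bK_\ell$. Hence after conditioning on $\bK_\ell$, the expected product equals the covariance of $X_j$ and $X_{j'}$. We split on $\bK_\ell$ and average at the end.

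When $\bK_\ell = 0$, the process draws $\bA_{j,\ell}$ and $\bA_{j',\ell}$ independently from $\Ber(w_j^\ell)$ and $\Ber(w_{j'}^\ell)$ respectively (Line~\ref{en:k-0} of Definition~\ref{def:corr-timeline-mask}), and the fictitious-cut update (third branch of~(\ref{eq:dynamic-shadow-def})) for point $j$ depends only on $\bA_{j,\ell}$. Therefore $X_j$ and $X_{j'}$ are independent and, using $\Ex[X_j]=\Ex[X_{j'}]=0$, their expected product vanishes. The interesting case is $\bK_\ell = 1$, where only the single geometrically-sampled point can activate: the joint law of $(\bA_{j,\ell}, \bA_{j',\ell})$ is supported on $(1,0)$, $(0,1)$, and $(0,0)$ with probabilities $\tilde{\bp}_j^\ell$, $\tilde{\bp}_{j'}^\ell$, and $1-\tilde{\bp}_j^\ell-\tilde{\bp}_{j'}^\ell$ (as in Definition~\ref{def:probs-seen-unseen}, using that the two ``activating'' events are mutually exclusive).

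This is exactly the setup of Lemma~\ref{lem:neg-corr}. Since $\hat{\bz}_{j,b}^\ell$ is determined by $\bA_{j,\ell}$ once the first $\ell-1$ steps and $\bK_\ell=1$ are fixed (both branches of~(\ref{eq:dynamic-shadow-def}) depend on point $j$'s own activation via the deterministic quantity $\tilde{\bp}_j^\ell$), each $X_j$ takes only two values: $\alpha_j$ when $\bA_{j,\ell}=1$ and $\beta_j$ when $\bA_{j,\ell}=0$. The martingale identity forces $\tilde{\bp}_j^\ell\,\alpha_j + (1-\tilde{\bp}_j^\ell)\,\beta_j = 0$, so $\alpha_j = -\frac{1-\tilde{\bp}_j^\ell}{\tilde{\bp}_j^\ell}\beta_j$, and likewise for $j'$. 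Substituting into the correction term from Lemma~\ref{lem:neg-corr}, a short algebraic simplification collapses every factor and yields
\[
\Ex\left[X_j X_{j'} \mid \bK_\ell=1\right] \;=\; \tilde{\bp}_j^\ell\,\tilde{\bp}_{j'}^\ell\bigl(\alpha_j\beta_{j'}+\beta_j\alpha_{j'}-\alpha_j\alpha_{j'}-\beta_j\beta_{j'}\bigr) \;=\; -\beta_j\,\beta_{j'}.
\]
Thus it remains to verify $\beta_j \geq 0$, which is where the bulk of the ``real'' work lies; I expect this to be the main obstacle, since it requires unpacking the fictitious-cut update.

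To show $\beta_j \geq 0$, I substitute $\bA_{j,\ell}=0$ and $\bK_\ell=1$ into the definition of $X_j$ and split on whether $j \in \bS^{\ell-1}$. If $j \in \bS^{\ell-1}$, then $\hat{\bz}_{j,b}^\ell = \hat{z}_{j,b}^{\ell-1}$ is fixed and the term $\bA_{j,\ell}^\ell\cdot\bK_\ell$ in $\berr_{i,b}^{\ell+1}(j)$ vanishes; collecting terms yields $X_j = \tfrac{1}{\ell}\,d(x_i,x_j)\,\hat{z}_{j,b}^{\ell-1}\geq 0$. If $j \notin \bS^{\ell-1}$, then $\hat{R}_j^{\ell-1} = 0$ and $\hat{\bz}_{j,b}^\ell$ is given by the second branch of~(\ref{eq:dynamic-shadow-def}); plugging in and simplifying gives
\[
X_j \;=\; \frac{d(x_i,x_j)}{1-\tilde{\bp}_j^\ell}\left[\tilde{\bp}_j^\ell\cdot\tfrac{\ell-1}{\ell}\,\hat{z}_{j,b}^{\ell-1} + \left(\tfrac{1}{\ell} - \tilde{\bp}_j^\ell\right) g_{j,b}^\ell\right],
\]
which is non-negative because $\tilde{\bp}_j^\ell \le 1/\ell$ (as noted at the end of Definition~\ref{def:probs-seen-unseen}) and all other factors lie in $[0,1]$. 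Consequently $-\beta_j\beta_{j'} \leq 0$ in every case, and averaging over $\bK_\ell$ preserves the inequality, proving the claim.
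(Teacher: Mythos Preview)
Your proof is correct and follows essentially the same approach as the paper: condition on $\bK_\ell$, use independence when $\bK_\ell=0$, and when $\bK_\ell=1$ invoke Lemma~\ref{lem:neg-corr} together with the zero-mean property from Lemma~\ref{lem:dynamic-martingale}. The only cosmetic difference is that the paper verifies $\alpha_j\le 0$ (the value when $\bA_{j,\ell}=1$) and then argues by signs that $\alpha_1\beta_2+\alpha_2\beta_1-\alpha_1\alpha_2-\beta_1\beta_2\le 0$, whereas you verify the equivalent $\beta_j\ge 0$ and carry the algebra one step further to obtain the exact value $-\beta_j\beta_{j'}$.
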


\begin{proof}
\ignore{We will upper bound (\ref{eq:dynamic-expression}), where we first prove by induction on $t \in \{0, \dots, t_e-1\}$ that we always satisfy
\begin{align*}
\Ex\left[ \left( \berr_{i,b}^{t+1} \right)^2\right] = \sum_{\ell=1}^{t} \frac{\ell^2}{t^2} \Ex\left[ \left(\berr_{i,b}^{\ell+1} - \frac{\ell-1}{\ell} \cdot \berr_{i,b}^{\ell} \right)^2\right].
\end{align*}
Note that the base case of $t = 0$ corresponds to $\berr_{i,b}^1$, which is deterministically $0$ by $\bR_j^0 = 1$. Thus, assume for the sake of induction that
\begin{align*}
\Ex\left[ \left( \berr_{i,b}^{t+1} \right)^2\right] = \sum_{\ell=1}^{t} \frac{\ell^2}{t^2} \Ex\left[ \left(\berr_{i,b}^{\ell+1} - \frac{\ell-1}{\ell} \cdot \berr_{i,b}^{\ell} \right)^2\right].
\end{align*}
Consider proving the above by induction on $t$. The base case, $t = 0$, corresponds to $\berr_{i,b}^1$, which is deterministically $0$ by $\bR_j^0 = 1$. 
\begin{align*}
\Ex\left[ \left(\berr_{i,b}^{t+1}\right)^2\right] 
&= \Ex\left[ \left(\berr_{i,b}^{t+1} - \frac{t-1}{t} \cdot \berr_{i,b}^{t} + \frac{t-1}{t} \cdot \berr_{i,b}^{t}\right)^2\right] \\
&= \Ex\left[ \left( \berr_{i,b}^{t+1} - \frac{t-1}{t} \cdot \berr_{i,b}^{t} \right)^2 \right] + \left(\frac{t-1}{t} \right)^2 \Ex\left[ \left(\berr_{i,b}^{t}\right)^2 \right]. 
\end{align*}
Applying the inductive hypothesis on $t-1$ and simplifying the expression (since the value of $(t-1)^2$ will ``cancel out'') gives the first part of the bound.

Now we aim to prove that 
\begin{align*}
     \Ex\left[ \left(\berr_{i,b}^{\ell+1} - \frac{\ell-1}{\ell} \cdot \berr_{i,b}^{\ell}\right)^2 \right]
     &\leq \sum_{j=1}^n \Ex\left[\left( \berr_{i,b}^{\ell+1}(j) - \frac{\ell-1}{\ell} \cdot \berr_{i,b}^{\ell}(j)\right)^2 \right]
\end{align*}
We start by using linearity of expectation to break the error term into two pieces, the square terms and the cross terms. 
\begin{align*}
     \Ex\left[ \left(\berr_{i,b}^{\ell+1} - \frac{\ell-1}{\ell} \cdot \berr_{i,b}^{\ell}\right)^2 \right]
     &= \Ex\left[\left(\sum_{j=1}^n \berr_{i,b}^{\ell+1}(j) - \frac{\ell-1}{\ell} \cdot \berr_{i,b}^{\ell-1}(j)\right)^2 \right] \\ 
     &= \sum_{j=1}^n  \Ex\left[ \left( \berr_{i,b}^{\ell}(j) - \frac{\ell-1}{\ell} \cdot \berr_{i,b}^{\ell}(j)\right)^2 \right ]\\
     &+ \sum_{j=1}^n \sum_{\substack{k=1\\ k\neq j}}^n \Ex\left[ \left( \berr_{i,b}^{\ell+1}(j) - \frac{\ell-1}{\ell} \cdot \berr_{i,b}^{\ell}(j)\right) \left( \berr_{i,b}^{\ell+1}(k) - \frac{\ell-1}{\ell} \cdot \berr_{i,b}^{\ell}(k)\right) \right]\\
     &\leq \sum_{j=1}^n  \Ex\left[ \left( \berr_{i,b}^{\ell+1}(j) - \frac{\ell-1}{\ell} \cdot \berr_{i,b}^{\ell}(j)\right)^2 \right ]\\
\end{align*}
Now we simply need to show that for any $j,k\in [n]$ s.t. $j \neq k$, the expectation is smaller. We we will prove this by analyzing the terms of the expectation to show that the probability of non-positive terms increases compared to the product of the expectation, which by Lemma \ref{lem:neg-corr} gives us the bound.  
\begin{align*}
    &\Ex_{}\left[\left( \berr_{i,b}^{\ell+1}(j) - \frac{\ell-1}{\ell} \cdot \berr_{i,b}^{\ell}(j)\right) \left( \berr_{i,b}^{\ell+1}(k) - \frac{\ell-1}{\ell} \cdot \berr_{i,b}^{\ell}(k)\right) \right] \\ 
    &\qquad \leq \Ex\left[\left( \berr_{i,b}^{\ell+1}(j) - \frac{\ell-1}{\ell} \cdot \berr_{i,b}^{\ell}(j)\right) \right] \Ex\left[\left( \berr_{i,b}^{\ell+1}(k) - \frac{\ell-1}{\ell} \cdot \berr_{i,b}^{\ell}(k)\right) \right] \leq 0
\end{align*}}
For any setting of the first $\ell-1$ time steps, for any of the settings of $\bK_{\ell}$, the expectation of $\berr_{i,b}^{\ell+1}(j) - (\ell-1)/\ell \cdot \err_{i,b}^{\ell}(j)$ is zero from Lemma~\ref{lem:dynamic-martingale}. Whenever $\bK_{\ell}=0$, the two terms are independent and we obtain our desired bound. When $\bK_{\ell} = 1$, the terms are non-independent. We let
\[ \bX_1 = \berr_{i,b}^{\ell+1}(j) - \frac{\ell-1}{\ell} \cdot \err_{i,b}^{\ell}(j) \qquad\text{and}\qquad \bX_2 = \berr_{i,b}^{\ell+1}(j') - \frac{\ell-1}{\ell} \cdot \err_{i,b}^{\ell}(j'). \]
We claim that, whenever $\bA_{j,\ell} = 1$ (i.e., the $j$-th point is activated), we will have $\bX_1 < 0$ (and similarly, $\bA_{j',\ell} = 1$ implies $\bX_2 < 0$). This suffices to prove the claim for the following reason: once we condition on $\bK_{\ell} = 1$, we use Lemma~\ref{lem:neg-corr} with $\alpha_1$ be the setting of $\bX_1$ when $\bA_{j,\ell} = 1$, and we let $\alpha_2$ be the setting of $\bX_2$ when $\bA_{j', \ell} = 1$. $\bX_1$ and $\bX_2$ are both zero centered and supported on just two outcomes, so if $\alpha_1,\alpha_2 \leq 0$, then $\beta_1, \beta_2 \geq 0$. By Lemma~\ref{lem:neg-corr}, we have for any $p_1, p_2 \in [0,1]$,
\begin{align*}
    \Ex[\bX_1 \bX_2] = p_1 p_2 \left( \alpha_1 \beta_2 + \alpha_2 \beta_1 - \alpha_1 \alpha_2 - \beta_1 \beta_2\right) \leq 0.
\end{align*}
We verify that $\bX_1 < 0$ when $\bA_{j,\ell} = 1$ (the case of $\bX_2$ is analogous), and we do so by consider the two cases:
\begin{itemize}
    \item If $j \in S^{\ell-1}$, then we know $\hat{\bz}^{\ell}_{j,b} = \hat{z}^{\ell-1}_{j,b}$. Thus, if $\bA_{j,\ell}\cdot \bK_{\ell} = 1$, we have that 
    \[ \bX_1 = \berr_{i,b}^{\ell+1}(j) - \frac{\ell-1}{\ell} \cdot \err_{i,b}^{\ell}(j) = \frac{d(x_i, x_j)}{\ell} \left(1 - \frac{1}{\tilde{\bp}_{j}^{\ell} \gamma^{\ell}} \right) \hat{z}^{\ell-1}_{j,b} \leq 0 \]
    because $\tilde{\bp}_{j}^{\ell}$ and $\gamma^{\ell}$ are both at most $1$, and $\hat{z}_{j,b}^{\ell-1} \geq 0$.
    \item If $j \notin S^{\ell-1}$, then we know $\hat{R}_{j}^{\ell-1} = 0$. Thus, if $\bA_{j,\ell}\cdot \bK_{\ell} =1$, we have that $\hat{\bz}^{\ell}_{j,b} = g_{j,b}^{\ell} \geq 0$, and 
    \[ \bX_1 = \berr_{i,b}^{\ell+1}(j) - \frac{\ell-1}{\ell} \cdot \err_{i,b}^{\ell}(j) = d(x_i, x_j) \left( \left( 1 - \frac{1}{\ell \cdot \tilde{\bp}_{j}^{\ell} \cdot \gamma^{\ell}} \right) g_{j,b}^{\ell} - \frac{\ell-1}{\ell} \cdot \hat{z}^{\ell-1}_{j,b} \right) \leq 0 \]
    because $\tilde{\bp}_{j}^{\ell} \leq 1/\ell$, $\gamma^{\ell} \leq 1$, and $\hat{z}_{j,b}^{t-1} \geq 0$. 
\end{itemize}
\ignore{ 
First consider first fixing the randomness of $\bK_{\ell}$, and observe that with probability $1-\bgamma^{\ell}$, $\bK_{\ell}=0$ which means $\berr_{i,b}^{\ell+1}(j) - \frac{\ell-1}{\ell} \cdot \berr_{i,b}^{\ell}(j)$ and $\berr_{i,b}^{\ell+1}(k) - \frac{\ell-1}{\ell} \cdot \berr_{i,b}^{\ell}(k)$ independent from one another. Since for independent random variables, the expectation of the product and the product of the expectation are equal, these terms would cancel out. On the other hand when $\bK_{\ell}=1$, the terms are no longer independent. We first consider any fixed execution of the first $\ell-1$ steps, then take the expectation over the randomness in the $\ell$-th step.
Observe that since both $\Ex[\berr_{i,b}^{\ell+1}(j)-\frac{\ell-1}{\ell}\berr_{i,b}^{\ell}(j)]=0$ and $\Ex[\berr_{i,b}^{\ell+1}(k)-\frac{\ell-1}{\ell}\berr_{i,b}^{\ell}(k)]=0$, then the extra term in lemma \ref{lem:neg-corr}, $(\alpha_1\beta_2 +\alpha_2\beta_1 - \alpha_1\alpha_2 - \beta_1\beta_2)\leq 0$. This can also be verified by calculating of the signs of the extra terms.

We begin by considering the case where both $j\in\bS^t$ and $k\in\bS^t$ and so the fictitious cut $\hat{\bz}^t_{j,b}$ will always be set to $\hat{z}^{t-1}_{j,b}$ and $\hat{\bz}^t_{k,b}$ will always be set to $\hat{z}^{t-1}_{k,b}$.We have the following sub-cases:
\begin{itemize}
\item With probability $0$, we have $\bA_{j,t}^t = 1$ and $\bA_{k,t}^t = 1$. The value would be
\begin{align*}
d(x_i, x_j) \left(\frac{1}{t} - \frac{1}{t} \cdot \dfrac{1}{p_j^{t} \cdot \gamma^{t}} \right) \hat{\bz}^{t-1}_{j,b} \cdot d(x_i, x_k) \left(\frac{1}{t} - \frac{1}{t} \cdot \dfrac{1}{p_k^{t} \cdot \gamma^{t}} \right) \hat{\bz}^{t-1}_{k,b}  \geq 0
\end{align*}
since only $\left(\frac{1}{t} - \frac{1}{t} \cdot \dfrac{1}{p_j^{t} \cdot \gamma^{t}} \right)$ and $\left(\frac{1}{t} - \frac{1}{t} \cdot \dfrac{1}{p_k^{t} \cdot \gamma^{t}} \right)$ are less than zero, hence cancel making the expression is positive.
\item With probability $p_j^t$, we have $\bA_{j,t}^t = 1$ and $\bA_{k,t}^t = 0$. Thus we get  
\begin{align*}
d(x_i, x_j) \left(\frac{1}{t} - \frac{1}{t} \cdot \dfrac{1}{p_j^{t} \cdot \gamma^{t}} \right) \hat{\bz}^{t-1}_{j,b} \cdot d(x_i, x_k) \left(\frac{1}{t}\right) \hat{\bz}^{t-1}_{k,b} \leq 0
\end{align*}
since only $\left(\frac{1}{t} - \frac{1}{t} \cdot \dfrac{1}{p_j^{t} \cdot \gamma^{t}} \right) \leq 0$
\item With probability $ p_k^t$, we have $\bA_{j,t}^t = 0$ and $\bA_{k,t}^t = 1$. Thus we get  
\begin{align*}
d(x_i, x_j) \left(\frac{1}{t} \right) \hat{\bz}^{t-1}_{j,b} \cdot d(x_i, x_k) \left(\frac{1}{t}- \frac{1}{t} \cdot \dfrac{1}{p_k^{t} \cdot \gamma^{t}}\right) \hat{\bz}^{t-1}_{k,b} 
\end{align*}
since only $\left(\frac{1}{t} - \frac{1}{t} \cdot \dfrac{1}{p_k^{t} \cdot \gamma^{t}} \right) \leq 0$

\item With probability $(1-p_j^t-p_k^t)$, we have $\bA_{j,t}^t=0$ and $\bK_{t}^t = 1$. Thus we get 
\begin{align*}
d(x_i, x_j) \left(\frac{1}{t}\right) \hat{\bz}^{t-1}_{j,b} \cdot d(x_i, x_k) \left(\frac{1}{t}\right) \hat{\bz}^{t-1}_{k,b} \geq 0
\end{align*}
since all terms are non-negative.
\end{itemize}

Now we consider the case where  $j\in\bS^t$ and $k\notin\bS^t$. We have that $\hat{\bz}^t_{j,b}=\hat{z}^{t-1}_{j,b}$, while  $R_k^{t-1} = 0$. Note the case where $j\notin\bS^t$ and $k\in\bS^t$ is identical. 
We have the following sub-cases:
\begin{itemize}
\item With probability $0$, we have $\bA_{j,t}^t = 1$ and $\bA_{k,t}^t = 1$. The value would be
\begin{align*}
d(x_i, x_j) \left(\frac{1}{t} - \frac{1}{t} \cdot \dfrac{1}{p_j^{t} \cdot \gamma^{t}} \right) \hat{\bz}^{t-1}_{j,b} \cdot d(x_i, x_k) \left(\frac{1}{t} - \frac{1}{t} \cdot \dfrac{1}{p_k^{t} \cdot \gamma^{t}} \right)g_{k,b}^t \geq 0
\end{align*}
since only $\left(\frac{1}{t} - \frac{1}{t} \cdot \dfrac{1}{p_j^{t} \cdot \gamma^{t}} \right)$ and $\left(\frac{1}{t} - \frac{1}{t} \cdot \dfrac{1}{p_k^{t} \cdot \gamma^{t}} \right)$ are less than zero, hence cancel making the expression is positive.

\item With probability $p_j^t$, we have $\bA_{j,t}^t = 1$ and $\bA_{k,t}^t = 0$. Thus we get  
\begin{align*}
d(x_i, x_j) \left(\frac{1}{t} - \frac{1}{t} \cdot \dfrac{1}{p_j^{t} \cdot \gamma^{t}} \right) \hat{\bz}^{t-1}_{j,b} \cdot d(x_i,x_k) \cdot \left ( \frac{1}{1 - p_k^t} \left(\frac{t-1}{t} \cdot \hat{z}_{k,b}^{t-1} + \frac{1}{t} \cdot g_{k,b}^t - p_k^t \cdot g_{k,b}^t \right) -\frac{t-1}{t}\hat{z}_{k,b}^{t-1} \right ) \leq 0
\end{align*}
since $\frac{1}{1 - p_k^t}\frac{t-1}{t} \cdot \hat{z}_{k,b}^{t-1} - \frac{t-1}{t} \cdot \hat{z}_{k,b}^{t-1}\geq 0$, $\frac{1}{t} \cdot g_{k,b}^t - p_k^t \cdot g_{k,b}^t\geq 0$, and $\left(\frac{1}{t} - \frac{1}{t} \cdot \dfrac{1}{p_j^{t} \cdot \gamma^{t}} \right) \leq 0$
\item With probability $p_k^t$, we have $\bA_{j,t}^t = 0$ and $\bA_{k,t}^t = 1$. Thus we get  
\begin{align*}
d(x_i, x_j) \left(\frac{1}{t} \right) \hat{\bz}^{t-1}_{j,b} \cdot  d(x_i, x_k) \left(1 - \frac{1}{t} \cdot \frac{1}{p_k^{t}\gamma^t} \right) \cdot g_{k,b}^t \leq 0
\end{align*}
since only $\left(\frac{1}{t} - \frac{1}{t} \cdot \dfrac{1}{p_k^{t} \cdot \gamma^{t}} \right) \leq 0$
\item With probability $(1-p_j^t-p_k^t)$, we have $\bA_{j,t}^t=0$ and $\bK_{t}^t = 1$. Thus we get 
\begin{align*}
d(x_i, x_j) \left(\frac{1}{t}\right) \hat{\bz}^{t-1}_{j,b}  \cdot d(x_i,x_k) \cdot \left ( \frac{1}{1 - p_k^t} \left(\frac{t-1}{t} \cdot \hat{z}_{k,b}^{t-1} + \frac{1}{t} \cdot g_{k,b}^t - p_k^t \cdot g_{k,b}^t \right) -\frac{t-1}{t}\hat{z}_{k,b}^{t-1} \right ) \geq 0
\end{align*}
since all terms are non-negative. Specifically $\frac{1}{1 - p_k^t}\frac{t-1}{t} \cdot \hat{z}_{k,b}^{t-1} - \frac{t-1}{t} \cdot \hat{z}_{k,b}^{t-1}\geq 0$ and $\frac{1}{t} \cdot g_{k,b}^t - p_k^t \cdot g_{k,b}^t\geq 0$
\end{itemize}

Lastly, we consider the case where  $j\notin\bS^t$ and $k\notin\bS^t$, so $R_j^{t-1} = 0$ and $R_k^{t-1} = 0$. We have the following 3 sub-cases:
\begin{itemize}
\item With probability $0$, we have $\bA_{j,t}^t = 1$ and $\bA_{k,t}^t = 1$. The value would be 
\begin{align*}
d(x_i, x_j) \left(1 - \frac{1}{t} \cdot \frac{1}{p_j^{t} \gamma^t} \right) \cdot g_{j,b}^t \cdot d(x_i,x_k)  \left(1 - \frac{1}{t} \cdot \frac{1}{p_k^{t} \gamma^t} \right) \cdot g_{k,b}^t \geq 0
\end{align*}
since only $\left(\frac{1}{t} - \frac{1}{t} \cdot \dfrac{1}{p_j^{t} \cdot \gamma^{t}} \right)$ and $\left(\frac{1}{t} - \frac{1}{t} \cdot \dfrac{1}{p_k^{t} \cdot \gamma^{t}} \right)$ are less than zero, hence cancel making the expression is positive.

\item With probability $p_j^t$, we have $\bA_{j,t}^t = 1$ and $\bA_{k,t}^t = 0$. Thus we get  
\begin{align*}
d(x_i, x_j) \left(1 - \frac{1}{t} \cdot \frac{1}{p_j^{t}\gamma^t} \right) \cdot g_{j,b}^t \cdot d(x_i,x_k) \cdot \left ( \frac{1}{1 - p_k^{t}} \left(\frac{t-1}{t} \cdot \hat{z}_{k,b}^{t-1} + \frac{1}{t} \cdot g_{k,b}^t - p_k^t \cdot g_{k,b}^t \right) -\frac{t-1}{t}\hat{z}_{k,b}^{t-1} \right ) \leq 0
\end{align*}
since $\frac{1}{1 - p_k^t}\frac{t-1}{t} \cdot \hat{z}_{k,b}^{t-1} - \frac{t-1}{t} \cdot \hat{z}_{k,b}^{t-1}\geq 0$, $\frac{1}{t} \cdot g_{k,b}^t - p_k^t \cdot g_{k,b}^t\geq 0$, and only $\left(\frac{1}{t} - \frac{1}{t} \cdot \dfrac{1}{p_j^{t} \cdot \gamma^{t}} \right) \leq 0$
\item With probability $ p_k^t$, we have $\bA_{j,t}^t = 0$ and $\bA_{k,t}^t = 1$. Thus we get  
\begin{align*}
d(x_i,x_j) \cdot \left ( \frac{1}{1 - p_j^t} \left(\frac{t-1}{t} \cdot \hat{z}_{j,b}^{t-1} + \frac{1}{t} \cdot g_{j,b}^t - p_j^t \cdot g_{j,b}^t \right) -\frac{t-1}{t}\hat{z}_{j,b}^{t-1} \right ) \cdot  d(x_i, x_k) \left(1 - \frac{1}{t} \cdot \frac{1}{p_k^{t}\gamma^t} \right) \cdot g_{k,b}^t \leq 0
\end{align*}
since $\frac{1}{1 - p_j^t}\frac{t-1}{t} \cdot \hat{z}_{j,b}^{t-1} - \frac{t-1}{t} \cdot \hat{z}_{j,b}^{t-1}\geq 0$, $\frac{1}{t} \cdot g_{j,b}^t - p_j^t \cdot g_{j,b}^t\geq 0$, and only $\left(\frac{1}{t} - \frac{1}{t} \cdot \dfrac{1}{p_k^{t} \cdot \gamma^{t}} \right) \leq 0$

\item With probability $(1-p_j^t-p_k^t)$, we have $\bA_{j,t}^t=0$ and $\bK_{t}^t = 1$. Thus we get 
\begin{align*}
&d(x_i,x_j) \cdot \left ( \frac{1}{1 - p_j^t} \left(\frac{t-1}{t} \cdot \hat{z}_{j,b}^{t-1} + \frac{1}{t} \cdot g_{j,b}^t - p_j^t \cdot g_{j,b}^t \right) -\frac{t-1}{t}\hat{z}_{j,b}^{t-1} \right ) \\
&\times d(x_i,x_k) \cdot \left ( \frac{1}{1 - p_k^t} \left(\frac{t-1}{t} \cdot \hat{z}_{k,b}^{t-1} + \frac{1}{t} \cdot g_{k,b}^t - p_k^t \cdot g_{k,b}^t \right) -\frac{t-1}{t}\hat{z}_{k,b}^{t-1} \right ) \geq 0
\end{align*}
as all terms reduce to being non-negative, which follows from the prior inequalities. 
\end{itemize}
We have shown that the events were independent, the probability of the random variables realizing to positive values is increased, hence the expectation would be larger. More formally, the signs computed above show that in Lemma \ref{lem:neg-corr}, the term $(\alpha_1\beta_2 +\alpha_2\beta_1 - \alpha_1\alpha_2 - \beta_1\beta_2)\leq 0$, and thus we can apply it alongside Lemma $\ref{lem:dynamic-martingale}$ to conclude that for $j,k \in [n] / i$ such that $j\neq k$ 
\begin{align*}
    &\Ex_{}\left[\left( \berr_{i,b}^{\ell}(j) - \frac{\ell-1}{\ell} \cdot \berr_{i,b}^{\ell-1}(j)\right) \left( \berr_{i,b}^{\ell}(k) - \frac{\ell-1}{\ell} \cdot \berr_{i,b}^{\ell-1}(k)\right) \right] \\ &\qquad \leq \Ex\left[\left( \berr_{i,b}^{\ell}(j) - \frac{\ell-1}{\ell} \cdot \berr_{i,b}^{\ell-1}(j)\right) \right] \Ex\left[\left( \berr_{i,b}^{\ell}(k) - \frac{\ell-1}{\ell} \cdot \berr_{i,b}^{\ell-1}(k)\right) \right] \leq 0
\end{align*}
}
\end{proof}

\begin{lemma}\label{lem:dynamic-bound-on-error}
Fix any $\ell > 1$ and any $i, j \in [n]$ with $w_j \leq 1/2$,
\begin{align*}
\Ex\left[\left( \berr_{i,b}^{\ell+1}(j) - \frac{\ell-1}{\ell} \cdot \berr_{i,b}^{\ell}(j) \right)^2 \right] &\leq 100 \cdot \sfD \cdot e^2 \cdot d(x_i, x_j)^2 \cdot \dfrac{1}{\ell^2 \cdot w_j \cdot \gamma^{\ell}}.
\end{align*}
\end{lemma}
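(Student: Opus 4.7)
The proof will closely follow that of Lemma~\ref{lem:bound-on-error}, modified to account for the correlated activations (a single mask $\bK$ and geometric samples, rather than independent Bernoullis per point). Fix an execution of the first $\ell-1$ steps of \textsc{CorrelatedGreedyProcess}, which fixes $\hat z_{j,b}^{\ell-1}$, $w_j^\ell$, $g_{j,b}^\ell$ and $\hat R_j^{\ell-1}$, and take expectation only over the $\ell$-th step. The analysis splits on whether $j\in\bS^{\ell-1}$.

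In Case~1 ($j\in\bS^{\ell-1}$), the fictitious cut is frozen at $\hat\bz_{j,b}^\ell=\hat z_{j,b}^{\ell-1}$, so
\[
\bX := \berr_{i,b}^{\ell+1}(j) - \tfrac{\ell-1}{\ell}\err_{i,b}^\ell(j) = \tfrac{d(x_i,x_j)\hat z_{j,b}^{\ell-1}}{\ell}\left(1-\tfrac{\bA_{j,\ell}\bK_\ell}{\tilde\bp_j^\ell\gamma^\ell}\right).
\]
The event $\bA_{j,\ell}\bK_\ell=1$ has probability exactly $\gamma^\ell\tilde\bp_j^\ell=\gamma^\ell p_j^\ell$ (since $j\in\bS^{\ell-1}$ forces $\bA_{j,\ell}=1$ whenever $\by_\ell^*=x_j$), and a direct second-moment computation yields $\Ex[\bX^2]\leq d(x_i,x_j)^2/(\ell^2\gamma^\ell\tilde\bp_j^\ell)$. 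Combining the lower bound on $p_j^\ell$ from Lemma~\ref{lem:geo-samp} with the normalization $w_j=\bw(x_j)/(2\sfD)$ of Line~\ref{en:weights} yields $p_j^\ell\geq 2w_j/\sfD$, so Case~1 already contributes at most $\sfD\,d(x_i,x_j)^2/(2\ell^2 w_j\gamma^\ell)$.

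In Case~2 ($j\notin\bS^{\ell-1}$, so $\hat R_j^{\ell-1}=0$), I condition on the pair $(\bK_\ell,\bA_{j,\ell})$ and use the fictitious-cut update rules of Definition~\ref{def:dynamic-fictitious} together with $w_j^\ell\leq 1/\ell$ and $\tilde\bp_j^\ell\leq 1/\ell$ (Definition~\ref{def:probs-seen-unseen}) to obtain elementary pointwise bounds on $|\bX|$ in each of the four sub-cases: $2d(x_i,x_j)$ when $(\bK_\ell,\bA_{j,\ell})=(0,1)$, $4d(x_i,x_j)/\ell$ in either `no activation' case, and $3d(x_i,x_j)/(\ell\tilde\bp_j^\ell\gamma^\ell)$ when $(\bK_\ell,\bA_{j,\ell})=(1,1)$. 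Squaring, weighting by probabilities, and observing that the last sub-case dominates yields
\[
\Ex[\bX^2\mid j\notin\bS^{\ell-1}]\leq O\!\left(\tfrac{d(x_i,x_j)^2}{\ell^2\tilde\bp_j^\ell\gamma^\ell}\right).
\]

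The delicate step---and the main obstacle---is combining this conditional bound with $\Pr[j\notin\bS^{\ell-1}]$ so as to recover the claimed $\sfD$ factor, rather than the $\sfD^2$ that a naive application of Lemma~\ref{lemma:dynamic-prob-activation} would produce. The key observations are that $\tilde\bp_j^\ell\geq\min\{2w_j/\sfD,\,1/(\ell(1+\epspr))\}$, and that the bound $\Pr[j\notin\bS^{\ell-1}]\leq\sfD e^2/(w_j(\ell-1))$ from Lemma~\ref{lemma:dynamic-prob-activation} is vacuous when $w_j\lesssim\sfD/\ell$. I will split according to the size of $w_j$ into three sub-cases: (a) $w_j\leq 1/\ell$, using $\tilde\bp_j^\ell\geq 2w_j/\sfD$ and the trivial $\Pr[\cdot]\leq 1$; (b) $1/\ell<w_j\leq\sfD/(2\ell(1+\epspr))$, again with $\tilde\bp_j^\ell\geq 2w_j/\sfD$ and $\Pr[\cdot]\leq 1$ (here no weaker than Lemma~\ref{lemma:dynamic-prob-activation}); and (c) $w_j>\sfD/(2\ell(1+\epspr))$, using $\tilde\bp_j^\ell\geq 1/(\ell(1+\epspr))=\Omega(1/\ell)$ and the full strength of Lemma~\ref{lemma:dynamic-prob-activation} to absorb the extra factor of $\ell$. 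Each sub-case contributes $O(\sfD\cdot d(x_i,x_j)^2/(\ell^2 w_j\gamma^\ell))$, and summing with Case~1 gives the claim with constant well below $100$.
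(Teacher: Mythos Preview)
Your proposal is correct and follows essentially the same route as the paper. The paper's own proof is quite terse: it writes down the conditional bound
\[
\Ex\!\left[\left(\berr_{i,b}^{\ell+1}(j)-\tfrac{\ell-1}{\ell}\err_{i,b}^\ell(j)\right)^2\right]
\le \frac{d(x_i,x_j)^2}{\ell^2 p_j^\ell \gamma^\ell}
+\ind\{j\notin S^{\ell-1}\}\cdot\frac{30\,d(x_i,x_j)^2}{\ell^2\min\{p_j^\ell,1/\ell\}\gamma^\ell},
\]
recalls $p_j^\ell\ge w_j/\sfD$, invokes Lemma~\ref{lemma:dynamic-prob-activation}, and says ``this gives the final bound,'' implicitly deferring to the case analysis already carried out in Lemma~\ref{lem:bound-on-error}. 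Your write-up makes that implicit case split explicit: you correctly identify that naively combining $1/\tilde\bp_j^\ell\le\sfD/(2w_j)$ with $\Pr[j\notin\bS^{\ell-1}]\le\sfD e^2/(w_j(\ell-1))$ would cost $\sfD^2$, and you avoid this by splitting on whether $w_j$ lies below or above $\Theta(\sfD/\ell)$---using the trivial bound $\Pr[\cdot]\le 1$ in the former regime and $\tilde\bp_j^\ell\ge\Omega(1/\ell)$ in the latter. This is exactly the analogue of the ``$w_j^\ell=w_j$ versus $w_j^\ell=1/\ell$'' split in Lemma~\ref{lem:bound-on-error}. Your sub-cases (a) and (b) use identical bounds and could be merged, but that is cosmetic.
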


\begin{proof}
The proof of the lemmas analogous to Lemma~\ref{lem:bound-on-error} in Section~\ref{sec:greedy-analysis}, (note that when $\bK_t = 1$, we use the geometric sample $(\by_t^*, \bp_t^*)$ which satisfies $\by_t^* = x_j$ with probability $p_j^t$),
\begin{align*}
    \Ex\left[ \left(\berr_{i,b}^{\ell+1}(j) - \frac{\ell-1}{\ell} \cdot \err_{i,b}^{\ell}(j) \right)^2 \right] \leq \dfrac{d(x_i,x_j)^2}{\ell^2 \cdot p_j^{\ell} \cdot \gamma^{\ell}} + \ind\left\{ j \notin S^{\ell-1} \right\} \cdot \frac{30 \cdot d(x_i, x_j)^2}{\ell^2 \cdot \min\{ p_{j}^{\ell},1/\ell\} \cdot \gamma^{\ell}}.
\end{align*}
and recall that we also have $p_i^{\ell} \geq w_i / \sfD$.
As in Lemma~\ref{lem:bound-on-error}, when we average over times up to $\ell-1$, we invoke Lemma~\ref{lemma:dynamic-prob-activation} to upper bound 
\begin{align*}
    \Prx\left[ j \notin \bS^{\ell-1} \right] \leq \frac{\sfD \cdot e^2}{w_i \cdot (\ell-1)},
\end{align*}
and this gives us the final bound.
\ignore{

where we go through the two cases of $j \in S^{\ell-1}$ or not. In the case $j \in S^{\ell-1}$ and $j$ has already been activated before, we have that $\hat{\bz}^{\ell}_{j,b} = \hat{z}^{\ell-1}_{j,b}$. Notice that if $\bA_{j,\ell} \cdot \bK_{\ell} \neq 1$, then
\[ \left(\berr_{i,b}^{\ell+1}(j) - \frac{\ell-1}{\ell} \cdot \err_{i,b}^{\ell}(j)\right)^2 \leq \frac{d(x_i, x_j)^2}{\ell^2}, \]
and $\bA_{j,\ell} \cdot \bK_{\ell} = 1$ occurs with probability $\gamma^t \cdot p_{j}^{\ell}$, since $\tilde{\bp}_{j}^{\ell} = p_{j}^{\ell}$. In this case, 
\begin{align*}
\left(\berr_{i,b}^{\ell+1}(j) - \frac{\ell-1}{\ell} \cdot \err_{i,b}^{\ell}(j)\right)^2 \leq \dfrac{d(x_i, x_j)^2}{\ell^2 \cdot p_j^{\ell} \cdot \gamma^{\ell}},
\end{align*}
When $j \notin S^{\ell-1}$. We also have:
\begin{itemize}
\item With probability $p_{j}^{\ell} \cdot \gamma^{\ell}$, both $\bA_{j,\ell}^{\ell}$ and $\bK_{\ell}^{\ell}$ are both set to $1$, and $\hat{\bz}_{j,b}^{\ell}$ is set to $g_{j,b}^{\ell}$. So,
\begin{align*}
\left( \berr_{i,b}^{\ell+1}(j) - \frac{\ell-1}{\ell} \cdot \err_{i,b}^{\ell}(j) \right)^2 &= d(x_i, x_j)^2 \cdot \left( \left(1 - \frac{1}{\ell \cdot p_j^{\ell} \cdot \gamma^{\ell}} \right) g_{j,b}^{\ell} - \frac{\ell-1}{\ell} \cdot \hat{z}_{j,b}^{\ell-1} \right)^2 \\
	&\leq d(x_i, x_j)^2 \cdot \left( \frac{3}{\ell \cdot p_j^{\ell} \cdot \gamma^{\ell}} \right)^2,
\end{align*}
where we used the fact $p_j^{\ell} \leq 1/\ell$, as well as the fact $g_{j,b}^{\ell}, \hat{z}_{j,b}^{\ell-1} \in [0,1]$.

\item With probability $w_j^{\ell} \cdot (1 - \gamma^{\ell})$, $\bA_{j,\ell}^{\ell}$ is set to $1$, but $\bK_{\ell}^{\ell}$ is set to $0$. In this case, $\hat{\bz}_{j,b}^{\ell}$ is set to $g_{j,b}^{\ell}$, and we obtain
\begin{align*}
\left( \berr_{i,b}^{\ell+1}(j) - \frac{\ell-1}{\ell} \cdot \err_{i,b}^{\ell}(j)\right)^2 &= d(x_i, x_j)^2 \cdot \left( g_{j,b}^{\ell} - \frac{\ell-1}{\ell} \cdot \hat{z}_{j,b}^{\ell-1} \right)^2 \leq 4 \cdot d(x_i, x_j)^2
\end{align*}

\item With probability $\gamma^{\ell} \cdot (1 - p_j^{\ell})$, we set $\bA_{j,\ell}^{\ell}$ to $0$, despite $\bK_{\ell}^{\ell}=1$, in which case $\hat{\bz}_{j,b}^{\ell}$ is updated according to the second case in Definition~\ref{def:dynamic-fictitious}. Thus, we obtain
\begin{align*}
\left( \berr_{i,b}^{\ell+1}(j) - \frac{\ell-1}{\ell} \cdot \err_{i,b}^{\ell}(j)\right)^2 &= d(x_i, x_j)^2 \cdot \left( \frac{p_j^{\ell}}{1 - p_{j}^{\ell}} \cdot \frac{\ell-1}{\ell} \cdot \hat{z}^{\ell-1}_{j,b} + \frac{1}{\ell} \cdot g_{j,b}^{\ell} - p_j^{\ell} \cdot g_{j,b}^{\ell} \right)^2 \\
		&\leq d(x_i, x_j)^2 \cdot \left( \dfrac{4}{\ell} \right)^2,
\end{align*}
where we use the fact $p_j^{\ell} \leq \min\{ 1/\ell, 1/2\}$ to obtain our desired bound.

\item With probability $(1 - \gamma^{\ell})(1 - w_j^{\ell})$, we have $\bA_{j,\ell}^{\ell}$ and $\bK_{\ell}^{\ell}$ both set to $0$, in which case $\hat{\bz}_{j,b}^{\ell}$ is updated according to the third case in Definition~\ref{def:dynamic-fictitious}. Thus, we obtain
\begin{align*}
\left( \berr_{i,b}^{\ell+1}(j) - \frac{\ell-1}{\ell} \cdot \err_{i,b}^{\ell}(j)\right)^2 &= d(x_i, x_j)^2 \cdot \left( \frac{w_j^{\ell}}{1 - w_{j}^{\ell}} \cdot \frac{\ell-1}{\ell} \cdot \hat{z}^{\ell-1}_{j,b} + \frac{1}{\ell} \cdot g_{j,b}^{\ell} - w_j^{\ell} \cdot g_{j,b}^{\ell} \right)^2 \\
		&\leq d(x_i, x_j)^2 \cdot \left( \dfrac{4}{\ell} \right)^2,
\end{align*}
where we use the fact $w_j^{\ell} \leq \min\{ 1/\ell, 1/2\}$ to obtain our desired bound.
\end{itemize}
Once we combine the above four cases (while multiplying by their respective probabilities), we have that
\[ \frac{1}{\ell^2 \cdot p_j^{\ell} \cdot \gamma^{\ell}} \geq \frac{1}{\ell \cdot \gamma^{\ell}},  \]
which is larger than $w_j^{\ell}$, $p_j^{\ell}$ as well as $1/\ell^2$, which means that the first term dominates the calculation. In particular, we have 
\begin{align*}
\Ex\left[ \ind\{ j \notin S^{\ell-1} \} \left( \berr_{i,b}^{\ell+1}(j) - \frac{\ell-1}{\ell} \cdot \err_{i,b}^{\ell}(j) \right)^2 \right] &\leq \ind\{ j \notin S^{\ell-1} \} \cdot \dfrac{30 \cdot d(x_i, x_j)^2}{\ell^2 \cdot p_j^{\ell} \cdot \gamma^{\ell}}.
\end{align*}
The above computations of the expectations averaged over the randomness solely over step $\ell$. We now take the expectation over the randomness up to time $\ell-1$, which allows us to exploit the fact $j \notin \bS^{\ell-1}$ with probability which decays for large $\ell$. In particular, whenever $\ell-1 \geq 1/p_j$, we can apply Lemma \ref{lemma:dynamic-prob-activation} to compute $\Prx[j \notin \bS^{\ell-1]}$. Applying this to the expectation over the time steps up to $\ell-1$, we have
\begin{align*}
\Ex\left[ \left(\berr_{i,b}^{\ell+1}(j) - \frac{\ell-1}{\ell} \cdot \berr_{i,b}^{\ell}(j) \right)^{2}\right] &\leq \frac{d(x_i,x_j)^2}{\ell^2 \cdot p_j \cdot \gamma^{\ell}} + \Prx\left[ j \notin \bS^{\ell-1} \right] \cdot \dfrac{30 \cdot d(x_i, x_j)^2}{\ell^2 \cdot p_j^{\ell} \cdot \gamma^{\ell}},\end{align*}
and this value is at most 
\[ \frac{31 \cdot d(x_i, x_j)^2}{\ell^2 \cdot p_j \cdot \gamma^{\ell}} \]
whenever $p_j^{\ell} = \min\{ p_i/(p_i^*\ell), p_j \} = p_j$. Otherwise, we have $p_j^{\ell} = p_i/(p_i^*\ell)$ and this occurs when $\ell > 1$ since $p_j \leq 1/2$ and $\eps'< 1$. The above expectation becomes at most
\begin{align*}
\frac{d(x_i, x_j)^2}{\ell \cdot p_j \cdot \gamma^{\ell}} + \dfrac{e^4}{(\ell - 1) \cdot \min(w_j, p_j)} \cdot \frac{30 (1+\eps') \cdot d(x_i, x_j)^2}{\ell \cdot \gamma^{\ell}} \leq \frac{100e^4 \cdot d(x_i, x_j)^2}{\ell^2 \cdot \min(w_j, p_j) \cdot \gamma^{\ell}}.
\end{align*}}

\end{proof}

We complete the proof as in Lemma~\ref{lem:error-bound}, where we have:
\begin{align*}
    \Ex\left[\sum_{i \notin \bS^t} \left| \bc_{i,b}^{t} - \hat{\bc}_{i,b}^t\right| \right] &\leq \frac{1}{t} \sum_{i=1}^n \left(100 \sfD \cdot e^2 \sum_{\ell=1}^t \sum_{j=1}^n \dfrac{d(x_i, x_j)^2}{w_j \cdot \gamma^{\ell}} \right)^{1/2} \\
    &\leq 10 e\sqrt{\sfD} \left( \frac{2\sqrt{\gamma}}{t} + \frac{1}{\sqrt{\gamma}}\right) \sum_{i=1}^n \left(\dfrac{d(x_i, x_j)^2}{w_j} \right)^{1/2}.
\end{align*}
\ignore{

Combining Lemma~\ref{lem:reduce-to-error} and~\ref{lem:bound-on-error}, we obtain:
\begin{align*}
\Ex\left[ \sum_{i \notin \bS^t} \left| \bc_{i,b}^{t} - \hat{\bc}_{i,b}^t \right| \right] &\leq  \frac{1}{t} \sum_{i=1}^n \left(100e^4 \sum_{\ell=1}^{t} \sum_{j=1}^n \dfrac{d(x_i, x_j)^2}{\min(w_j, p_j) \cdot \gamma^{\ell}}  \right)^{1/2}  \leq \frac{10e^2}{t} \sum_{i=1}^n \left( \sum_{j=1}^n \frac{d(x_i, x_j)^2}{\min(w_j, p_j)} \right)^{1/2} \left(\sum_{\ell=1}^t \frac{1}{\gamma^{\ell}} \right)^{1/2}
\end{align*}
and we have,
\begin{align*}
\left( \sum_{\ell=1}^{t} \frac{1}{\gamma^{\ell}}\right)^{1/2} \leq \left( 1 + \gamma + \frac{1}{\gamma} \sum_{\ell=1}^t \ell \right)^{1/2} \leq 2\sqrt{\gamma}  + \frac{t}{\sqrt{\gamma}} ,
\end{align*}
which means that plugging in to the above bound gives us our desired bound of
\begin{align*}
10e^2 \left(\frac{2\sqrt{\gamma}}{t} + \frac{1}{\sqrt{\gamma}} \right) \sum_{i=1}^n \left( \sum_{j=1}^n \frac{d(x_i, x_j)^2}{\min(w_j, p_j)} \right)^{1/2}.
\end{align*}

}

\newpage

\appendix


\section{Algorithmic Preliminaries}


\subsection{For Computing Metric-Compatible Weights}

Our first ingredients are the sketching tools to compute metric compatible weights. The idea will be to apply the turnstile streaming algorithm for cascaded norms~\cite{JW09}, and is useful for obtaining a set of weights $w = (w_1, \dots, w_n)$ for instances which are subsets of $\ell_p$ for $p \in [1, 2]$. When the input $X = \{ x_1,\dots, x_n \}$ consist of vectors in $\ell_p^d$ for $p \in [1, 2]$, we will use linear sketching in order to recover weights $w_1,\dots, w_n$ which satisfy
\begin{align}
\frac{\sum_{j=1}^n d_{\ell_p}(x_i, x_j)}{ \sum_{k=1}^n \sum_{j=1}^n d_{\ell_p}(x_k, x_j)} \leq w_i \leq (1+\eps) \cdot \frac{\sum_{j=1}^n d_{\ell_p}(x_i, x_j)}{ \sum_{k=1}^n \sum_{j=1}^n d_{\ell_p}(x_k, x_j)}.\label{eq:weight-approx}
\end{align}
We first state the lemma describing the linear sketch for $\ell_k(\ell_p)$, and then show that if we find a setting of weights satisfying~(\ref{eq:weight-approx}), these are $O(1)$-compatible weights (which follows from an argument in~\cite{FK01}). 

\renewcommand{\vec}{\mathrm{vec}}

\begin{theorem}[Sketching $\ell_k(\ell_p)$~\cite{JW09}]\label{lem:cascaded}
For any fixed $n, d \in \N$, any $1 \leq k \leq p \leq 2$, and any $\eps, \delta\in(0,1)$, let $s = \poly(\log(nd/\delta)/\eps)$. There exists a distribution $\calD$ over $s \times nd$ matrices such that, for any $n \times d$ matrix $A$:
\begin{itemize}
\item Draw $\bS \sim \calD$ and let $\sk(A) = \bS \cdot \vec(A)$, where $\vec(A) \in \R^{nd}$ is the flattened matrix.
\item There is a deterministic algorithm which receives as input $\sk(A)$ and outputs a number $\boldeta \in \R_{\geq 0}$ which satisfies
\[ \|A\|_{\ell_k(\ell_p)} \eqdef \left( \sum_{i=1}^n \|A_{i}\|_p^k \right)^{1/k} \leq \boldeta \leq (1+\eps) \cdot \| A\|_{\ell_k(\ell_p)}, \]
where $A_i \in \R^{d}$ is the $i$-th row of $A$, with probability at least $1 - \delta$ over $\bS \sim \calD$. 
\end{itemize}
Furthermore, the draw $\bS \sim \calD$ may be generated using $\poly(s)$ random bits. 
\end{theorem}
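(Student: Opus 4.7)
The plan is to prove Theorem~\ref{lem:cascaded} by combining two well-known primitives for streaming norm estimation, along the lines of Jayram--Woodruff~\cite{JW09}. The first primitive is the $p$-stable sketch for estimating $\ell_p$ norms of each row, and the second is an outer $\ell_k$ aggregation. Since composition of linear maps is linear, the final object is an $s\times nd$ matrix $\bS$ applied to $\vec(A)$, which is exactly what the theorem asks for. The distinction between ``estimation'' and ``sketch'' matters here: we must make sure that the combined estimator works under the joint randomness of both the inner and outer sketches.

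For the inner step, fix an $m \times d$ matrix $\bT$ with i.i.d. entries drawn from the standard $p$-stable distribution. By $p$-stability, the entries of $\bT A_i$ are distributed as $\|A_i\|_p \cdot \bZ_{i,j}$ for $\bZ_{i,j}$ i.i.d. $p$-stable, so a properly scaled median of $|\bT A_i|$ gives a $(1 \pm \eps_0)$-approximation to $\|A_i\|_p$ with probability $1-\delta_0$ using $m = O(\log(1/\delta_0)/\eps_0^2)$ (for $p=1$ use Cauchy, for $p=2$ use Gaussian, and for general $p \in [1,2]$ use Indyk's $p$-stable estimator). For the outer step, apply an $\ell_k$ sketch (e.g. a $k$-stable matrix $\bR$) to the vector of row estimates. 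The point of linearity is that $\bR(\bT A^\top)$ can be precomputed from $\vec(A)$ via a single linear map $\bS = \bR \otimes \bT$ (appropriately reshaped), so the whole algorithm becomes one linear sketch.

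The main obstacle is that naive composition loses too much accuracy: per-row estimators have $p$-stable tails, so $\ell_k$-aggregating the noisy row estimates does not directly give a $(1+\eps)$-approximation with $\poly(\log(nd/\delta)/\eps)$ space. Following~\cite{JW09}, I would handle this with a \emph{level-set decomposition}: partition rows into geometric weight classes $L_j = \{i : 2^{j} \leq \|A_i\|_p < 2^{j+1}\}$ (there are only $O(\log(nd\Delta/\eps))$ non-trivial classes after a standard truncation of small rows that contribute negligibly to $\|A\|_{\ell_k(\ell_p)}$), subsample rows at a class-dependent rate, and use a Count-Sketch-like structure to find and precisely estimate heavy rows within each class. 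Then the class's total $\ell_k$-contribution is recovered as (subsample size)$\times$(class cardinality)-corrected sum, and summing across classes produces the $(1+\eps)$-approximation. The class cardinalities themselves are estimated by $F_0$ sketches applied to the set of indices ``activated'' at each scale. The analysis is a straightforward calculation once one checks that (i) each class's estimator concentrates because the subsample yields enough independent copies, and (ii) rows below the truncation threshold contribute at most $\eps$-fraction to $\|A\|_{\ell_k(\ell_p)}^k$.

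The remaining points are routine. The $(1-\delta)$ success probability comes from setting $\delta_0 = \delta / \poly(n)$ on the per-row and per-class sketches and union bounding over the $O(\log(nd))$ levels and the $\poly(\log(nd)/\eps)$ hash buckets; this only adds $\log$-factors to the space. The decoding is deterministic given $\sk(A)$: evaluate the heavy-hitter structure per level, sum with the estimated class sizes, and apply the outer $\ell_k$ formula. Finally, the random bits claim follows by observing that the entire sketch evaluation fits in $\poly(s)$ space, so a Nisan-type pseudorandom generator~\cite{N92} (as also used in Section~\ref{sec:insertion-only}) fools it using a seed of length $\poly(s)$, which we can store and re-reference throughout the MPC / streaming execution.
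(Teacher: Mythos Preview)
The paper does not prove this theorem at all: it is stated as a black-box citation of Jayram--Woodruff~\cite{JW09} in the appendix on algorithmic preliminaries, with no accompanying argument. So there is nothing to compare your proposal against in the paper itself.

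That said, your sketch is a reasonable high-level outline of the ideas in~\cite{JW09}, though it is not a proof. A few cautions if you intend to flesh it out. First, the naive ``$\bR \otimes \bT$'' composition you describe does not directly work: the inner $p$-stable sketch produces a vector whose coordinates are $\|A_i\|_p \cdot \bZ_{i,j}$ with heavy-tailed $\bZ$, and applying a $k$-stable matrix to \emph{that} does not yield an $\ell_k$ sketch of the vector $(\|A_i\|_p)_i$. You acknowledge this (``the main obstacle''), but the fix you propose---level sets plus subsampling plus Count-Sketch plus $F_0$ sketches---is precisely the nontrivial content of~\cite{JW09}, and getting all the pieces to interlock (in particular, that the heavy-hitter recovery is itself a linear sketch of $\vec(A)$, and that subsampling rows can be implemented linearly) requires real work that your outline does not supply. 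Second, the derandomization via Nisan is correct in spirit, but one must check that the decoding procedure is a small-space read-once computation over the random bits, which again is part of the~\cite{JW09} analysis rather than something automatic.

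In short: the paper treats this as a known tool and offers no proof; your proposal is a plausible roadmap toward the~\cite{JW09} argument but is not itself a proof, and the steps you label ``routine'' or ``straightforward'' are where most of the difficulty lies.
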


\ignore{We state the following consequence of the above lemma which we use in our algorithms to obtain a set of metric compatible weights.
\begin{lemma}\label{lem:cascaded}
For any fixed $n, d \in \N$, $p \in [1, 2]$ and $\delta \in (0, 1)$, let $s = \polylog(nd/\delta))$. There exists a distribution $\calD$ supported on $s \times nd$ matrices which satisfies the following guarantees:
\begin{itemize}
\item For any $x_1, \dots, x_n \in \R^d$, let $x^{\circ} \in \R^{nd}$ denote the vector which stacks all $n$ vectors $x_1,\dots, x_n$, and for a draw $\bS \sim \calD$, let $\sk(x^{\circ}) = \bS x^{\circ}$.
\item There is an algorithm which receives as input $\sk(x^{\circ})$, a vector $x_i$ and the index $i$, as well as the matrix $\bS$, and outputs a number $\bw_i$ which satisfies 
\[ \frac{\sum_{j=1}^n d_{\ell_p}(x_i, x_j)}{ \sum_{k=1}^n \sum_{j=1}^n d_{\ell_p}(x_k, x_j)} \leq \bw_i \leq 4 \cdot \frac{\sum_{j=1}^n d_{\ell_p}(x_i, x_j)}{ \sum_{k=1}^n \sum_{j=1}^n d_{\ell_p}(x_k, x_j)} \]
with probability at least $1- \delta$ over the draw of $\bS \sim \calD$.
\end{itemize}
Furthermore, a draw of $\bS \sim \calD$ may be generated using $\polylog(nd/\delta)$ bits of randomness. 
\end{lemma}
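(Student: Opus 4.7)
The plan is to reduce the lemma to two invocations of Theorem~\ref{lem:cascaded} (the $\ell_1(\ell_p)$ cascaded-norm sketch with $k=1$), combined via a ratio. Write the target ratio as $N_i / D$ where $N_i \eqdef \sum_{j=1}^n d_{\ell_p}(x_i, x_j)$ and $D \eqdef \sum_{k=1}^n \sum_{j=1}^n d_{\ell_p}(x_k, x_j)$. The key observation is that both $N_i$ and $D$ can be realized as $\ell_1(\ell_p)$ norms of matrices that are linear functions of the stacked vector $x^\circ \in \R^{nd}$. Specifically, $N_i = \|A - x_i^\circ\|_{\ell_1(\ell_p)}$, where $A \in \R^{n\times d}$ has rows $x_1,\dots,x_n$ and $x_i^\circ \in \R^{n\times d}$ places $x_i$ on every row; and $D = \|B\|_{\ell_1(\ell_p)}$, where $B \in \R^{n^2 \times d}$ is the matrix whose $(k,j)$-th row is $x_k - x_j$. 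Crucially, there is a fixed (sparse) linear map $L \in \R^{n^2 d \times nd}$ with $\vec(B) = L \cdot x^\circ$, so $B$ is a linear function of $x^\circ$.

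Construct the sketch $\bS$ by vertically stacking two independent sub-sketches, each instantiated via Theorem~\ref{lem:cascaded} with $k=1$, accuracy $\eps = 1/2$, and failure probability $\delta/2$. The first, $\bS^{\mathrm{num}}$, is a sketch for $n\times d$ matrices, of size $s_1 \times nd$ with $s_1 = \polylog(nd/\delta)$. The second is obtained by drawing $\bS_{\mathrm{cas}}$ for $n^2 \times d$ matrices (of size $s_2 \times n^2 d$ with $s_2 = \polylog(n^2 d/\delta) = \polylog(nd/\delta)$) and setting $\bS^{\mathrm{den}} = \bS_{\mathrm{cas}} \cdot L$ of size $s_2 \times nd$; this collapses what would naively be a sketch acting on $\vec(B)$ into a legal sketch of $x^\circ$, since $\bS^{\mathrm{den}} \cdot x^\circ = \bS_{\mathrm{cas}} \cdot \vec(B)$. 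The total sketch has $s = s_1 + s_2 = \polylog(nd/\delta)$ rows, and the random-bit bound of Theorem~\ref{lem:cascaded} propagates to give $\polylog(nd/\delta)$ total random bits.

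Upon receiving the query $(x_i, i)$ together with $\sk(x^\circ)$ and $\bS$, use linearity to compute (i) $\bd_{\mathrm{num}}$ by running the JW09 evaluation algorithm on $\bS^{\mathrm{num}}(x^\circ - x_i^\circ) = \bS^{\mathrm{num}} x^\circ - \bS^{\mathrm{num}} x_i^\circ$ (computable since $x_i$ is known), and (ii) $\bd_{\mathrm{den}}$ by running the evaluation algorithm on the stored $\bS^{\mathrm{den}} x^\circ$. By Theorem~\ref{lem:cascaded}, with probability at least $1 - \delta/2$ each, $N_i \leq \bd_{\mathrm{num}} \leq (3/2)\,N_i$ and $D \leq \bd_{\mathrm{den}} \leq (3/2)\,D$; both are one-sided overestimates. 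Output $\bw_i = (3/2) \cdot \bd_{\mathrm{num}}/\bd_{\mathrm{den}}$. On the joint success event (probability $\geq 1-\delta$ by a union bound), $\bw_i \geq (3/2) \cdot N_i / ((3/2) D) = N_i/D$ and $\bw_i \leq (3/2) \cdot (3/2) N_i / D = (9/4)\cdot N_i/D \leq 4 \cdot N_i/D$, as required.

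The main conceptual obstacle is realizing that the denominator sketch must act on $x^\circ$ rather than on the (much larger) vector $\vec(B)$; this is resolved by composing the cascaded-norm sketch with the pairwise-difference operator $L$, which keeps the sketch dimension at $\polylog(nd/\delta) \times nd$ as the statement demands. Everything else is routine: correctness follows from the one-sided $\ell_1(\ell_p)$-sketch guarantee, the factor $(3/2)$ multiplier is chosen so the output lies in $[N_i/D,\,4N_i/D]$, and the random-bit count follows by inheriting the $\poly(s)$-bit guarantee of Theorem~\ref{lem:cascaded} for each sub-sketch.
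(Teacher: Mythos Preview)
Your proposal is correct and follows essentially the same approach as the paper: two invocations of the $\ell_1(\ell_p)$ cascaded-norm sketch, one on the $n\times d$ stacked matrix (for the numerator $\sum_j d_{\ell_p}(x_i,x_j)$, after linearly subtracting $x_i$ from each row at decode time) and one on the $n^2\times d$ pairwise-difference matrix (for the denominator), with the latter precomposed with the linear map $L$ so that the overall sketch acts on $x^\circ$. The paper uses a factor-$2$ approximation on each part and a scaling of $2$ to land in $[N_i/D,\,4N_i/D]$, while you use $\eps=1/2$ and scale by $3/2$ to get the slightly tighter $[N_i/D,\,(9/4)N_i/D]$; both are within the claimed bound and the argument is otherwise identical.
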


\begin{proof}
We will make use of the streaming algorithm for $\ell_k(\ell_p)$ matrix norms from~\cite{JW09} which uses $\polylog(nd/\delta)$ space and can compute a $2$-approximation to the $\ell_k(\ell_p)$ norm of an $n \times d$ matrix in the turnstile streaming model, whenever $1 \leq k \leq p \leq 2$.\footnote{In fact, it can compute an arbitrary $(1+\eps)$-approximation, by paying polynomially in $1/\eps$.} In particular, recall that given an $n \times d$ matrix $A$, the $\ell_k(\ell_p)$-norm of the matrix is given by 
\[ \left( \sum_{i=1}^n \| A_{i, \cdot}\|_p^k \right)^{1/k},\]
where $A_{i, \cdot} \in \R^d$ is the $i$-th row of the matrix. We set the parameters $k = 1$ and $p$ as indicated by the lemma statement. The streaming algorithm of~\cite{JW09} is implemented via a linear sketch, given by a matrix multiplication of an $nd$ vector of its rows stacked. The protocol proceeds as follows, and uses the $\ell_1(\ell_p)$-norm streaming algorithm twice (to obtain approximations of the numerator and denominator for the weight). In the first part, we obtain a factor-$2$ approximation to $\sum_{j=1}^n d_{\ell_p}(x_i, x_j)$.
\begin{itemize}
\item \textbf{Encoding}. The $n$ points $x_1,\dots, x_n \in \R^d$ are stacked into an $n \times d$ matrix. The first portion of the sketch utilizes the $\ell_1(\ell_p)$-norm streaming algorithm for this matrix to obtain a sketch $\sk_1$ of this matrix.
\item \textbf{Decoding}. Given the vector $\sk_1$, the vector $x_i$, the index $i \in [n]$, and access to the matrix $\bS$ (which may be reconstructed given the $\polylog(nd/\delta)$ random bits used by the streaming algorithm), we consider subtracting $x_i$ from every row of the sketched matrix (using the turnstile nature of the streaming algorithm). Then, we run the estimation procedure for the streaming algorithm, which obtains a $2$-approximation to $\sum_{j=1}^n d_{\ell_p}(x_i, x_j)$. 
\end{itemize}
The second part of the sketch obtains a sketch for $\sum_{k=1}^n \sum_{j=1}^n d_{\ell_p}(x_k, x_j)$ by considering the streaming algorithm which implicitly consider the $n^2 \times d$ matrix, where each row of the matrix is associated with a pair $(k,j)$, and we apply the linear transformation on $x^{\circ}$ which sets the row $(k,j)$ to $x_k - x_j$. This part of the sketch produces the second part of the sketch $\sk_2$, which gives a $2$-approximation to $\sum_{k=1}^n \sum_{j=1}^n d_{\ell_p}(x_k, x_j)$ (this second part is a trivial decoding procedure, as it does not depend on $x_i$ or $i$). Taking the product gives the desired $4$-approximation.
\end{proof}}

\begin{lemma}[Lemma 7 in~\cite{FK01} and Lemma~3.5 in~\cite{CJK23}]\label{lem:weight-to-compatible}
Let $X = \{ x_1,\dots, x_n\}$ be any set of points in a metric $d$, let $w_1,\dots, w_n \in [0, 1]$ satisfy 
\[ \frac{\sum_{j=1}^n d(x_i, x_j)}{\sum_{k=1}^n \sum_{j=1}^n d(x_k, x_j)} \leq w_i \leq \sfD \cdot \frac{\sum_{j=1}^n d(x_i, x_j)}{\sum_{k=1}^n \sum_{j=1}^n d(x_k, x_j)}. \]
Then, every $w_i \geq 1/(2n)$ and any $c$ in the metric satisfies:
\begin{align*}
\dfrac{d(c, x_j)}{w_j'} \leq 4 \sum_{k=1}^n d(c, x_{k}).
\end{align*}
Letting $w_1',\dots, w_n' \in (0, 1/2]$ be $w_j' = w_j / 2$, we satisfy $\|w'\|_1 \leq \sfD/2$ and $w'$ are $8$-compatible.
\end{lemma}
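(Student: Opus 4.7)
The proof will proceed by chaining three applications of the triangle inequality, with the third step handled by a case split. I will first establish the universal lower bound $w_i \geq 1/(2n)$. By triangle inequality, $d(x_k, x_l) \leq d(x_i, x_k) + d(x_i, x_l)$ for every triple $i,k,l$, and summing over $k, l$ gives $\sum_{k,l} d(x_k, x_l) \leq 2n \sum_k d(x_i, x_k)$. Combined with the assumed lower bound on $w_i$, this immediately yields $w_i \geq 1/(2n)$, which in turn gives $w_j' \geq 1/(4n)$. The bound $\|w'\|_1 \leq \sfD/2$ is immediate by summing the upper bound on $w_i$ over $i$: the ratios $\frac{\sum_j d(x_i, x_j)}{\sum_{k,l} d(x_k, x_l)}$ sum to $1$, so $\sum_i w_i \leq \sfD$.

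Next, I will prove the compatibility bound. Fix an arbitrary point $c$ in the metric and an index $j$. Using triangle inequality once more (applied to triples involving $c$ instead of a data point), $d(x_k, x_l) \leq d(c, x_k) + d(c, x_l)$ gives $\sum_{k,l} d(x_k, x_l) \leq 2n \sum_k d(c, x_k)$. Plugging this together with the lower bound on $w_j$ yields
\[
\frac{d(c, x_j)}{w_j'} \;=\; \frac{2\, d(c, x_j)}{w_j} \;\leq\; \frac{4n\, d(c, x_j) \sum_k d(c, x_k)}{\sum_k d(x_k, x_j)}.
\]
The main obstacle is that the denominator $\sum_k d(x_k, x_j)$ can be unusually small relative to $d(c, x_j)$, which is precisely why the separately established pointwise lower bound $w_j \geq 1/(2n)$ is needed.

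To finish, I will split on the size of $\sum_k d(x_k, x_j)$. If $\sum_k d(x_k, x_j) \geq n\, d(c, x_j)/2$, the displayed bound above collapses to at most $8 \sum_k d(c, x_k)$ directly. Otherwise, another triangle inequality $d(c, x_j) \leq d(c, x_k) + d(x_k, x_j)$ summed over $k$ gives $\sum_k d(x_k, x_j) \geq n\, d(c, x_j) - \sum_k d(c, x_k)$; combined with the case assumption this forces $d(c, x_j) < \frac{2}{n} \sum_k d(c, x_k)$. In this regime I discard the weight-based upper bound and use the pointwise bound $w_j' \geq 1/(4n)$ directly, obtaining $d(c, x_j)/w_j' \leq 4n\, d(c, x_j) < 8 \sum_k d(c, x_k)$. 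Applying this with $c = x_i$ for every $i \in [n]$ gives Definition~\ref{def:metric-compat} with $\lambda = 8$.
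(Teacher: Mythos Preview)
Your proof is correct and achieves the same constant $8$ as the paper (the displayed ``$\leq 4$'' in the lemma statement is a typo; the paper's own proof also obtains $8$, consistent with the final ``$8$-compatible'' claim). The route, however, is genuinely different from the paper's.

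The paper introduces an anchor point $\ell = \argmin_i \sum_k d(x_i, x_k)$, records the inequality $\sum_{k,t} d(x_k, x_t) \leq 2n \sum_k d(x_k, x_\ell)$, and then decomposes $d(c, x_j) \leq \Ex_{\bk}[d(c, x_{\bk})] + \Ex_{\bk}[d(x_{\bk}, x_j)]$ to split the bound into two additive pieces, each contributing $4\sum_k d(c, x_k)$; there is no case analysis. You instead bypass the anchor point entirely by first bounding $\sum_{k,l} d(x_k, x_l) \leq 2n \sum_k d(c, x_k)$ directly through $c$, and then do a clean case split on whether $\sum_k d(x_k, x_j)$ is at least $n\,d(c, x_j)/2$: the large case finishes the ratio bound immediately, and the small case is handled by falling back on the universal pointwise bound $w_j' \geq 1/(4n)$. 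Both arguments are equally elementary; the paper's avoids cases at the cost of carrying the auxiliary minimizer $\ell$, while yours trades the anchor for a dichotomy and makes explicit that the pointwise lower bound $w_j \geq 1/(2n)$ is exactly what rescues the situation when $x_j$ is ``far'' from the bulk of the data relative to $c$.
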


\begin{proof}
We let $\ell \in [n]$ denote the minimizer of $\sum_{k=1}^n d(x_{\ell}, x_{k})$, and we note that
\begin{align}
\sum_{k=1}^n \sum_{t=1}^n d(x_{k}, x_{t}) \leq \sum_{k=1}^n \sum_{t=1}^n\left( d(x_{k},  x_{\ell}) + d(x_{\ell}, x_{t}) \right) \leq 2n  \sum_{k=1}^n d(x_{k}, x_{\ell}), \label{eq:he}
\end{align}
and furthermore, every $i \in [n]$ satisfies
\[ w_i \geq \frac{\sum_{k=1}^n d(x_k, x_i)}{\sum_{k=1}^n \sum_{t=1}^n d(x_k, x_t)} \geq \dfrac{\sum_{k=1}^n d(x_k, x_{\ell})}{\sum_{k=1}^n \sum_{t=1}^n d(x_k, x_t)} \geq \frac{1}{2n}. \]
Therefore, for any $c$, 
\begin{align}
\frac{d(c, x_j)}{w_j'} = 2 \cdot \frac{d(c, x_j)}{w_j} &\leq 2 \cdot \frac{d(c, x_j)}{\sum_{k=1}^n d(x_j, x_k)} \left( \sum_{k=1}^n \sum_{t=1}^n d(x_{k}, x_{t}) \right) \nonumber \\
		&\leq 2 \cdot \left( \dfrac{\Ex_{\bk \sim [n]}\left[ d(c, x_{\bk}) \right]}{\sum_{k=1}^n d(x_j, x_k)} + \frac{1}{n} \right) \left( \sum_{k=1}^n \sum_{t=1}^n d(x_k, x_t)\right), \label{eq:hehs}
\end{align}
where we have used the triangle inequality to upper bound
\[ d(c, x_j) \leq \Ex_{\bk\sim[n]} \left[d(c, x_{\bk}) \right] + \Ex_{\bk \sim [n]}\left[ d(x_{\bk}, x_j) \right]. \]
By the choice of $\ell$ and (\ref{eq:he}), we can upper bound 
\begin{align*}
\dfrac{\Ex_{\bk \sim [n]}\left[ d(c, x_{\bk}) \right]}{\sum_{k=1}^n d(x_j, x_k)} \leq \frac{1}{n} \cdot \dfrac{\sum_{k=1}^n d(c, x_k)}{\sum_{k=1}^n d(x_{\ell}, x_k)} \leq 2 \cdot \frac{\sum_{k=1}^n d(c, x_k) }{\sum_{k=1}^n \sum_{t=1}^n d(x_k, x_t)}.
\end{align*}
Thus, the right-hand side of (\ref{eq:hehs}) is at most
\begin{align*}
4 \sum_{k=1}^n d(c, x_k) + \frac{2}{n} \sum_{k=1}^n \sum_{t=1}^n d(x_k, x_t) &\leq 4 \sum_{k=1}^n d(c, x_k) + \frac{2}{n} \sum_{k=1}^n \sum_{t=1}^n \left( d(x_k, c) + d(x_t, c) \right) \\
		&\leq 8 \sum_{k=1}^n d(c, x_k). 
\end{align*}
Finally, the fact $\|w'\|_1 \leq \sfD / 2$ follows from the fact $\|w\|_1 \leq \sfD$. 
\end{proof}

\ignore{\subsection{Geometric Sampling}

In certain contexts, one downside of Lemma~\ref{lem:cascaded} may be that the sketch requires one to know the size of the dataset $n$, and each point $x_i$ in the dataset should know its index $i$. This is because the sketch relies on being able to (implicitly) generate the vector in $\R^{nd}$ which stacks the points $x_1,\dots, x_n$. As an alternative when points would not know their indices, we may also employ the ``geometric sampling'' primitive, which was recently introduced in~\cite{CJK23}. 

\begin{lemma}[Lemma 4.2 in~\cite{CJK23}]
For any $\eps \in (0, 1)$, $p \geq 1$, and integers $\Delta, d \geq 1$, there is a randomized streaming algorithm (in fact, a linear sketch) which receives a dataset $X = \{ x_1,\dots, x_n \} \subset [\Delta]^d$ as a dynamic geometric stream which uses space $\poly(d \log \Delta / \eps)$ and has the following guarantees:
\begin{itemize}
\item After processing the stream, it outputs $\bz^* \in X \cup \{\bot\}$ and a number $\bp^* \in [0, 1]$ such that, with probability at least $0.99$, it does not output $\bot$. 
\item Furthermore, conditioned on it not outputting $\bz^* = \bot$, for any $x_i \in X$, 
\begin{align*}
\Prx\left[ \bz^* = x_i \right] \geq \frac{1}{\poly(d \log \Delta)} \cdot \frac{\sum_{j=1}^n d_{\ell_p}(x_i, x_j)}{\sum_{k=1}^n \sum_{\ell=1}^n d_{\ell_p}(x_k, x_{\ell})},
\end{align*}
and $\bz^* = x_i$ implies $\Prx\left[ \bz^* = x_i \right] \leq \bp^* \leq (1+\eps) \Prx\left[ \bz^* = x_i\right]$ except with probability $1 - 1/\poly(\Delta^d)$.
\end{itemize}
\end{lemma}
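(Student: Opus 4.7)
The plan is to establish Lemma~\ref{lem:weight-to-compatible} in three steps, corresponding to: (i) the lower bound $w_i \geq 1/(2n)$, (ii) the main per-query inequality $d(c, x_j)/w_j' \leq 4 \sum_k d(c, x_k)$, and (iii) the two summary conclusions $\|w'\|_1 \leq \sfD/2$ and $8$-compatibility. The central trick throughout is to pivot through the $1$-median $x_\ell \in X$, i.e., the point minimizing $\sum_k d(x_\ell, x_k)$, together with repeated applications of the triangle inequality averaged over a uniformly random pivot index $\bk \sim [n]$.

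For (i), the triangle inequality gives $\sum_{k,t} d(x_k, x_t) \leq 2n \sum_k d(x_\ell, x_k)$, and by minimality of $x_\ell$ this is at most $2n \sum_k d(x_j, x_k)$ for every $j$. Plugging into the given lower bound $w_i \geq \sum_k d(x_i, x_k)/\sum_{k,t} d(x_k, x_t)$ immediately yields $w_i \geq 1/(2n)$. For (ii), I would expand
\[
\frac{d(c, x_j)}{w_j'} = \frac{2 d(c, x_j)}{w_j} \leq \frac{2 d(c, x_j)}{\sum_k d(x_j, x_k)}\,\Big(\sum_{k,t} d(x_k, x_t)\Big),
\]
then split $d(c, x_j)$ through a random pivot $x_\bk$ as $d(c, x_j) \leq \mathbb{E}_\bk[d(c, x_\bk)] + \mathbb{E}_\bk[d(x_\bk, x_j)]$. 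The second expectation is $\sum_k d(x_j, x_k)/n$, which contributes $(2/n)\sum_{k,t} d(x_k, x_t)$; one further triangle inequality through $c$ bounds this by $4 \sum_k d(c, x_k)$. For the first expectation, the $1$-median bound from (i) rewritten as $\sum_k d(x_j, x_k) \geq \sum_{k,t} d(x_k, x_t)/(2n)$ converts $\mathbb{E}_\bk[d(c, x_\bk)]/\sum_k d(x_j, x_k)$ into a uniform control of order $\sum_k d(c, x_k)/\sum_{k,t} d(x_k, x_t)$, cancelling the outer $\sum_{k,t} d(x_k,x_t)$ and producing the remaining $O(\sum_k d(c, x_k))$ contribution.

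For (iii), summing the given upper bound on $w_i$ over $i$ gives $\|w\|_1 \leq \sfD$ since the numerators sum to the denominator, so $\|w'\|_1 \leq \sfD/2$. For $8$-compatibility, I would instantiate (ii) with $c = x_i$ for each $i \in [n]$; the resulting inequality is exactly Definition~\ref{def:metric-compat} with $\lambda = 8$.

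The main obstacle is bookkeeping the triangle-inequality constants so the pieces combine without blowup, and in particular verifying that the $1$-median pivot $x_\ell$ controls $\sum_{k,t} d(x_k, x_t)/\sum_k d(x_j, x_k)$ by a uniform factor $2n$ for every $j$ (rather than only on average). Once that uniform control is in place, the rest is a routine two-term decomposition through a random pivot, with the weight renormalization $w_j' = w_j/2$ absorbing the factor needed to land inside $(0, 1/2]$ and produce the final $8$-compatibility constant.
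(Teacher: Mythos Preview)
Your proposal targets the wrong statement. The lemma given is the geometric sampling sketch result (Lemma~4.2 of \cite{CJK23}): it asserts the existence of a $\poly(d\log\Delta/\eps)$-space linear sketch over a dynamic stream that outputs a point $\bz^*\in X$ with probability at least a $1/\poly(d\log\Delta)$ fraction of its normalized weighted degree, together with a $(1+\eps)$-estimate $\bp^*$ of that probability. The paper does not prove this; it is quoted as a black box from \cite{CJK23}, where the construction is based on randomly shifted quadtrees combined with $\ell_0$-samplers over the tree cells. Nothing in your proposal constructs or analyzes any streaming algorithm, so it does not address the stated lemma at all.

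What you have actually written is a proof of Lemma~\ref{lem:weight-to-compatible}, and for that lemma your argument is correct and follows the paper's own proof essentially line-by-line: the $1$-median pivot $x_\ell$, the split $d(c,x_j)\leq \Ex_{\bk}[d(c,x_{\bk})]+\Ex_{\bk}[d(x_{\bk},x_j)]$ over a uniform $\bk\sim[n]$, and the final assembly of constants are all identical. One bookkeeping correction: the two pieces in your step~(ii) each contribute $4\sum_k d(c,x_k)$, so the bound you obtain on $d(c,x_j)/w_j'$ is $8\sum_k d(c,x_k)$, not $4$; this is precisely what gives $8$-compatibility in step~(iii), and it is also what the paper's computation shows. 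But Lemma~\ref{lem:weight-to-compatible} is not the statement you were asked to prove.
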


Furthermore, given that the streaming algorithms is dynamic, it is straight-forward to apply the above geometric sampling primitive in order to obtain $\poly(d\log \Delta)$-compatible weights in a fashion similar to that of~Lemma~\ref{lem:cascaded}, without knowledge of the indices $i$ of the point $x_i$. Since the randomized streaming algorithm is a linear sketch, we state the subsequent lemma to draw parallel with that of Lemma~\ref{lem:cascaded}.

\begin{lemma}
For any fixed $\Delta, d \in \N$, $p \geq 1$ and $\delta \in (0, 1)$, let $s = \poly(d \log \Delta \log(1/\delta))$. There exists a distribution $\calD$ supported on $s \times \Delta^d$ matrices which satisfy the following guarantees:
\begin{itemize}
\item For any $X = \{ x_1,\dots, x_n \} \subset [\Delta]^d$, let $x^{\circ} \in \R^{[\Delta]^d}$ denote the vector of counts of the (multi-) dataset $X \subset [\Delta]^d$, and for a draw of $\bS \sim \calD$, let $\sk(x^{\circ}) = \bS x^{\circ}$. 
\item There is an algorithm which receives as input $\sk(x^{\circ})$, a vector $x_i \in X$, as well as the matrix $\bS$, and outputs a number $\bw_i$ which satisfies
\begin{align*}
\frac{\sum_{j=1}^{n} d_{\ell_p}(x_i, x_j)}{\sum_{k=1}^n \sum_{j=1}^n d_{\ell_p}(x_k, x_j)} \leq \bw_i \leq \poly(d\log \Delta) \cdot \frac{\sum_{j=1}^{n} d_{\ell_p}(x_i, x_j)}{\sum_{k=1}^n \sum_{j=1}^n d_{\ell_p}(x_k, x_j)} 
\end{align*}
with probability at least $1 - \delta$ over the draw $\bS \sim \calD$. 
\end{itemize}
Furthermore, the matrix $\bS \sim \calD$ may be generated using $\poly(d \log \Delta \log(1/\delta))$ bits of randomness. 
\end{lemma}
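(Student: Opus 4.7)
The plan is to realize this lemma as a straightforward combination of the construction behind Lemma~\ref{lem:geo-sketch} with an additional sketch for the denominator, organized so that the entire computation is a single linear sketch acting on the count vector $x^{\circ} \in \R^{[\Delta]^d}$. Two pieces are needed: a per-query estimator for the numerator $N_i \eqdef \sum_j d_{\ell_p}(x_i, x_j)$, and a one-shot estimator for the denominator $D \eqdef \sum_{k,j} d_{\ell_p}(x_k, x_j)$.

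For the numerator I would reuse the tree-embedding construction from the proof of Lemma~\ref{lem:geo-sketch}. Draw $L = O(d\log\Delta \cdot \log(1/\delta))$ independent probabilistic tree embeddings $\bT_1,\dots,\bT_L$ of $[\Delta]^d$ into $\ell_1$ via Fact~1 of \cite{I04}, which satisfy $d_{\bT_\ell}(x,y) \geq d_{\ell_p}(x,y)$ pointwise and $\Ex[d_{\bT_\ell}(x,y)] \leq O(d\log\Delta) \cdot d_{\ell_p}(x,y)$. For each tree, prepare the MediEval sketch of \cite{I04} initialized with $k=1$ and failure probability $\delta/(L \Delta^{2d})$; MediEval is linear in the count vector $x^{\circ}$. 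Given a query $x_i$, the numerator estimator $\widetilde{N}_i$ is the minimum over $\ell \in [L]$ of the reported $1$-median cost with center $x_i$. Each tree's estimate is at least $N_i$, and a Markov-plus-minimum argument across the $L$ independent trees gives $\widetilde{N}_i \leq O(d\log\Delta) \cdot N_i$ for a fixed $x_i$ with probability $1 - \delta/(2\Delta^d)$; a union bound over $\leq \Delta^d$ choices of $x_i \in X$ yields the bound simultaneously for all of $X$ with probability at least $1 - \delta/2$.

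For the denominator only a constant-factor approximation is needed, since any multiplicative slack is absorbed into the $\poly(d\log\Delta)$ factor on the upper side of $\bw_i$. Because the Euclidean max-cut of $X$ always lies between $\tfrac{1}{2} D$ and $D$, running the value-approximation max-cut sketch of \cite{CJK23} (itself a linear sketch) at constant accuracy and failure probability $\delta/2$ yields an estimate $\tilde D$ with $D/4 \leq \tilde D \leq 2D$ in space $\poly(d\log\Delta/\delta)$. Setting $\bw_i \eqdef \min\{1,\; 2\widetilde{N}_i / \tilde D\}$, one checks both sides of the inequality: $\bw_i \leq 2\widetilde{N}_i/\tilde D \leq 8\,O(d\log\Delta)\cdot N_i/D$ on one side, and on the other side, either $2\widetilde{N}_i/\tilde D \leq 1$ in which case $\bw_i = 2\widetilde{N}_i/\tilde D \geq 2N_i/(2D) = N_i/D$, or $\bw_i = 1 \geq N_i/D$ because $N_i \leq D$ always. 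Linearity of both underlying sketches assembles them into a single distribution $\calD$ over $s \times \Delta^d$ matrices with $s = \poly(d\log\Delta\log(1/\delta))$, and the seed can be generated from $\poly(s)$ random bits as required.

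The main (mild) obstacle is purely bookkeeping: the MediEval sketch of \cite{I04} is stated there for planar point sets, so one needs the transfer to tree metrics $\bT_1,\dots,\bT_L$ already invoked in the excerpt's proof of Lemma~\ref{lem:geo-sketch}; and one needs to track the constants in the denominator approximation carefully enough that the one-sided bound $\bw_i \geq N_i/D$ holds exactly rather than up to a constant, which the explicit factor of $2$ in the definition of $\bw_i$ arranges. Everything else is a routine union bound and rescaling of failure parameters.
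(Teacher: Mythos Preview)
Your proposal is correct and follows essentially the same approach as the paper. The stated lemma sits inside an \texttt{\textbackslash ignore} block with only an elided proof, but its active counterpart Lemma~\ref{lem:geo-sketch} is proved exactly as you describe: tree embeddings from \cite{I04} plus the MediEval $1$-median sketch for the numerator, and the max-cut value estimator of \cite{CJK23} for the denominator; your only deviation is taking $L = O(d\log\Delta\cdot\log(1/\delta))$ trees rather than the paper's $L = O(d\log\Delta)$, which is a harmless (and arguably more careful) choice.
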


\begin{proof}
...
\end{proof}}

\subsection{For Choosing the Best Seed}

The final algorithmic component will provide a means to determine the minimizing seed $\sigma$ to use for Theorem~\ref{thm:main-structural}. Note that Theorem~\ref{thm:main-structural} argues about the expected minimum over a family of cuts. Namely, for $\bA_1, \dots, \bA_n$, we let $\bm$ denote the number of indices $i \in [n]$ which contain a non-zero entry within the first $t_0$ entries. Then, Theorem~\ref{thm:main-structural} provides a family of at most $2^{\bm}$ cuts, one for each possible $\sigma \in \{0,1\}^{m}$, and the minimizing $\sigma$ provides an approximate max-cut. Here, we state a lemma which enables us to estimate the quality of the cut given local access. Algorithmically, we use this primitive in order to ``agree'' on a setting of $\sigma$. 

\begin{lemma}[Seed Check]\label{lem:seed-check}
For any $\eps, \delta \in (0, 1)$, fix a metric $(X = \{ x_1, \dots, x_n\}, d)$ and let $w = (w_1, \dots, w_n) \in \R_{\geq 0}^n$ be $\lambda$-metric compatible weights with $\lambda \geq 1$ and $\|w\|_1 = O(1)$. For a family of cut assignments $\calF$, let $\xi \geq \poly(\lambda \log(n|\calF| / \delta)/\eps)$. There exists a randomized algorithm that given  
\begin{enumerate}
    \item A random subset $\bC \subset X$, where $x_i \in \bC$ independently with probability $\min\{\xi \cdot w_i, 1\}$.
    \item For each $x_i \in \bC$, access to the sampling probability $\min\{ \xi \cdot w_i, 1\}$.
     \item A family $\calF$ of cut assignments $z \in \calF$ where $z \in \{(0,1), (0, 1)\}^{n}$, and query access to $z_i$ for all $x_i \in \bC$. 
\end{enumerate}
The algorithm outputs with probability $1-\delta$ over $\bC$, an $\hat{\bz}\in \calF$ such that 
\begin{align*}
f(\hat{\bz}) \leq  \min_{z \in \calF}  f(z) + \eps \sum_{i=1}^n \sum_{j=1}^n d(x_i, x_j),
\end{align*}
for $f$ in (\ref{eq:internal-f-def}).
\end{lemma}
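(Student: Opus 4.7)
The plan is to build a low-variance importance-sampled estimator $\hat{f}_{\bC}$ of $f$ that only reads $\{z_i : i \in \bC\}$, argue that it is simultaneously close to $f(z)$ for every $z \in \calF$ with probability $\geq 1-\delta$, and return $\hat{\bz} := \arg\min_{z \in \calF} \hat{f}_{\bC}(z)$. Concretely, writing $p_i := \min\{\xi w_i, 1\}$ for the sampling probability (known to the algorithm for each $x_i \in \bC$) and $s_i := \ind\{i \in \bC\}/p_i$, I would use the unbiased estimator
\[
\hat{f}_{\bC}(z) := \tfrac{1}{2}\sum_{i \neq j} d(x_i, x_j)\bigl(z_{i,0}z_{j,0}+z_{i,1}z_{j,1}\bigr) s_i s_j,
\]
which the algorithm can evaluate since it only involves query labels on $\bC$ and the probabilities $p_i$. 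The $s_i$'s are independent with $\Ex[s_i] = 1$ and $d(x_i,x_i)=0$, so $\Ex[\hat{f}_{\bC}(z)] = f(z)$. If the uniform concentration event $\sup_{z \in \calF}|\hat{f}_{\bC}(z) - f(z)| \leq \tfrac{\eps}{2}T$ (with $T := \sum_{i,j} d(x_i,x_j)$) holds with probability $\geq 1-\delta$, then $\hat{\bz}$ has cost within additive $\eps T$ of $\min_{z \in \calF} f(z)$, as required (up to rescaling $\eps$).

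The variance calculation is the standard Feige--Krauthgamer move and is powered by the $\lambda$-compatibility. Writing $D_i := \sum_j d(x_i,x_j)$, the bound $d(x_i,x_j)/w_j \leq \lambda D_i$ summed over $i$ gives $D_j/w_j \leq \lambda T$, hence $\sum_i D_i^2/w_i \leq \lambda T^2$ and $\sum_{i,j} d(x_i,x_j)^2/(w_i w_j) \leq \lambda^2 T^2$. Pairwise independence of $(s_i, s_j)$ for $i \neq j$, together with the covariance terms for pairs sharing one index, then give $\Var[\hat{f}_{\bC}(z)] \lsim \lambda T^2/\xi + \lambda^2 T^2/\xi^2$, and the same compatibility bound shows each summand has magnitude at most $\lsim \lambda^2 T/\xi^2$ (the case when some $p_i$ is saturated at $1$ is easier and handled by noting those $i$'s are deterministically in $\bC$).

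The main obstacle is that Chebyshev alone only gives per-$z$ failure probability $\lsim \lambda/(\xi\eps^2)$, whereas the hypothesis restricts $\xi$ to $\poly\log|\calF|$, so a union bound over $\calF$ needs exponential-in-$\xi$ tails. The fix is to note that $\hat{f}_{\bC}(z)$ is a degree-$2$ polynomial in the independent Bernoullis $\ind\{i \in \bC\}$ and invoke a Hanson--Wright / Bernstein-type concentration inequality for quadratic forms in bounded independent variables. Explicitly, I would split $\hat{f}_{\bC}(z) - f(z)$ into its linear part in $(s_i-1)$ and its purely bilinear part in $(s_i-1)(s_j-1)$: the linear part is a sum of independent bounded random variables and concentrates by Bernstein directly; the bilinear part is handled either by decoupling against an independent ghost sample $\bC'$ (reducing it to a conditionally linear sum) or by Hanson--Wright on the coefficient matrix whose Frobenius and operator norms are bounded by the variance and magnitude estimates above. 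Each piece concentrates within $\eps T/4$ with failure probability $\exp(-\Omega(\xi \eps^2/\poly(\lambda)))$ for a fixed $z$, and a union bound over $\calF$ then requires $\xi \gtrsim \poly(\lambda)\log(|\calF|/\delta)/\eps^2$, which is exactly the regime the lemma allows.
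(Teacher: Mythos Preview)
Your approach is correct and matches the paper's closely: both use the same importance-sampled quadratic estimator $\sum_{i,j} B_{ij}\,s_i s_j$ with $s_i = \ind\{i\in\bC\}/p_i$, split the deviation into its linear and purely bilinear parts in the centered variables $s_i-1$, dispatch the linear part by Bernstein, and then need exponential tails on the bilinear part to union-bound over $\calF$. The one soft spot is your invocation of Hanson--Wright: the standard inequality is phrased in terms of sub-Gaussian norms, and since $s_i-1$ has sub-Gaussian norm $\Theta(1/p_i)$ rather than $\Theta(1/\sqrt{p_i})$, absorbing the norms into the matrix gives Frobenius norm $\sum_{i,j} B_{ij}^2/(p_i p_j)^2$, not the variance $\sum_{i,j} B_{ij}^2/(p_i p_j)$ you computed --- so the exponent you would get is off. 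The decoupling-plus-conditional-Bernstein alternative you mention is the route that works cleanly here, and it is essentially what the paper carries out by hand: it splits the bilinear sum according to whether the outer index $i$ lies in $\bC$, bounds $|\bC|$ by Chernoff, and for each fixed $i$ applies Bernstein to the inner linear form $\sum_j (s_j-1)B_{ij}$ together with a union bound over $i\in[n]$ (which is the source of the $\log n$ in $\xi$). So the paper's proof and your decoupling sketch are the same argument at different levels of abstraction.
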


\begin{proof}
The algorithm proceeds by iterating over $z \in \calF$, and using the sample in $\bC$ in order to estimate $f(z)$. Then, we will take a union bound over all $\calF$ and choose the minimum. For each $i \in [n]$, we let $\tilde{w}_i = \min \{ \xi \cdot w_i, 1\}$, $\bC_i$ denote the indicator random variable that $x_i \in \bC$, $B_{ij} = d(x_i, x_j) \cdot (z_{i,0} z_{j,0} + z_{i,1} z_{j,1})/2$, $T_i = \sum_{j=1}^n d(x_i, x_j)$ and $T = \sum_{i=1}^n T_i$. Notice then, that $f(z)$ is the sum of all $B_{ij}$. It suffices to show that for any $z \in \calF$, 
\begin{align*}
\left| \sum_{i=1}^n \sum_{j=1}^n \dfrac{\bC_i \cdot \bC_j}{\tilde{w}_i \cdot \tilde{w}_j} \cdot B_{ij}  - f(z) \right| \leq \eps T
\end{align*}
with probability $1 - \delta / |\calF|$. Standard concentration bounds (namely, Bernstein's inequality) do not directly apply because the summands $\bC_i \bC_j / (\tilde{w}_i \cdot \tilde{w}_j) \cdot B_{ij}$ are not all independent. Let $\bC_i' = \bC_i - \tilde{w}_i$, so that expanding the definitions and symmetry in $B_{ij}$ gives the upper bound
\begin{align}
\Prx\left[ \left| \sum_{i=1}^n \sum_{j=1}^n \dfrac{\bC_i \cdot \bC_j}{\tilde{w}_i \cdot \tilde{w}_j} \cdot B_{ij}  - f(z) \right| \geq \eps T\right] &\leq \Prx\left[ \left| \sum_{i=1}^n \sum_{j=1}^n \dfrac{\bC_i' \cdot \bC_j'}{\tilde{w}_i \cdot \tilde{w}_j } \cdot B_{ij} \right| \geq \frac{\eps T}{2} \right] \label{eq:prob-1} \\
&\qquad + \Prx\left[ \left| \sum_{i=1}^n \frac{\bC_i}{\tilde{w}_i} \sum_{j=1}^n B_{ij} - f(z) \right| \geq \frac{\eps T}{4}\right] \label{eq:prob-2}.
\end{align}
The probability in (\ref{eq:prob-2}) is simple to bound via Bernstein's inequality: first, we note that
\[ \sum_{i=1}^n \Ex\left[ \left( \frac{\bC_i' \cdot T_i}{\tilde{w}_i}\right)^2 \right] \leq \sum_{\substack{i\in[n] \\ \tilde{w_i} < 1}} \left(\frac{T_i^2}{\xi w_i} + T_i^2\right) \leq 2\sum_{i=1}^n \dfrac{T_i^2}{\xi w_i} \leq \frac{2\lambda}{\xi} \cdot T^2,\qquad\text{and}\qquad \max_{i:\tilde{w_i}<1} \frac{T_i}{\tilde{w_i}} \leq \frac{\lambda}{\xi} \cdot T,   \]
hence, Bernstein's inequality implies the probability in (\ref{eq:prob-2}) is at most
\begin{align*}
2\exp\left( - \dfrac{\eps^2 \cdot \xi}{32 \cdot \lambda \cdot (1 + \eps / 3)} \right) \leq \frac{\delta}{2|\calF|}
\end{align*}
whenever $\xi$ is a large enough constant times $\lambda \log(|\calF|/\delta) / \eps^2$. It then remains to upper bound (\ref{eq:prob-1}), which we do using the following argument. Note that we can always upper bound
\begin{align*}
\left| \sum_{i=1}^n \sum_{j=1}^n \frac{\bC_i' \cdot \bC_j'}{\tilde{w}_i \cdot \tilde{w}_j} \cdot B_{ij} \right| &\leq \left| \sum_{i \in \bC} \frac{1- \tilde{w}_i}{\tilde{w}_i} \sum_{j=1}^n \frac{\bC_j'}{\tilde{w}_j} \cdot B_{ij} \right| + \left| \sum_{i \notin \bC} \sum_{j=1}^n \frac{\bC_j'}{\tilde{w}_j} \cdot B_{ij}\right| \\
		&\leq \underbrace{|\bC| \cdot \max_{i} \frac{1}{\xi w_i} \left| \sum_{j=1}^n \frac{\bC_j'}{\tilde{w}_j} \cdot B_{ij} \right|}_{A} + \underbrace{\left| \sum_{i \notin \bC} \sum_{j=1}^n \frac{\bC_j'}{\tilde{w}_j} \cdot B_{ij}\right|}_{B}.
\end{align*}
Thus, whenever (\ref{eq:prob-1}) occurs, at least of the two summands, $A$ or $B$, is greater than $\eps T/ 4$. Suppose that (\ref{eq:prob-1}) occurs with probability at least $\delta / (2|\calF|)$, and that among those events, half of them have term $A$ being at least $\eps T/4$. Then, we note that since $|\bC|$ is a sum of independent random variables with expectation $O(\xi)$, a simple Chernoff bound shows that $|\bC|$ is larger than $O(\xi \log(|\calF|/\delta))$ except with probability $o(\delta / |\calF|)$, and therefore, with probability at least $\delta / (8|\calF|)$, 
\begin{align*}
\max_{i} \frac{1}{\xi w_i} \left| \sum_{j=1}^n \frac{\bC_j'}{\tilde{w}_j} \cdot B_{ij} \right| \geq \Omega\left(\frac{\eps}{\xi \log(|\calF|/\delta)}\right) \cdot T,
\end{align*} 
However, we always have
\begin{align*}
&\Prx\left[ \exists i \in [n] : \left| \sum_{j=1}^n \frac{\bC_j'}{\tilde{w}_j} \cdot B_{ij} \right| \geq \Omega\left(\frac{\eps}{\log(|\calF|/\delta)}\right) \cdot T \cdot w_i \right] \\
&\qquad\qquad \leq 2 n \cdot \exp\left( -\Omega(1) \cdot \dfrac{\eps^2 \cdot \xi}{\log^2(|\calF|/\delta) \cdot (\lambda^3 + \eps \lambda /3)} \right),
\end{align*}
which is smaller than $\delta / (8 |\calF|)$ whenever $\xi$ is a large enough constant factor of $\lambda^3 \log^3(|\calF|/\delta) \log n / \eps^2$. This would give a contradiction, so summand $B$ must be at least $\eps T/4$ with probability at least $\delta / (4|\calF|)$. This, too, gives a contradiction, since whenever this occurs, there must be some $i$ where the magnitude of $\sum_{j=1}^n \bC_j' / \tilde{w_j} \cdot B_{ij}$ exceeds $\eps T_i/4$, and this probability is similarly bounded by
\[ 2n \cdot \exp\left( - \Omega(1) \cdot \dfrac{\eps^2 \xi}{\lambda (1 + \eps/3)} \right).\]
\end{proof}

\ignore{

\section{A greedy extension fails}\label{sec:simple}

As discussed in Section \ref{sec:introduction}, the algorithm of~\cite{CJK23} builds on a prior work of random sampling for max-CSPs. One of the main result shows that, for any collection of dataset points $X = \{ x_1, \dots, x_n \}$ and distance function $d$, suppose we consider a random sample of $s = O(1/\eps^4)$ points $x_{\bi_1}, \dots, x_{\bi_s}$ from a distribution over $\bi$ where every $j=1, \dots, n$ satisfies
\begin{align*}
    w_{j} \eqdef \Prx_{\bi}\left[ \bi = j \right] \propto \sum_{\ell=1}^n d(x_j, x_{\ell}),
\end{align*}
Then, with probability at least $0.99$ over the draw of the sample, a (rescaled) max-cut of the sample gives a good approximation to the optimal cut on the entire dataset:
\begin{align*}
    \left| \max_{A \subset [s]} \sum_{\ell_1=1}^s \sum_{\ell_2=1}^s \dfrac{\ind\left\{ \begin{array}{c} \ell_1 \in A \\ 
                                                    \ell_2 \notin A \end{array} \right\}}{w_{\bi_{\ell_1}} \cdot w_{\bi_{\ell_2}}} \cdot d(x_{\bi_{\ell_1}}, x_{\bi_{\ell_2}}) - \max_{B \subset [n]} \sum_{i_1=1}^n \sum_{i_2 =1}^n \ind\left\{ \begin{array}{c} i_1 \in B \\ i_2 \notin B \end{array} \right\} \cdot d(x_{i_1}, x_{i_2}) \right| \leq \eps T,
\end{align*}
where $T = \sum_{i_1=1}^n \sum_{i_2=1}^n d(x_{i_1}, x_{i_2})$. The random sampling experiment above shows that the \textit{value} of the maximum cut can be approximated by finding the maximum cut of a small sample weighted based on their fraction of the total distance~\cite{CJK23}. However, it is generally not a good idea to naively extend the cut greedily. In particular, suppose $\bS = \{ x_{\bi_1},\dots, x_{\bi_s}\}$ is the sample, and let $\bA \subset [s]$ denote maximum weighted cut of the sample (i.e., that which maximizes the right-most term in the absolute value above). 

There is a natural \emph{greedy} extension which assigns points based on the weighted contribution from the random sample. Once the the maximum weighted cut of the sample, $\bA$ has been found, for each point $x_j\in X \setminus \bS$, compute the (scaled) contribution of $\bA$ and $\ol{\bA}$ within $\bS$, and assign $x_j$ to the side maximizing this contribution.

\begin{framed} \label{alg:naive-assign}
\textbf{Algorithm}
$\textsc{Naive-Assign}$.
This algorithm is given a dataset $X$ and random sample $\bS = \{ x_{\bi_1}, \dots, x_{\bi_s} \}$, which it then uses to compute a cut of $X$.

\textbf{Setup}. A metric $(X = \{ x_1,\dots, x_n\}, d)$, weights $w = (w_1, \dots ,w_n) \in (0, 1/2]^n$, and a random sample of $s = O(1/\eps^4)$ points $x_{\bi_1}, \dots, x_{\bi_s}$. 

The algorithm begins by enumerating over all cuts of $\bS$ and selecting the maximum, $\bA \subset [s]$ 
\[\bA = \argmax_{A \subset [s]} \sum_{\ell_1=1}^s \sum_{\ell_2=1}^s \dfrac{\ind\left\{ \begin{array}{c} \ell_1 \in A \\
\ell_2 \notin A \end{array} \right\}}{w_{\bi_{\ell_1}} \cdot w_{\bi_{\ell_2}}} \cdot d(x_{\bi_{\ell_1}}, x_{\bi_{\ell_2}}).\]
then computing for each point $x_j \in X \setminus \bS$ the expected contribution to the sampled cut
\begin{align*}
    c_{j,0} = \sum_{\ell_1=1}^s \dfrac{\ind\left\{\ell_1 \in \bA \right\}}{w_{\bi_{\ell_1}} } \cdot d(x_{\bi_{\ell_1}}, x_{j}) \qquad \text{and}\qquad 
    c_{j,1} = \sum_{\ell_1=1}^s \dfrac{\ind\left\{\ell_1 \notin \bA \right\}}{w_{\bi_{\ell_1}} } \cdot d(x_{\bi_{\ell_1}}, x_{j}). 
\end{align*}
If $c_{j,0} \leq c_{j,1}$, then include $j$ in the output cut $\bB$. 

\textbf{Output}: The cut $\bB \cup\bA$, which is now a cut of $X$. 
\end{framed}

Roughly speaking, the greedy assignment produced by this algorithm is highly sensitive to the initial sample and has no way of ``breaking symmetry.'' Consider the following example:
\begin{itemize}
    \item We consider a dataset $X$ which consists of a $n/2$ vectors randomly drawn from the unit sphere, and $n/2$ copies of the same unit vector $\bz$. 
    \item The true max-cut is the natural partition and has cut value around $\sqrt{2} \cdot (n/2)^2$. Considers a small sample $\bS$ of points, and let $\bA \subset \bS$ be the vectors from the sample which are all copies of $\bz$.
\end{itemize}
Using the greedy extension, every vector equal to $\bz$ will be correctly assigned to the correct side (i.e, that which agrees with $\bA$), since the contribution to $\bA$ would be zero. However, if one considers one among the random unit vectors, the contribution to either side of the cut is roughly the same. If the estimate to either side of the cut has some noise, one runs the risk of assigning these vectors to different sides. This is not a formal argument, since the same random sample would imply that the but suggest cuts which assign

. Moreover this algorithm has no way of breaking symmetry, meaning if a dataset has a large number of points with equal contributions, all such points will be assigned to one side, producing an imbalanced cut. Even if such ``symmetric" points are assigned randomly, the ratio of such ``symmetric'' points inside and outside the max-cut can be adjusted to causing the output cut to be imbalanced.

{\color{red} Is this roughly the argument we want to make?}

For example, consider the $n$-standard simplex with $n-1$ additional points clustered tightly around one of the unit vectors. The maximum cut separates the cluster $Y$ from the remaining $n$ points on the simplex $Z$. For any of the cluster's non-sampled point, the greedy assignment will always assign them to a single side. However for the non-sampled points in $Z$, the sum of distances between the cluster and the other points in $Z$ is the same, hence the contribution is expected to be the same, and the assignment varies greatly with respect to the sample. If it places the non-sampled points in $Z$ to the same side as $Y$, the cut output would have extremely low value compared to the true max-cut.

\ignore{\begin{example} \label{ex:planted-cut} Consider the following set of points $X = y_1,\dots, y_{n},z_1,\dots,z_{n} \in \R^{n+3}$ under the $\ell_2$ norm where each $y_i = \be_i + \bu_i \cdot \be_{n+2}$ and $z_i = \be_{n+1} + \bu_{n+i} \cdot \be_{n+2} $ for $\bu_1,\dots,\bu_{2n} \sim \calN(0, \frac{\eps}{10n^2})$, for some $0<\eps < 1$. Recall that $\be_i$ refers to the i-th standard basis vector. Essentially $X$ consists of some noise added to $n+1$ unit vectors with one vector duplicated $n$ times.
\end{example}
}

}
\bibliographystyle{alpha}
\bibliography{waingarten}

\end{document}